\documentclass[a4paper,11pt]{article}

\usepackage[T1]{fontenc}
\IfFileExists{lmodern.sty}{\usepackage{lmodern}}{}
\usepackage[english]{babel}
\usepackage{amsmath,amssymb,amsfonts,amsthm,mathtools}
\usepackage{mathrsfs}
\IfFileExists{lmodern.sty}%
  {\usepackage{microtype}}%
  {\usepackage[expansion=false]{microtype}}
\usepackage{enumitem,needspace}
\usepackage{cite}
\usepackage[colorlinks=true,allcolors=blue]{hyperref}

\numberwithin{equation}{section}
\setlist[enumerate]{leftmargin=2.2em}
\setlist[itemize]{leftmargin=2.0em}

\newcommand{\R}{\mathbb R}
\newcommand{\KerrSpace}{\mathcal K}
\newcommand{\CarterClass}{\mathfrak C_{M,a}}
\newcommand{\Ring}{\mathcal R}
\newcommand{\tdiam}{\operatorname{tdiam}}
\newcommand{\esssup}{\operatorname*{ess\,sup}}
\newcommand{\abs}[1]{\left\lvert #1\right\rvert}

\theoremstyle{plain}
\newtheorem{theorem}{Theorem}[section]
\newtheorem{proposition}[theorem]{Proposition}
\newtheorem{lemma}[theorem]{Lemma}
\newtheorem{corollary}[theorem]{Corollary}
\theoremstyle{definition}
\newtheorem{definition}[theorem]{Definition}
\theoremstyle{remark}
\newtheorem{remark}[theorem]{Remark}

\hypersetup{
  pdftitle={The C0-inextendibility of Kerr spacetimes},
  pdfauthor={Bobby Eka Gunara},
  pdfkeywords={Kerr spacetime, C0-inextendibility, ring singularity, causal homotopy, timelike diameter, Carter block extension},
  bookmarksnumbered,
  pdfstartview={FitH}
}

\begin{document}
\pagestyle{plain}

\title{The $C^0$-inextendibility of Kerr spacetimes}

\author{Bobby Eka Gunara\thanks{Theoretical Physics Laboratory, Theoretical High Energy Physics Research Division, Faculty of Mathematics and Natural Sciences, Institut Teknologi Bandung, Jl. Ganesha no. 10, Bandung 40132, Indonesia. Email: \texttt{bobby@itb.ac.id}.}}
\date{}
\maketitle

\begin{abstract}
We prove the $C^0$-inextendibility of the full Kerr realizations
considered in this paper and all their connected covering spacetimes. The result holds for positive mass
and nonzero rotation in the subextremal, extremal, and overextremal
regimes. The obstruction at the singular ring is expressed by unbounded
proper time in a fixed-endpoint causal homotopy class. It persists under
every positive continuous conformal rescaling and under continuous
widening of the Kerr causal cones, without any global causality
assumption. We give sharp bounds in a fixed coordinate norm for
stationary and axisymmetric perturbations and describe complete timelike geodesics that encounter
arbitrarily large curvature near the ring.

\end{abstract}

\noindent\textbf{2020 Mathematics Subject Classification.} Primary 83C57; Secondary 53C50, 83C15.\par
\smallskip
\noindent\textbf{Keywords.} Kerr spacetime; continuous inextendibility; causal homotopy; timelike diameter; ring singularity; Carter graph realization.

\tableofcontents

\section{Introduction and Main Results}\label{sec:introduction}

The Kerr solution describes a rotating vacuum gravitational field
\cite{Kerr1963}. The metric extends analytically across its regular
horizons, whereas the curvature diverges at the singular ring.
Since curvature depends on derivatives of the metric, this divergence
does not by itself exclude an extension with a continuous metric.
In this paper we prove the continuous inextendibility of the full Kerr
spacetimes specified below.

It is essential here to distinguish the full spacetime from its
exterior and its maximal globally hyperbolic development. The exterior extends through the event horizon, and the
exact subextremal maximal globally hyperbolic development extends
through its Cauchy horizon. We include all regular horizon crossings
and both regular branches of the $r=0$ disc, and remove only the
curvature rings. Carter's global construction provides the starting
point \cite{Carter1966,Carter,BoyerLindquist,ONeillKerr}; the two disc
branches must remain distinct \cite{MacCallum2026}.

\subsection{The main results}

A connected smooth manifold with a continuous Lorentzian metric is
\emph{$C^0$-inextendible} if it has no smooth isometric embedding as a
proper open subset of a connected smooth manifold carrying a continuous,
nondegenerate Lorentzian metric. Fix $M>0$ and $a\ne0$. We consider:
\begin{enumerate}[label=(\roman*)]
 \item for $0<|a|<M$, any member of the horizon-saturated Carter class
 $\CarterClass$ in Definition~\ref{def:carter-spacetime};
 \item for $|a|=M$, the full ingoing--outgoing chain
 $\mathcal K^{\mathrm{ext}}_{M,a}$ in
 Section~\ref{subsec:extremal-kerr};
 \item for $|a|>M$, the full two-sheet spacetime
 $\mathcal K^{\mathrm{over}}_{M,a}$ in
 \eqref{eq:overextremal-kerr-manifold}.
\end{enumerate}
The subextremal class consists of connected, Hausdorff, time-oriented
realizations from full blocks and full horizon crossings, glued by
stationary-axial translations. The extremal chain uses complete
ingoing and outgoing collars and has no bifurcation sphere.
Both regular disc branches are included and distinct in every model.
In each realization all horizon gluings are full, and no additional
disc identification is made. Set $\Sigma\equiv r^2+a^2\cos^2\theta$, with the same
notation for its pullback to a cover.

\begin{theorem}[$C^0$-inextendibility]\label{thm:main-c0}
Every connected covering spacetime of any of the above Kerr
realizations, with the pulled-back metric, is $C^0$-inextendible.
Every inextendible unit-speed timelike geodesic with a finite
proper-time endpoint satisfies $\Sigma\to0$ there.
\end{theorem}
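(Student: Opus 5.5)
The plan follows the Sbierski/Galloway--Ling--Sbierski strategy, adapted to the ring singularity through causal homotopy classes as the abstract suggests. Suppose $\iota\colon(\widetilde{\mathcal K},g)\hookrightarrow(\widetilde M,\tilde g)$ is a $C^0$-extension of a connected cover $\widetilde{\mathcal K}$ of one of the realizations, with $\iota(\widetilde{\mathcal K})\ne\widetilde M$. The standard boundary lemma for continuous extensions gives a point $p\in\partial\iota(\widetilde{\mathcal K})$ and a future-directed (or past-directed) timelike curve $\gamma\colon[0,1)\to\widetilde{\mathcal K}$ such that $\iota\circ\gamma$ extends to $p$, has finite $\tilde g$-length, and lies inside a small chart $U\ni p$. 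On $U$ the metric $\tilde g$ is uniformly close to Minkowski, so $\tilde g$ admits wider and narrower constant-coefficient cones. We will use the resulting standard consequence: every timelike curve in $U$ between two fixed points, and every curve causally homotopic to it inside $U$ with fixed endpoints, has $\tilde g$-length bounded by a constant $C(U)$. Because the curves stay in $\iota(\widetilde{\mathcal K})$, this is equivalently a bound on the $g$-proper time.

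The first step is to show that $\gamma$ cannot escape through a regular region. We will parametrize $\gamma$ in Boyer--Lindquist or Kerr-star coordinates on the blocks and horizon collars of the realization. We will also use the fact that every full horizon crossing and both disc branches are included, with no extra identifications. From this we argue that if $\Sigma\circ\gamma$ stays bounded below along a subsequence, then $\gamma$ has an accumulation point in $\widetilde{\mathcal K}$. The argument uses the completeness of the block/collar gluings: Boyer--Lindquist coordinates can degenerate only at horizons, where the Kerr-star charts are regular, or at $r\to\pm\infty$. Infinite $r$ is excluded because $\gamma$ would then have infinite length or leave any compact set of $\widetilde M$. The cover inherits this because covering maps preserve the local compactness statement. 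An accumulation point inside $\widetilde{\mathcal K}$ contradicts $\iota(\gamma(s))\to p\in\partial\iota(\widetilde{\mathcal K})$, since $\iota$ is an open embedding. Hence $\Sigma\circ\gamma\to0$, so $\gamma$ approaches the ring $r=0,\ \theta=\pi/2$. The same argument applied to geodesics proves the second assertion of Theorem~\ref{thm:main-c0}. An inextendible timelike geodesic with finite proper time cannot remain in a region $\{\Sigma\ge\epsilon\}$, because there it would accumulate at a regular point and then extend by standard ODE theory. With more care one upgrades $\liminf$ to $\lim$, using monotonicity of $r$ near the ring along geodesics given by the Carter constants.

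The second and main step is the ring obstruction. We will show that for any neighbourhood of a point approached along $\gamma$ near the ring, there exist two points $q_1,q_2$ near the ring, joined by a timelike curve in $U$, whose fixed-endpoint causal homotopy class contains timelike curves of arbitrarily large proper time. The construction relies on the causality violation near the ring for $r<0$ small and $\theta\approx\pi/2$, where $g_{\phi\phi}<0$ and the $\partial_\phi$ circles are timelike. The curve will leave $q_1$ radially, wind $N$ times around a short timelike $\phi$-loop of proper length bounded below, and return to $q_2$. Passing to the cover unwraps $\phi$ in general, so we must instead use loops whose proper time is produced by the near-ring geometry within a contractible neighbourhood. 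Concretely, these are curves approaching the ring from the $r<0$ disc branch, where the Carter-graph realization keeps the two branches distinct. Their endpoints converge, while their proper time grows like a divergent integral of $\sqrt{-g_{\phi\phi}}\sim \sqrt{2M a^2/r}$ or a comparable quantity. We then verify, via an explicit homotopy in the coordinates $(t,r,\theta,\phi)$, that these curves are causally homotopic with fixed endpoints inside $\iota^{-1}(U)$. This contradicts the bound $C(U)$.

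I expect this second step to be the main obstacle. The difficulty is making the homotopy stay inside a single small chart near the ring on an arbitrary cover, while also producing unbounded proper time with fixed endpoints. The last step is to check that nothing depends on the parameter regime. Subextremal, extremal and overextremal cases differ only in their horizon structure away from the ring, and near $\Sigma=0$ the metric asymptotics are identical for all $M>0$, $a\neq0$.
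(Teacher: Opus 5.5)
There is a genuine gap, and it sits in the two places you flagged as delicate.

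\emph{Localization.} The Sbierski-type boundary lemma gives only an arbitrary timelike curve, and your compactness argument fails for such a curve. Finite proper time bounds neither $r$ nor the stationary coordinate: a timelike curve can approach null infinity in finite proper time, exactly as in Minkowski space. Convergence of $\iota\circ\gamma$ in $\widetilde M$ gives no precompactness in $\widetilde{\mathcal K}$, and excluding such boundary points is part of what must be proved. The paper instead uses the Minguzzi--Suhr maximizer to obtain a timelike \emph{geodesic} of finite length ending at the boundary (Theorem~\ref{thm:boundary-geodesic}). This needs no causality hypothesis, which is essential because the full realizations contain closed timelike curves. Only for geodesics do Carter's first integrals give:
\begin{itemize}
 \item bounded $r$ (Lemma~\ref{lem:radial-bounded});
 \item finitely many horizon contacts;
 \item continuation through every horizon, including the bifurcation sphere when $P(r_h)=0$ (Lemma~\ref{lem:horizon-continuation});
 \item $\Sigma\to0$ as a genuine limit rather than a $\liminf$, via Mino time and Barb\u{a}lat's lemma (Theorem~\ref{thm:ring-localization}).
\end{itemize}
Even for geodesics, ``accumulates at a regular point, hence extends by ODE theory'' is not enough in an acausal spacetime, where incomplete geodesics can be imprisoned. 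One needs precompactness of the tangent lift (Lemma~\ref{lem:escape}), which the paper extracts from the separated equations in Eddington--Finkelstein and double-null charts.

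\emph{Ring step.} You cannot check that an explicit Kerr-coordinate homotopy stays in $\iota^{-1}(U)$, because nothing is known about the shape of that set (compare Corollary~\ref{cor:no-automatic-thickness}). Also, curves whose endpoints merely converge do not lie in one fixed-endpoint class, so they contradict nothing. The missing idea is that confinement comes for free. Choose $V\Subset D\Subset W$ so that causal diamonds of points of $V$ have compact closure in $W$ (Lemma~\ref{lem:precompact-local-diamond}). An open--closed argument then shows that \emph{every} fixed-endpoint causal homotopy of a reference segment lying in $V$ stays in $W$ (Lemma~\ref{lem:causal-homotopy-trapping}). So the homotopy can be built entirely in Kerr, with no reference to the chart. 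The endpoints are kept fixed as follows. The detour is a future-timelike lift of a transverse excursion of size $O(\sqrt\Sigma)$ together with a negative-equatorial axial arc of fixed width $A_0<\pi$. Its small stationary-axial displacement is absorbed by the cut-off translation \eqref{eq:timelike-endpoint-variation} of an earlier strictly timelike segment. Letting $\rho\downarrow0$ with $A_0$ fixed gives lengths $A_0\sqrt{2Ma^2/\rho-a^2-\rho^2}\to\infty$ within one class (Theorem~\ref{thm:fixed-endpoint-infinite-length}). No winding is used, so covers cause no unwrapping problem: the base homotopy lifts with fixed endpoints (Lemma~\ref{lem:covering-causal-length}).
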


The extension metric is an arbitrary continuous nondegenerate
Lorentzian metric; it is subject to no symmetry, time-orientation,
coordinate, or global causality requirement. The subextremal assertion includes
stationary-axial screw quotients and all their connected covers.

A causal
homotopy is a continuous deformation of a curve, fixing both endpoints,
whose intermediate curves are all future causal; constant pauses are
allowed. This requirement is stronger than ordinary path homotopy
\cite{MoralesSanchez2015}. 
For a future timelike curve $\gamma$ and $s<t$, define
\begin{equation}\label{eq:intro-causal-length}
 \mathcal L^h_\gamma(s,t)\equiv
 \sup\{L_h(\sigma):\sigma\simeq_c\gamma|_{[s,t]}\}.
\end{equation}
The supremum is taken in one fixed-endpoint causal homotopy class.
Writing $\mu\equiv\cos\theta$, the condition
$\Sigma=r^2+a^2\mu^2\to0$ expresses approach to the ring.

\Needspace{14\baselineskip}
\begin{theorem}[Infinite lengths at the Kerr ring]
\label{thm:main-terminal-lengths}
Let $\gamma:[0,\Lambda)\to\KerrSpace$ be a future-directed $C^1$
timelike curve in a spacetime of Theorem~\ref{thm:main-c0}, equipped
with $\Omega^2g$ for an arbitrary positive continuous function
$\Omega$. Suppose that $\Sigma(\gamma(t))\to0$ as $t\uparrow\Lambda$.
For every $s\in[0,\Lambda)$ there is $t_0\in(s,\Lambda)$ such that
\begin{equation}\label{eq:main-infinite-lengths}
 \mathcal L^{\Omega^2g}_\gamma(s,t)=+\infty\qquad(t_0<t<\Lambda).
\end{equation}
The statement holds in either time direction, locally on every full
punctured Kerr ring collar, and on every connected cover of it.
On a cover, $\Omega$ need not descend to the base. No symmetry or
asymptotic bound on $\Omega$ is imposed.
Moreover, every future-directed locally Lipschitz causal curve with
$\Sigma\to0$, including a null curve, has no endpoint in a $C^0$
extension of the conformally rescaled spacetime.
\end{theorem}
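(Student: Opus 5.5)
The plan is to reduce to a local construction near the ring in which long timelike curves can be inserted by causal homotopy. Near $\Sigma=0$ the Kerr geometry has closed timelike curves: in Boyer--Lindquist form on the $r<0$ side close to the ring, $g_{\phi\phi}=(r^2+a^2+2Ma^2r\sin^2\theta/\Sigma)\sin^2\theta$ becomes negative. So the orbits of $\partial_\phi$ are closed timelike loops whose $g$-length is bounded below, uniformly on a fixed neighbourhood, and they are arbitrarily short in coordinate size as $\Sigma\to0$.

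\textbf{Step 1: time-machine region near the curve.} Since $\Sigma(\gamma(t))\to0$, for $t$ large $\gamma(t)$ lies in a punctured ring collar. Inside it the time-machine region $\{g_{\phi\phi}<0\}$ accumulates at the whole ring from both disc branches. I would show that a short future timelike curve $\beta$ joins $\gamma(t_1)$ (with $t_1\in(s,t)$) to a point $q$ of that region, and another joins $q$ back to $\gamma(t_2)$ with $t_1<t_2<t$. I expect this to use the near-ring asymptotics $g\approx$ (flat metric) $-\,(2Mr/\Sigma)(dt-a\sin^2\theta\,d\phi)^2$, in which the light cones open up as $\Sigma\to0$. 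This works on either disc branch, so it is insensitive to which branch $\gamma$ approaches. The time $t_0$ is chosen so that these constructions stay in the collar.

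\textbf{Step 2: inserting loops by causal homotopy.} Let $c$ be the closed timelike $\phi$-loop through $q$. I would build the family $\sigma_n$ that runs along $\gamma$, detours to $q$, winds $n$ times around $c$, and returns. A causal homotopy from $\gamma|_{[s,t]}$ to $\sigma_n$ is obtained in three moves:
\begin{enumerate}[label=(\alph*)]
 \item deform the detour in and out, which keeps every intermediate curve timelike because the cones are open;
 \item use the stationary--axial flow: the loop parameter $\lambda\in[0,2\pi n]$ is grown continuously from $0$ (a constant pause) to $2\pi n$, with a pause inserted at the start;
 \item smooth the corners while preserving timelikeness.
\end{enumerate}
Because $c$ is timelike, every partial loop is future timelike. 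The winding therefore deforms continuously through future causal curves with fixed endpoints. This is where the constant pauses allowed in the definition are used. On a cover the $\phi$-loop need not close. In that case I would instead use the noncompact lift of $\phi$-translation together with the local causal structure, or observe that the construction lives in a simply connected neighbourhood of $q$ so that it lifts. \textbf{This is the main obstacle:} doing the winding without leaving the class, and doing it on covers where $\partial_\phi$-orbits unwrap, where a small-neighbourhood version using the ``helical'' timelike curves that sweep in $\phi$ near the ring would be needed.

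\textbf{Step 3: conformal invariance and the endpoint statement.} For an arbitrary positive continuous $\Omega$, the loop $c$ is compact, so $\Omega\ge\omega>0$ on it. Hence $L_{\Omega^2 g}(\sigma_n)\ge n\,\omega\,L_g(c)\to\infty$, giving \eqref{eq:main-infinite-lengths} with no control of $\Omega$ at the ring. The time-reversed statement follows by the same argument applied to the past cone.

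For the final claim, suppose a locally Lipschitz causal curve with $\Sigma\to0$ had an endpoint $p$ in a $C^0$ extension. I would perturb it to a timelike curve $\gamma$ approaching the ring inside the collar, which is possible since the cones widen near the ring. In a small cylindrical neighbourhood $U$ of $p$ the extended metric is close to Minkowski, so timelike curves in $U$ have length bounded by $\operatorname{diam}$-type bounds. The homotopies of Step 2 are chosen to stay in $U\cap\KerrSpace$, using that $q$ and $c$ can be taken arbitrarily close to the ring. I am uncertain about this choice, since the $\phi$-loops have coordinate size comparable to $\sqrt{\Sigma}$, which seems fine. With the homotopies in $U$, the infinite length inside $U$ contradicts the uniform length bound there.
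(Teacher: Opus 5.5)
Your Step~3 (a positive minimum of $\Omega$ on a compact inserted loop, followed by repetition) is the right mechanism. But Step~2, which you rightly flag as the crux, fails as proposed.

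Growing the winding parameter $\lambda$ does not give a fixed-endpoint homotopy. At an intermediate stage the partial $\Phi$-orbit ends at $\mathscr S_{(0,\pm\lambda)}(q)$, everything after it must be rotated, and the final endpoint moves to $\mathscr S_{(0,\pm\lambda)}(\gamma(t))$. The device that absorbs a stationary-axial displacement $h$ with both endpoints fixed is a cutoff variation $v\mapsto\mathscr S_{-b(v)h}(\gamma(v))$ of an earlier timelike segment, with velocity $d\mathscr S(\dot\gamma-b'K_h)$. It stays future timelike only for $h$ in a neighbourhood of $0$ fixed by that segment. In general that neighbourhood contains no rotation of size comparable to $2\pi$, and taking $t$ later does not make your rotations smaller.

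Topologically it is worse. Contrary to Step~1, $\Phi$-orbits are not small in coordinates: each has period $2\pi$ and runs once around the axis alongside the ring. So $c^n$ represents $\pm n$ times the circle generator of $\pi_1(\mathcal N_\delta)\cong\mathbb Z^2$. Hence, in a punctured collar (a case the theorem covers), $\gamma|_{[s,t]}*c^n$ is not even homotopic to $\gamma|_{[s,t]}$ with fixed endpoints. Its lift to the universal cover of the collar is not closed, and it lies in no simply connected open subset of the collar.

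Step~2(a) has the same defect. You assert without construction that the detour $\gamma(t_1)\to q\to\gamma(t_2)$ is causally homotopic to $\gamma|_{[t_1,t_2]}$. Openness of the cone controls only small perturbations, and an out-and-back lift grown from a constant path ends at a displaced point that again has to be corrected.

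The missing idea is to insert a closed curve with \emph{zero real} stationary-axial displacement, inserted so that \emph{every intermediate} displacement is $O(\sqrt\Sigma)$; then \eqref{eq:timelike-endpoint-variation} absorbs it. In Theorem~\ref{thm:conformal-class-ring} this is done as follows.
\begin{itemize}
\item The out-and-back timelike lifts have total displacement $H$ with $|H|\le C_1R$.
\item They are combined with two independent Killing fields $K_i=\varepsilon\bigl((|a|-d_i)T+\operatorname{sgn}(a)\Phi\bigr)$, future timelike at a disc point $X_+=(0,\ell)$, whose negatives are future timelike at $X_-=(-\rho,0)$.
\item Orbit pieces of durations $uc_i^\pm$, with $c_1v_1+c_2v_2=-H$ for the unit-orbit displacements $v_i$, give displacement $(1-u)H$ at every stage and a closed curve $D$ at $u=1$.
\end{itemize}
Because $a\simeq_c a*D$ has fixed endpoints, it lifts to every cover with a \emph{closed} lift of $D$. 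That is why $\Omega$ need not descend.

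For the endpoint statement, you cannot choose the homotopies inside $\iota^{-1}(W)$. A nonconstant closed causal curve can never lie in a chart whose coordinate time increases along future causal curves; that is exactly the contradiction to be derived. You also have no control over the shape of $\iota^{-1}(W)$ near the ring. What is needed is Lemma~\ref{lem:causal-homotopy-trapping}. If the reference segment lies in a box $V$ whose local causal diamonds have closures compact in $W$, an open-and-closed argument in the homotopy parameter forces \emph{every} slice of \emph{any} fixed-endpoint causal homotopy into $W$. The inserted loop is then trapped there and must be constant.

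For a null approach, replacing $\gamma$ by a timelike curve with the same extension endpoint is not justified by widening cones. Instead, as in Corollary~\ref{cor:causal-ring-approaches}, prepend a short timelike segment in $\iota^{-1}(V)$ ending on $\gamma$, and use it only for the endpoint correction. The translated null tail stays causal because stationary-axial translations preserve the Kerr cones.
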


\subsection{Sketch of the proof}

Near the Kerr ring, a short transverse excursion reaches two points
with the following property. Two independent linear combinations of
the stationary and axial Killing fields are future timelike at one
point, and their negatives are future timelike at the other.
Four timelike orbit segments
then cancel the excursion's stationary and angular displacement.
A small variation of an earlier timelike segment keeps the endpoints
fixed throughout the deformation. Thus a nonconstant closed causal
curve $c$ can be inserted into a reference segment $a$:
\begin{equation}\label{eq:intro-loop-insertion}
 a\simeq_c a*c.
\end{equation}
The homotopy fixes the endpoints of $a$ and deforms the concatenation
as a whole; it does not contract $c$ through closed causal curves.

A continuous nondegenerate Lorentzian metric admits small coordinate
neighborhoods with a time coordinate and bounded causal speeds.
The resulting length bound applies to curves contained in the chart.
To compare it with the ring construction, we must also keep the causal
homotopy inside the chart. If $a$ lies sufficiently near an extension
endpoint, the closure of its causal diamond, formed by curves in the chart,
is compactly contained in the chart. An open-and-closed argument then confines
every slice of the fixed-endpoint homotopy to the same neighborhood. The inserted closed curve would then have
both constant and strictly increasing coordinate time, a contradiction.
This uses only local causality in the extension
\cite{Sbierski2018,GallowayLing2017,Guenther2021,ChruscielGrant2012}.
It applies even though the full Kerr spacetime contains closed timelike
curves. Repeating $c$ also makes the lengths in the same homotopy class
unbounded. Every positive continuous conformal factor has a positive
minimum on the compact image of $c$, so this length conclusion survives
arbitrary conformal rescaling.

To pass from the local obstruction to the global theorem, suppose
that a proper continuous extension exists. The boundary theorem of
Minguzzi and Suhr supplies a causal maximizer of finite positive length
ending at its boundary
\cite[Lemma~2.2 and Theorem~2.3]{MinguzziSuhr2019}.
In the smooth interior it is a timelike geodesic. Carter separation
and regular horizon continuation show that every finite-proper-time
terminal timelike geodesic approaches the ring. The local contradiction
therefore excludes every proper extension of the full realizations
described above.

At a continuous extension endpoint, local confinement instead gives
\begin{equation}\label{eq:intro-vanishing-lengths}
 \lim_{u\uparrow\Lambda}\sup_{u<s<t<\Lambda}
 \mathcal L^h_\gamma(s,t)=0
\end{equation}
(Theorem~\ref{thm:causal-length-criterion}). The ring construction
uses the full stationary range but arbitrarily small transverse
collars. Its homotopy lifts with fixed endpoints to every cover, so
the lifted inserted curve is closed. For a null approach, an auxiliary
timelike prefix supplies the endpoint correction; the infinite-length
assertion retains the timelike hypothesis on the reference curve.
Cone inclusion preserves the same argument for continuous Lorentzian
metrics whose future cones contain the Kerr cones
(Corollaries~\ref{thm:ring-cone-widening} and
\ref{cor:causal-ring-approaches}).

\subsection{Perturbations and geometric consequences}

The same construction gives quantitative perturbation results.
For continuous stationary and axisymmetric Lorentzian perturbations
$g+h$, the construction persists under the sharp strict estimate
\begin{equation}\label{eq:intro-sharp-conformal-bound}
 \limsup_{\Sigma\to0}\frac{\|h\|_{\mathrm E}}{\Sigma}
 <\frac1{1+a^2},
\end{equation}
in the specified coordinate norm. Rank-one examples above this
coefficient admit a conformal representative with finite length in
every fixed-endpoint causal homotopy class
(Theorem~\ref{thm:sharp-conformal-perturbation}). 

The terminal-geodesic analysis also yields results on generic
completeness, curvature near the ring, and polar Jacobi fields
(Section~\ref{sec:global-proof}). Section~\ref{sec:symmetry} treats
axial widths and the behavior of Killing fields at boundaries of
Kerr subdomains. The topology of the Carter gluings is described
in Appendix~\ref{app:explicit-carter-atlas}.

\subsection{Relation to other inextendibility results}

The ring obstruction concerns a different boundary from the regular
exact Cauchy horizon. For dynamical vacuum black holes, it has been proved 
nonlinear $C^0$-stability of the Cauchy horizon for
prescribed interior data \cite{DafermosLuk2025};
under nonlinear Price-law assumptions, Lipschitz
inextendibility has also been proved at a dynamical Cauchy horizon near timelike infinity
\cite{Gurriaran2026}. The timelike-completeness criterion by assuming global hyperbolicity has been discussed in
\cite{GLS2018}. The author established the warped-product theorem in the static spacetimes \cite{Gunara2026}, separately with other author \cite{Mosani2026} that  do not cover
the two-sheet Kerr ring .

Curvature and holonomy obstructions at higher regularity
\cite{SbierskiHolonomy,SbierskiWeakNull}, and questions of extension
uniqueness \cite{SbierskiUniqueness2024}, require different arguments.
Here only local cone bounds in a continuous extension are used.
The global theorems concern exact Kerr geometry. The perturbation
theorems below concern stationary and axisymmetric metrics near the
ring.
Conformal Kerr geometry has also been studied through curvature
regularity and geodesic completeness \cite{BambiModestoRachwal2017};
conformal length is considered in the synthetic setting in
\cite{ManzanoMosaniSamannZoghlami2025}. Our local conclusion allows
every positive continuous conformal factor and requires no completeness
assumption. The continuous $4\pi$ cone completion of the transverse
orbit space \cite{ChruscielMaliborskiYunes} is compatible with the
absence of a nondegenerate four-dimensional Lorentzian ring extension.

\subsection{Outline of the paper}

In Section~\ref{sec:setting} we define the Kerr realizations and fixes
the notation, while in Section~\ref{sec:c0geometry} we prove the local
bound for lengths in a causal homotopy class at a continuous boundary.
We provide the constructions of the causal homotopies
near the ring in Section~\ref{sec:two-sheet-fan}. We continue to 
show that a proper extension would produce a timelike geodesic ending
at the ring in Sections~\ref{sec:boundary-tools}-\ref{sec:geodesics}, 
and Section~\ref{sec:global-inextendibility} completes
the proofs of the main theorems in all three horizon regimes.
The perturbation estimates are proved in
Section~\ref{sec:ring-perturbations}. Sections~\ref{sec:global-proof}
and~\ref{sec:symmetry} discuss terminal geodesics, curvature,
Jacobi fields, and axial boundary geometry. Appendix~\ref{app:explicit-carter-atlas} is devoted to
describe the topology of the Carter realizations; the second appendix
derives the scalar expansion used for the higher-order terminal terms.

\section{Kerr geometry and Carter realizations}\label{sec:setting}

We first fix the local Kerr geometry and then specify the global
realizations used in the main theorem. The full horizon crossings and
the distinction between the two regular disc branches will be needed
in the continuation of timelike geodesics.

\subsection{The Kerr metric}

We use signature $(-,+,+,+)$ and first treat the subextremal range
\begin{equation}\label{eq:parameters}
 M>0,\qquad 0<|a|<M.
\end{equation}
The local ring arguments require only $M>0$, $a\ne0$;
Section~\ref{sec:global-inextendibility} gives the extremal and
overextremal models.
Our curvature convention is
\begin{equation}\label{eq:curvature-convention}
 R^\rho{}_{\sigma\mu\nu}
 =\partial_\mu\Gamma^\rho_{\nu\sigma}
 -\partial_\nu\Gamma^\rho_{\mu\sigma}
 +\Gamma^\rho_{\mu\lambda}\Gamma^\lambda_{\nu\sigma}
 -\Gamma^\rho_{\nu\lambda}\Gamma^\lambda_{\mu\sigma}.
\end{equation}
For the local Hodge dual near the ring we choose $e^0\wedge e^1\wedge e^2\wedge e^3$ in the coframe
below. This choice requires no global spacetime orientation;
reversing it changes only the sign of $\mathcal W_2$.
Our Newman-Penrose convention is $\Psi_2=-M(r-ia\cos\theta)^{-3}$.
Kerr's vacuum metric~\cite{Kerr1963}, written in a Boyer-Lindquist block, is
\begin{align}
g_{M,a}={}&-\left(1-\frac{2Mr}{\Sigma}\right)dt^2
-\frac{4Mar\sin^2\theta}{\Sigma}\,dt\,d\phi
+\frac{\Sigma}{\Delta}\,dr^2+\Sigma\,d\theta^2 \nonumber\\
&+\sin^2\theta\left(r^2+a^2+\frac{2Ma^2r\sin^2\theta}{\Sigma}\right)d\phi^2,
\label{eq:kerr-metric}
\end{align}
where
\begin{equation}\label{eq:sigma-delta}
 \Sigma\equiv r^2+a^2\cos^2\theta,
 \qquad
 \Delta\equiv r^2-2Mr+a^2.
\end{equation}
In the subextremal range $0<|a|<M$, the two roots are
\begin{equation}\label{eq:horizon-radii}
 r_\pm\equiv M\pm\sqrt{M^2-a^2}.
\end{equation}
The coefficient of $dt\,d\phi$ in \eqref{eq:kerr-metric} is
$2g_{t\phi}$. Thus
\begin{equation}\label{eq:orbit-components}
 g_{tt}=-1+\frac{2Mr}{\Sigma},
 \quad
 g_{t\phi}=-\frac{2Mar\sin^2\theta}{\Sigma},
 \quad
 g_{\phi\phi}=\sin^2\theta\left(r^2+a^2+\frac{2Ma^2r\sin^2\theta}{\Sigma}\right).
\end{equation}
Direct expansion gives
\begin{equation}\label{eq:determinants}
 g_{tt}g_{\phi\phi}-g_{t\phi}^2=-\Delta\sin^2\theta,
 \qquad
 \det(g_{\mu\nu})=-\Sigma^2\sin^2\theta.
\end{equation}
The vanishing of this coordinate determinant at the poles is due to
the polar coordinates. The geometric singular set is
\begin{equation}
 \Ring\equiv\{\Sigma=0\}=\left\{r=0,\ \theta=\frac\pi2\right\}.
\end{equation}
At $r=0$ with $\theta\ne\pi/2$, one has $\Sigma>0$ and $\Delta=a^2>0$; these are regular disc points of the analytic extension.

\begin{lemma}[Orthonormal coframe near the ring]\label{lem:coframe}
Where $\Sigma>0$ and $\Delta>0$, the one-forms
\begin{align}
 e^0&\equiv\sqrt{\frac{\Delta}{\Sigma}}\,(dt-a\sin^2\theta\,d\phi),
 &e^1&\equiv\sqrt{\frac{\Sigma}{\Delta}}\,dr,\nonumber\\
 e^2&\equiv\sqrt\Sigma\,d\theta,
 &e^3&\equiv\frac{\sin\theta}{\sqrt\Sigma}\bigl((r^2+a^2)d\phi-a\,dt\bigr)
\end{align}
satisfy
\begin{equation}
 g_{M,a}=-(e^0)^2+(e^1)^2+(e^2)^2+(e^3)^2.
\end{equation}
The metric is Lorentzian on every block and extends analytically
through the horizon charts in \textup{(C4)}.
\end{lemma}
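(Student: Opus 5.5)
The plan is to verify the identity $g_{M,a}=-(e^0)^2+(e^1)^2+(e^2)^2+(e^3)^2$ by direct algebra in the $(t,\phi)$ block. The $dr^2$ and $d\theta^2$ terms are immediate: $(e^1)^2=\Sigma\Delta^{-1}dr^2$ and $(e^2)^2=\Sigma\,d\theta^2$, matching \eqref{eq:kerr-metric}. Everything then reduces to the quadratic form in $(dt,d\phi)$:
\[
 -\frac{\Delta}{\Sigma}(dt-a\sin^2\theta\,d\phi)^2
 +\frac{\sin^2\theta}{\Sigma}\bigl((r^2+a^2)d\phi-a\,dt\bigr)^2 .
\]
I compare its three coefficients with \eqref{eq:orbit-components}.

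For the $dt^2$ coefficient I get $(-\Delta+a^2\sin^2\theta)/\Sigma$. Using $\Delta=r^2-2Mr+a^2$ and $r^2+a^2-a^2\sin^2\theta=\Sigma$, this equals $-(\Sigma-2Mr)/\Sigma=g_{tt}$.

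For the cross term, the coefficient of $dt\,d\phi$ is $2a\sin^2\theta(\Delta-(r^2+a^2))/\Sigma=-4Mar\sin^2\theta/\Sigma$, which equals $2g_{t\phi}$.

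For the $d\phi^2$ coefficient I get $\sin^2\theta\bigl((r^2+a^2)^2-\Delta a^2\sin^2\theta\bigr)/\Sigma$. Here I expand $(r^2+a^2)^2-\Delta a^2\sin^2\theta=(r^2+a^2)\Sigma+2Ma^2r\sin^2\theta$, using $(r^2+a^2)-a^2\sin^2\theta=\Sigma$. This reproduces $g_{\phi\phi}$. As a consistency check, the $2\times2$ determinant of the coframe form is $-\Delta\sin^2\theta\,(r^2+a^2-a^2\sin^2\theta)^2/\Sigma^2=-\Delta\sin^2\theta$, in agreement with \eqref{eq:determinants}.

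Smoothness of the one-forms requires $\Sigma>0$ and $\Delta>0$, so that the square roots are real and analytic. With the identity in hand, the metric has signature $(-,+,+,+)$ wherever $\Sigma>0$, $\Delta>0$ and $\sin\theta\neq0$, since there the coframe is a basis. At the poles the form is degenerate only because of the polar coordinates, a point already noted after \eqref{eq:determinants}.

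For blocks with $\Delta<0$ I use a swapped frame: $\tilde e^0=\sqrt{\Sigma/(-\Delta)}\,dr$ and $\tilde e^1=\sqrt{-\Delta/\Sigma}\,(dt-a\sin^2\theta\,d\phi)$. The same algebra shows the metric is again Lorentzian. In both cases the determinant $-\Sigma^2\sin^2\theta<0$ away from the axis and the ring confirms the count of one negative eigenvalue against three positive ones.

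The remaining claim is analytic extension through the horizon charts (C4). I would use Kerr-star or Eddington–Finkelstein-type coordinates $dt^{*}=dt\pm\frac{r^2+a^2}{\Delta}dr$ and $d\phi^{*}=d\phi\pm\frac{a}{\Delta}dr$. In these the $\Delta^{-1}dr^2$ singularity cancels, and the metric coefficients become polynomial in $r$, $\cos\theta$, $\sin\theta$ divided by $\Sigma$. They are therefore analytic wherever $\Sigma>0$, with determinant still $-\Sigma^2\sin^2\theta$. For the bifurcation sphere I would appeal to the Kruskal-type charts specified in (C4) rather than reprove them.

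The main obstacle is this horizon step. Its content depends on the precise form of the chart definitions (C4), which I take as given. The frame identity itself is routine bookkeeping, and the only place to watch is the $d\phi^2$ expansion.
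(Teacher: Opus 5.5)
Your proposal is correct and follows the paper's proof. Both expand the $(t,\phi)$ block using $r^2+a^2-\Delta=2Mr$ and $r^2+a^2-a^2\sin^2\theta=\Sigma$, check the determinant, verify the signature on each block, and defer horizon regularity (including at the bifurcation sphere) to the crossing charts of Lemma~\ref{lem:app-crossing-metric}. Your swapped frame on the $\Delta<0$ block is an equivalent substitute for the paper's observation that $g_{rr}<0$ and the orbit form is positive definite there, but note that a negative determinant alone only shows an odd number of negative eigenvalues, so it is the frame decompositions, not the determinant, that fix the signature.
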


\begin{proof}
The $dr^2$ and $d\theta^2$ terms are immediate.  For the orbit block,
put $\mathfrak s\equiv\sin\theta$ and $\rho^2\equiv r^2+a^2$.  Expanding
\begin{equation}
 -\frac{\Delta}{\Sigma}(dt-a\mathfrak s^2d\phi)^2
 +\frac{\mathfrak s^2}{\Sigma}(\rho^2d\phi-a\,dt)^2
\end{equation}
gives the three coefficients in \eqref{eq:orbit-components}, using
$\rho^2-\Delta=2Mr$ and $\rho^2-a^2\mathfrak s^2=\Sigma$.  The determinant of the
coframe matrix is $\Sigma\sin\theta$, hence
$\det g=-\Sigma^2\sin^2\theta$.  On $r_-<r<r_+$, one has $g_{rr}<0$, $g_{\theta\theta}>0$,
and a positive definite orbit form, since $g_{\phi\phi}>0$ and
$-\Delta\sin^2\theta>0$.  Smooth polar charts cover the axis, and
Lemma~\ref{lem:app-crossing-metric} preserves the signature across
both horizons.
\end{proof}

\subsection{Horizon-saturated Carter realizations}

Put
\begin{equation}
 I_1\equiv(-\infty,r_-),\qquad I_2\equiv(r_-,r_+),\qquad
 I_3\equiv(r_+,\infty).
\end{equation}
Let $\mu:\mathbb S^2\to[-1,1]$ be the standard axial height function,
equal to $\cos\theta$ away from the poles, and define the model blocks
\begin{equation}\label{eq:model-blocks}
 \mathcal B_j\equiv
 \bigl(\mathbb R_t\times I_j\times\mathbb S^2\bigr)
 \setminus\{(t,r,\omega):r=0,\ \mu(\omega)=0\}.
\end{equation}
The deleted set occurs only in $\mathcal B_1$.  Each block carries the
metric \eqref{eq:kerr-metric}, interpreted in smooth polar charts at the
axis.

For $h\in\{r_-,r_+\}$, let $\bar h$ be the other root and put
\begin{equation}\label{eq:horizon-normalizations}
 A_h\equiv h^2+a^2,
 \qquad \kappa_h\equiv\frac{h-\bar h}{2A_h},
 \qquad \Omega_h\equiv\frac{a}{A_h}.
\end{equation}
Let $\mathscr R_\zeta$ denote axial rotation of $\mathbb S^2$ through
angle $\zeta$, and write
\begin{equation}
 \mathsf A\equiv\mathbb R\times(\mathbb R/2\pi\mathbb Z)
\end{equation}
for the stationary-axial group.  If
$\boldsymbol\lambda=(\tau,\zeta)\in\mathsf A$, its actions on a model
block and on a crossing model are
\begin{align}
 \mathscr S_{\boldsymbol\lambda}(t,r,\omega)
 &\equiv\bigl(t+\tau,r,\mathscr R_\zeta\omega\bigr),
 \label{eq:graph-block-action}\\
 \widehat{\mathscr S}_{h,\boldsymbol\lambda}(U,V,\omega)
 &\equiv\bigl(e^{-\kappa_h\tau}U,e^{\kappa_h\tau}V,
 \mathscr R_{\zeta-\Omega_h\tau}\omega\bigr).
 \label{eq:graph-crossing-action}
\end{align}
The sector maps \eqref{eq:app-sector-map} satisfy
\begin{equation}\label{eq:graph-equivariance}
 \Xi_h^{\epsilon\eta}\circ\mathscr S_{\boldsymbol\lambda}
 =\widehat{\mathscr S}_{h,\boldsymbol\lambda}
 \circ\Xi_h^{\epsilon\eta}.
\end{equation}
These actions preserve the metric, $r$, $\mu$, and the Kerr symmetry
fields and tensors.

\begin{definition}[Horizon-saturated Carter realization]\label{def:carter-spacetime}
For fixed parameters \eqref{eq:parameters}, a member of
$\CarterClass$ is a marked tuple
\begin{equation}
 (\KerrSpace,g_{M,a},r,\mu,\mathfrak A)
\end{equation}
with the following properties.
\begin{enumerate}[label=(C\arabic*)]
 \item $(\KerrSpace,g_{M,a})$ is a connected, Hausdorff, second-countable,
 time-oriented analytic Lorentzian four-manifold.  The functions $r$ and
 $\mu$ are analytic, and
 \begin{equation}
  \Sigma=r^2+a^2\mu^2>0
 \end{equation}
 everywhere.  Thus every copy of the curvature ring is absent.
 \item The set
 \begin{equation}
  \mathcal H\equiv\{p\in\KerrSpace:\Delta(r(p))=0\}
 \end{equation}
 is the union of the regular future and past Kerr horizon branches
 together with their bifurcation two-spheres.  Every connected
 component $B_\alpha$ of $\KerrSpace\setminus\mathcal H$ comes with a marked
 analytic isometry
 \begin{equation}
  \Psi_\alpha:\mathcal B_{j(\alpha)}\longrightarrow B_\alpha
 \end{equation}
 preserving $r$ and $\mu$.  A block of type $\mathcal B_1$ contains the
 two distinct
 regular discs $\{r=0,\mu>0\}$ and $\{r=0,\mu<0\}$; they are not
 identified.
 \item The local fields $\partial_t$, $\partial_\phi$ and the Carter
 tensor \eqref{eq:app-carter-tensor} are the standard Kerr objects.
 Proposition~\ref{prop:canonical-kerr-fields} derives their global
 continuation from \textup{(C4)} and \textup{(C6)}; they are not
 additional data.

 \item Let $\mathscr E$ be the set of finite block endpoints,
 \begin{equation}
  \mathscr E\equiv\{(\alpha,h):h\in\{r_-,r_+\}
  \text{ is an endpoint of }I_{j(\alpha)}\}.
 \end{equation}
 There is a finite or countable set $\mathscr C$ of crossing labels.
 Each $\beta\in\mathscr C$ has a type
 $h_\beta\in\{r_-,r_+\}$ and a distinguished analytic isometric open
 embedding
 \begin{equation}
  \jmath_\beta:\mathcal X_{h_\beta}\longrightarrow\KerrSpace.
 \end{equation}
 The crossing images are pairwise disjoint.  Their axes exhaust the
 regular horizon set:
 \begin{equation}\label{eq:def-horizon-exhaustion}
  \mathcal H=\coprod_{\beta\in\mathscr C}
  \jmath_\beta\bigl(\{UV=0\}\times\mathbb S^2\bigr).
 \end{equation}
 The restrictions to $\{V\ne0\}$ and $\{U\ne0\}$ give the full
 ingoing and outgoing Eddington-Finkelstein charts
 \eqref{eq:app-ingoing-from-kruskal}-\eqref{eq:app-outgoing-from-kruskal}.

 There is a bijection
 \begin{equation}\label{eq:def-port-bijection}
  \vartheta:\mathscr E\longrightarrow
  \{(\beta,Q):\beta\in\mathscr C,
  Q\in\{Q_{++},Q_{--},Q_{+-},Q_{-+}\}\}.
 \end{equation}
 If $\vartheta(\alpha,h)=(\beta,Q_{\epsilon\eta})$, then
 $h=h_\beta$ and
 \begin{equation}\label{eq:def-radial-port-rule}
 \begin{array}{c|cc}
  &h=r_+&h=r_-\\ \hline
  \mathcal B_1&\text{--}&\epsilon\eta=-1\\
  \mathcal B_2&\epsilon\eta=-1&\epsilon\eta=+1\\
  \mathcal B_3&\epsilon\eta=+1&\text{--}.
 \end{array}
 \end{equation}
 For every endpoint $e=(\alpha,h)$ there is a parameter
 $\boldsymbol\lambda_e=(\tau_e,\zeta_e)\in\mathsf A$.  On the full
 collar
 \begin{equation}
  \mathcal C_e\equiv\mathbb R_t\times
  (I_{j(\alpha)}\cap J_h)\times\mathbb S^2
 \end{equation}
 the block and crossing embeddings satisfy the exact identity
 \begin{equation}\label{eq:def-exact-sector-identity}
  \Psi_\alpha(x)=\jmath_\beta\!\left(
  \widehat{\mathscr S}_{h,\boldsymbol\lambda_e}
  \bigl(\Xi_h^{\epsilon\eta}(x)\bigr)\right),
  \qquad x\in\mathcal C_e.
 \end{equation}
 In particular, all four compatible sectors occur at each crossing;
 no regular horizon generator is an end of $\KerrSpace$.

 \item Form the bipartite incidence multigraph $\mathscr G$ with the
 block labels $\alpha$ and crossing labels $\beta$ as vertices, and one
 edge $e=(\alpha,h)$ joining $\alpha$ to the crossing specified by
 $\vartheta(e)$.  The graph is connected and countable.  The marked
 families
 \begin{equation}
  \{\overline{B_\alpha}^{\,\KerrSpace}\}_\alpha,
  \qquad
  \{\jmath_\beta(\mathcal X_{h_\beta})\}_\beta
 \end{equation}
 are locally finite, and distinct block closures meet only in a common
 crossing image.  Consequently a continuous curve changes block only
 at $r=r_-$ or $r=r_+$, and a compact set meets only finitely many
 blocks.

 \item The images of the block and crossing embeddings cover
 $\KerrSpace$.  Two model points have the same image if and only if they
 are the same point of one model or are paired by
 \eqref{eq:def-exact-sector-identity}.  The atlas $\mathfrak A$ is
 generated by the Boyer-Lindquist and polar-cap charts on the blocks,
 the double-null and polar-cap charts on the crossing models, the
 stationary-axial actions
 \eqref{eq:graph-block-action}-\eqref{eq:graph-crossing-action}, the
 sector maps \eqref{eq:app-sector-map}-\eqref{eq:app-sector-inverse},
 and their restrictions.  Thus every regular disc and horizon point is
 included, and neither a partial horizon gluing nor an additional
 identification is allowed.
\end{enumerate}
\end{definition}

\begin{proposition}[Canonical global Kerr fields]
\label{prop:canonical-kerr-fields}
The marked block and crossing atlas of every member of $\CarterClass$
determines unique commuting analytic Killing fields $T$ and $\Phi$ and
a unique analytic Killing tensor $\mathsf K$ with the following
properties.  On every Boyer-Lindquist block,
\begin{equation}
 T=\partial_t,\qquad \Phi=\partial_\phi,
\end{equation}
the $\Phi$-flow has period $2\pi$ away from the axis, and $\mathsf K$ is
normalized so that
$\mathsf K(\dot\gamma,\dot\gamma)=\mathcal Q$ for every unit timelike
geodesic. 
\end{proposition}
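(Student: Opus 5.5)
The plan is to define $T$, $\Phi$ and $\mathsf K$ chart by chart from the marked atlas, check that the local pieces agree on every overlap, and then glue them. The only overlaps permitted by (C6) are of three kinds: within a single model, between a model and its image under a stationary-axial action, and the block--crossing pairings \eqref{eq:def-exact-sector-identity}. So it suffices to show the following. On each crossing model $\mathcal X_h$ there are analytic fields $\widehat T_h,\widehat\Phi_h$ and an analytic tensor $\widehat{\mathsf K}_h$. These are invariant under $\widehat{\mathscr S}_{h,\boldsymbol\lambda}$, and the sector maps $\Xi_h^{\epsilon\eta}$ push $\partial_t,\partial_\phi$ and the Boyer-Lindquist Carter tensor forward to them.

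\textbf{Step 1 (blocks).} On $\mathcal B_j$ take $\partial_t$ and $\partial_\phi$; the latter is extended smoothly across the axis by the rotation generator of $\mathbb S^2$. Take the Carter tensor \eqref{eq:app-carter-tensor}, normalized so that $\mathsf K(\dot\gamma,\dot\gamma)=\mathcal Q$. These objects commute with $\mathscr S_{\boldsymbol\lambda}$, since the group $\mathsf A$ is abelian and the Carter tensor depends only on $r$ and $\mu$.

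\textbf{Step 2 (crossings).} Define $\widehat T_h$ and $\widehat\Phi_h$ as the infinitesimal generators of the one-parameter subgroups $\tau\mapsto\widehat{\mathscr S}_{h,(\tau,0)}$ and $\zeta\mapsto\widehat{\mathscr S}_{h,(0,\zeta)}$. Explicitly, $\widehat T_h=-\kappa_hU\partial_U+\kappa_hV\partial_V-\Omega_h\,\partial_\varphi$ and $\widehat\Phi_h=\partial_\varphi$. Both are analytic on all of $\mathcal X_h$, including $\{UV=0\}$. The equivariance \eqref{eq:graph-equivariance}, differentiated in $\boldsymbol\lambda$ at $0$, gives $(\Xi_h^{\epsilon\eta})_*\partial_t=\widehat T_h$ and $(\Xi_h^{\epsilon\eta})_*\partial_\phi=\widehat\Phi_h$ on each sector. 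The actions are isometries, so these fields are Killing. They commute because $\mathsf A$ is abelian, and they are invariant under every $\widehat{\mathscr S}_{h,\boldsymbol\lambda}$.

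For $\mathsf K$, push the Boyer-Lindquist Carter tensor forward from one sector. Uniqueness of analytic continuation across $\{UV=0\}$ then does the rest. I will use the expression of $\mathsf K$ in terms of the metric and the analytic Killing--Yano (or principal null) data, which is a polynomial in the analytic quantities $r$, $\mu$, $g$ and $\widehat T_h,\widehat\Phi_h$ built on $\mathcal X_h$ (Lemma~\ref{lem:app-crossing-metric}). From this I will check that this formula is analytic on $\mathcal X_h$ and restricts to the Boyer-Lindquist tensor on each of the four sectors.

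\textbf{Step 3 (gluing and uniqueness).} On a collar $\mathcal C_e$, formula \eqref{eq:def-exact-sector-identity} composes $\Xi_h^{\epsilon\eta}$ with $\widehat{\mathscr S}_{h,\boldsymbol\lambda_e}$. Both preserve the objects, by Steps 1 and 2, so the local definitions agree and descend to global analytic $T,\Phi,\mathsf K$ on $\KerrSpace$. The Killing equations and $[T,\Phi]=0$ are local, so they hold globally. The period $2\pi$ of the $\Phi$-flow away from the axis holds because $\Phi$ generates $\mathscr R_\zeta$ on each block. Uniqueness follows from analyticity and connectedness (C1). Fields agreeing with $\partial_t,\partial_\phi$ on the open dense union of blocks agree everywhere, and Killing fields are determined by their $1$-jet at a point anyway. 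The normalization of $\mathsf K$ is fixed on the blocks, so $\mathsf K$ is unique as well.

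\textbf{Main obstacle.} The main obstacle is the Carter tensor in Step 2: showing that the sector pushforwards agree across all four sectors and extend analytically through the bifurcation sphere. If the Killing--Yano route is awkward, I will instead verify directly that the Boyer-Lindquist components, rewritten in $(U,V)$ via \eqref{eq:app-ingoing-from-kruskal}--\eqref{eq:app-outgoing-from-kruskal}, are regular analytic expressions. The singular factors $\Delta^{-1}$ should combine into $(UV)$-regular terms in the same way as for the metric.
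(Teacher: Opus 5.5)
Your proposal is correct and is essentially the paper's proof. The paper defines $T_h,\Phi_h$ on $\mathcal X_h$ by the same formulas and identifies them with $(\Xi_h^{\epsilon\eta})_*\partial_t$ and $(\Xi_h^{\epsilon\eta})_*\partial_\phi$ by differentiating the sector maps; it glues using invariance under the actions and gets commutation and uniqueness from density of $\{\Delta\ne0\}$; it obtains $\mathsf K$ as the square of the Killing--Yano form, whose crossing-chart expression \eqref{eq:app-ky-crossing} is explicitly analytic.

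One small correction: $\mathsf K$ is not literally a polynomial in $r,\mu,g,\widehat T_h,\widehat\Phi_h$. Its transverse part needs $dr$ or $d\mu$, and the resulting formulas divide by $\Delta$ or $1-\mu^2$. So the work is done by your fallback: substitute $dr=p_h(V\,dU+U\,dV)$ and $dt=(2\kappa_h)^{-1}(dV/V-dU/U)$ into the Killing--Yano form.
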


\begin{proof}
On a crossing model $\mathcal X_h$, define
\begin{equation}\label{eq:canonical-crossing-fields}
 T_h\equiv-\kappa_hU\partial_U+\kappa_hV\partial_V
 -\Omega_h\partial_\varphi,
 \qquad \Phi_h\equiv\partial_\varphi,
\end{equation}
where $\partial_\varphi$ is the smooth axial rotation field on
$\mathbb S^2$.  Differentiating \eqref{eq:app-sector-map} gives
\begin{equation}
 (\Xi_h^{\epsilon\eta})_*\partial_t=T_h,
 \qquad
 (\Xi_h^{\epsilon\eta})_*\partial_\phi=\Phi_h.
\end{equation}
The actions \eqref{eq:graph-block-action}-
\eqref{eq:graph-crossing-action} push these fields to themselves.
They therefore glue to analytic Killing fields.  Their bracket
vanishes on the dense block region, hence everywhere, and the axial
period is that of the standard sphere rotation.

The invariant block and crossing formulas
\eqref{eq:app-ky-form}, \eqref{eq:app-ky-crossing} similarly give a
global analytic Killing-Yano form $f$.  Its square
\begin{equation}
 \mathsf K_{\alpha\beta}
 \equiv f_{\alpha\lambda}f_\beta{}^\lambda
\end{equation}
gives an analytic Killing tensor.  The computation
\eqref{eq:app-carter-normalization} proves the asserted normalization.

Uniqueness follows from agreement on the dense set $\{\Delta\ne0\}$.
\end{proof}

\begin{definition}[Marked isometry]\label{def:marked-isometry}
Let
\begin{equation}
 \mathbf K\equiv(\KerrSpace,g,r,\mu,\mathfrak A),
 \qquad
 \mathbf K'\equiv(\KerrSpace',g',r',\mu',\mathfrak A')
\end{equation}
be members of $\CarterClass$.  A \emph{marked isometry}
$F:\mathbf K\to\mathbf K'$ is an analytic isometry of the underlying
Lorentzian manifolds such that
\begin{equation}
 r'\circ F=r,\qquad \mu'\circ F=\mu,\qquad
 F_*T=T',\qquad F_*\Phi=\Phi',\qquad F^*\mathsf K'=\mathsf K,
\end{equation}
and $F$ carries the marked block and crossing families onto those of
$\mathbf K'$, preserving the horizon type and the full-collar
incidence data.  It therefore induces an isomorphism of the underlying
incidence multigraphs.  At each crossing the induced port map is either
the identity or the simultaneous involution
\begin{equation}
 Q_{\epsilon\eta}\longmapsto Q_{-\epsilon,-\eta};
\end{equation}
the latter is called a discrete port relabelling.  No preservation of
the chosen time orientation is included in this definition unless it
is stated separately.  For a fixed port-labelled graph, the induced
vertex and edge maps and every port map are required to be the
identity; in particular no discrete port relabelling is then allowed.
\end{definition}

\begin{lemma}[Rigidity of marked isometries]
\label{lem:marked-isometry-rigidity}
Let $F:\mathbf K\to\mathbf K'$ be a marked isometry.  After relabelling
the pieces by its induced graph isomorphism, the restriction of $F$ to
each model block is a stationary-axial translation
$\mathscr S_{\boldsymbol b}$, and its restriction to a crossing model
of type $h$ is
\begin{equation}
 \widehat{\mathscr S}_{h,\boldsymbol b}
 \quad\hbox{or}\quad
 \mathscr J_h\circ
 \widehat{\mathscr S}_{h,\boldsymbol b},
 \qquad
 \mathscr J_h(U,V,\omega)\equiv(-U,-V,\omega).
\end{equation}
The second alternative simultaneously exchanges the two ports on
each radial side of the crossing.  After any discrete port
relabellings have been chosen, two parameter families on the resulting
fixed port-labelled graph define
marked-isometric realizations if and only if there are vertex
parameters $\boldsymbol b_v\in\mathsf A$ for which
\begin{equation}\label{eq:marked-isometry-gauge}
 \boldsymbol\lambda_e'
 =\boldsymbol b_{c(e)}+\boldsymbol\lambda_e
  -\boldsymbol b_{b(e)}.
\end{equation}
Consequently the signed holonomy in
\eqref{eq:graph-holonomy} is a complete continuous invariant for a
fixed port-labelled cycle, up to sign when its orientation is
reversed.
\end{lemma}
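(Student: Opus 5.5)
The plan is to work piece by piece: determine $F$ on each block, then on each crossing, and only then read off the compatibility condition on the gluing parameters.

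\emph{Blocks.} Fix a block $B_\alpha$ and let $\alpha'$ be its image under the induced graph isomorphism. Then $G\equiv\Psi_{\alpha'}^{-1}\circ F\circ\Psi_\alpha$ is an analytic isometry of $\mathcal B_{j(\alpha)}$ that preserves $r$ and $\mu$ and pushes $\partial_t,\partial_\phi$ to themselves. Because $G$ commutes with both flows and preserves $r,\mu$, in Boyer--Lindquist coordinates it has the form $(t,r,\omega)\mapsto(t+\tau(r,\mu),r,\mathscr R_{\zeta(r,\mu)}\omega)$. I will then show that $\tau$ and $\zeta$ are constant. The functions $r,\mu$ and the fields $T,\Phi$ are all preserved, so $G$ also preserves the gradients $\nabla r$ and $\nabla\mu$, which are orthogonal to the orbits. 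Hence $G_*\partial_r=\partial_r$ and $G_*\partial_\theta=\partial_\theta$. This forces $\partial_r\tau=\partial_\mu\tau=0$, and likewise for $\zeta$. The metric is diagonal between the orbit block and $dr,d\theta$, which makes this step straightforward. So $G=\mathscr S_{\boldsymbol b_\alpha}$, and connectedness of the block makes $\boldsymbol b_\alpha$ a single constant.

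\emph{Crossings.} On a crossing model of type $h$ the same argument applies on each of the four open sectors. There, via $\Xi_h^{\epsilon\eta}$, the map $F$ is conjugate to a stationary-axial translation composed with the sector permutation induced by the port map. By Definition~\ref{def:marked-isometry} this permutation is either the identity or $Q_{\epsilon\eta}\mapsto Q_{-\epsilon,-\eta}$. Using the equivariance \eqref{eq:graph-equivariance}, each sector restriction equals $\widehat{\mathscr S}_{h,\boldsymbol b}$, possibly composed with $\mathscr J_h$; note that $\mathscr J_h$ is an isometry exchanging opposite sectors. The parameters on the four sectors must agree. Analyticity across $\{UV=0\}$ gives this, since two analytic isometries that agree on an open set agree on the connected model. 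Alternatively, continuity at the bifurcation sphere rules out different translations on adjacent sectors. This yields a single crossing parameter $\boldsymbol b_\beta$.

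\emph{Gauge relation and holonomy.} I substitute into the exact identity \eqref{eq:def-exact-sector-identity} for $\mathbf K'$, evaluated at $F(\Psi_\alpha(x))$. The translations form the abelian group $\mathsf A$ and commute with the $\Xi$ maps, so comparing the two expressions for $F\circ\Psi_\alpha$ on the collar $\mathcal C_e$ gives \eqref{eq:marked-isometry-gauge} up to sign conventions for the edge orientation. Uniqueness of translation parameters holds because the $\mathsf A$-action is free on generic points. Conversely, given vertex parameters satisfying \eqref{eq:marked-isometry-gauge}, the piecewise translations are compatible with the identifications in (C6), so they glue to a well-defined analytic marked isometry. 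Finally, the relation \eqref{eq:marked-isometry-gauge} is a coboundary relation. Summing the signed $\boldsymbol\lambda_e$ around a cycle cancels the $\boldsymbol b_v$, so the holonomy is invariant. On a fixed cycle one can conversely solve for the $\boldsymbol b_v$ along a spanning path whenever the holonomies agree; reversing orientation flips the sign.

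\emph{Main obstacle.} I expect the hardest step to be the crossing analysis, specifically ruling out mixed behaviour where different sectors carry different translations or $\mathscr J_h$ acts on only some of them. My first approach is to use the identity theorem for analytic isometries on the connected crossing model.
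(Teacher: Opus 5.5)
Your proposal is correct and follows essentially the paper's route. On blocks you use translations forced by the block-diagonal form and the nonvanishing orbit determinant $-\Delta\sin^2\theta$. Your gradient formulation $\nabla r\propto\partial_r$, $\nabla\mu\propto\partial_\theta$ is equivalent to the paper's pairing of $\partial_r,\partial_\theta$ with $T,\Phi$. On crossings you conjugate by the sector maps and continue analytically over the connected crossing model, and you finish with the collar comparison and a gauge along a tree.

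Only minor tidying is needed. The collar comparison gives \eqref{eq:marked-isometry-gauge} with exactly the stated signs, so there is no sign ambiguity to hedge. The axis is covered by continuity from the connected off-axis region, including the punctured base of $\mathcal B_1$. For a cyclic core, use a spanning tree rather than a spanning path, so that the leaf edges are gauged to zero as well.
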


\begin{proof}
Consider first a block and work away from the rotation axis.  Since
$F$ preserves $r$ and $\mu=\cos\theta$, while it pushes $T$ and $\Phi$
to their primed counterparts, its expression in Boyer-Lindquist
coordinates is
\begin{equation}\label{eq:marked-block-form}
 F(t,r,\theta,\phi)
 =\bigl(t+f(r,\theta),r,\theta,
        \phi+g(r,\theta)\bigr).
\end{equation}
Here $f$ is real-valued and $g$ is
$\mathbb R/2\pi\mathbb Z$-valued.  On a simply connected base
neighborhood choose a real lift of $g$.
Let $X$ be either $\partial_r$ or $\partial_\theta$.  The Kerr metric
has no orbit-base cross terms.  Applying the isometry identity to the
pairs $(X,T)$ and $(X,\Phi)$ gives
\begin{equation}
 \begin{pmatrix}g_{tt}&g_{t\phi}\\
                 g_{t\phi}&g_{\phi\phi}
 \end{pmatrix}
 \begin{pmatrix}Xf\\Xg\end{pmatrix}=0.
\end{equation}
On a block away from the axis,
\begin{equation}
 g_{tt}g_{\phi\phi}-g_{t\phi}^2
 =-\Delta\sin^2\theta\ne0.
\end{equation}
It follows that $df=dg=0$ for every local lift, so $f$ and the circle-valued
function $g$ are locally constant.  The base away from the axis is connected, including the punctured
base of $\mathcal B_1$.  Thus both functions are constant.  Continuity
in polar charts gives $F=\mathscr S_{\boldsymbol b}$ on the full block.

On a crossing quadrant, conjugation by the sector maps and
\eqref{eq:graph-equivariance} gives
$\widehat{\mathscr S}_{h,\boldsymbol b}$.  If the ports are exchanged,
first compose with $\mathscr J_h$, which preserves
$r,\mu,T,\Phi,\mathsf K$.  Analytic continuation from the quadrant
gives the stated form on the connected crossing model.

Fix the discrete port exchanges.  On an
edge $e$ from a block $b(e)$ to a crossing $c(e)$, compatibility of
the two piece maps with the exact gluing identity, followed by
\eqref{eq:graph-equivariance}, is precisely
\eqref{eq:marked-isometry-gauge}.  Conversely, vertex translations
satisfying that equation agree on every collar and therefore descend
to a marked isometry of the quotients.  Summing
\eqref{eq:marked-isometry-gauge} with the circuit signs cancels every
vertex parameter.  A spanning-tree gauge removes all edge parameters
except this sum, while reversal changes every circuit sign.  The final
assertion follows.
\end{proof}

\begin{proposition}[Existence of a horizon-saturated realization]\label{prop:carter-existence}
For every $M>0$ and $0<|a|<M$, the class $\CarterClass$ is nonempty.
\end{proposition}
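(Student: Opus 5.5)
We construct an explicit member of $\CarterClass$ from a concrete incidence graph, build $\KerrSpace$ as a quotient of a disjoint union of model pieces, and then check (C1)--(C6) one by one.

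\textbf{The graph.} We use the standard Carter tiling. Blocks are indexed by pairs $(n,j)$ with $n\in\mathbb Z$ and $j\in\{1,2,3\}$, with two copies of each of $\mathcal B_1$ and $\mathcal B_3$ per period and two copies of $\mathcal B_2$ (a future and a past interior). Crossings of type $r_+$ and $r_-$ alternate along the chain. At each crossing of type $h$ the four ports $Q_{\epsilon\eta}$ are assigned according to the radial port rule \eqref{eq:def-radial-port-rule}:
\begin{itemize}
\item at an $r_+$ crossing, two exterior blocks $\mathcal B_3$ take $\epsilon\eta=+1$ and two interior blocks $\mathcal B_2$ take $\epsilon\eta=-1$;
\item at an $r_-$ crossing, two $\mathcal B_2$ blocks take $\epsilon\eta=+1$ and two $\mathcal B_1$ blocks take $\epsilon\eta=-1$.
\end{itemize}
Each $\mathcal B_2$ has two endpoints, one meeting an $r_+$ crossing and one an $r_-$ crossing. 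Each $\mathcal B_1$ and $\mathcal B_3$ has exactly one endpoint. With these choices $\vartheta$ is a bijection, and the resulting graph is connected, countable and locally finite. All gluing parameters are set to $\boldsymbol\lambda_e=0$.

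\textbf{The quotient.} We form $\KerrSpace$ as the quotient of the disjoint union of the model blocks $\mathcal B_j$ and the crossing models $\mathcal X_h$ by the relation \eqref{eq:def-exact-sector-identity}. This relation identifies each full collar $\mathcal C_e$ with the open quadrant $\Xi_h^{\epsilon\eta}(\mathcal C_e)$. Since the sector maps are analytic isometric diffeomorphisms onto their quadrants (assumed from the appendix formulas \eqref{eq:app-sector-map}-\eqref{eq:app-sector-inverse}), the following properties are immediate:
\begin{itemize}
\item the relation generated is an equivalence relation, because each block point lies in at most one collar per endpoint and collars of distinct endpoints are disjoint;
\item the pieces embed as open sets;
\item the analytic structure and the metric descend to the quotient.
\end{itemize}
Second countability follows from countability of the atlas. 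Axis charts and the removal of the ring in $\mathcal B_1$ are inherited from the models, so $\Sigma>0$ everywhere. This gives (C1) apart from Hausdorffness and time orientation.

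\textbf{Main obstacle: Hausdorffness.} Non-Hausdorff behaviour could only appear at the boundaries of glued regions. I would argue as follows. Suppose a block point and a crossing point cannot be separated. Then there is a sequence in a collar converging in $\mathcal B_j$ to $p$ and, after applying $\Xi_h^{\epsilon\eta}$, converging in $\mathcal X_h$ to $q$. Along such a sequence, $r$ tends both to $r(p)\in I_j$ and to $r(q)$.
\begin{itemize}
\item If $r(q)\ne h$, then $q$ lies in the open quadrant, and $q=\Xi(p)$ by continuity of $\Xi^{-1}$ on the quadrant. So the two points are already identified.
\item If $r(q)=h$, then $r(p)=h$, which lies outside the open interval $I_j$, a contradiction.
\end{itemize}
Two block points in different blocks are separated by $r$ or by the disjoint crossing images. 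Two crossing points with the same $r=h$ lie in distinct crossings, which are disjoint open sets.

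\textbf{Remaining conditions.} With Hausdorffness in hand, (C4)--(C6) hold by construction. The horizon set is exactly the union of the crossing axes, since $\Delta$ vanishes nowhere on the blocks, which gives \eqref{eq:def-horizon-exhaustion}. Local finiteness and the requirement that closures meet only in common crossings follow from the chain structure of the graph. For (C2), each connected component of $\KerrSpace\setminus\mathcal H$ is a single block, because the quadrants minus the axes are exactly collar images; the marked isometry is the inclusion $\Psi_\alpha$. (C3) is supplied by Proposition~\ref{prop:canonical-kerr-fields}.

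\textbf{Time orientation.} We choose a future vector field on each piece compatibly. On $\mathcal X_h$ we take $-\partial_U-\partial_V$, and on each block we take its pullback under the sector map. This is well defined because each block meets its crossings in quadrants whose signs $\epsilon,\eta$ are chosen consistently with the future/past labelling of the tiling. I would verify this consistency along the graph, which is a tree-like chain, so no monodromy arises.
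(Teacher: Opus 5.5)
Your construction is the paper's line realization $\mathscr K^{\mathrm{lin}}_{M,a}$ from Proposition~\ref{prop:explicit-carter-realization}: alternating $r_\pm$ crossings joined by intermediate blocks, two leaves at each crossing, and all $\boldsymbol\lambda_e=0$. Your Hausdorff argument is the paper's closed-relation argument. One small slip there: points of two different blocks are separated because distinct block images are disjoint open sets, not by $r$ or by crossing images.

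The step that fails is the time orientation. The field $-\partial_U-\partial_V$ is not timelike on $\mathcal X_{r_+}$. By \eqref{eq:app-bifurcation-metric}, at $U=V=0$ one has $g(\partial_U,\partial_U)=g(\partial_V,\partial_V)=0$ and $g(\partial_U,\partial_V)=2\Sigma_he^{-2\kappa_hb_h(h)}/(h-\bar h)$. This is positive for $h=r_+$, so $\partial_U+\partial_V$ is spacelike on that bifurcation sphere; the paper uses the cone of $-\partial_U+\partial_V$ at $r_+$ and of $\partial_U+\partial_V$ at $r_-$. In fact no fixed combination of $\partial_U,\partial_V$ works on the whole unbounded crossing model:
\begin{itemize}
\item $\partial_U$ is null everywhere, since on $V\ne0$ it is a multiple of the null field $\partial_r$ of the ingoing chart \eqref{eq:app-ingoing-from-kruskal}.
\item Along $V=0$, invariance under $\widehat{\mathscr S}_{h,(\tau,0)}$ keeps $g(\partial_U,\partial_V)$ fixed but makes $g(\partial_V,\partial_V)$ a $\theta$-dependent multiple of $U^2$. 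That multiple is positive near the equator for both horizon types.
\end{itemize}
Hence every $\alpha\partial_U+\beta\partial_V$ with $\beta\ne0$ becomes spacelike far out along that horizon, for $r_-$ as well. Separately, pulling a crossing field back through $\Xi_h^{\epsilon\eta}$ defines it only on a collar, not on a whole block, so your block fields are never actually specified.

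The repair is the tree argument you gesture at, phrased with cones rather than a prescribed field:
\begin{itemize}
\item Each block has a global timelike field: $Z\equiv A(r)T+a\Phi$, with $g(Z,Z)=-\Delta\Sigma<0$ where $\Delta>0$, and $\partial_r$ on $\mathcal B_2$, with $g(\partial_r,\partial_r)=\Sigma/\Delta<0$.
\item Each $\mathcal X_h$ is simply connected (a star-shaped $(U,V)$ domain times $\mathbb S^2$), so its time-orientation double cover is trivial. A cone chosen at the bifurcation sphere therefore extends uniquely over the whole model.
\item Every overlap is a connected full collar, and the incidence graph of your realization is a tree.
\end{itemize}
Fix an orientation on one piece and propagate it along edges. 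This gives a global time orientation with no compatibility condition to check, which is exactly how the paper concludes. With this change your proof coincides with the paper's.
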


\begin{proof}
Proposition~\ref{prop:explicit-carter-realization} in
Appendix~\ref{app:explicit-carter-atlas} constructs a marked member of
$\CarterClass$ directly from the Kerr formulas.
\end{proof}

In the subextremal case we fix an arbitrary marked tuple
$\mathbf K\in\CarterClass$ and write $(\KerrSpace,g_{M,a})$ for its
underlying spacetime.

\begin{lemma}[Block properness away from the ring]\label{lem:block-properness}
Let $B_\alpha$ be a block of a member of $\CarterClass$.  If $A>0$,
$\rho>0$, and $J\subset I_{j(\alpha)}$ is a compact interval, then the
image in $B_\alpha$ of
\begin{equation}
 \{\,|t|\le A,\ r\in J,\ \Sigma\ge\rho\,\}
\end{equation}
is compact in $\KerrSpace$.
\end{lemma}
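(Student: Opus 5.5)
The idea is to reduce the claim to compactness of a set in the model block and then push it forward. By (C2), $\Psi_\alpha:\mathcal B_{j(\alpha)}\to B_\alpha\subset\KerrSpace$ is an analytic isometric diffeomorphism onto the open component $B_\alpha$. The inclusion $B_\alpha\hookrightarrow\KerrSpace$ is continuous, and continuous images of compact sets are compact. It therefore suffices to show that
\begin{equation*}
 K\equiv\{(t,r,\omega)\in\mathcal B_{j(\alpha)}:\ |t|\le A,\ r\in J,\ r^2+a^2\mu(\omega)^2\ge\rho\}
\end{equation*}
is compact in the model block $\mathcal B_{j(\alpha)}$ with its manifold topology.

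To see this, view $K$ inside the ambient product $\mathbb R_t\times I_{j(\alpha)}\times\mathbb S^2$. There it is the intersection of the compact set $[-A,A]\times J\times\mathbb S^2$ with the closed set $\{r^2+a^2\mu^2\ge\rho\}$. The latter is closed because $\mu:\mathbb S^2\to[-1,1]$ is continuous, and this includes the poles, where the polar charts are smooth. So $K$ is compact in the product.

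Next, $K$ lies in the block. The only points deleted from the product are those with $r=0$ and $\mu=0$, where $\Sigma=0<\rho$, so $K$ misses them. Since $\mathcal B_{j(\alpha)}$ is an open subset of the product carrying the subspace topology, $K$ is a compact subset of it. Applying $\Psi_\alpha$ then gives compactness in $B_\alpha$, and hence in $\KerrSpace$.

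The one point that needs care is that compactness is taken in the block's own topology and not just in some closure. The condition $\Sigma\ge\rho$ keeps $K$ a positive distance from the deleted ring, and because $J$ is compact inside the open interval $I_{j(\alpha)}$, $K$ also stays away from the horizon values $r_\pm$, where the block ends. Neither the gluing data nor Hausdorffness of $\KerrSpace$ is needed beyond the fact that $\Psi_\alpha$ is continuous into $\KerrSpace$.
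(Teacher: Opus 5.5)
Your proposal is correct and follows the paper's proof: the set is closed in the compact product $[-A,A]\times J\times\mathbb S^2$, it misses the deleted ring because $\Sigma\ge\rho>0$, and its image under the continuous block embedding $\Psi_\alpha$ is therefore compact in $\KerrSpace$. Your extra checks, namely continuity of $\mu$ at the poles and that compactness is taken in the open block's subspace topology, are just the details the paper leaves implicit.
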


\begin{proof}
The set is closed in the compact space
$[-A,A]\times J\times\mathbb S^2$ and avoids the deleted ring.
Its image under the block embedding is compact.
\end{proof}

\begin{remark}
Conditions (C2), (C4), and (C6) exclude an unfilled regular disc or
horizon sector, while (C5) prevents escape through an accumulation of
block gluings.  No maximality or uniqueness property is included in the
definition.
\end{remark}

\begin{proposition}[Analytically extendible Kerr subdomains]\label{prop:false-domains}
Assume $0<|a|<M$.  The domain of outer communication $r>r_+$ is analytically extendible across $r=r_+$.  The standard two-ended Kerr maximal globally hyperbolic development is analytically extendible across each component of $r=r_-$.  Neither proper subdomain is therefore $C^0$-inextendible.
\end{proposition}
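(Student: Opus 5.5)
The plan is to exhibit, for each of the two subdomains, an explicit analytic Lorentzian manifold into which it embeds isometrically as a proper open subset. The extension comes from the horizon crossing models $\mathcal X_h$ and the sector maps $\Xi_h^{\epsilon\eta}$ already used in Definition~\ref{def:carter-spacetime}. Since an analytic metric is in particular continuous and nondegenerate, this also shows that neither subdomain is $C^0$-inextendible.

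For the domain of outer communication, I will identify it with the model block $\mathcal B_3=\mathbb R_t\times(r_+,\infty)\times\mathbb S^2$. I then restrict to the full collar near $h=r_+$ and use the sector map $\Xi_{r_+}^{\epsilon\eta}$ with $\epsilon\eta=+1$, as allowed by \eqref{eq:def-radial-port-rule}. This map carries the collar analytically and isometrically onto one open quadrant of $\mathcal X_{r_+}$.

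The extension is the manifold obtained by gluing $\mathcal B_3$ to $\mathcal X_{r_+}$ along this collar identification. I have to check two things. First, Hausdorffness: the gluing is along an open set, and its graph is closed because $r\to r_+$ along the boundary of the collar corresponds to $UV\to0$ inside $\mathcal X_{r_+}$. Second, the metric extends analytically and is Lorentzian across $\{UV=0\}$, which is Lemma~\ref{lem:app-crossing-metric}. The image of $\mathcal B_3$ misses the horizon axes $\{UV=0\}$, so it is a proper open subset. Alternatively, and more simply, I can take any member of $\CarterClass$ from Proposition~\ref{prop:carter-existence}. There a block $B_\alpha$ of type $\mathcal B_3$ is open, isometric to the exterior via $\Psi_\alpha$, and proper, since by (C4) its closure contains horizon points of $\mathcal H$ not in $B_\alpha$.

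For the maximal globally hyperbolic development, I will realize it inside a member of $\CarterClass$. It is the union of two $\mathcal B_3$ blocks and one $\mathcal B_2$ block, glued through a single $r_+$ crossing $\jmath_\beta(\mathcal X_{r_+})$ restricted to the appropriate region, and it is bounded by the $r_-$ ends of the $\mathcal B_2$ block. The isometry with the standard development is the usual Kruskal description. By (C4), each $r_-$ endpoint of the $\mathcal B_2$ block corresponds to a port $Q_{\epsilon\eta}$ with $\epsilon\eta=+1$ at an $r_-$ crossing. The full crossing image is therefore contained in $\KerrSpace$, and it contains points with $r=r_-$ and points in adjacent $\mathcal B_1$ blocks. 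So the development is a proper open subset, and the ambient metric is analytic across each component of $r=r_-$.

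The main obstacle is openness and properness, together with the precise identification of the development as the union described above. For openness, the union of the three blocks and the part of the $r_+$ crossing meeting them is open in $\KerrSpace$ because each piece is the image of an open embedding. For properness, I expect the identification to be the delicate point, and I would settle it by citing the standard Carter--Kruskal construction rather than reproving it.
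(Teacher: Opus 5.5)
\paragraph{The gap.} You identify the two-ended development incorrectly. It is not two blocks of type $\mathcal B_3$ and one of type $\mathcal B_2$ joined through part of an $r_+$ crossing. By the port table in (C4), the two interior ports $Q_{+-}$ and $Q_{-+}$ of an $r_+$ crossing carry two \emph{distinct} intermediate blocks, since each block of type $\mathcal B_2$ has only one $r_+$ endpoint. These are the black-hole interior and the white-hole interior. The standard two-ended Cauchy surface passes through the $r_+$ bifurcation sphere, and every neighbourhood of that sphere meets all four quadrants. Its development is therefore the union of the two exterior blocks, the two intermediate blocks and the \emph{full} crossing image $\jmath_\beta(\mathcal X_{r_+})$. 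In the line realization of the appendix this is $P_n^+\cup P_n^-\cup W_{n-1}\cup W_n\cup X_n$ for even $n$. The set you describe is only the future half. It omits the white-hole block, the past event-horizon branches and the bifurcation sphere. Consequently you never treat the components of $r=r_-$ that bound the development to the past, although the statement requires extension across each component of $r=r_-$.

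\paragraph{The repair and the comparison.} The fix is immediate in your framework. By (C4), the $r_-$ endpoint of the second intermediate block is also a port with $\epsilon\eta=+1$ at an $r_-$ crossing. That crossing's full image lies in $\KerrSpace$ and meets adjacent $\mathcal B_1$ blocks. So the corrected union is open and proper, and the ambient analytic metric continues across all of its $r=r_-$ boundary. With that correction, your route is a legitimate alternative to the paper's. The paper only observes that the ingoing and outgoing Eddington--Finkelstein changes \eqref{eq:ef-coordinates} make the metric analytic and nondegenerate across each simple root, and it leaves the global placement to the Carter construction. Working inside a member of $\CarterClass$ makes openness and properness explicit and also covers the bifurcation spheres. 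For the exterior, however, a single ingoing chart would avoid your direct gluing altogether. As written, your Hausdorff check for that gluing must also handle the outer end of the collar, where $|UV|\to\delta_{r_+}$ has no limit in $\mathcal X_{r_+}$.
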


\begin{proof}
The changes of variables
\begin{equation}\label{eq:ef-coordinates}
 v\equiv t+r_*(r),\qquad \bar\phi\equiv\phi+\phi_*(r),
 \qquad
 r_*'=\frac{r^2+a^2}{\Delta},\qquad \phi_*'=\frac{a}{\Delta}
\end{equation}
put the metric in a nondegenerate analytic form across either simple root when the corresponding ingoing branch is used; the outgoing branch is obtained by reversing the signs of $r_*$ and $\phi_*$.  Hence $r=r_+$ and $r=r_-$ are regular null hypersurfaces rather than singular boundaries.  The global placement of these charts is part of the Carter construction.
\end{proof}

\subsection{Extension regularity}

\begin{definition}[Lorentzian extensions]\label{def:extension}
Let $N$ be a connected smooth manifold equipped with a continuous
nondegenerate Lorentzian metric $h$.  Smoothness of $h$ will be stated
when it is needed.
\begin{enumerate}[label=(\alph*)]
 \item A $C^0$ extension is a connected smooth manifold $\widetilde N$ of the same dimension, an everywhere nondegenerate continuous Lorentzian metric $\widetilde h$, and a smooth isometric embedding $\iota:N\to\widetilde N$ whose image is a proper open subset.
 \item A $C^{1,1}_{\mathrm{loc}}$ extension is a $C^0$ extension for which, in every smooth coordinate chart, the components of $\widetilde h$ are $C^1$ and their first derivatives are locally Lipschitz.
 \item For $0<\alpha\le1$, a $C^{0,\alpha}_{\mathrm{loc}}$ extension is a
 $C^0$ extension for which the coordinate components of $\widetilde h$
 are locally H\"older continuous with exponent $\alpha$ in every smooth
 chart.  Thus $C^{0,1}_{\mathrm{loc}}$ is the locally Lipschitz class.
\end{enumerate}
The extension need not preserve a Killing field unless this is stated explicitly.
\end{definition}

An extension metric must be nondegenerate at every point. In
particular, a coordinate formula that is continuous but has zero
determinant at the ring does not define an extension in the sense of
Definition~\ref{def:extension}.

\section{Causal homotopies at a continuous boundary}\label{sec:c0geometry}

The purpose of this section is to bound the lengths in a fixed-endpoint
causal homotopy class near a continuous extension boundary. The
essential point is that a homotopy beginning inside a sufficiently
small coordinate diamond cannot leave the chart.

A future-causal path is locally Lipschitz, with future-causal or zero
tangent almost everywhere. Constant pauses are allowed. All homotopies
are jointly continuous and have fixed endpoints. In this section the
four-dimensional manifolds and extension embeddings are smooth; their
Lorentzian metrics may be continuous.

\subsection{Finite timelike diameter in a continuous coordinate chart}

\begin{definition}[Future timelike diameter]
For an open subset $U$ of a time-oriented Lorentzian manifold, let
$\mathscr T^+(U)$ denote the locally Lipschitz curves
$\sigma:[a,b]\to U$, $a<b<\infty$, with future-timelike tangent almost
everywhere. Define
\begin{equation}
 \tdiam^+(U,g)\equiv\sup_{\sigma\in\mathscr T^+(U)}L_g(\sigma).
\end{equation}
The supremum may be infinite. A sufficiently small coordinate
neighborhood admits a time orientation even if the ambient manifold
does not.
\end{definition}

\begin{proposition}[Finite local timelike diameter]\label{prop:finite-tdiam}
Every point of a continuous nondegenerate Lorentzian manifold has a
coordinate neighborhood $W$ with $\tdiam^+(W,g)<\infty$.
The diameters may be chosen to tend to zero along nested coordinate
boxes shrinking to the point.
\end{proposition}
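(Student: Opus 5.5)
We work in a coordinate chart around the given point $p$ and compare the continuous metric $g$ with two constant-coefficient Minkowski metrics whose cones pinch the cone of $g$. This gives a time function with bounded causal speeds, from which the length bound follows.

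\emph{Normalizing at $p$.} Choose coordinates $x=(x^0,\dots,x^3)$ centered at $p$. After a linear change of coordinates we may assume $g_{\mu\nu}(p)=\eta_{\mu\nu}=\mathrm{diag}(-1,1,1,1)$. Fix $\varepsilon\in(0,1/4)$. By continuity of the components there is a coordinate box
\begin{equation*}
 W_\delta\equiv\{|x^0|<\delta,\ |\vec x|<\delta\}
\end{equation*}
on which $|g_{\mu\nu}(x)-\eta_{\mu\nu}|<\varepsilon/4$ for all indices.

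\emph{Cone comparison.} For $v=(v^0,\vec v)$ the estimate above gives
\begin{equation*}
 |g(v,v)-\eta(v,v)|\le\varepsilon\,(|v^0|^2+|\vec v|^2).
\end{equation*}
If $g(v,v)\le0$, it follows that $|\vec v|^2(1-\varepsilon)\le(1+\varepsilon)|v^0|^2$. Hence every $g$-causal vector satisfies $|\vec v|\le\lambda|v^0|$ with $\lambda=\sqrt{(1+\varepsilon)/(1-\varepsilon)}<2$, and $v^0\ne0$ whenever $v\ne0$. So $g$-causal vectors lie inside the wider flat cone, and $\partial_0$ is $g$-timelike. We time-orient $W_\delta$ by declaring $\partial_0$ future; this is legitimate since $W_\delta$ is connected and the causal cones never meet $\{v^0=0\}$.

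\emph{Length bound.} Let $\sigma\in\mathscr T^+(W_\delta)$. Almost everywhere $\dot\sigma^0>0$ and $|\dot{\vec\sigma}|\le\lambda\dot\sigma^0$. Then
\begin{equation*}
 -g(\dot\sigma,\dot\sigma)\le-\eta(\dot\sigma,\dot\sigma)+\varepsilon\,(1+\lambda^2)(\dot\sigma^0)^2\le C(\dot\sigma^0)^2,
\end{equation*}
with $C=1+5\varepsilon$ for instance. Therefore
\begin{equation*}
 L_g(\sigma)=\int\sqrt{-g(\dot\sigma,\dot\sigma)}\le\sqrt C\int\dot\sigma^0\le 2\sqrt C\,\delta,
\end{equation*}
where the last step uses that $\dot\sigma^0>0$ a.e. and that $x^0$ ranges over $(-\delta,\delta)$ in $W_\delta$. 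Local Lipschitz continuity justifies the fundamental theorem of calculus for $\sigma^0$, and since $\sigma$ stays in $W_\delta$, $\sigma^0$ is monotone with total variation less than $2\delta$. Taking the supremum over $\sigma$ gives $\tdiam^+(W_\delta,g)\le2\sqrt C\,\delta<\infty$.

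\emph{Shrinking.} The boxes $W_{\delta'}$ with $\delta'<\delta$ are nested and satisfy the same estimates with the same $C$. Hence $\tdiam^+(W_{\delta'},g)\le2\sqrt C\,\delta'\to0$.

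The only delicate point is the choice of time orientation. The diameter is defined using a time orientation of $W$ that may not extend to the ambient manifold. Because $W$ is connected and $\partial_0$ is $g$-timelike there, $W$ has exactly two time orientations. Choosing $-\partial_0$ instead just replaces $x^0$ by $-x^0$, so the bound holds for either choice. No step appears hard; the main care is in making the quadratic-form comparison with explicit constants.
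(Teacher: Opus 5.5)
Your proof is correct and follows essentially the same route as the paper. Both arguments normalize $g(p)=\eta$ and use continuity together with $(\sum|v^\mu|)^2\le 4|v|_{\mathrm E}^2$ to get a uniform speed bound $|\vec v|\le C_0 v^0$ on future-causal vectors; both then bound $\sqrt{-g(v,v)}\le C_1 v^0$ and integrate coordinate time across the box, with constants that stay fixed as the box shrinks. The only difference is that you spell out the local time orientation (declaring $\partial_0$ future on the connected box), which the paper leaves implicit.
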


\begin{proof}
Choose coordinates with $g(0)=\eta\equiv\operatorname{diag}(-1,1,1,1)$.
By continuity, on $W\equiv(-\rho,\rho)^4$ we have
$|g_{\mu\nu}-\eta_{\mu\nu}|<\delta$, where $\delta>0$ is arbitrarily
small, and $\partial_0$ belongs to a fixed time cone. For a causal vector
$v=(v^0,\vec v)$,
\begin{align}
 |(g-\eta)(v,v)|
 &\le\delta\left(\sum_{\mu=0}^3|v^\mu|\right)^2
 \le4\delta|v|_{\mathrm E}^2,\label{eq:component-bound}\\
 -|v^0|^2+|\vec v|^2
 &=\eta(v,v)\le4\delta\bigl(|v^0|^2+|\vec v|^2\bigr).
\end{align}
For $\delta<1/8$, this gives $|\vec v|\le C_0|v^0|$.
Choose the future cone with $v^0>0$. Another application of the
coefficient bound gives $\sqrt{-g(v,v)}\le C_1v^0$, and hence
\begin{equation}
 L_g(\sigma)\le C_1\int\dot\sigma^0\,ds
 =C_1\bigl(\sigma^0_{\mathrm{end}}-\sigma^0_{\mathrm{start}}\bigr)
 \le2C_1\rho
\end{equation}
for every $\sigma\in\mathscr T^+(W)$. The constants remain bounded
as $\rho,\delta\downarrow0$.
\end{proof}

\subsection{Local trapping in a continuous extension}

The following lemmas give the required local confinement. Their
hypotheses concern only the causal cones in a coordinate neighborhood.

\begin{lemma}[Precompact local causal diamonds]
\label{lem:precompact-local-diamond}
Let $(\widetilde M,\widetilde g)$ be a continuous nondegenerate
Lorentzian manifold and $q\in\widetilde M$. There are coordinate
neighborhoods $q\in V\Subset D\Subset W$ and a time orientation on $W$
such that every future-directed $W$-causal curve with endpoints
$P,Q\in V$ lies in $D$. In particular,
\begin{equation}\label{eq:fan-local-diamond-closure}
 \overline{J_W^+(P)\cap J_W^-(Q)}\Subset W,
\end{equation}
where the closure is taken in $W$.
\end{lemma}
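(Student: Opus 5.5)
We work in coordinates adapted to $q$ and use the narrow-cone estimate from Proposition~\ref{prop:finite-tdiam}. Choose coordinates $x=(x^0,\vec x)$ centred at $q$ with $\widetilde g(q)=\eta$. Fix $\delta<1/8$ and $\rho>0$ such that $|\widetilde g_{\mu\nu}-\eta_{\mu\nu}|<\delta$ on $W\equiv(-\rho,\rho)^4$. As in that proof, every causal vector $v$ at a point of $W$ satisfies $|\vec v|\le C_0|v^0|$, and $v^0\ne0$ for $v\ne0$. Here $C_0$ is close to $1$ and depends only on $\delta$.

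The time orientation on $W$ is defined by declaring $v^0>0$ future. This choice is continuous because the cone of each point lies in $\{|\vec v|\le C_0|v^0|\}$, which has two components. A future-directed $W$-causal curve $\sigma$ is locally Lipschitz with $\dot\sigma^0\ge0$ and $|\dot{\vec\sigma}|\le C_0\dot\sigma^0$ almost everywhere. Integrating gives
\[
 |\vec\sigma(s)-\vec\sigma(s_1)|\le C_0\bigl(\sigma^0(s)-\sigma^0(s_1)\bigr),
\]
with $\sigma^0$ nondecreasing.

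Next we choose the nested sets. Let $D\equiv(-\rho/2,\rho/2)^4$, so that $D\Subset W$. Let $V\equiv(-\epsilon,\epsilon)^4$ with $\epsilon>0$ small, specifically $\epsilon(1+2C_0)<\rho/4$. Take a future-directed $W$-causal curve $\sigma:[s_1,s_2]\to W$ with $\sigma(s_1)=P$ and $\sigma(s_2)=Q$, both in $V$. Monotonicity gives $x^0(\sigma(s))\in[P^0,Q^0]\subset(-\epsilon,\epsilon)$ for every $s$. The integrated cone bound then gives
\[
 |\vec\sigma(s)-\vec P|\le C_0(Q^0-P^0)<2C_0\epsilon,
\]
so each spatial coordinate satisfies $|\sigma^i(s)|<\epsilon+2C_0\epsilon<\rho/4$. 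Hence $\sigma$ stays in $\overline{(-\rho/4,\rho/4)^4}$, a compact set contained in $D$.

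This estimate is exactly what we need: the curve is assumed to remain in $W$, so the cone bound applies along its whole length. Therefore $J^+_W(P)\cap J^-_W(Q)$, the set of points on such curves, lies in the compact box $\overline{(-\rho/4,\rho/4)^4}$. Its closure in $W$ lies in the same box, which gives \eqref{eq:fan-local-diamond-closure}.

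The only delicate point is the definition of a causal curve. Tangents may vanish on sets of positive measure (constant pauses), and the curves are only Lipschitz. This causes no difficulty, because the inequalities are needed only almost everywhere and then integrated. No causality assumption beyond the chart is used: the argument never leaves $W$, since the definition of $J_W$ already restricts to curves in $W$.
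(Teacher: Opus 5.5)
Your proof is correct and follows essentially the same route as the paper. Both use the cone-speed bound $|\vec v|\le C_0 v^0$ from Proposition~\ref{prop:finite-tdiam}, monotonicity of $x^0$ along $W$-causal curves, and a small concentric box $V$ whose integrated spatial drift confines every such curve, and hence the closure of the diamond, to a compact box inside $D\Subset W$.
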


\begin{proof}
Use a coordinate box centered at $q$ as in
Proposition~\ref{prop:finite-tdiam}. Its nonzero future-causal vectors
satisfy
\begin{equation}\label{eq:fan-cone-speed}
 v^0>0,\qquad |\vec v|\le C_0v^0.
\end{equation}
The strict inequality holds first for timelike vectors, then for causal
vectors by closure and the speed bound. Take $V$ to be the concentric
box of radius $r$. Along a $W$-causal curve from $P$ to $Q$ in $V$,
$x^0$ is nondecreasing and changes by at most $2r$. Thus every point
has $|x^0|\le r$ and spatial distance at most
$(\sqrt3+2C_0)r$ from $q$. Choose $r$ small enough that this closed
set lies in a box $D$ with $V\Subset D\Subset W$.
Every point of the diamond belongs to a concatenated $W$-causal curve
from $P$ to $Q$, so its closure lies in the same closed set. This proves
\eqref{eq:fan-local-diamond-closure}.
\end{proof}

\begin{lemma}[Fixed-endpoint causal-homotopy trapping]
\label{lem:causal-homotopy-trapping}
Let $(N,h)$ have a continuous time-oriented Lorentzian metric and a
$C^0$ extension $(\widetilde M,\widetilde g,\iota)$. Let $p,q\in N$
and let $W$ be a coordinate neighborhood whose time orientation agrees
at $\iota(p)$ with the pushed-forward future cone of $N$. Suppose
\begin{equation}\label{eq:fan-trapping-diamond-hypothesis}
 K\equiv\overline{J_W^+(\iota(p))\cap J_W^-(\iota(q))}\Subset W.
\end{equation}
If $\Gamma:[0,1]^2\to N$ is a continuous homotopy through future-causal
paths from $p$ to $q$ and
$\iota\circ\Gamma(0,[0,1])\subset W$, then
$\iota\circ\Gamma([0,1]^2)\subset W$.
\end{lemma}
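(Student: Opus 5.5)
We will prove the lemma with an open-and-closed argument in the homotopy parameter. Let
\begin{equation*}
 S\equiv\{\lambda\in[0,1]:\iota\circ\Gamma(\lambda,[0,1])\subset W\}.
\end{equation*}
By hypothesis $0\in S$, and $[0,1]$ is connected. It therefore suffices to show that $S$ is open and closed in $[0,1]$, and we will in fact show the stronger statement that every slice indexed by $S$ lies in the compact set $K$.

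\emph{Slices in $S$ lie in $K$.} Let $\lambda\in S$. The curve $\iota\circ\Gamma(\lambda,\cdot)$ is locally Lipschitz, since $\iota$ is a smooth embedding. Its tangent is future-causal or zero almost everywhere, because $\iota$ is an isometry onto an open set. We must check that ``future'' for $N$ agrees with ``future'' for $W$ along the whole curve, and not only at $\iota(p)$. The image $\iota(N)\cap W$ need not be connected, so we argue along the curve. The set of parameters where the two time orientations agree is open and closed in $[0,1]$, by continuity of the cones along a continuous curve in $W$, and it contains $s=0$ because the curve starts at $\iota(p)$. Hence the slice is a future-directed $W$-causal curve from $\iota(p)$ to $\iota(q)$, and it lies in $J_W^+(\iota(p))\cap J_W^-(\iota(q))\subset K$.

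\emph{$S$ is open.} Since $K\Subset W$, the set $\iota^{-1}(W)$ is an open neighborhood of the compact set $\Gamma(\lambda,[0,1])$. By joint continuity of $\Gamma$ and compactness of $[0,1]$ (a tube-lemma argument), $\Gamma(\lambda',[0,1])\subset\iota^{-1}(W)$ for all $\lambda'$ near $\lambda$. So $S$ is open.

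\emph{$S$ is closed.} Let $\lambda_n\in S$ with $\lambda_n\to\lambda$. Each $\iota\circ\Gamma(\lambda_n,s)$ lies in $K$, and $K$ is compact. Hence the limit $\iota\circ\Gamma(\lambda,s)$ lies in $K\subset W$ for every $s$. Note that $\iota$ is continuous on $N$, so the limit is taken in $\widetilde M$ and compactness of $K$ there suffices. Thus $\lambda\in S$, and $S$ is closed.

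The main obstacle is the time-orientation step. The hypothesis fixes agreement only at $\iota(p)$, whereas membership in $J_W^\pm$ requires the pushed-forward cones to be future for $W$ along the entire curve. The curve-wise continuity argument handles this. It uses that a causal vector cannot switch continuously between the two cones of $W$, and the argument must also account for the zero-tangent pauses, which we treat by working with the cone field of $\iota_*$ along the curve rather than with the tangent itself.
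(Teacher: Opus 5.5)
Your proof is correct and matches the paper's argument: the same open-and-closed parameter set, openness from compact slices and joint continuity of $\Gamma$, closedness from compactness of $K\subset W$, and agreement of time orientations propagated from $\iota(p)$ by local constancy of the relative sign. The paper takes that constancy on the component of $\iota(N)\cap W$ containing the slice, while you take it along the slice itself; these are equivalent. One small correction: openness of $S$ uses only that $W$ is open and $\lambda\in S$, not $K\Subset W$.
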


\begin{proof}
Set
\begin{equation}
 \mathcal A\equiv\{\lambda\in[0,1]:
 \iota\circ\Gamma(\lambda,[0,1])\subset W\}.
\end{equation}
It is nonempty and open, since each contained slice has compact image
and the homotopy is continuous. Every such slice is $W$-future causal:
the two time orientations agree at $\iota(p)$, and their relative sign
is locally constant on the component of $\iota(N)\cap W$ containing
the slice. Its image therefore lies in $K$.
If $\lambda_n\in\mathcal A$ and $\lambda_n\to\lambda$, continuity
gives $\iota(\Gamma(\lambda,t))\in K\subset W$ for every $t$.
Thus $\mathcal A$ is also closed and equals $[0,1]$.
This argument includes constant slices.
\end{proof}

\subsection{Causal homotopy lengths at a continuous boundary}

Let $(N,h)$ have a continuous time-oriented Lorentzian metric, and let
$\gamma:[0,\Lambda)\to N$ be locally Lipschitz and future causal.
For $s<t<\Lambda$, denote by $\mathscr H_\gamma(s,t)$ the future-causal
paths from $\gamma(s)$ to $\gamma(t)$ connected to
$\gamma|_{[s,t]}$ by a fixed-endpoint causal homotopy. Set
\begin{equation}\label{eq:causal-class-length}
 \mathcal L_\gamma(s,t)\equiv
 \sup_{\sigma\in\mathscr H_\gamma(s,t)}L_h(\sigma).
\end{equation}
We now combine the homotopy and diameter argument of
\cite[Section~2]{GallowayLing2017} with the local confinement principle
of \cite[Lemma~2.3]{Guenther2021}. The resulting coordinate-time
estimate shows that the supremum of the lengths tends uniformly to
zero at the boundary.

\begin{theorem}[Causal homotopy lengths at a continuous boundary]
\label{thm:causal-length-criterion}
Let $(N,h)$ have a continuous time-oriented Lorentzian metric, and let
$\gamma:[0,\Lambda)\to N$ be locally Lipschitz and future causal.
Suppose that a $C^0$ extension $(\widetilde N,\widetilde h,\iota)$
satisfies $\iota(\gamma(t))\to q\in\partial\iota(N)$ as
$t\uparrow\Lambda$. There are a coordinate time $x^0$ near $q$,
a constant $C>0$, and $s_0<\Lambda$ such that
\begin{equation}\label{eq:causal-class-local-bound}
 \mathcal L_\gamma(s,t)
 \le C\bigl[x^0(\iota(\gamma(t)))
             -x^0(\iota(\gamma(s)))\bigr]
 \qquad(s_0<s<t<\Lambda).
\end{equation}
Consequently,
\begin{equation}\label{eq:causal-class-vanishing}
 \lim_{u\uparrow\Lambda}
 \sup_{u<s<t<\Lambda}\mathcal L_\gamma(s,t)=0.
\end{equation}
For $s_0<s<t<\Lambda$, no nonconstant future-causal closed curve $c$
based at $\gamma(t)$ can satisfy
\begin{equation}\label{eq:closed-curve-boundary}
 \gamma|_{[s,t]}\simeq_c\gamma|_{[s,t]}*c.
\end{equation}
Thus an extension endpoint is impossible if some $\eta>0$ satisfies
$\mathcal L_\gamma(s,t)\ge\eta$ for a pair $s<t$ in every terminal
interval, or if a nonconstant closed curve can be inserted as in
\eqref{eq:closed-curve-boundary} arbitrarily late along $\gamma$.
\end{theorem}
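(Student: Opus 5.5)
The plan is to work entirely inside one small coordinate box around the endpoint $q$, given by Lemma~\ref{lem:precompact-local-diamond}. We bound every curve in the homotopy class by the coordinate-time increment, using the speed bound of Proposition~\ref{prop:finite-tdiam}. Lemma~\ref{lem:causal-homotopy-trapping} guarantees that the whole homotopy stays in the box.

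\textbf{Choice of chart and $s_0$.} Choose nested coordinate boxes $q\in V\Subset D\Subset W$ as in Lemma~\ref{lem:precompact-local-diamond}. On $W$ the future-causal vectors satisfy $v^0>0$, $|\vec v|\le C_0v^0$ and $\sqrt{-\widetilde h(v,v)}\le C_1v^0$. Since $\iota(N)$ is open and $\iota$ is an isometry, the pushed-forward time orientation of $N$ agrees with or opposes that of $W$ on each component of $\iota(N)\cap W$. Because $\iota\gamma(t)\to q$, there is $s_0$ such that $\iota\gamma([s_0,\Lambda))\subset V$, and this tail lies in a single component of $\iota(N)\cap W$. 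Reversing the orientation of $W$ if necessary, the two orientations agree along this tail. The dichotomy uses continuity of the cone orientation: a future cone of $\widetilde h$ pulled back is either the future or the past cone of $h$.

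\textbf{Length bound.} Fix $s_0<s<t<\Lambda$ and $\sigma\in\mathscr H_\gamma(s,t)$ with homotopy $\Gamma$. Then $\Gamma(0,\cdot)=\gamma|_{[s,t]}$ lies in $V\subset W$. The diamond hypothesis \eqref{eq:fan-trapping-diamond-hypothesis} holds by \eqref{eq:fan-local-diamond-closure}, since both endpoints lie in $V$. Lemma~\ref{lem:causal-homotopy-trapping} therefore puts every slice in $W$, in particular $\iota\circ\sigma$. This curve is $W$-future causal and locally Lipschitz, so
\[
L_h(\sigma)=L_{\widetilde h}(\iota\circ\sigma)\le C_1\int(\iota\circ\sigma)^{0\,\prime}=C_1\bigl[x^0(\iota\gamma(t))-x^0(\iota\gamma(s))\bigr].
\]
Null portions and pauses contribute zero length and nondecreasing $x^0$. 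This gives \eqref{eq:causal-class-local-bound} with $C=C_1$. Since $x^0(\iota\gamma(t))\to x^0(q)$ monotonically (it is nondecreasing along the tail), the right side tends to zero uniformly for $u<s<t$, which gives \eqref{eq:causal-class-vanishing}.

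\textbf{Closed-curve insertion.} Suppose \eqref{eq:closed-curve-boundary} held with $c$ nonconstant. The same trapping places $\iota\circ(\gamma|_{[s,t]}*c)$ in $W$ as a $W$-future-causal curve. Along $c$, the coordinate $x^0$ is nondecreasing with $\dot x^0>0$ wherever $\dot c\ne0$, because $v^0>0$ for nonzero causal vectors. Since $c$ is nonconstant and Lipschitz, $\dot c\ne0$ on a set of positive measure, so $x^0$ strictly increases along $c$. But $c$ is closed, so its $x^0$-increment is zero, a contradiction.

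\textbf{Final assertions.} These are contrapositives. A uniform lower bound $\eta$ arbitrarily late contradicts \eqref{eq:causal-class-vanishing}. Late insertions contradict the previous step. In either case no endpoint $q$ can exist.

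\textbf{Main obstacle.} The delicate step is the time-orientation matching needed for Lemma~\ref{lem:causal-homotopy-trapping}. We must ensure that $\iota\gamma$ on $[s_0,\Lambda)$ lies in one component of $\iota(N)\cap W$, with agreement at the base point $\iota(p)=\iota\gamma(s)$ for every $s>s_0$. This follows because the tail is connected and lies in $V$, after fixing the sign once at $\iota\gamma(s_0)$.
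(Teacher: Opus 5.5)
Your proposal is correct and follows essentially the same route as the paper: nested boxes $q\in V\Subset D\Subset W$ from Lemma~\ref{lem:precompact-local-diamond}, with $s_0$ chosen so that the tail lies in $V$; Lemma~\ref{lem:causal-homotopy-trapping} to confine the whole homotopy to $W$; integration of the cone bound $\sqrt{-\widetilde h(v,v)}\le Cv^0$; and the observation that a closed $W$-causal curve must have constant coordinate time and is therefore constant. Your extra care in matching the time orientations on the component of $\iota(N)\cap W$ containing the tail only spells out the paper's phrase ``orient $W$ consistently with the terminal curve.''
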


\begin{proof}
Choose $q\in V\Subset D\Subset W$ as in
Lemma~\ref{lem:precompact-local-diamond}, with the cone bounds of
Proposition~\ref{prop:finite-tdiam}. Orient $W$ consistently with the
terminal curve and choose $s_0$ so that
$\iota(\gamma((s_0,\Lambda)))\subset V$.
The reference segments lie in $V$, and their $W$-diamonds have
precompact closure. Lemma~\ref{lem:causal-homotopy-trapping} confines
every curve in $\mathscr H_\gamma(s,t)$ to $W$.
The cone comparison gives
$\sqrt{-\widetilde h(v,v)}\le Cv^0$ for every future-causal vector,
including zero. Integration gives
\begin{equation}
 L_h(\sigma)=L_{\widetilde h}(\iota\circ\sigma)
 \le C\bigl[x^0(\iota(\gamma(t)))
             -x^0(\iota(\gamma(s)))\bigr],
\end{equation}
and taking the supremum proves \eqref{eq:causal-class-local-bound}.
For $u<s<t$, the right-hand side is at most
$C[x^0(q)-x^0(\iota(\gamma(u)))]$, which tends to zero.

If \eqref{eq:closed-curve-boundary} held, $\iota\circ c$ would lie
in $W$. Its coordinate time would be nondecreasing with equal endpoint
values, hence constant. The cone-speed bound would make its spatial
coordinates constant as well. Local Lipschitz regularity gives absolute
continuity on its compact parameter interval, so $c$ would be constant.
This proves the closed-curve assertion and the stated obstructions.
\end{proof}

\subsection{Passage to covering spacetimes}

\begin{lemma}[Causal homotopy lengths under coverings]
\label{lem:covering-causal-length}
Let $\pi:(\widehat N,\widehat h)\to(N,h)$ be a smooth covering with
$\widehat h=\pi^*h$ and the lifted time orientation. If
$\widehat\gamma:[0,\Lambda)\to\widehat N$ is locally Lipschitz and
future causal, and $\gamma\equiv\pi\circ\widehat\gamma$, projection is
a length-preserving bijection
\begin{equation}
 \mathscr H_{\widehat\gamma}(s,t)
 \longrightarrow\mathscr H_\gamma(s,t),\qquad
 \widehat\sigma\longmapsto\pi\circ\widehat\sigma.
\end{equation}
In particular,
\begin{equation}\label{eq:covering-causal-length}
 \mathcal L_{\widehat\gamma}(s,t)=\mathcal L_\gamma(s,t)
 \qquad(0\le s<t<\Lambda).
\end{equation}
\end{lemma}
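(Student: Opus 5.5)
The plan is to show that the projection map is well defined, length-preserving, injective and surjective, using path and homotopy lifting for coverings.

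\emph{Well defined and length-preserving.} Since $\pi$ is a local isometry for $\widehat h=\pi^*h$ and preserves the lifted time orientation, it carries locally Lipschitz future-causal paths to paths of the same kind and the same length. In local charts $\pi$ is a diffeomorphism, so Lipschitz regularity is preserved. The tangent vector is pushed forward almost everywhere to a future-causal (or zero) vector with the same value of $\sqrt{-h(v,v)}$. If $\widehat\Gamma$ is a fixed-endpoint causal homotopy from $\widehat\gamma|_{[s,t]}$ to $\widehat\sigma$, then $\pi\circ\widehat\Gamma$ is a fixed-endpoint causal homotopy from $\gamma|_{[s,t]}$ to $\pi\circ\widehat\sigma$. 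Hence $\pi\circ\widehat\sigma\in\mathscr H_\gamma(s,t)$.

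\emph{Surjectivity.} Let $\sigma\in\mathscr H_\gamma(s,t)$ with causal homotopy $\Gamma:[0,1]^2\to N$, where $\Gamma(0,\cdot)=\gamma|_{[s,t]}$ after reparametrization. Apply the homotopy lifting property of coverings to $\Gamma$, taking the lift of the initial slice to be $\widehat\gamma|_{[s,t]}$.

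\begin{itemize}
\item The edges $\Gamma(\cdot,0)\equiv\gamma(s)$ and $\Gamma(\cdot,1)\equiv\gamma(t)$ are constant. Their lifts are therefore constant by unique path lifting, equal to $\widehat\gamma(s)$ and $\widehat\gamma(t)$. So the lifted homotopy $\widehat\Gamma$ fixes endpoints.
\item Each slice $\widehat\Gamma(\lambda,\cdot)$ is locally $\pi$ composed with a local inverse. Hence it is locally Lipschitz and future causal, by the local isometry and the choice of lifted time orientation.
\end{itemize}

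Thus $\widehat\sigma\equiv\widehat\Gamma(1,\cdot)$ lies in $\mathscr H_{\widehat\gamma}(s,t)$ and projects to $\sigma$.

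\emph{Injectivity.} Suppose two elements $\widehat\sigma_1,\widehat\sigma_2\in\mathscr H_{\widehat\gamma}(s,t)$ have the same projection. Both are paths starting at $\widehat\gamma(s)$ that lift the same path. By uniqueness of path lifting with a given initial point, $\widehat\sigma_1=\widehat\sigma_2$. Note that injectivity does not even require comparing the homotopy classes.

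Taking suprema of the lengths over the two sets then gives \eqref{eq:covering-causal-length}.

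The only delicate point is the regularity of lifted slices, namely that causality holds almost everywhere and that local Lipschitz continuity survives. This is local and follows because $\pi$ restricted to evenly covered sheets is a smooth isometry. The main care needed is to check that the lifted time orientation makes $\pi_*$ preserve future cones everywhere; this holds by the definition of the lifted time orientation.
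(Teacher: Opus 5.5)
Your proposal is correct and follows essentially the same route as the paper. Projection handles well-definedness and length preservation. The covering homotopy theorem, with reference lift $\widehat\gamma|_{[s,t]}$, gives surjectivity: the endpoint edges lift to constants, and the slices are future causal because inverse covering charts are time-orientation-preserving local isometries. Uniqueness of path lifting gives injectivity.
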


\begin{proof}
Projection preserves causal homotopies and length. Conversely,
parametrize the paths on $[0,1]$ and lift a homotopy from
$\gamma|_{[s,t]}$ using its reference lift
$\widehat\gamma|_{[s,t]}$ and the covering homotopy theorem
\cite[Section~1.3]{Hatcher2002}. The endpoint edges remain constant,
being continuous maps into discrete fibers. Inverse covering charts
are time-orientation-preserving local isometries, so every lifted slice
is future causal and has the original length. Uniqueness of path
lifting gives injectivity.
\end{proof}

\section{Causal homotopies at the Kerr ring}
\label{sec:two-sheet-fan}

In this section we construct causal homotopies with fixed endpoints
whose curves have arbitrarily large proper time near the Kerr ring.
Together with Section~\ref{sec:c0geometry}, this rules out a continuous
extension through the ring. The construction takes place entirely in
the smooth Kerr block and uses its Killing fields only there. We
parametrize a collapsed segment by a constant curve.

\subsection{Transverse timelike detours}

On a Boyer-Lindquist region with $\Delta>0$, put
\begin{equation}\label{eq:fan-alpha-beta-q}
 \alpha\equiv dt-a\sin^2\theta\,d\phi,\qquad
 \beta\equiv(r^2+a^2)d\phi-a\,dt,\qquad
 \mathfrak q\equiv\frac{\Sigma}{\Delta}\,dr^2+\Sigma\,d\theta^2.
\end{equation}
Lemma~\ref{lem:coframe} gives
\begin{equation}\label{eq:fan-coframe-decomposition}
 g_{M,a}=-\frac{\Delta}{\Sigma}\alpha^2
 +\frac{\sin^2\theta}{\Sigma}\beta^2+\mathfrak q,
 \qquad g^{-1}_{M,a}(\alpha,\alpha)=-\frac{\Sigma}{\Delta}<0.
\end{equation}
It follows that $\alpha$ is timelike on a sufficiently small punctured
ring collar. Since this collar is connected, a choice of time
orientation determines a unique $\varepsilon\in\{-1,1\}$ such that
a nonzero causal vector $V$ is future-directed if and only if
\begin{equation}\label{eq:fan-future-alpha}
 \varepsilon\alpha(V)>0.
\end{equation}
Fix $c>1$. The stationary-axial translations
\begin{equation}\label{eq:fan-stationary-axial-map}
 \mathscr S_{(\xi,\eta)}(t,r,\theta,\phi)
 \equiv(t+\xi,r,\theta,\phi+\eta),
 \qquad (\xi,\eta)\in\mathbb R\times\mathbb S^1,
\end{equation}
preserve the metric, $\alpha$, and the future cone.
On the negative equator, the vector
\begin{equation}\label{eq:fan-Y-minus}
 Y_-\equiv-\varepsilon\operatorname{sgn}(a)\Phi
\end{equation}
is future timelike for small $\rho>0$, since
\begin{equation}
 \varepsilon\alpha(Y_-)=|a|>0,\qquad
 g(Y_-,Y_-)=\rho^2+a^2-\frac{2Ma^2}{\rho}<0.
\end{equation}

\begin{lemma}[Timelike detours near the ring]
\label{lem:radial-attachment}
There are $r_0,C>0$ with the following property. Let $x$ lie in a
marked inner block with $R\equiv\sqrt{\Sigma(x)}<r_0$, and fix
$A_0\in(0,\pi)$. For every $0<\rho<R$ there is a continuous family
of future-causal paths $P_{x,\rho,\lambda}$, $0\le\lambda\le1$, from
$x$ to $\mathscr S_{h_{x,\rho,\lambda}}(x)$ such that:
\begin{enumerate}[label=\textup{(\roman*)}]
 \item $P_{x,\rho,0}$ is constant and $h_{x,\rho,0}=0$;
 \item $P_{x,\rho,1}$ contains a negative-equatorial axial arc at
 $r=-\rho$ of angular width $A_0$;
 \item every nonconstant piece is strictly future timelike, all paths
 lie in $0<\Sigma\le R^2$, and
 \begin{equation}\label{eq:radial-attachment-shift}
  \sup_{0\le\lambda\le1}|h_{x,\rho,\lambda}|\le C(R+A_0).
 \end{equation}
\end{enumerate}
Angular displacements have continuous real representatives, and
$|(\xi,\eta)|\equiv|\xi|+|\eta|$. The constants are independent of
$x,\rho,A_0$.
\end{lemma}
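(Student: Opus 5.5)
Work in the Boyer--Lindquist inner block near the ring, in the local coordinates $(t,r,\mu,\phi)$ with $\mu=\cos\theta$. There $\Sigma=r^2+a^2\mu^2$, and $(r,a\mu)$ serve as polar-type transverse coordinates around the ring. The idea is to build the path $P_{x,\rho,1}$ as a concatenation of three pieces, each timelike by construction:
\begin{enumerate}[label=(\alph*)]
 \item a transverse arc from $x$ to the negative-equatorial point at $r=-\rho$, $\mu=0$;
 \item the axial arc along the flow of $Y_-$ of angular width $A_0$;
 \item the return arc, which is the transverse arc of (a) traversed backwards.
\end{enumerate}
Every transverse step is corrected by a stationary-axial component, so that the tangent is future timelike. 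The endpoint is $\mathscr S_{h}(x)$ because each piece changes only $(t,\phi)$ relative to the reversed transverse motion. The deformation parameter $\lambda$ truncates the construction: the first half of $[0,1]$ runs the excursion out to a parameter $\lambda$-dependent depth and back, and the second half opens the axial arc from width $0$ to $A_0$. At $\lambda=0$ everything collapses to the constant path, which gives (i). Continuity in $\lambda$ is then automatic from the explicit parametrization.

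For the transverse pieces, the key local fact is that near the ring on the side $r<0$ the Killing combination $Y_-$ is uniformly timelike. Indeed,
\begin{equation*}
 g(Y_-,Y_-)=g_{\phi\phi}=\sin^2\theta\Bigl(r^2+a^2+\frac{2Ma^2r\sin^2\theta}{\Sigma}\Bigr),
\end{equation*}
which is $\le -c\,a^2M|r|\Sigma^{-1}$ when $r<0$ and $\Sigma$ is small. The transverse segment from $x$ must, however, first move to $r<0$ if $x$ has $r\ge0$, and must stay in $0<\Sigma\le R^2$. I would therefore take the transverse curve to be a curve in the $(r,a\mu)$ plane on the circle $\Sigma=R^2$ followed by a radial segment inward to $\Sigma=\rho^2$ at $\mu=0$, $r=-\rho$. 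Its tangent $X$ has $\mathfrak q(X,X)\lesssim \Sigma\cdot(\text{coordinate speed})^2$. Adding $kY_-$ with $k$ large gives a tangent $X+kY_-$ that is future timelike; $\varepsilon\alpha$ stays positive because $\varepsilon\alpha(Y_-)=|a|\sin^2\theta>0$.

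The hardest step is the region $r\ge0$, where $g_{\phi\phi}>0$. Here I would use instead the timelike covector $\alpha$ and the vector $Z=\varepsilon\Sigma\Delta^{-1}g^{-1}(\alpha)$. This $Z$ is a combination of $\partial_t$ and $\partial_\phi$ (namely $\propto (r^2+a^2)\partial_t+a\partial_\phi$) and has $g(Z,Z)=-\Sigma/\Delta$. Adding $kZ$ therefore dominates the transverse norm with $k$ bounded, since $\mathfrak q(X,X)\le C\Sigma|\dot{(r,\mu)}|^2$ and $\Delta\approx a^2$. The accumulated stationary-axial displacement is $\int k\,(\text{coefficients})$. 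Because the transverse path has coordinate length $O(R)$, and the coefficients of $Z$ and of $Y_-$ (when it is used) are bounded near the ring, this displacement is $O(R)$. The axial arc contributes $\eta=\pm A_0$ and $\xi=0$. This yields \eqref{eq:radial-attachment-shift} with $C$ depending only on $M,a$, provided I check that the choice of $k$ can be made uniformly on $\Sigma\le r_0^2$.

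Finally, I would confirm (iii): every nonconstant piece is strictly timelike, the paths stay in $0<\Sigma\le R^2$ (the ring itself, at $\Sigma=0$, is avoided since $\rho>0$), and the real representative of $\eta$ is continuous in $\lambda$. The constants $r_0$ and $C$ are fixed once the uniform choice of $k$ near the ring has been made.
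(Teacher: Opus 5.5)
Your construction is the paper's route. It uses the same outgoing detour, $Y_-$ arc and future-lifted return, and the same two-stage family: a truncated excursion for $\lambda\le\tfrac12$, then an axial arc of length $(2\lambda-1)A_0$ with the return lift translated. Your $Z$-lift is exactly the paper's lift $\beta(V)=0$, $\alpha(V)=\varepsilon c\,\Sigma k(W)/\sqrt\Delta$, whose orbit velocity $\varepsilon c\,(r^2+a^2,a)k(W)/\sqrt\Delta$ is $O(|W|)$ and gives the $O(R)$ shift. Your transverse path (arc on $\Sigma=R^2$, then radially inward along the negative equator) differs from the paper's (radial segment first, then an arc at radius $\rho$) only cosmetically.

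One claim is false and, as written, breaks a step: $Y_-$ is not timelike throughout $\{r<0\}$ near the ring. On $\Sigma=R^2$ one has $g_{\phi\phi}=a^2\bigl(1-2M|r|/R^2\bigr)+O(R^2)$ for $r<0$. This is positive when $|r|\lesssim R^2/(2M)$. Your circular arc crosses that region whenever it starts with $r\ge 0$ or with $|r|$ small, so the correction $X+kY_-$ fails on part of the path.

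The fix is already in your proposal. Use the $Z$-lift on the entire transverse path; its validity uses only $\Delta>0$, not the sign of $r$. Reserve $Y_-$ for the axial arc at $(-\rho,0)$, where $g(Y_-,Y_-)=\rho^2+a^2-2Ma^2/\rho<0$ for small $\rho$.

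Also correct the normalization. The field $\varepsilon\Sigma\Delta^{-1}g^{-1}(\alpha)$ has norm $-\Sigma^3/\Delta^3$ and satisfies $\varepsilon\alpha<0$, so it is past-directed. The field you actually use is $Z=\varepsilon\bigl((r^2+a^2)T+a\Phi\bigr)/\Delta$, with $g(Z,Z)=-\Sigma/\Delta$ and bounded coefficients. Since Kerr has no base--orbit cross terms, $X+kZ$ has norm $-(c^2-1)\Sigma k(W)^2$ for $k=c\sqrt\Delta\,k(W)$, $c>1$, which is the paper's estimate.
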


\begin{proof}
Put $y\equiv a\cos\theta$. Choose $0<r_0<|a|/2$ so that $\Delta>0$
for $|r|\le r_0$ and $Y_-$ is future timelike at $r=-\rho$ whenever
$0<\rho<r_0$. On $0<r^2+y^2<r_0^2$, both $\Delta(r)$ and $a^2-y^2$
have positive upper and lower bounds, and
\begin{equation}\label{eq:transverse-metric}
 \Sigma=r^2+y^2,\qquad
 \mathfrak q=\Sigma\left(\frac{dr^2}{\Delta}
                  +\frac{dy^2}{a^2-y^2}\right).
\end{equation}
For a transverse velocity $W\equiv(\dot r,\dot y)$, define
\begin{equation}
 k(W)\equiv\left(\frac{\dot r^2}{\Delta}
                  +\frac{\dot y^2}{a^2-y^2}\right)^{1/2},
\end{equation}
and choose its lift by
\begin{equation}\label{eq:transverse-lift}
 \beta(V)=0,\qquad
 \alpha(V)=\varepsilon c\frac{\Sigma}{\sqrt\Delta}k(W).
\end{equation}
Then
\begin{align}
 g(V,V)&=-(c^2-1)\Sigma k(W)^2,\nonumber\\
 (\dot t,\dot\phi)
 &=\varepsilon c\frac{(r^2+a^2,a)}{\sqrt\Delta}k(W),
 \qquad |\dot t|+|\dot\phi|\le C_0|W|_{\mathbb R^2}.
 \label{eq:transverse-orbit-bound}
\end{align}
Every nonconstant transverse segment therefore has a future-timelike
lift whose stationary-axial variation is bounded by $C_0$ times
its Euclidean length. The same estimate holds when the transverse
segment is traversed in the opposite direction.

Join the transverse point of $x$ to radius $\rho$ by a radial
segment, and then by a circular arc of angle at most $\pi$ to
$(-\rho,0)$; when the two choices have equal length, either may be
used. The resulting path $w$ lies in
$0<r^2+y^2\le R^2$ and has length at most $(1+\pi)R$.
For $0\le\lambda\le1/2$, lift a progressively longer initial portion
of $w$ and its reverse to future-timelike paths. For
$1/2\le\lambda\le1$, retain the two full lifts and insert between
them a $Y_-$ arc of parameter length $(2\lambda-1)A_0$. Translate
the return lift so that the pieces join.
The total shift is at most $2C_0(1+\pi)R+A_0$.
Using fixed parameter intervals and constant curves for collapsed
pieces makes the family jointly continuous, also at $\lambda=1/2$.
The translations preserve $\Sigma$ and the future cone. The axial
arc in the final path has length
\begin{equation}
 A_0\sqrt{\frac{2Ma^2}{\rho}-a^2-\rho^2}.
\end{equation}
\end{proof}

\subsection{Fixed-endpoint lengths at the ring}

\begin{theorem}[Infinite lengths near the Kerr ring]
\label{thm:fixed-endpoint-infinite-length}
Let $\gamma:[0,\Lambda)\to\KerrSpace$ be a future-directed $C^1$
timelike curve in a member of $\CarterClass$, with
$\Sigma(\gamma(t))\to0$ as $t\uparrow\Lambda$. For every $s<\Lambda$
there is $t_0(s)\in(s,\Lambda)$ such that
\begin{equation}\label{eq:fixed-endpoint-infinite-length}
 \mathcal L_\gamma(s,t)=+\infty\qquad(t_0(s)<t<\Lambda).
\end{equation}
The curve may be arbitrary within the stated timelike class. Given
$\delta>0$, $t_0$ may be chosen so that all varying portions of the
homotopies lie in one marked inner block with $0<\Sigma<\delta^2$.
\end{theorem}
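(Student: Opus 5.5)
The plan is to build, for each fixed $s$ and each sufficiently late $t$, fixed-endpoint causal homotopies from $\gamma|_{[s,t]}$ to future-timelike curves of arbitrarily large length. Every deformation will be supported near $x\equiv\gamma(t)$, inside one marked inner block with $0<\Sigma<\delta^2$. Choose $t_0(s)$ so that $\gamma((t_0,\Lambda))$ lies in a marked inner block with $R(t)\equiv\sqrt{\Sigma(\gamma(t))}<\min(r_0,\delta)$, where $r_0$ is from Lemma~\ref{lem:radial-attachment}. This is possible because (C5) makes block changes occur only at $r=r_\pm$, and $\Sigma\to0$ forces $r\to0$. The engine is the family $P_{x,\rho,\lambda}$ of Lemma~\ref{lem:radial-attachment}. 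For fixed small $A_0$, its final path contains an axial $Y_-$ arc of length $A_0\bigl(2Ma^2/\rho-a^2-\rho^2\bigr)^{1/2}$, which tends to $+\infty$ as $\rho\downarrow0$. The shift bound $|h_{x,\rho,\lambda}|\le C(R+A_0)$ does not depend on $\rho$. So the whole task is to absorb the endpoint displacement $\mathscr S_{h_{x,\rho,\lambda}}$ while keeping the endpoints fixed and staying causal.

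\emph{Step 1: route by slack.} I would first try to use the timelike slack of $\gamma$. Fix $u\in(s,t)$ close to $t$. Since $\gamma$ is $C^1$ timelike, $x\in I^+(\gamma(u))$ inside a Boyer--Lindquist chart. Hence there is $\epsilon>0$ such that $\mathscr S_{-k}(x)\in I^+(\gamma(u))$ for $|k|<\epsilon$. Moreover, these points can be reached by a family of future-timelike segments from $\gamma(u)$ that depends continuously on $k$ and equals $\gamma|_{[u,t]}$ at $k=0$; for instance, coordinate straight lines in $(t,r,y,\phi)$ joined by a convex interpolation. Then set
\[
\Gamma_\lambda\equiv\gamma|_{[s,u]}*\sigma_{-h_\lambda}*\bigl(\mathscr S_{-h_\lambda}\circ P_{x,\rho,\lambda}\bigr),
\]
with $h_\lambda=h_{x,\rho,\lambda}$. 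This runs from $\gamma(s)$ to $\mathscr S_{-h_\lambda}\mathscr S_{h_\lambda}x=x$. At $\lambda=0$ it is $\gamma|_{[s,t]}$ followed by a pause, and each slice is future causal because the translations preserve the future cone. The difficulty is that $\epsilon$ depends on $t$, while the shift bound contains $CR(t)$. So I must check that $\epsilon(t)\gtrsim R(t)$. I expect this from the uniform lift estimate \eqref{eq:transverse-orbit-bound}, since the cones of $g$ in the $(\Sigma,\ldots)$-rescaled transverse variables have size uniform in $R$, but this is not guaranteed.

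\emph{Step 2: exact cancellation (main route if Step 1 fails).} This is the mechanism of the introduction. In the orbit plane at $r=-\rho$ on the equator, the future-timelike Killing combinations form an open cone $\mathcal C_-$ containing $Y_-$. Perturbing $Y_-$ by small multiples of $T$ gives two independent generators $Z_1,Z_2\in\mathcal C_-$. I would then find a second point, reached by a further short transverse excursion lifted as in \eqref{eq:transverse-lift}, at which $-Z_1,-Z_2$ are future timelike. The candidate is the region near the ring with $\mu\ne0$ or $r>0$ where $g_{tt}>0$, on the other side of the cone structure. Any shift $h$ of size $\le C(R+A_0)$ can be written as a combination of $\pm Z_1,\pm Z_2$ with nonnegative coefficients depending continuously on $h$. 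Four timelike orbit segments, two at each point, then cancel $h$ exactly. Applied at every $\lambda$, this gives a closed causal insertion $c$, so that $\gamma|_{[s,t]}\simeq_c\gamma|_{[s,t]}*c$, with no slack needed. A small earlier variation of $\gamma$ only handles the transition where the coefficients turn on.

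\emph{Step 3: lengths.} Each final curve contains the $Y_-$ arc, so $\mathcal L_\gamma(s,t)\ge A_0\sqrt{2Ma^2/\rho-a^2-\rho^2}$ for every $\rho\in(0,R)$. Letting $\rho\downarrow0$ gives $+\infty$; alternatively, one can concatenate $c$ repeatedly, which is possible because each insertion is a causal homotopy. All varying pieces stay in $0<\Sigma\le R^2<\delta^2$ within the same block, because the translations preserve $\Sigma$. Causality of the concatenations at the joints is automatic, since pauses are allowed. The main obstacle is Step 2, namely locating the point where $-Z_1$ and $-Z_2$ are simultaneously future timelike, together with the continuity of the coefficient choice. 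I would settle it by explicit sign analysis of $g$ on $\mathrm{span}(T,\Phi)$ near $\Sigma\to0$, using \eqref{eq:fan-coframe-decomposition} and \eqref{eq:fan-future-alpha}.
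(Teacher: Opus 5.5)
\textbf{Step 1 has a real gap at the one point that carries the proof.} Your concatenation $\gamma|_{[s,u]}*\sigma_{-h_\lambda}*(\mathscr S_{-h_\lambda}\circ P_{x,\rho,\lambda})$ has exactly the shape of the paper's proof. However, you tie $u$ to $t$, so the admissible shift radius $\epsilon$ depends on $t$. Nothing in your argument forces $\epsilon(t)>CR(t)$. The curve is an arbitrary $C^1$ timelike curve: it may become nearly null near its end, and its $t,\phi$ coordinates may even be unbounded there. So neither the slack near $t$ nor coordinate straight lines give a bound uniform in $t$, and you flag this yourself.

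\emph{The missing idea is to decouple the correcting segment from $t$.}
\begin{itemize}
 \item Fix a compact segment $\gamma|_{[s_w,s_0]}$, with $s<s_w<s_0$, \emph{before} $t$ is chosen.
 \item Replace that segment by the cutoff variation $v\mapsto\mathscr S_{-b(v)h}(\gamma(v))$, where $b(s_w)=0$ and $b(s_0)=1$. Its velocity is $d\mathscr S_{-b(v)h}(\dot\gamma(v)-b'(v)K_h)$.
 \item Translate the entire remaining tail $\gamma|_{[s_0,t]}$ rigidly by $\mathscr S_h^{-1}$.
\end{itemize}
The translated tail is future timelike automatically, however degenerate $\gamma$ is near $t$, because $\mathscr S$ is an isometry preserving the future cone on the connected collar. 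Hence the neighbourhood $\mathcal U$ of admissible shifts depends only on the fixed segment. Since $|h_{x,\rho,\lambda}|\le C(R+A_0)$ and $R(t)\to0$, you choose $A_0$ and then $t_0$ so that all shifts lie in $\mathcal U$. Letting $\rho\downarrow0$ with $A_0$ fixed then gives infinite length. That is the entire proof; no cancellation step is needed.

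\textbf{Step 2 does not avoid this issue, and as described it would fail.}
\begin{itemize}
 \item Only the initial and final displacements of the loop deformation vanish. The intermediate ones are of size $O(R)$, so this route needs the same fixed-segment correction, which you mention only in passing.
 \item The proposed directions cannot work. If $Z_i=Y_-+c_iT$ with $c_i$ small, then $\varepsilon\alpha(-Z_i)=-|a|\sin^2\theta-\varepsilon c_i<0$ throughout a small collar, since there $\sin^2\theta\ge1-\delta^2/a^2$. By \eqref{eq:fan-future-alpha}, $-Z_i$ is therefore never future directed there.
 \item A workable pair lies near $\pm\varepsilon U$, where $U=|a|T+\operatorname{sgn}(a)\Phi$, so the $T$-coefficient must be of order $|a|$. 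The paper's $K_i=\varepsilon((|a|-d_i)T+\operatorname{sgn}(a)\Phi)$ is future timelike at $(r,y)=(0,\ell)$, where $g_{tt}=-1$, and $-K_i$ is future timelike at $(-\rho,0)$. This requires tuning $d_i\sim\ell^2/|a|$ and $\rho\lesssim d_i$ against each other.
 \item The region $g_{tt}>0$ that you propose plays no role.
\end{itemize}
The paper uses that cancellation construction for the conformal theorem, not for this one.
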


\begin{proof}
Fix $s<\Lambda$ and $\delta>0$. For all sufficiently late parameters,
$|r|<r_-/2$. The connected tail therefore avoids the horizons and lies
in one marked inner block.
Choose $s<s_w<s_0<\Lambda$ so that $\gamma([s_w,\Lambda))$ lies
there with $0<\Sigma<\min\{\delta,r_0\}^2$. Put
$w\equiv\gamma(s_w)$, $z_0\equiv\gamma(s_0)$, and choose smooth
$b:[s_w,s_0]\to[0,1]$ with $b(s_w)=0$, $b(s_0)=1$.
For a small real stationary-axial displacement $h\equiv(\xi,\eta)$,
set
\begin{equation}\label{eq:timelike-endpoint-variation}
 \eta_h(v)\equiv\mathscr S_{-b(v)h}(\gamma(v)),
 \qquad s_w\le v\le s_0.
\end{equation}
With $K_h\equiv\xi T+\eta\Phi$,
\begin{equation}
 \dot\eta_h(v)
 =d\mathscr S_{-b(v)h}\bigl(\dot\gamma(v)-b'(v)K_h\bigr).
\end{equation}
Since the reference segment is compact and strictly timelike, the
openness of the future cone and its invariance under $\mathscr S$ give
a neighborhood $\mathcal U$ of zero for which $\eta_h$ is future
timelike whenever $h\in\mathcal U$. This curve joins $w$ to
$\mathscr S_h^{-1}(z_0)$, agrees with the reference segment when
$h=0$, and stays in the same collar.

Choose $A_0>0$ and then $R_0>0$ sufficiently small that
\eqref{eq:radial-attachment-shift} places all the shifts in
$\mathcal U$ whenever $\sqrt{\Sigma(x)}<R_0$ and
$0<\rho<\sqrt{\Sigma(x)}$.
Choose $t_0>s_0$ with
$\sqrt{\Sigma(\gamma(t))}<\min\{R_0,r_0,\delta\}$ for $t>t_0$.
Fix such $t$, put $q\equiv\gamma(t)$, $R\equiv\sqrt{\Sigma(q)}$,
and take $0<\rho<R$. Abbreviate the paths and shifts of
Lemma~\ref{lem:radial-attachment} by $P_\lambda,h_\lambda$. Then
\begin{equation}\label{eq:fixed-endpoint-radial-concatenation}
 \gamma|_{[s,s_w]}*\eta_{h_\lambda}
 *\mathscr S_{h_\lambda}^{-1}(\gamma|_{[s_0,t]})
 *\mathscr S_{h_\lambda}^{-1}(P_\lambda)
\end{equation}
is a future-causal path from $\gamma(s)$ to $q$. The adjacent pieces
join, and the final endpoint is
$\mathscr S_{h_\lambda}^{-1}\mathscr S_{h_\lambda}(q)=q$.
At $\lambda=0$, this is the reference segment with a constant pause,
which can be removed by a homotopy of monotone reparametrizations.
We have therefore obtained a causal homotopy with fixed endpoints.
The first segment is fixed throughout; the other pieces lie in the
chosen collar. At $\lambda=1$, the negative-equatorial arc has length
$A_0\sqrt{2Ma^2/\rho-a^2-\rho^2}$. Since $A_0$ is independent of
$\rho$, these lengths tend to infinity as $\rho\downarrow0$.
\end{proof}

For sufficiently small $\delta>0$, define the full punctured ring collar
\begin{equation}\label{eq:punctured-ring-collar}
 \mathcal N_\delta\equiv
 \{(t,r,\theta,\phi):0<r^2+a^2\cos^2\theta<\delta^2\}
 \subset\mathcal B_1,
\end{equation}
where the stationary and axial coordinates have their full ranges.
The preceding construction applies within this collar to every $C^1$
timelike curve with $\Sigma\to0$. By
Theorem~\ref{thm:causal-length-criterion}, such a curve cannot end in
a continuous extension. The outer boundary $\Sigma=\delta^2$ is
smoothly extendible, so the obstruction is confined to the ring.
Since $\Delta(0)=a^2>0$ and the negative-equatorial axial norm tends
to $-\infty$, the argument applies to every $M>0$, $a\ne0$, including
the extremal and overextremal models considered below.

\subsection{Conformal ring obstruction}

We now extend the ring obstruction to positive continuous conformal
rescalings. At one transverse point, two independent
stationary-axial directions are future timelike; at a second point,
their negatives are future timelike.
Their orbit segments cancel the displacement of the connecting paths
and give a closed causal curve that can be inserted into the reference
curve by a causal homotopy. Repeating this closed curve yields
arbitrarily large proper time for the rescaled metric. The construction
uses the full stationary range and can be carried out at arbitrarily
small transverse radii.

\Needspace{12\baselineskip}
\begin{theorem}[Conformal ring obstruction]
\label{thm:conformal-class-ring}
Let $g\equiv g_{M,a}$, $M>0$, $a\ne0$, on a full punctured ring collar
$\mathcal N_{\delta_0}$ with $t\in\mathbb R$ and $\phi\in\mathbb S^1$.
Let $\pi:\widehat{\mathcal N}\to\mathcal N_{\delta_0}$ be a connected
covering, let $\widehat\Omega>0$ be continuous, and put
$\widehat g\equiv\widehat\Omega^2\pi^*g$ with the lifted time
orientation. Suppose that a future-directed $C^1$ timelike curve
$\gamma:[0,\Lambda)\to\widehat{\mathcal N}$ satisfies
$\Sigma\circ\pi\circ\gamma\to0$.

For every $s<\Lambda$, there is $t_0\in(s,\Lambda)$ such that for
each $t\in(t_0,\Lambda)$ a future-causal closed curve $c$ based at
$\gamma(t)$ satisfies
\begin{equation}\label{eq:conformal-closed-curve-insertion}
 L_{\widehat g}(c)>0,\qquad
 \gamma|_{[s,t]}\simeq_c\gamma|_{[s,t]}*c.
\end{equation}
Consequently,
\begin{equation}\label{eq:conformal-class-infinite-lengths}
 \mathcal L^{\widehat g}_\gamma(s,t)=+\infty
 \qquad(t_0<t<\Lambda),
\end{equation}
and $\gamma$ has no endpoint in a $C^0$ extension.
For every $0<\delta<\delta_0$, the varying portions of the homotopies
can be confined to $\pi^{-1}(\mathcal N_\delta)$.
The factor $\widehat\Omega$ may vary arbitrarily on the cover,
without symmetry, asymptotic restrictions, or descent to the base.
\end{theorem}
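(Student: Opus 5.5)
The plan is to work first on the base collar $\mathcal N_{\delta}$ with the exact Kerr metric $g$, build the closed curve and the homotopy there, and then lift to the cover and pass to $\widehat g$. Conformal rescaling leaves causal character and causal homotopies unchanged. Only lengths change.

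\emph{Step 1 (the closed curve).} Fix a small transverse radius $\rho$ and two transverse points: $x_-$ on the negative equator at $r=-\rho$, and $x_+$ at $r=+\rho$ on the equator, or at a nearby transverse point. At $x_-$ the axial norm $\rho^2+a^2-2Ma^2/\rho\to-\infty$. I would then check that two independent combinations $K_1,K_2$ of $T$ and $\Phi$ are future timelike there: $Y_-$ together with $Y_-+\epsilon T$ for small $\epsilon$, since $\alpha$ gives the future condition and timelikeness is open. At $x_+$, or at $-\rho$ after the sign flip $r\mapsto -r$ in $\alpha$-pairing (this still needs checking), I need $-K_1,-K_2$ future timelike. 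The cleanest candidate is to use both signs of $\cos\theta$ near the equator together with the fact that $g_{\phi\phi}<0$ for small $|r|$ with $r<0$. If needed I would instead pick the second point by continuity along a transverse path and compare $\varepsilon\alpha$ signs.

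Next, join $x_-$ to $x_+$ and back by the timelike transverse lifts of Lemma~\ref{lem:radial-attachment}, eq.~\eqref{eq:transverse-lift}. These produce some net stationary-axial displacement $h\in\mathbb R\times\mathbb R$ (a real lift). Since $K_1,K_2$ span the orbit directions and their cones at the two points are opposite, $h$ can be written as a combination of $K_1,K_2$ flowed at $x_-$ and $-K_1,-K_2$ flowed at $x_+$ with nonnegative coefficients. The full stationary range is what makes this possible, and it gives the four orbit segments. The result is a closed, nonconstant, future-timelike loop $c_0$ through $x_-$, contained in $\{0<\Sigma\le\rho'^2\}$ with $\rho'$ arbitrarily small.

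\emph{Step 2 (insertion homotopy).} For $q=\gamma(t)$, first attach $c_0$ to $q$ by the detour of Lemma~\ref{lem:radial-attachment}, going out and returning to $q$, so that the loop is based at $q$. Then proceed as in the proof of Theorem~\ref{thm:fixed-endpoint-infinite-length}. Parametrize partial loops by $\lambda$: traverse a growing initial part of the detour and loop, and close up with a short transverse return. Every partial path ends at some $\mathscr S_{h_\lambda}(q)$ with $h_\lambda$ small, uniformly bounded by $C(R+\text{loop size})$. Absorb this shift with the earlier timelike variation $\eta_{h_\lambda}$ of \eqref{eq:timelike-endpoint-variation}, as in \eqref{eq:fixed-endpoint-radial-concatenation}. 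At $\lambda=1$ the shift vanishes, because the curve is genuinely closed, and the concatenation is $\gamma|_{[s,t]}*c$.

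I expect the main obstacle here: the intermediate partial orbit segments can have large shifts, since the orbit coefficients need not be small. The fix I would try first is to make the loop's orbit coefficients small. Choose $x_\pm$ close together and near the ring so that $h$ is small, or alternatively deform the coefficients continuously from zero by scaling the whole loop construction. This keeps every intermediate shift in the neighborhood $\mathcal U$.

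\emph{Step 3 (cover, conformal factor, conclusion).} Lift the homotopy to $\widehat{\mathcal N}$ starting from the lifted reference segment, using Lemma~\ref{lem:covering-causal-length}. The endpoint edges stay fixed, so the lifted final curve is $\gamma|_{[s,t]}*\widehat c$ with $\widehat c$ closed. Timelike pieces give $L_{\widehat g}(\widehat c)\ge \min_{\widehat c}\widehat\Omega\cdot L_g(c)>0$. Insert $c$ repeatedly: $\gamma*c^{n}\simeq_c\gamma*c^{n+1}$, which follows by applying the same homotopy to the last segment. The lengths $nL_{\widehat g}(\widehat c)$ then grow without bound, which gives \eqref{eq:conformal-class-infinite-lengths}. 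The conclusion that $\gamma$ has no endpoint in a $C^0$ extension follows from Theorem~\ref{thm:causal-length-criterion}. Confinement to $\pi^{-1}(\mathcal N_\delta)$ holds because $\rho,R$ and the variation region were chosen inside $\Sigma<\delta^2$.
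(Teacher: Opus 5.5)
\textbf{Main gap: Step 1 never produces the required points and Killing fields.} Your Step 1 does not supply the two points and the two Killing combinations with opposite future cones, and the candidates you list cannot work.

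For $K=\xi T+\eta\Phi$ the orientation test is $\varepsilon\alpha(K)=\varepsilon(\xi-a\sin^2\theta\,\eta)$. This depends on the transverse point only through $\sin^2\theta$.
\begin{itemize}
 \item $\varepsilon\alpha(-Y_-)=-|a|\sin^2\theta<0$ on the whole collar, so $-Y_-$ is never future causal. Likewise $-(Y_-+\epsilon T)$ would need $|\epsilon|>|a|\sin^2\theta$, which is not small.
 \item More generally, no $K$ can be future at one point while $-K$ is future at another point with the same $\sin^2\theta$. This rules out the positive equator and also the ``both signs of $\cos\theta$'' variant.
 \item Write $K$ as a nonzero multiple of $(|a|+\delta)T+\operatorname{sgn}(a)\Phi$. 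Opposite $\alpha$-signs at $(-\rho,0)$ and at an off-equatorial point with $a\cos\theta=y_+$ force $-y_+^2/|a|<\delta<0$. So $K$ must lie in a thin wedge about the critical helical direction.
 \item At $(-\rho,0)$ one then has $g(K,K)\propto 2|a||\delta|-\delta^2+\rho^2-2M\delta^2/\rho$. This is negative only if $\rho\lesssim|\delta|\lesssim y_+^2$.
\end{itemize}
Hence the negative-equatorial point must be much closer to the ring than the other point. With both points at transverse radius $\sim\rho$, as you propose, no choice works.

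This is exactly what the paper's choice accomplishes. It takes $X_+=(0,\ell)$ on the regular disc and $K_i=\varepsilon((|a|-d_i)T+\operatorname{sgn}(a)\Phi)$ with $d_1=\ell^2/(4|a|)$ and $d_2=\tfrac32 d_1$. It takes $X_-=(-\rho,0)$ with $\rho\le\min\{1,M/(4|a|)\}d_1$.
\begin{itemize}
 \item At $X_+$: $\varepsilon\alpha(K_i)=4d_1-d_i>0$ and $g(K_i,K_i)=2|a|d_i-d_i^2-\ell^2<0$.
 \item At $X_-$: $\varepsilon\alpha(-K_i)=d_i$. Since $2Md_i^2/\rho\ge 8|a|d_i$, one gets $g(-K_i,-K_i)\le-6|a|d_i$.
\end{itemize}

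\textbf{Step 2 also needs repair (closure).} The future lift \eqref{eq:transverse-lift} depends only on the point and the transverse speed. Along every nonconstant lifted piece, $\dot\phi=a\alpha(V)/\Sigma$, so $\varepsilon\operatorname{sgn}(a)\phi$ increases strictly. A path and its reverse therefore contribute displacements that add rather than cancel. Attaching a loop closed at $x_-$ to $q$ by an out-and-back detour thus ends at $\mathscr S_{h'}(q)$ with $h'\neq0$. Your claim that ``the shift vanishes at $\lambda=1$'' fails. The orbit segments must cancel the displacement $H$ of the entire excursion from $\pi(\gamma(t))$ through $X_+$ and $X_-$ and back, as in the paper.

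\textbf{Intermediate shifts.} Do not try to make the orbit coefficients small; admissible $v_1,v_2$ lie in a wedge of width $O(\ell^2)$ and are nearly parallel. Instead, follow your second suggestion. First open the full transverse excursion, then grow all four orbit durations simultaneously to $uc_i^\pm$. By linearity the endpoint displacement is exactly $(1-u)H$, with $|H|\le C_1R$, so it stays in the admissible neighborhood throughout. With these two repairs, your Step 3 (lifting to the cover, positivity of $\min\widehat\Omega$, repeated insertion) goes through as written.
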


\begin{proof}
We first carry out the construction on the base with the Kerr
metric. Put $A\equiv|a|$
and choose $\varepsilon$ by \eqref{eq:fan-future-alpha}.
For sufficiently small $\ell>0$, set
\begin{gather}\label{eq:closed-orbit-parameters}
 d\equiv\frac{\ell^2}{4A},\qquad
 \rho\equiv\min\left\{1,\frac{M}{4A}\right\}d,\qquad
 d_1\equiv d,\quad d_2\equiv\frac32d,\nonumber\\
 K_i\equiv\varepsilon\bigl((A-d_i)T+\operatorname{sgn}(a)\Phi\bigr).
\end{gather}
At $X_+\equiv(0,\ell)$ in the $(r,y)$-plane,
\begin{align}\label{eq:closed-orbits-positive}
 \varepsilon\alpha(K_i)&=4d-d_i>0,\nonumber\\
 g(K_1,K_1)&=-\frac{\ell^2}{2}-d^2<0,
 &g(K_2,K_2)&=-\frac{\ell^2}{4}-\frac94d^2<0.
\end{align}
At $X_-\equiv(-\rho,0)$,
\begin{align}\label{eq:closed-orbits-negative}
 \varepsilon\alpha(-K_i)&=d_i>0,\nonumber\\
 g(-K_i,-K_i)&=2Ad_i-d_i^2+\rho^2-\frac{2Md_i^2}{\rho}.
\end{align}
Since $\rho\le d$ and $\rho\le Md/(4A)$, the last expression is
at most $-6Ad$ for $i=1$ and $-15Ad$ for $i=2$.
Thus both $K_i$ are future timelike at $X_+$, whereas both $-K_i$
are future timelike at $X_-$.

Let $x$ have transverse radius $R$. Choose $0<\ell<R/4$ with
$\rho<\ell$. Radial segments and circular arcs join its transverse
point to $X_+$ and then $X_-$ by a path $w$ of length at most $CR$
in $0<r^2+y^2\le R^2$. Lift $w$ and its reverse by
\eqref{eq:transverse-lift}. Their total stationary-axial displacement
$H\in\mathbb R^2$ satisfies $|H|\le C_1R$.
As in Lemma~\ref{lem:radial-attachment}, progressively longer initial
portions of $w$, followed by their reverses, deform a constant path
into this path. The same estimate holds for every intermediate
displacement.

The unit $K_i$ orbits have linearly independent real displacements
$v_i\equiv\varepsilon(A-d_i,\operatorname{sgn}(a))$. Choose
$c_1,c_2\in\mathbb R$ satisfying
\begin{equation}\label{eq:closed-orbit-cancellation}
 c_1v_1+c_2v_2=-H,\qquad
 c_i^+\equiv\max\{c_i,0\},\quad c_i^-\equiv\max\{-c_i,0\}.
\end{equation}
Once the deformation includes the whole path $w$, insert the $K_i$
orbits of parameter lengths $uc_i^+$ at the first visit to $X_+$ and
the $-K_i$ orbits of lengths $uc_i^-$ at $X_-$, for $0\le u\le1$.
Translate each subsequent piece so that the pieces join. The resulting
paths are causal, and their endpoint displacement is
\begin{equation}\label{eq:closed-orbit-endpoint-displacement}
 H+u\sum_{i=1}^2(c_i^+-c_i^-)v_i=(1-u)H.
\end{equation}
The cancellation takes place in the real angular coordinate before
projection to $\mathbb S^1$. Parametrizing the pieces on fixed
intervals, with constant curves where a piece has collapsed, gives a
jointly continuous family $P_\lambda$ from $x$ to
$\mathscr S_{h_\lambda}(x)$, with $P_0$ constant,
$h_0=h_1=0$, and $|h_\lambda|\le C_1R$.
The final path $D\equiv P_1$ is closed. Its nonconstant transverse
pieces are strictly timelike, so its Kerr length is positive. Throughout
the construction, $0<\Sigma\le R^2$.

Fix $s<\Lambda$ and a smaller collar. An earlier compact timelike
segment of $\pi\circ\gamma$, after $s$ and inside that collar,
admits \eqref{eq:timelike-endpoint-variation} for shifts in a fixed
neighborhood of zero. For all sufficiently late $t$, $C_1R$ at
$x\equiv\pi(\gamma(t))$ lies in this neighborhood. Formula
\eqref{eq:fixed-endpoint-radial-concatenation} gives a causal homotopy
from $(\pi\circ\gamma)|_{[s,t]}$ to that segment followed by $D$.
Since $h_1=0$, the original initial segment and tail agree with those
of the final curve. Every varying piece stays in the prescribed
collar.

Lift this homotopy starting from the reference segment
$\gamma|_{[s,t]}$. Its endpoint curves are constant, since the fibers
are discrete. The final path is therefore $\gamma|_{[s,t]}*c$, where
$c$ is a \emph{closed} lift of $D$.
Conformal rescaling preserves the future cones, so the lifted homotopy
is $\widehat g$-causal. Compactness of $c$ and positivity of
$\widehat\Omega$ give
$L_{\widehat g}(c)\ge(\min_c\widehat\Omega)L_g(D)>0$.
This establishes \eqref{eq:conformal-closed-curve-insertion} for
every factor on the cover, including those that do not descend to
the base.

If $a\simeq_c a*c$, appending $c^n$ to each slice gives
$a*c^n\simeq_c a*c^{n+1}$. Concatenation of the homotopies shows
that every $a*c^n$ belongs to the original class. Its length is
$L_{\widehat g}(a)+nL_{\widehat g}(c)\to\infty$.
This proves \eqref{eq:conformal-class-infinite-lengths} with the same
localization. The closed-curve assertion of
Theorem~\ref{thm:causal-length-criterion} excludes an extension endpoint.
\end{proof}

\begin{remark}
The homotopy $a\simeq_c a*c$ varies an earlier timelike segment of
$a$. It is not a deformation of $c$ through closed causal curves,
since the prefix $a$ cannot in general be cancelled in a causal homotopy;
see \cite[Example~6.1]{Guenther2021}.
\end{remark}

In particular, the theorem applies to arbitrary powers of $\Sigma$,
to factors such as $e^{-1/\Sigma}$, and to factors with unrestricted
dependence on the stationary and angular coordinates. Its conclusion
concerns approaches to the ring; global geodesic localization for the
rescaled metric is a separate question.

\begin{corollary}[Persistence under cone widening]
\label{thm:ring-cone-widening}
Let $\pi:\widehat{\mathcal N}\to\mathcal N_{\delta_0}$ be a connected
cover of a full punctured Kerr ring collar, with $M>0$, $a\ne0$.
Let $k$ be a continuous time-oriented Lorentzian metric on
$\widehat{\mathcal N}$ whose future causal cones contain those of
$\pi^*g_{M,a}$. Suppose that $\gamma:[0,\Lambda)\to\widehat{\mathcal N}$
is future-directed $C^1$ timelike for $\pi^*g_{M,a}$ and
$\Sigma\circ\pi\circ\gamma\to0$.

For every $s<\Lambda$, there is $t_0\in(s,\Lambda)$ such that for
each $t\in(t_0,\Lambda)$ a nonconstant future-causal closed curve
$c$ based at $\gamma(t)$ satisfies
\begin{equation}\label{eq:cone-widening-insertion}
 \gamma|_{[s,t]}\simeq_{c,k}\gamma|_{[s,t]}*c,\qquad L_k(c)>0.
\end{equation}
Consequently $\mathcal L^k_\gamma(s,t)=+\infty$ for these $t$, and
$\gamma$ has no endpoint in a $C^0$ extension of
$(\widehat{\mathcal N},k)$. The varying portions can be confined to
the preimage of any prescribed smaller transverse collar.
The metric $k$ may vary on the cover without invariance or descent
to the base.
\end{corollary}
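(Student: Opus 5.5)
The plan is to reuse the homotopy built in the proof of Theorem~\ref{thm:conformal-class-ring} with $\widehat\Omega\equiv1$, that is, for $\pi^*g_{M,a}$ itself. The observation that makes this work is that $k$-causality is weaker than $\pi^*g_{M,a}$-causality.

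First, run that construction on the base collar $\mathcal N_{\delta_0}$ for the reference curve $\pi\circ\gamma$, which is $g_{M,a}$-timelike. This produces a fixed-endpoint homotopy through $g_{M,a}$-future-causal paths from $(\pi\circ\gamma)|_{[s,t]}$ to that segment followed by the closed curve $D$. Its varying portions can be confined to any prescribed smaller collar $\mathcal N_\delta$. Next, lift the homotopy to $\widehat{\mathcal N}$ starting at $\gamma|_{[s,t]}$, as in Lemma~\ref{lem:covering-causal-length}. The endpoint edges stay constant because the fibers are discrete, so the final slice is $\gamma|_{[s,t]}*c$ with $c$ a closed lift of $D$. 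Every slice is future causal for $\pi^*g_{M,a}$ with the lifted time orientation.

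The key step is the transfer to $k$. A locally Lipschitz path whose tangent is almost everywhere either zero or $\pi^*g_{M,a}$-future-causal has, by the cone-inclusion hypothesis, a tangent that is almost everywhere either zero or $k$-future-causal. Here I use that the future cones of $k$ contain those of $\pi^*g_{M,a}$, with compatible time orientations. Consequently every slice of the lifted homotopy is $k$-future causal, and the lifted homotopy is a fixed-endpoint $k$-causal homotopy, which gives the relation $\simeq_{c,k}$ in \eqref{eq:cone-widening-insertion}. The curve $c$ is nonconstant, since $D$ is nonconstant.

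For $L_k(c)>0$, note that the nonconstant transverse pieces of $D$ are strictly $g_{M,a}$-timelike. Their lifts therefore have tangents in the open $\pi^*g_{M,a}$ cone, and I want these tangents to be $k$-timelike. This is the step I expect to be the main obstacle. Containment of closed causal cones in closed causal cones forces containment of their interiors: if $v$ is interior to the smaller cone, a neighborhood of $v$ lies in the smaller cone and hence in the larger one, so $v$ is in the interior of the $k$ cone, i.e.\ $k$-timelike. Since $c$ is $C^1$ on a piece of positive length, $\sqrt{-k(\dot c,\dot c)}$ is continuous and positive on that piece, and so $L_k(c)>0$.

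The consequences then follow as in Theorem~\ref{thm:conformal-class-ring}. Appending $c^n$ shows that $\gamma|_{[s,t]}*c^n$ lies in the same $k$-causal class, with length at least $nL_k(c)\to\infty$, so $\mathcal L^k_\gamma(s,t)=+\infty$. The reference curve $\gamma$ is also $k$-future causal, being $C^1$ and $\pi^*g_{M,a}$-timelike. Theorem~\ref{thm:causal-length-criterion}, applied to $(\widehat{\mathcal N},k)$, therefore excludes an extension endpoint. No invariance of $k$ is used anywhere, since the construction sees $k$ only through the cone inclusion.
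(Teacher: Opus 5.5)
Your proposal is correct and follows essentially the same route as the paper. The paper runs the construction of Theorem~\ref{thm:conformal-class-ring} with conformal factor one, uses cone inclusion to make every slice $k$-causal, uses the fact that interior points of the Kerr cone are interior to the $k$-cone to get $L_k(c)>0$, and then repeats $c$ and applies Theorem~\ref{thm:causal-length-criterion}, just as you do.
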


\begin{proof}
Apply the construction of Theorem~\ref{thm:conformal-class-ring}
with conformal factor one. Cone inclusion preserves causality of
every slice. It also preserves timelike vectors, since an interior
point of the Kerr future cone is interior to the larger $k$-cone.
This is the pointwise observation underlying
\cite[Proposition~3.2]{GarciaParradoSenovilla2003}, valid here for
continuous metrics. Hence the nonconstant Kerr-timelike piece of $c$
has positive $k$-length. Repetition yields arbitrarily large lengths, and
Theorem~\ref{thm:causal-length-criterion} excludes an endpoint. The
transverse localization follows from the same construction.
\end{proof}

For example,
\[
 k\equiv\Omega^2\bigl(\pi^*g_{M,a}-f(\pi^*\alpha)^2\bigr),
 \qquad \Omega>0,\quad f\ge0,
\]
has the required cones for arbitrary continuous $\Omega,f$ on the
cover: $\alpha$ is timelike, so subtracting its nonnegative square
preserves Lorentzian signature and widens the future cones. We next
extend the conclusion to Kerr-causal approaches.

\begin{corollary}[Causal approaches to the ring]
\label{cor:causal-ring-approaches}
Let $\pi:\widehat{\mathcal N}\to\mathcal N_{\delta_0}$ be a connected
cover of a full punctured Kerr ring collar, with $M>0$, $a\ne0$.
Let $k$ be a continuous time-oriented Lorentzian metric whose future
causal cones contain those of $\pi^*g_{M,a}$.
No future-directed, locally Lipschitz $\pi^*g_{M,a}$-causal curve
$\gamma:[0,\Lambda)\to\widehat{\mathcal N}$ with
$\Sigma\circ\pi\circ\gamma\to0$ has an endpoint in a $C^0$ extension
of $(\widehat{\mathcal N},k)$. In particular, this holds for every
positive continuous conformal factor and for null approaches.
\end{corollary}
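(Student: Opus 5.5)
We argue by contradiction. Suppose $\iota:(\widehat{\mathcal N},k)\to(\widetilde N,\widetilde h)$ is a $C^0$ extension and $\iota(\gamma(t))\to q\in\partial\iota(\widehat{\mathcal N})$ as $t\uparrow\Lambda$. The plan is to replace the causal reference curve $\gamma$ by an auxiliary curve that is Kerr-timelike near its end, as suggested in the introduction. Then Corollary~\ref{thm:ring-cone-widening} applies to the auxiliary curve, while Theorem~\ref{thm:causal-length-criterion} applies at $q$.

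\textbf{Step 1: local setup at $q$.} Fix $q\in V\Subset D\Subset W$ as in Lemma~\ref{lem:precompact-local-diamond} for $\widetilde h$, and choose $s_0$ with $\iota(\gamma((s_0,\Lambda)))\subset V$. Since $k$-future cones contain the Kerr cones, $\gamma$ is $k$-causal. By Lemma~\ref{lem:causal-homotopy-trapping}, every $k$-causal homotopy from a $k$-causal reference path with endpoints in $V$ stays in $W$.

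\textbf{Step 2: timelike prefix.} The aim is a curve $\tilde\gamma$ that is Kerr-timelike and $C^1$, satisfies $\Sigma\to0$, and stays in $V$ on a terminal interval, ending at $q$. I would build it by perturbing $\gamma$ slightly into the chronological future. Pick $p=\gamma(s_1)$ with $s_1>s_0$. Using a Kerr timelike vector field on the collar (for instance the lift \eqref{eq:transverse-lift}, or $-\varepsilon\alpha^\sharp$), push $\gamma|_{[s_1,\Lambda)}$ to a curve of the form $\tilde\gamma(t)=\mathrm{Fl}_{\epsilon(t)}(\gamma(t))$. The function $\epsilon(t)\downarrow0$ should decay fast enough that $\Sigma\circ\tilde\gamma\to0$ and $\iota\circ\tilde\gamma\to q$. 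The curve should remain future causal; smoothing it then makes it strictly timelike.

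\emph{Alternative for Step 2.} A cleaner route may be to avoid smoothing altogether. The Kerr transverse lift yields a Kerr-timelike curve from any point $x$ near the ring to a point with the same transverse position and stationary-axial displacement $O(\sqrt{\Sigma(x)})$. Concatenating such short timelike pieces along a sequence $\gamma(t_n)$ produces a piecewise $C^1$ timelike curve with $\Sigma\to0$. Its endpoint in $\widetilde N$ is still $q$, because the $\widetilde h$-coordinate displacements vanish. The proof of Theorem~\ref{thm:conformal-class-ring} uses only a single compact timelike segment plus late points, so piecewise $C^1$ suffices.

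\textbf{Step 3: contradiction.} Apply Corollary~\ref{thm:ring-cone-widening} to $\tilde\gamma$. At late times it yields a nonconstant $k$-causal closed curve $c$ with $\tilde\gamma|_{[s,t]}\simeq_{c,k}\tilde\gamma|_{[s,t]}*c$, where the varying portions are confined to a small collar. Once $\tilde\gamma$ has a terminal segment in $V$ with limit $q$, Theorem~\ref{thm:causal-length-criterion}, applied to $\tilde\gamma$ with metric $k$, forbids such an insertion. This contradicts the assumed endpoint of $\gamma$, since $\tilde\gamma$ converges to the same $q$. The conformal and null cases are then immediate, because $\Omega^2\pi^*g_{M,a}$ has the same cones.

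\textbf{Main obstacle.} The hard step is Step 2: controlling the auxiliary curve's endpoint in the extension. Kerr's stationary-axial coordinates are not controlled by $\widetilde h$-coordinates, so displacements small in Kerr terms need not be small in $W$. I would try to get this control by keeping every correction piece $k$-causal between points of $V$. Lemma~\ref{lem:precompact-local-diamond} then traps each piece in $D$, and as the endpoints converge to $q$ the pieces shrink to $q$ in coordinate time. In particular, I would arrange each timelike piece to run from $\gamma(t_n)$ to a point in the causal past of $\gamma(t_{n+1})$. That point exists by openness of $I^-$ for Kerr along a slightly shifted timelike prefix, which keeps everything inside the diamond between points of $\gamma$.
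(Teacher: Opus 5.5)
Your Step 2 is the crux, and none of the three constructions you offer establishes it.

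\begin{itemize}
 \item \emph{The flow version has the wrong sign.} If $\tilde\gamma(t)=\mathrm{Fl}_{\epsilon(t)}(\gamma(t))$ with $\epsilon\downarrow0$, then $\dot{\tilde\gamma}=d\mathrm{Fl}_{\epsilon}(\dot\gamma)+\epsilon'X$ with $\epsilon'<0$, which is a past-directed correction. For a null $\gamma$ this tangent is spacelike: for example, if $X$ is Killing, null vectors stay null, and adding a small past-directed timelike vector makes them spacelike. Smoothing does not repair this.
 \item \emph{The concatenation version does not produce a curve.} The transverse lifts run from $\gamma(t_n)$ to $\mathscr S_h(\gamma(t_n))$, not to $\gamma(t_{n+1})$. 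Moreover, the claim that the $\widetilde h$-coordinate displacements vanish is exactly what Kerr stationary-axial data cannot control.
 \item \emph{The trapping suggestion overreaches.} Lemma~\ref{lem:precompact-local-diamond} traps only curves already contained in $W$. A Kerr-causal connector between points of $V$, in a collar containing closed timelike curves, may leave $W$ and return.
 \item \emph{The last construction fails as described.} When $\gamma$ is a null geodesic, there need be no local timelike piece from $\gamma(t_n)$ into $J^-(\gamma(t_{n+1}))$. Also, pieces that end in the causal past of $\gamma(t_{n+1})$ but restart at $\gamma(t_{n+1})$ do not concatenate into a timelike curve.
\end{itemize}

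A minor point: Lemma~\ref{lem:causal-homotopy-trapping} needs the reference path itself to lie in $W$, not just its endpoints.

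The missing idea is that you never need a timelike replacement of the tail. In the construction of Theorem~\ref{thm:conformal-class-ring}, strict timelikeness is used only on the compact earlier segment carrying the endpoint variation \eqref{eq:timelike-endpoint-variation}. Everything after that segment is only translated by $\mathscr S_h^{-1}$, which preserves the Kerr cones, so a merely causal (even null) tail stays causal.

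The paper therefore does the following:
\begin{itemize}
 \item It prepends a short $C^1$ Kerr-timelike path $\beta$ that ends at $\gamma(u)$ and lies in $\iota^{-1}(V)$ and a small collar. Such a path exists because these sets and the timelike cone are open.
 \item It uses a compact interior subsegment of $\beta$ for the endpoint correction.
 \item It obtains $a_t\simeq_c a_t*c_t$ with $a_t\equiv\beta*\gamma|_{[u,t]}$, where $c_t$ is nonconstant and closed.
 \item Since $a_t$ lies in $V$, Lemma~\ref{lem:causal-homotopy-trapping} confines the homotopy to $W$, so the closed curve $\iota\circ c_t$ must be constant, a contradiction.
\end{itemize}

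Your route can be rescued by a localized push-up argument. Choose $x_n\ll\gamma(t_n)$ and pick $x_{n+1}\in I^+(x_n;O_n)\cap I^-(\gamma(t_{n+1}))$, where $O_n\subset\iota^{-1}(V_n)$ are small open sets around $\gamma([t_n,t_{n+1}])$ and $V_n$ shrink to $q$. This is strictly more work than the prefix and is not what you wrote.
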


\begin{proof}
Suppose $\iota\circ\gamma\to q$ in an extension. Choose
$q\in V\Subset D\Subset W$ as in
Lemma~\ref{lem:precompact-local-diamond}, with
$\iota(\gamma([u,\Lambda)))\subset V$ for some sufficiently late $u$.
Choose a short $C^1$ future Kerr-timelike path $\beta$ ending at
$\gamma(u)$, contained in $\iota^{-1}(V)$ and a smaller transverse
collar. Such a path exists because these sets and the Kerr timelike cone
are open. A compact interior subsegment of $\beta$ admits
\eqref{eq:timelike-endpoint-variation} for shifts in a fixed
neighborhood of zero.

For sufficiently late $t$, construct the closed curve at
$\pi(\gamma(t))$ using this subsegment for the correction.
The displacement is $O(\sqrt{\Sigma(\pi(\gamma(t)))})$ and lies in
the admissible neighborhood. The translated tail is locally Lipschitz and future causal, because
stationary-axial translations preserve the Kerr cones. Strict
timelikeness is required only on the earlier segment used for the
endpoint correction. Lifting gives a fixed-endpoint causal homotopy
from $a_t\equiv\beta*\gamma|_{[u,t]}$ to $a_t*c_t$, with $c_t$
nonconstant and closed. Every slice is also $k$-causal.
The reference path lies in $V$, so
Lemma~\ref{lem:causal-homotopy-trapping} confines the homotopy and
$\iota\circ c_t$ to $W$. Its local time and then all its spatial
coordinates must be constant, a contradiction.
\end{proof}

For a null approach, the auxiliary timelike initial segment permits
the endpoint variation. The conclusion for such an approach is the
absence of an extension endpoint; infinite length in its original
causal homotopy class is not asserted.

\section{Timelike geodesics at an extension boundary}\label{sec:boundary-tools}

\subsection{Existence of a geodesic ending at the boundary}

\begin{theorem}[Boundary geodesic]\label{thm:boundary-geodesic}
If a connected smooth time-oriented spacetime $(N,h)$ has a proper
$C^0$ extension, then, after reversing time orientation if necessary,
there are $0<\Lambda<\infty$, a future-inextendible unit-speed timelike
$h$-geodesic $\gamma:[0,\Lambda)\to N$, and
$q\in\partial\iota(N)$ such that
\begin{equation}
 \iota(\gamma(s))\longrightarrow q\qquad(s\uparrow\Lambda).
\end{equation}
The statement applies to $N$ without causality or global-hyperbolicity
restrictions.
\end{theorem}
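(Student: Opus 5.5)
The plan is to follow the boundary-geodesic construction of Minguzzi and Suhr \cite[Lemma~2.2 and Theorem~2.3]{MinguzziSuhr2019}, adapted in the spirit of Galloway--Ling and Sbierski \cite{GallowayLing2017,Sbierski2018}, and then to use smoothness of $h$ on $N$ to upgrade the resulting maximizer to a geodesic.

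\textbf{Step 1: a boundary point reached by a timelike curve.} Since $\iota(N)$ is a proper open subset of the connected manifold $\widetilde N$, the topological boundary $\partial\iota(N)$ is nonempty. Pick $\tilde q\in\partial\iota(N)$. By Proposition~\ref{prop:finite-tdiam} and Lemma~\ref{lem:precompact-local-diamond}, there is a small coordinate box $W$ around $\tilde q$ with finite timelike diameter and a uniform cone-speed bound. In $W$ I use the standard Sbierski-type argument: timelike coordinate segments near $\tilde q$ start in $\iota(N)$ and leave it. This yields a $W$-timelike curve that starts in $\iota(N)$ and whose first exit point from $\iota(N)$ is a boundary point $q$. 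After possibly reversing the time orientation of $N$ (the orientation of $W$ need not match), this is a future-directed timelike curve in $N$ that leaves every compact set and has limit $q$ in $\widetilde N$.

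\textbf{Step 2: a maximizer.} Fix a point $p=\iota^{-1}(\text{earlier point})$ on this curve. Consider causal curves from $p$ that stay in $\iota(N)\cap W$, or rather in the precompact diamond $D$, and have endpoint at $q$. Their $\widetilde h$-lengths are bounded by $2C_1\rho$ (Proposition~\ref{prop:finite-tdiam}) and bounded below by the positive length of the given timelike curve. The limit-curve theorem for continuous metrics, applied in the compact closure of $D$ with uniform Lipschitz bounds from the cone-speed estimate, together with upper semicontinuity of length, gives a length maximizer. Here is the key subtlety, which is exactly Minguzzi--Suhr's lemma: one maximizes over curves whose interior lies in $\iota(N)$, choosing $q$ as the \emph{first} boundary point so that this is possible. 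Their Theorem~2.3 guarantees a maximizer $\sigma$ of positive length whose interior points all lie in $\iota(N)$ and which terminates at $q\in\partial\iota(N)$. I would invoke this result directly rather than reprove it.

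\textbf{Step 3: regularity in the interior and finite $\Lambda$.} On $N$ the metric $h$ is smooth. A causal maximizer between any two of its interior points, lying in a convex normal neighborhood, is therefore locally maximizing for a smooth metric. So it is an unbroken geodesic: either null, or timelike after reparametrization. Positivity of length rules out null, because a null pregeodesic maximizer has zero length while the total length is positive and additive over subsegments. Hence $\sigma$ is a timelike geodesic, and we parametrize it by unit proper time, $\gamma:[0,\Lambda)\to N$. Then $\Lambda=L_h(\gamma)=L_{\widetilde h}(\sigma)\le2C_1\rho<\infty$, and $\Lambda>0$. Since $\iota\circ\gamma\to q\notin\iota(N)$, the geodesic cannot be extended within $N$: an extension would have limit point $\gamma(\Lambda)\in N$ with $\iota(\gamma(\Lambda))=q$, which is impossible. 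So $\gamma$ is future-inextendible in $N$.

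\textbf{Main obstacle.} I expect Step 2 to be the hardest, especially keeping the maximizer's interior out of $\partial\iota(N)$ without any causality assumption on $N$. Only local structure near $q$ is used, since $D$ is causally well behaved by Lemma~\ref{lem:precompact-local-diamond}. This is why no global hypothesis on $N$ is needed and closed timelike curves in Kerr cause no harm.
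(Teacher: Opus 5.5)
Your argument is essentially the paper's: both obtain the boundary-terminating maximizer from Minguzzi--Suhr, then use smoothness of $h$ on $N$, the Gauss lemma in convex neighborhoods, positivity of length and Hausdorffness to get a finite-proper-time, future-inextendible timelike geodesic. The only real difference is how a possibly non-time-orientable extension is handled: the paper passes to the time-orientation double cover of $\widetilde N$, whereas you localize to the time-oriented box $W$; in that case fix the time orientation of $N$ only after obtaining the maximizer, using the component of $\iota(N)\cap W$ that contains its interior.

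One caution: taking $q$ to be the first exit point of your Step~1 curve does not by itself keep a maximizer's interior in $\iota(N)$, since a maximizer to that point may meet $\partial\iota(N)$ earlier. You should therefore read Theorem~2.3 as supplying \emph{some} boundary endpoint, which is all the statement needs.
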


\begin{proof}
Pass to the time-orientation double cover
$\pi:\widehat N\to\widetilde N$ of the extension. The orientation of
$N$ lifts $\iota$ to an isometric open embedding $\widehat\iota$.
Its image is proper in its connected component. Indeed, if the cover
is disconnected, each component maps diffeomorphically onto
$\widetilde N$; if connected, surjectivity of $\widehat\iota$ would
contradict injectivity of $\iota=\pi\circ\widehat\iota$.
Every boundary point of the lifted image projects to
$\partial\iota(N)$. Indeed, an evenly covered neighborhood inside
$\iota(N)$ has one sheet in the image and the other disjoint from its
closure. We may therefore work in the time-oriented cover.

The extension metric gives closed, sharp, convex future cones with
nonempty interior and continuous homogeneous length function
$F(v)\equiv\sqrt{-\widetilde h(v,v)}$. These are Lorentz-Finsler data
in the sense of \cite[Section~2]{MinguzziSuhr2019}, with $\iota(N)$ a
Lorentz-Finsler subspace. By \cite[Lemma~2.2]{MinguzziSuhr2019}, its
future or past boundary is nonempty. After reversing time orientation if
necessary, \cite[Theorem~2.3]{MinguzziSuhr2019} gives a future causal
maximizer in a relatively compact coordinate neighborhood
$\sigma:[0,1]\to\widetilde N$ satisfying
\begin{equation}
 \sigma([0,1))\subset\iota(N),\qquad
 \sigma(1)=q\in\partial\iota(N),\qquad 0<L_F(\sigma)<\infty.
\end{equation}
Pull the interior back to the original spacetime, remove constant
subarcs, and parametrize compact subarcs by auxiliary Riemannian
arclength. A sufficiently short subarc maximizes length in a convex
normal neighborhood. By the Lorentzian Gauss lemma, it is a causal
geodesic up to increasing reparametrization. The subarcs agree on
overlaps and hence have the same causal character. Since the total
length is positive, this character is timelike. We obtain a smooth
timelike pregeodesic. Proper time gives $[0,\Lambda)$ with
$\Lambda\equiv L_F(\sigma)<\infty$, and the parameter is affine because
\begin{equation}
 h(\dot\gamma,\dot\gamma)=-1,\qquad
 \nabla_{\dot\gamma}\dot\gamma=\kappa\dot\gamma
 \quad\Longrightarrow\quad
 0=\frac d{ds}h(\dot\gamma,\dot\gamma)=-2\kappa.
\end{equation}
If the geodesic continued through $\Lambda$ in $N$, it would have
both an interior limit and a boundary limit in the Hausdorff extension.
This is impossible.
For causal character of maximizers at Lipschitz regularity, see also
\cite{GrafLing2018}.
\end{proof}

\begin{remark}
By Theorem~\ref{thm:boundary-geodesic}, a proper extension gives an
incomplete timelike geodesic ending at its boundary. In
Section~\ref{sec:geodesics}, we determine the terminal behavior of
these geodesics in $\CarterClass$.
\end{remark}

\section{Regular horizons and the Carter tensor}\label{sec:regular-horizons}

We now describe the coordinates across the regular horizons and
verify that the Kerr Killing fields and Carter tensor extend through
them. The quotient construction using these coordinates is given in
Appendix~\ref{app:explicit-carter-atlas}.

\subsection{Analytic crossing coordinates at a simple horizon}

Fix $h\in\{r_-,r_+\}$, use the normalizations
\eqref{eq:horizon-normalizations}, and set
\begin{equation}\label{eq:app-horizon-data}
 A(r)\equiv r^2+a^2.
\end{equation}
Here $\kappa_h\ne0$, with $\kappa_h<0$ when $h=r_-$.  On each component
of $\mathbb R\setminus\{r_-,r_+\}$, define
\begin{align}
 r_\star(r)&\equiv r+\frac{r_+^2+a^2}{r_+-r_-}\log|r-r_+|
 -\frac{r_-^2+a^2}{r_+-r_-}\log|r-r_-|,
 \label{eq:app-tortoise}\\
 \phi_\star(r)&\equiv\frac{a}{r_+-r_-}
 \bigl(\log|r-r_+|-\log|r-r_-|\bigr).
 \label{eq:app-angle-primitive}
\end{align}
Then
\begin{equation}\label{eq:app-primitive-derivatives}
 r_\star'=\frac{A(r)}{\Delta(r)},\qquad
 \phi_\star'=\frac{a}{\Delta(r)}.
\end{equation}
The functions
\begin{equation}\label{eq:app-regular-parts}
 b_h(r)\equiv r_\star(r)-\frac{1}{2\kappa_h}\log|r-h|,
 \qquad
 \chi_h(r)\equiv\phi_\star(r)-\Omega_h r_\star(r)
\end{equation}
extend analytically across $r=h$: the logarithmic coefficients are
$1/(2\kappa_h)$ and $\Omega_h/(2\kappa_h)$, respectively.  Moreover,
\begin{equation}\label{eq:app-chi-derivative}
 \chi_h'(r)=\frac{a-\Omega_h A(r)}{\Delta(r)}
 =-\Omega_h\frac{r+h}{r-\bar h}.
\end{equation}

Define
\begin{equation}\label{eq:app-Fh}
 F_h(r)\equiv(r-h)e^{2\kappa_h b_h(r)}.
\end{equation}
Because $F_h(h)=0$ and $F_h'(h)=e^{2\kappa_h b_h(h)}>0$, we may choose
an interval $J_h\ni h$ and a number $\delta_h>0$ for which
\begin{equation}
 F_h:J_h\longrightarrow(-\delta_h,\delta_h)
\end{equation}
is an analytic diffeomorphism.  Shrinking the intervals if necessary,
we require
\begin{equation}\label{eq:app-disjoint-collars}
 J_{r_-}\cap J_{r_+}=\varnothing,\qquad
 \overline{J_{r_-}}\subset(0,r_+),\qquad
 \overline{J_{r_+}}\subset(r_-,\infty).
\end{equation}
Write $R_h\equiv F_h^{-1}$ and consider the full crossing model
\begin{equation}\label{eq:app-crossing-model}
 \mathcal X_h\equiv\left\{
 (U,V,\omega)\in\mathbb R^2\times\mathbb S^2:
 |UV|<\delta_h\right\},\qquad r\equiv R_h(UV).
\end{equation}
Allowing the full $U,V$ ranges includes the entire stationary
collars and the complete horizon generators.

Let $\mathscr R_\lambda$ be rotation through angle $\lambda$ about the
axis of $\mathbb S^2$.  For $\epsilon,\eta\in\{+1,-1\}$, set
\begin{equation}
 Q_{\epsilon\eta}\equiv\left\{(U,V,\omega)\in\mathcal X_h:
 \epsilon U>0,\ \eta V>0\right\}.
\end{equation}
If $\epsilon\eta=\operatorname{sgn}(r-h)$, define
\begin{align}
 \Xi_h^{\epsilon\eta}(t,r,\omega)
 \equiv\bigl(&\epsilon e^{-\kappa_h(t-r_\star(r))},
 \eta e^{\kappa_h(t+r_\star(r))},
 \mathscr R_{-\Omega_h t+\chi_h(r)}\omega\bigr).
 \label{eq:app-sector-map}
\end{align}
This map is an analytic diffeomorphism from the corresponding side of
the block collar onto $Q_{\epsilon\eta}$, since
\begin{equation}
 UV=\epsilon\eta e^{2\kappa_h r_\star(r)}
 =(r-h)e^{2\kappa_hb_h(r)}=F_h(r).
\end{equation}
Writing $\omega_h$ for the third coordinate, the inverse is
\begin{equation}\label{eq:app-sector-inverse}
 r=R_h(UV),\qquad
 t=\frac{1}{2\kappa_h}\log\frac{|V|}{|U|},\qquad
 \omega=\mathscr R_{\Omega_h t-\chi_h(r)}\omega_h.
\end{equation}
Because the angular changes are sphere rotations, the maps are
analytic at both poles.

\begin{lemma}[Analytic Lorentz metric on the crossing model]
\label{lem:app-crossing-metric}
The Kerr metrics transported by the four maps
$\Xi_h^{\epsilon\eta}$ are restrictions of one real-analytic Lorentzian
metric $g_h$ on all of $\mathcal X_h$.
\end{lemma}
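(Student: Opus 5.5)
I will compute the pushed-forward metric explicitly in the double-null coordinates $(U,V,\omega)$ on one sector $Q_{\epsilon\eta}$. I will then show that the resulting coefficient formulas are real-analytic functions of $(UV,\omega)$ on all of $\mathcal X_h$, independent of $(\epsilon,\eta)$. The natural starting point is the coframe decomposition \eqref{eq:fan-coframe-decomposition}:
\[
g=-\tfrac{\Delta}{\Sigma}\alpha^2+\tfrac{\sin^2\theta}{\Sigma}\beta^2+\tfrac{\Sigma}{\Delta}dr^2+\Sigma\,d\theta^2 .
\]
This reduces the orbit--radial part to $-\tfrac{\Delta}{\Sigma}\bigl(\alpha^2-\tfrac{\Sigma^2}{\Delta^2}dr^2\bigr)$, and the right variables are the Eddington--Finkelstein combinations $dt\pm dr_\star$ together with the corotating angle. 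Put $\varphi_h\equiv\phi-\Omega_h t+\chi_h(r)$, which is the axial angle of $\omega_h$ by \eqref{eq:app-sector-map}. From \eqref{eq:app-sector-inverse}, $dU/U=-\kappa_h(dt-dr_\star)$ and $dV/V=\kappa_h(dt+dr_\star)$. Hence
\[
dt=\tfrac1{2\kappa_h}\Bigl(\tfrac{dV}{V}-\tfrac{dU}{U}\Bigr),\qquad
dr_\star=\tfrac1{2\kappa_h}\Bigl(\tfrac{dV}{V}+\tfrac{dU}{U}\Bigr).
\]
The orbit coordinate $\phi$ is $\varphi_h+\Omega_h t-\chi_h(r)$.

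\textbf{Step 1: the angular form.} I will express $\beta$ and $\alpha$ in $(dU,dV,d\varphi_h,dr)$. Using $\chi_h'=(a-\Omega_hA)/\Delta$ from \eqref{eq:app-chi-derivative}, I expect
\[
\beta=A\,d\varphi_h+(\Omega_hA-a)\bigl(dt+dr_\star\bigr),
\]
up to a term $A\,d\chi_h$ that combines with $(a-\Omega_hA)dr_\star$. Since $\Omega_hA-a$ vanishes at $r=h$ and $\Omega_hA(r)-a=\Omega_h(r^2-h^2)=\Omega_h(r-h)(r+h)$, the $dV/V$ coefficient becomes $(r-h)\cdot$analytic$/V$. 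Moreover $r-h=UV\,e^{-2\kappa_hb_h}$, so $(r-h)/V=U\cdot$analytic. Thus $\beta$ is an analytic form on $\mathcal X_h$, with no $dU$ term left.

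\textbf{Step 2: the null part.} I will write
\[
\alpha^2-\tfrac{\Sigma^2}{\Delta^2}dr^2
\]
after substituting $dr=\tfrac{\Delta}{A}dr_\star$. Using $\alpha=dt-a\sin^2\theta\,d\phi$ and rewriting $d\phi$ through $d\varphi_h$, the $\Sigma/A$ factors should give
\[
-\tfrac{\Delta}{\Sigma}\alpha^2+\tfrac{\Sigma}{\Delta}dr^2
=-\tfrac{\Delta\Sigma}{A^2}\,(dt-dr_\star)(dt+dr_\star)+(\text{terms with }d\varphi_h,\ \Omega_hA-a).
\]
The first term equals $\tfrac{\Delta\Sigma}{A^2\kappa_h^2}\tfrac{dU\,dV}{UV}$. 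Since $\Delta/(UV)=\Delta/F_h(r)=(r-\bar h)e^{-2\kappa_hb_h}$ is analytic and nonvanishing near $h$, this is an analytic multiple of $dU\,dV$. The remaining cross terms again carry a factor $\Delta$ or $(r-h)$ against $dV/V$ or $dU/U$, and the same divisibility makes them analytic. I expect the honest bookkeeping of these cross terms to be the main obstacle. Rather than expanding component by component, I would first verify the identity
\[
a\sin^2\theta\,\bigl(\Omega_h A-a\bigr)-\ldots
\]
symbolically in $\theta$. The cleanest route is probably to note that $\alpha-\tfrac{a\sin^2\theta}{A}\beta=\tfrac{\Sigma}{A}dt$, which isolates $dt$ times an analytic function.

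\textbf{Step 3: independence of the sector and nondegeneracy.} The resulting coefficients depend only on $UV$ (through $r=R_h(UV)$), on $\omega$, and on $(dU,dV,d\omega_h)$. The signs $\epsilon,\eta$ enter only through $U,V$ themselves, so the same formula serves all four sectors and defines one analytic $g_h$ on $\mathcal X_h$. Regularity at the poles follows since $\sin^2\theta\,d\varphi_h^2+d\theta^2$ and $\sin^2\theta\,d\varphi_h$ are analytic on $\mathbb S^2$. Finally, on $\{UV=0\}$ the metric reads
\[
c\,dU\,dV+(\text{$\omega$-terms})+U(\ldots),
\]
with $c\neq0$ and a positive definite $\Sigma\,d\theta^2+\tfrac{A^2\sin^2\theta}{\Sigma}d\varphi_h^2$ block. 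Its determinant equals, up to a nonzero factor, $-\Sigma^2\sin^2\theta\cdot$(analytic nonvanishing). This matches \eqref{eq:determinants} off the axis and stays nonzero across $UV=0$, which gives the Lorentzian signature by continuity from any sector.
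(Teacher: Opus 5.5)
Your argument is correct, and it removes the horizon singularity by a different mechanism from the paper's.

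\textbf{The paper's route.} The paper writes the metric in $(t,r,\theta,\varphi)$ through the Gram entries of $K_h=T+\Omega_h\Phi$ and $\Phi$, and substitutes $dt$ and $dr$ separately. This leaves the term $\mathfrak h_h\bigl(\tfrac VU\,dU^2+\tfrac UV\,dV^2\bigr)$, which is analytic only because $\mathfrak h_h=\mathfrak p_h+\mathfrak q_h$ vanishes at $r=h$ and can be divided by the simple zero of $F_h$. Nondegeneracy is then checked at the bifurcation sphere via \eqref{eq:app-bifurcation-metric}, and at the other horizon points via the ingoing and outgoing Eddington--Finkelstein forms.

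\textbf{Your route.} You use $\beta=A\,d\varphi_h+(\Omega_hA-a)(dt+dr_\star)$ and the identity $\alpha+\tfrac{a\sin^2\theta}{A}\beta=\tfrac{\Sigma}{A}\,dt$. The sign there is $+$, not the $-$ you wrote. Together with $-\tfrac{\Delta\Sigma}{A^2}(dt^2-dr_\star^2)$ this gives
\[
 g=\frac{\Sigma\,d_h}{\kappa_h^2A^2}\,dU\,dV
 +\frac{2a\sin^2\theta}{A^2}\,(\Delta\,dt)\,\beta
 +\frac{G}{A^2}\,\beta^2+\Sigma\,d\theta^2 .
\]
Here $\Delta\,dt=\tfrac{d_h}{2\kappa_h}(U\,dV-V\,dU)$, $\beta=A\,d\varphi_h+\kappa_h^{-1}\Omega_h(r+h)e^{-2\kappa_hb_h}\,U\,dV$, and $G$ is as in \eqref{eq:app-G}. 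Every coefficient is manifestly analytic in $(UV,\omega)$, and no cancellation at $r=h$ is needed. The absence of a $dU^2$ term also shows directly that $\partial_U$ is null.

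Your determinant argument is likewise more uniform than the paper's. The formula $\det g_{(U,V,\theta,\varphi_h)}=-\Sigma^2\sin^2\theta\,d_h^2/(4\kappa_h^4A^2)$ holds on the dense quadrants and extends by continuity. It covers bifurcate and nonbifurcate horizon points at once, and $d_h\neq0$ on $J_h$ by \eqref{eq:app-disjoint-collars}.

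\textbf{Small repairs.}
\begin{itemize}
 \item Your displayed form on $\{UV=0\}$ is inaccurate at $U=0$, where there is an extra term $-\tfrac{a\sin^2\theta\,d_h}{\kappa_hA}\,V\,dU\,d\varphi_h$. It does not change the determinant, so let the determinant carry nondegeneracy there.
 \item For analyticity at the axis, split
 \[
  G\,d\varphi_h^2=\Sigma\sin^2\theta\,d\varphi_h^2+\frac{a^2(A+\Sigma-\Delta)}{\Sigma}\bigl(\sin^2\theta\,d\varphi_h\bigr)^2,
 \]
 as in \eqref{eq:app-polar-cancellation}.
 \item Nondegeneracy at the poles, which you explicitly left aside, follows from the polar-cap Jacobian $\sin\theta\cos\theta$. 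It turns your determinant into $-\Sigma^2d_h^2/(4\kappa_h^4A^2\cos^2\theta)$, which tends to $-d_h^2/(4\kappa_h^4)\ne0$ at the poles.
\end{itemize}

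\textbf{Comparison.} Your version is shorter and avoids the delicate horizon cancellation. The paper's version, by contrast, ties the nonbifurcate horizon points to the standard ingoing and outgoing Kerr forms, which are the charts that condition (C4) refers to.
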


\begin{proof}
We begin away from the axis.  Set
\begin{equation}
 \mathfrak s\equiv\sin\theta,\qquad \Sigma=A-a^2\mathfrak s^2,
 \qquad c_h\equiv\chi_h'=-\Omega_h\frac{r+h}{r-\bar h},
 \qquad q_h\equiv\Omega_h\frac{r+h}{r-\bar h},
\end{equation}
and define
\begin{align}
 G&\equiv\frac{\mathfrak s^2}{\Sigma}\bigl(A^2-a^2\Delta \mathfrak s^2\bigr),
 \label{eq:app-G}\\
 \alpha_h&\equiv\frac{1}{\Sigma}
 \left[-(1-a\Omega_h\mathfrak s^2)^2
 +\Delta \mathfrak s^2q_h^2\right],
 \label{eq:app-alpha}\\
 \beta_h&\equiv\frac{\mathfrak s^2}{\Sigma}
 \left[a(1-a\Omega_h\mathfrak s^2)+Aq_h\right].\label{eq:app-beta}
\end{align}
For $K_h\equiv T+\Omega_h\Phi$, the two-square form of the metric in
Lemma~\ref{lem:coframe} gives
\begin{equation}\label{eq:app-orbit-factorization}
 g(K_h,K_h)=\Delta\alpha_h,\qquad
 g(K_h,\Phi)=\Delta\beta_h,\qquad
 g(\Phi,\Phi)=G.
\end{equation}

On a quadrant, write
\begin{equation}
 \varphi\equiv\phi-\Omega_h t+\chi_h(r).
\end{equation}
Thus $d\phi=d\varphi+\Omega_hdt-c_h\,dr$, and
\begin{align}
 g={}&\Delta\alpha_hdt^2+2\Delta\beta_hdt\,d\varphi
 +Gd\varphi^2-2\Delta\beta_hc_hdt\,dr-2Gc_hd\varphi\,dr\nonumber\\
 &+\left(Gc_h^2+\frac{\Sigma}{\Delta}\right)dr^2
 +\Sigma d\theta^2.\label{eq:app-corotating-metric}
\end{align}
Set
\begin{equation}\label{eq:app-regular-coefficients}
 d_h(r)\equiv\frac{\Delta(r)}{F_h(r)}
 =(r-\bar h)e^{-2\kappa_hb_h(r)},\qquad
 p_h(r)\equiv\frac{1}{F_h'(r)}.
\end{equation}
Since $r=R_h(UV)$,
\begin{equation}\label{eq:app-differentials}
 dr=p_h(V\,dU+U\,dV),\qquad
 dt=\frac{1}{2\kappa_h}\left(\frac{dV}{V}-\frac{dU}{U}\right),
 \qquad \Delta=UVd_h.
\end{equation}
Define
\begin{equation}
 \mathfrak p_h\equiv\frac{d_h\alpha_h}{4\kappa_h^2},\qquad
 \mathfrak q_h\equiv\frac{\Sigma p_h^2}{d_h},\qquad
 \mathfrak h_h\equiv\mathfrak p_h+\mathfrak q_h.
\end{equation}
At $r=h$ one has
\begin{equation}
 1-a\Omega_h\mathfrak s^2=\frac{\Sigma_h}{A_h},\qquad
 \alpha_h(h,\theta)=-\frac{\Sigma_h}{A_h^2},
\end{equation}
where $\Sigma_h\equiv h^2+a^2\cos^2\theta$.  Moreover,
\begin{equation}
 d_h(h)=(h-\bar h)e^{-2\kappa_hb_h(h)},\qquad
 p_h(h)=e^{-2\kappa_hb_h(h)},\qquad
 4\kappa_h^2=\frac{(h-\bar h)^2}{A_h^2}.
\end{equation}
Consequently
\begin{equation}\label{eq:app-h-cancellation}
 \mathfrak p_h(h,\theta)
 =-\frac{\Sigma_h e^{-2\kappa_hb_h(h)}}{h-\bar h},\qquad
 \mathfrak q_h(h,\theta)
 =\frac{\Sigma_h e^{-2\kappa_hb_h(h)}}{h-\bar h}.
\end{equation}
Since $\mathfrak h_h(h,\theta)=0$ and $F_h$ has a simple zero,
\begin{equation}
 \widehat{\mathfrak h}_h\equiv\frac{\mathfrak h_h}{F_h}
\end{equation}
is analytic on $J_h\times\mathbb S^2$.

Substitution gives the analytic expression
\begin{align}
 g_h={}&\widehat{\mathfrak h}_h
 (V^2dU^2+U^2dV^2)
 +2(\mathfrak q_h-\mathfrak p_h)dU\,dV\nonumber\\
 &+\frac{d_h\beta_h}{\kappa_h}(U\,dV-V\,dU)d\varphi
 -\frac{d_h\beta_hc_hp_h}{\kappa_h}
 (U^2dV^2-V^2dU^2)\nonumber\\
 &-2Gc_hp_h(V\,dU+U\,dV)d\varphi
 +Gc_h^2p_h^2(V\,dU+U\,dV)^2\nonumber\\
 &+Gd\varphi^2+\Sigma d\theta^2.
 \label{eq:app-analytic-crossing-metric}
\end{align}
This expression restricts to the Kerr metric on each of the four
quadrants.

At the bifurcation sphere $U=V=0$, equations
\eqref{eq:app-h-cancellation} and
\eqref{eq:app-analytic-crossing-metric} give
\begin{equation}\label{eq:app-bifurcation-metric}
 g_h=\frac{4\Sigma_h e^{-2\kappa_hb_h(h)}}{h-\bar h}
 dU\,dV+\Sigma_h d\theta^2
 +\frac{A_h^2\mathfrak s^2}{\Sigma_h}d\varphi^2.
\end{equation}
The $UV$ block is indefinite and the spherical block is positive
definite. Thus the metric is Lorentzian at the bifurcation sphere.  At $U=0$, $V\ne0$, the
coordinates
\begin{equation}\label{eq:app-ingoing-from-kruskal}
 v\equiv\kappa_h^{-1}\log|V|,\qquad
 \psi\equiv\varphi+\Omega_hv,\qquad r=R_h(UV)
\end{equation}
put $g_h$ into the ingoing Kerr form obtained from
$v=t+r_\star$, $\psi=\phi+\phi_\star$.  At $V=0$, $U\ne0$, the
coordinates
\begin{equation}\label{eq:app-outgoing-from-kruskal}
 u_h\equiv-\kappa_h^{-1}\log|U|,\qquad
 \bar\psi\equiv\varphi+\Omega_hu_h-2\chi_h(r),\qquad r=R_h(UV)
\end{equation}
give the outgoing Kerr form obtained from
$u_h=t-r_\star$, $\bar\psi=\phi-\phi_\star$.  Away from the polar
coordinate singularity, both forms have determinant
$-\Sigma^2\sin^2\theta$.  Nondegeneracy therefore holds at every
nonbifurcate horizon point as well.

We finally check regularity at the axis. In either analytic polar
cap, use
$x\equiv\mathfrak s\cos\varphi$, $y\equiv\mathfrak s\sin\varphi$.  Then
\begin{equation}
 \mathfrak s^2d\varphi=x\,dy-y\,dx,\qquad
 \sin\theta\,d\theta=-d(\cos\theta),
\end{equation}
and
\begin{align}
 \Sigma d\theta^2+Gd\varphi^2
 ={}&\Sigma(d\theta^2+\mathfrak s^2d\varphi^2)\nonumber\\
 &+\frac{a^2\mathfrak s^4(2A-\Delta-a^2\mathfrak s^2)}{\Sigma}d\varphi^2.
 \label{eq:app-polar-cancellation}
\end{align}
The right side is analytic in polar caps, since
$\mathfrak s^4d\varphi^2=(x\,dy-y\,dx)^2$.  Each of the other $d\varphi$ terms contains a factor
$\mathfrak s^2$, through $G$ or $\beta_h$. Consequently $g_h$ is
analytic at the axis.  In an ingoing or outgoing polar-cap chart, with
$x\equiv\sin\theta\cos\psi$ and $y\equiv\sin\theta\sin\psi$, the Jacobian
$\partial(x,y)/\partial(\theta,\psi)=\sin\theta\cos\theta$ gives
\begin{equation}\label{eq:polar-cap-determinant}
 \det g_{(v,r,x,y)}=-\frac{\Sigma^2}{\mu^2}.
\end{equation}
The outgoing determinant is identical.  At a pole it tends to
$-(r^2+a^2)^2\ne0$.  The double-null changes have nonzero Jacobian
on nonbifurcate branches; at the bifurcation poles, use
\eqref{eq:app-bifurcation-metric} in polar caps.
\end{proof}

On $\mathcal X_h$ the Kerr Killing fields are
\begin{equation}\label{eq:app-killing-fields}
 T=-\kappa_hU\partial_U+\kappa_hV\partial_V
 -\Omega_h\partial_\varphi,\qquad
 \Phi=\partial_\varphi.
\end{equation}
Here $\partial_\varphi$ denotes the globally defined axial rotation
field on $\mathbb S^2$, which vanishes at the poles.

To continue the Carter tensor through the horizons, we use the
following explicit two-form. On each Boyer-Lindquist block, set
\begin{equation}\label{eq:app-ky-form}
 f\equiv a\cos\theta\,dr\wedge(dt-a\sin^2\theta\,d\phi)
 +r\sin\theta\,d\theta\wedge(A\,d\phi-a\,dt).
\end{equation}
In the coframe of Lemma~\ref{lem:coframe},
\begin{equation}\label{eq:app-ky-frame}
 f=-a\cos\theta\,e^0\wedge e^1+r\,e^2\wedge e^3.
\end{equation}
Write $\nabla_af_{bc}\equiv(\nabla_{E_a}f)(E_b,E_c)$.  On $\Delta>0$, introduce the following coefficients for Cartan's
equations:
\begin{align}
 q_0&\equiv\frac{r-M}{\sqrt{\Delta\Sigma}}
     -\frac{r\sqrt\Delta}{\Sigma^{3/2}},
 &q_1&\equiv\frac{r\sqrt\Delta}{\Sigma^{3/2}},
 &q_2&\equiv\frac{ar\sin\theta}{\Sigma^{3/2}},\nonumber\\
 q_3&\equiv\frac{a^2\sin\theta\cos\theta}{\Sigma^{3/2}},
 &q_4&\equiv\frac{a\sqrt\Delta\cos\theta}{\Sigma^{3/2}},
 &q_5&\equiv\frac{(r^2+a^2)\cos\theta}
 {\sin\theta\,\Sigma^{3/2}}.
 \label{eq:app-connection-coefficients}
\end{align}
Differentiating the coframe in Lemma~\ref{lem:coframe} gives
\begin{align}
 de^0&=-q_0e^0\wedge e^1+q_3e^0\wedge e^2
       -2q_4e^2\wedge e^3,\nonumber\\
 de^1&=q_3e^1\wedge e^2,
 &de^2&=q_1e^1\wedge e^2,\nonumber\\
 de^3&=-2q_2e^0\wedge e^1+q_1e^1\wedge e^3
       +q_5e^2\wedge e^3.
 \label{eq:app-coframe-derivatives}
\end{align}
With $\eta\equiv\operatorname{diag}(-1,1,1,1)$ and
$\omega_{ab}\equiv\eta_{ac}\omega^c{}_b$, the six independent connection
forms solving Cartan's first equation are
\begin{align}
 \omega_{01}&=-q_0e^0+q_2e^3,
 &\omega_{02}&=q_3e^0+q_4e^3,
 &\omega_{03}&=q_2e^1-q_4e^2,\nonumber\\
 \omega_{12}&=-q_3e^1-q_1e^2,
 &\omega_{13}&=-q_2e^0-q_1e^3,
 &\omega_{23}&=-q_4e^0-q_5e^3.
 \label{eq:app-connection-forms}
\end{align}
With these connection forms,
\begin{equation}
 \nabla_af_{bc}=E_a(f_{bc})
 -\omega^d{}_b(E_a)f_{dc}
 -\omega^d{}_c(E_a)f_{bd},
 \label{eq:app-ky-component-formula}
\end{equation}
and
\begin{equation}
 E_1(r)=\sqrt{\frac\Delta\Sigma},
 \qquad
 E_2(a\cos\theta)=-\frac{a\sin\theta}{\sqrt\Sigma}.
\end{equation}
Using \eqref{eq:app-connection-forms}-
\eqref{eq:app-ky-component-formula}, we obtain
\begin{align}
 \nabla_0f_{12}
 &=q_2r+q_3a\cos\theta
 =\frac{a\sin\theta}{\sqrt\Sigma},\nonumber\\
 \nabla_2f_{01}
 &=E_2(-a\cos\theta)
 =\frac{a\sin\theta}{\sqrt\Sigma},\nonumber\\
 \nabla_1f_{02}
 &=-q_2r-q_3a\cos\theta
 =-\frac{a\sin\theta}{\sqrt\Sigma}.
 \label{eq:app-ky-first-triple}
\end{align}
The other nonzero triple is
\begin{align}
 \nabla_1f_{23}
 &=E_1(r)=\sqrt{\frac\Delta\Sigma},\nonumber\\
 \nabla_2f_{13}
 &=-q_1r-q_4a\cos\theta
 =-\sqrt{\frac\Delta\Sigma},\nonumber\\
 \nabla_3f_{12}
 &=q_1r+q_4a\cos\theta
 =\sqrt{\frac\Delta\Sigma}.
 \label{eq:app-ky-second-triple}
\end{align}
All other components with $b<c$ vanish.  Thus
\begin{align}
 \nabla_0f_{12}=\nabla_2f_{01}=-\nabla_1f_{02}
 &=\frac{a\sin\theta}{\sqrt\Sigma},\nonumber\\
 \nabla_1f_{23}=-\nabla_2f_{13}=\nabla_3f_{12}
 &=\sqrt{\frac\Delta\Sigma}.
 \label{eq:app-ky-derivatives}
\end{align}
It follows that $\nabla_af_{bc}$ is totally alternating, or
equivalently,
\begin{align}
 \nabla_af_{bc}&=\frac13(df)_{abc},\nonumber\\
 df&=\frac{3a\sin\theta}{\sqrt\Sigma}
 e^0\wedge e^1\wedge e^2
 +3\sqrt{\frac\Delta\Sigma}
 e^1\wedge e^2\wedge e^3\nonumber\\
 &=3\Sigma\sin\theta\,dr\wedge d\theta\wedge d\phi.
 \label{eq:app-ky-exterior-derivative}
\end{align}
Hence $\nabla_{(\alpha}f_{\beta)\gamma}=0$ on $\{\Delta>0\}$.
After substituting \eqref{eq:app-ky-form}, its coordinate numerators
are analytic identities in $r,\cos\theta$, so the equation holds
wherever $\Sigma>0$, $\Delta\ne0$, including the axis by continuity.
To check horizon regularity, substitute
\eqref{eq:app-differentials}, $A\Omega_h-a=\Delta q_h$, and
$d\phi=d\varphi+\Omega_hdt-c_hdr$. This gives
\begin{align}
 f={}&\frac{a\cos\theta\,p_h}{\kappa_h}
 (1-a\Omega_h\mathfrak s^2)dU\wedge dV
 +a^2\cos\theta\,\mathfrak s^2p_h\,d\varphi\wedge(VdU+UdV)
 \nonumber\\
 &+r\sin\theta\,d\theta\wedge
 \left\{
 \frac{d_hq_h}{2\kappa_h}(UdV-VdU)+A\,d\varphi
 -Ac_hp_h(VdU+UdV)
 \right\}.
 \label{eq:app-ky-crossing}
\end{align}
This is analytic, including the poles by the identities
$\sin\theta\,d\theta=-d(\cos\theta)$,
$\mathfrak s^2d\varphi=xdy-ydx$ and the smooth spherical area form.
Consequently
\begin{equation}\label{eq:app-carter-tensor}
 \mathsf K_{\alpha\beta}\equiv f_{\alpha\lambda}f_{\beta}{}^{\lambda}
\end{equation}
is analytic on every crossing model.  The Killing-Yano equation holds on the whole model by analytic
continuation from the dense union of quadrants.  Set
\begin{equation}
 A_{\alpha\beta\delta}\equiv \nabla_\alpha f_{\beta\delta}
 =A_{[\alpha\beta\delta]}.
\end{equation}
Metric compatibility gives
\begin{equation}
 \nabla_\alpha\mathsf K_{\beta\gamma}
 =A_{\alpha\beta\delta}f_\gamma{}^\delta
 +f_\beta{}^\delta A_{\alpha\gamma\delta}.
\end{equation}
Cyclic symmetrization then gives
\begin{align}
 3\nabla_{(\alpha}\mathsf K_{\beta\gamma)}
 ={}&
 A_{\alpha\beta\delta}f_\gamma{}^\delta
 +f_\beta{}^\delta A_{\alpha\gamma\delta}
 +A_{\beta\gamma\delta}f_\alpha{}^\delta\nonumber\\
 &+f_\gamma{}^\delta A_{\beta\alpha\delta}
 +A_{\gamma\alpha\delta}f_\beta{}^\delta
 +f_\alpha{}^\delta A_{\gamma\beta\delta}=0.
\end{align}
The terms cancel in pairs by alternation, which proves the
Killing-tensor identity.  In the orthonormal frame,
\begin{equation}\label{eq:app-carter-frame-components}
 (\mathsf K_{ab})
 =\operatorname{diag}
 \bigl(a^2\cos^2\theta,-a^2\cos^2\theta,r^2,r^2\bigr).
\end{equation}
For a unit timelike vector $\xi$, the normalization identity
$-(\xi^0)^2+(\xi^1)^2+(\xi^2)^2+(\xi^3)^2=-1$ and the formulas
\begin{equation}
 \xi^2=\frac{p_\theta}{\sqrt\Sigma},\qquad
 \xi^3=\frac{L-aE\sin^2\theta}{\sqrt\Sigma\sin\theta}
\end{equation}
give
\begin{align}
 \mathsf K(\xi,\xi)
 &=a^2\cos^2\theta\bigl((\xi^0)^2-(\xi^1)^2\bigr)
 +r^2\bigl((\xi^2)^2+(\xi^3)^2\bigr)\nonumber\\
 &=a^2\cos^2\theta+p_\theta^2
 +\frac{(L-aE\sin^2\theta)^2}{\sin^2\theta}
 =\mathcal Q.
 \label{eq:app-carter-normalization}
\end{align}
Thus the global tensor has the normalization used in
Proposition~\ref{prop:separation}.

\section{Finite-proper-time timelike geodesics}\label{sec:geodesics}

The aim of this section is to show that every timelike geodesic
which is inextendible in finite proper time approaches the ring. We
use the horizon coordinates of the preceding section and Carter
separation \cite{CarterSeparation}. Completeness away from the rings
and equatorial causal ring impact were studied in
\cite{BoyerLindquist,Carter}; further analyses of geodesic endpoints
include \cite{Semerak1997,RudnickiZieba2000}. The null geodesics in
\cite{HelliwellMallinckrodt1975,MaliborskiSutter2026} cross regular
points of the disc. Explicit subextremal horizon-crossing formulas are
given in \cite{BakunEtAl2025}. Here we prove the continuation and
localization statements for every marked realization, in the form
needed for a geodesic ending at a continuous extension boundary.

\subsection{First integrals and separation}

\begin{lemma}[Inverse metric and orbit determinant]\label{lem:inverse}
Away from the polar-coordinate axis and from a Boyer-Lindquist horizon, the nonzero inverse components are
\begin{align}
 g^{tt}&=-\frac{(r^2+a^2)^2-a^2\Delta\sin^2\theta}{\Sigma\Delta},
 &g^{t\phi}&=-\frac{2Mar}{\Sigma\Delta},\nonumber\\
 g^{rr}&=\frac{\Delta}{\Sigma},
 &g^{\theta\theta}&=\frac1\Sigma,
 &g^{\phi\phi}&=\frac{\Delta-a^2\sin^2\theta}{\Sigma\Delta\sin^2\theta}.
\label{eq:inverse}
\end{align}
The algebraic identity $g_{tt}g_{\phi\phi}-g_{t\phi}^2=-\Delta\sin^2\theta$ extends across every regular point at which the polar coordinates are valid.
\end{lemma}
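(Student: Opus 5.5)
The plan is a direct block computation: invert the $2\times2$ orbit block and the diagonal transverse block separately, then handle the determinant identity by analytic continuation. In Boyer--Lindquist coordinates the metric \eqref{eq:kerr-metric} has no orbit--base cross terms. So $g$ is block diagonal with respect to the splitting $(t,\phi)\oplus(r,\theta)$, and its inverse is the blockwise inverse. The transverse block is diagonal, so $g^{rr}=\Delta/\Sigma$ and $g^{\theta\theta}=1/\Sigma$ follow immediately wherever $\Delta\ne0$ and $\Sigma>0$.

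For the orbit block I will use the determinant identity \eqref{eq:determinants}, $D\equiv g_{tt}g_{\phi\phi}-g_{t\phi}^2=-\Delta\sin^2\theta$, which was already verified by expanding the two-square form in Lemma~\ref{lem:coframe}. Then
\[
g^{tt}=\frac{g_{\phi\phi}}{D},\qquad g^{t\phi}=-\frac{g_{t\phi}}{D},\qquad g^{\phi\phi}=\frac{g_{tt}}{D}.
\]
The remaining steps are algebraic simplifications with $\Sigma=r^2+a^2\cos^2\theta$ and $r^2+a^2-\Delta=2Mr$:
\begin{itemize}
\item $g^{t\phi}=-2Mar\sin^2\theta/(\Sigma\Delta\sin^2\theta)=-2Mar/(\Sigma\Delta)$.
\item $g^{\phi\phi}=(1-2Mr/\Sigma)/(\Delta\sin^2\theta)=(\Sigma-2Mr)/(\Sigma\Delta\sin^2\theta)$, and $\Sigma-2Mr=\Delta-a^2\sin^2\theta$.
\item $g^{tt}=-\bigl((r^2+a^2)\Sigma+2Ma^2r\sin^2\theta\bigr)/(\Sigma\Delta)$, and I will rewrite the numerator as $(r^2+a^2)^2-a^2\sin^2\theta\,(r^2+a^2-2Mr)=(r^2+a^2)^2-a^2\Delta\sin^2\theta$.
\end{itemize}
Every other inverse component vanishes by block diagonality. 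These formulas need $\sin\theta\ne0$ and $\Delta\ne0$, which are exactly the stated exclusions.

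For the last sentence, both $g_{tt}g_{\phi\phi}-g_{t\phi}^2$ and $-\Delta\sin^2\theta$ are given by the same analytic expressions in $(r,\cos\theta)$ once $\Sigma>0$. The identity is a polynomial identity after multiplying by $\Sigma$, and it holds on the dense set $\Delta\ne0$. Whenever the metric coefficients are expressed through the Killing fields, $g(T,T)g(\Phi,\Phi)-g(T,\Phi)^2$, both sides are analytic functions on $\KerrSpace$. By Proposition~\ref{prop:canonical-kerr-fields}, $T$ and $\Phi$ are global analytic Killing fields, and the functions $r$, $\mu$ are analytic. The two sides agree on the dense block region, hence at every regular point, including horizon points and regular disc points, wherever the polar coordinates are valid.

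I expect no real obstacle. The only point needing care is this continuation: phrasing the orbit determinant invariantly through $T$ and $\Phi$, so that it makes sense on crossing models where Boyer--Lindquist coordinates fail, and then invoking density of $\{\Delta\ne0\}$.
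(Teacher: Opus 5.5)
Your proof is correct, but it takes a different route from the paper. The paper writes down the dual frame $E_a$ of the coframe in Lemma~\ref{lem:coframe} and reads off $g^{-1}=-E_0^2+E_1^2+E_2^2+E_3^2$. That frame contains $\sqrt{\Delta}$, so the paper first obtains the formulas only for $\Delta>0$. It then carries them to the intermediate block by noting that both sides are rational. The orbit determinant is handled as a byproduct, either from the inverse orbit block or by expanding \eqref{eq:orbit-components}.

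You instead take the orbit determinant \eqref{eq:determinants} as input and apply Cramer's rule to the $2\times2$ orbit block. This covers both signs of $\Delta$ at once and needs no continuation step. Your three simplifications are right, including $(r^2+a^2)\Sigma+2Ma^2r\sin^2\theta=(r^2+a^2)^2-a^2\Delta\sin^2\theta$ and $\Sigma-2Mr=\Delta-a^2\sin^2\theta$. The paper's route reuses the frame, which it needs again in the Killing--Yano and tidal computations. Yours is the more self-contained matrix computation.

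For the final sentence, writing the orbit determinant as $g(T,T)g(\Phi,\Phi)-g(T,\Phi)^2$ makes explicit what the paper leaves implicit. The density argument is more than you need, though. No $\Delta$ appears in any denominator of \eqref{eq:orbit-components}, so the identity is a rational identity in $(r,\cos\theta)$ whose only denominators are powers of $\Sigma$. It therefore holds verbatim at horizon points. This algebraic observation also keeps the statement uniform across the extremal and overextremal models, which Proposition~\ref{prop:canonical-kerr-fields} does not itself cover.
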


\begin{proof}
The dual of the coframe in Lemma~\ref{lem:coframe} is
\begin{align}
 E_0&=\frac{(r^2+a^2)\partial_t+a\partial_\phi}{\sqrt{\Sigma\Delta}},
 &E_1&=\sqrt{\frac\Delta\Sigma}\,\partial_r,\nonumber\\
 E_2&=\frac1{\sqrt\Sigma}\,\partial_\theta,
 &E_3&=\frac{a\sin\theta\,\partial_t+\sin^{-1}\theta\,\partial_\phi}{\sqrt\Sigma}.
\end{align}
The identity $g^{-1}=-E_0^2+E_1^2+E_2^2+E_3^2$ gives
\eqref{eq:inverse} when $\Delta>0$.  Both sides are rational algebraic
expressions, so the formula continues to every Boyer-Lindquist region
with $\Delta\ne0$.  The determinant follows either by inverting the
$2\times2$ orbit block or by expanding \eqref{eq:orbit-components}.
\end{proof}

Let $\gamma$ be a unit-speed timelike geodesic, with affine parameter denoted by $s$.  The Killing constants are
\begin{equation}\label{eq:E-L}
 E\equiv-g(\dot\gamma,T),
 \qquad
 L\equiv g(\dot\gamma,\Phi).
\end{equation}
Since $T$ and $\Phi$ are Killing fields, these quantities are
constant. Put
\begin{equation}\label{eq:P}
 P(r)\equiv E(r^2+a^2)-aL.
\end{equation}

\begin{proposition}[Carter constant and separated equations]\label{prop:separation}
Every unit-speed timelike Kerr geodesic has a constant $\mathcal Q\ge0$ for which
\begin{align}
 \Sigma^2\dot r^2&=R(r)\equiv P(r)^2-\Delta(r^2+\mathcal Q),
 \label{eq:R}\\
 \Sigma^2\dot\theta^2&=\Theta(\theta)\equiv \mathcal Q-a^2\cos^2\theta
 -\frac{(L-aE\sin^2\theta)^2}{\sin^2\theta}.
 \label{eq:Theta}
\end{align}
Where $\Delta\ne0$ and $\sin\theta\ne0$,
\begin{align}
 \dot t&=\frac1\Sigma\left[\frac{(r^2+a^2)P}{\Delta}
 +a(L-aE\sin^2\theta)\right],
 \label{eq:tdot}\\
 \dot\phi&=\frac1\Sigma\left[\frac{aP}{\Delta}
 +\frac{L}{\sin^2\theta}-aE\right].
 \label{eq:phidot}
\end{align}
The exact radial polynomial is
\begin{align}
 R(r)={}&(E^2-1)r^4+2Mr^3
 +\bigl(2E(Ea^2-aL)-a^2-\mathcal Q\bigr)r^2\nonumber\\
 &+2M\mathcal Q r+(Ea^2-aL)^2-a^2\mathcal Q.
 \label{eq:R-polynomial}
\end{align}
We write $R^+(r)\equiv\max\{R(r),0\}$.
The constants and the radial identity hold through regular horizons.
The regular combinations of \eqref{eq:tdot}-\eqref{eq:phidot} are
given in Lemma~\ref{lem:horizon-continuation}. At the axis,
$\mathcal Q$ is defined by the smooth Carter tensor, and
\eqref{eq:Theta} is understood in its $L=0$ limit.
\end{proposition}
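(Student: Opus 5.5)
The plan is to derive everything from the first integrals and the Carter tensor $\mathsf K$ of Proposition~\ref{prop:canonical-kerr-fields}, working first on a Boyer--Lindquist block with $\Delta\ne0$ and away from the axis, and then extending by analyticity. Set $p\equiv g(\dot\gamma,\cdot)$ and define $\mathcal Q\equiv\mathsf K(\dot\gamma,\dot\gamma)$. Because $\mathsf K$ is a Killing tensor, $\mathcal Q$ is constant along geodesics, and this definition is global: it makes sense on blocks, on crossing models, and at the axis.

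\textbf{Nonnegativity.} I would use the frame form \eqref{eq:app-carter-frame-components} together with the normalization computation \eqref{eq:app-carter-normalization}. For a unit timelike vector, $(\xi^0)^2-(\xi^1)^2=1+(\xi^2)^2+(\xi^3)^2\ge1$, so
\[
\mathcal Q=a^2\cos^2\theta\bigl((\xi^0)^2-(\xi^1)^2\bigr)+r^2\bigl((\xi^2)^2+(\xi^3)^2\bigr)\ge0 .
\]
This is pointwise and algebraic, so it continues to hold across horizons and the axis.

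\textbf{The $\theta$ equation.} With $\xi^2=\sqrt\Sigma\,\dot\theta=p_\theta/\sqrt\Sigma$ and $p_\theta=\Sigma\dot\theta$, the identity \eqref{eq:app-carter-normalization} reads
\[
\mathcal Q=a^2\cos^2\theta+\Sigma^2\dot\theta^2+\frac{(L-aE\sin^2\theta)^2}{\sin^2\theta},
\]
which is exactly \eqref{eq:Theta}. To justify $\xi^3=(L-aE\sin^2\theta)/(\sqrt\Sigma\sin\theta)$, evaluate $e^3(\dot\gamma)$ using $E=-p_t$ and $L=p_\phi$ together with the dual frame $E_3$ of Lemma~\ref{lem:inverse}: $\xi^3=p(E_3)=(a\sin\theta\,p_t+p_\phi/\sin\theta)/\sqrt\Sigma$.

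\textbf{The $r$ equation.} Here $\xi^0=-p(E_0)=P/\sqrt{\Sigma\Delta}$ and $\xi^1=\sqrt{\Sigma/\Delta}\,\dot r$. From $\mathsf K(\xi,\xi)=\mathcal Q$ and the normalization, eliminate $(\xi^2)^2+(\xi^3)^2=(\xi^0)^2-(\xi^1)^2-1$. This gives $\Sigma\bigl((\xi^0)^2-(\xi^1)^2\bigr)=r^2+\mathcal Q$, that is, $P^2/\Delta-\Sigma^2\dot r^2/\Delta=r^2+\mathcal Q$, which is \eqref{eq:R}. The formulas \eqref{eq:tdot} and \eqref{eq:phidot} follow by raising the index of $p=(-E,p_r,p_\theta,L)$ with the inverse metric \eqref{eq:inverse}, or more simply via $\dot\gamma=\xi^0E_0+\dots$. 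Expanding $P^2-\Delta(r^2+\mathcal Q)$ then yields \eqref{eq:R-polynomial}; this step is routine.

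\textbf{Extension to horizons and the axis.} This is the main obstacle. $E$, $L$ and $\mathcal Q$ are defined by the global analytic $T$, $\Phi$ and $\mathsf K$ (Proposition~\ref{prop:canonical-kerr-fields}), so they stay constant through horizons. The relation $\Sigma^2\dot r^2=R(r)$ is an identity between analytic functions along the geodesic. It holds on the dense open set of parameters where $\Delta\ne0$, provided the geodesic does not remain on a horizon. On a horizon generator segment, $\dot r=0$, and one must check $R(h)=P(h)^2=0$; this follows because $E-\Omega_hL=-g(\dot\gamma,K_h)$, and $K_h$ is null and normal to the horizon, so this quantity vanishes for any tangent vector of the horizon. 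By continuity the identity then holds everywhere.

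At the axis, $L=g(\dot\gamma,\Phi)=0$ since $\Phi$ vanishes there. The $\theta$ equation is interpreted as the continuous limit, in which the singular term $(L-aE\sin^2\theta)^2/\sin^2\theta$ is bounded along the geodesic by the identity itself.
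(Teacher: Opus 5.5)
Your proof is correct, but it runs in the opposite direction from the paper's. The paper works locally on each Boyer--Lindquist block with the Hamiltonian $H=\frac12 g^{\mu\nu}p_\mu p_\nu$. Substituting the inverse metric gives the additive separation $2\Sigma H=F_r(r,p_r)+F_\theta(\theta,p_\theta)$. The paper then \emph{defines} $\mathcal Q$ by the manifestly nonnegative expression $p_\theta^2+(L-aE\sin^2\theta)^2/\sin^2\theta+a^2\cos^2\theta$ and checks $d\mathcal Q/ds=0$ by hand from Hamilton's equations and $2H=-1$. Both separated equations fall out of the separation identity. Only at the end does it use \eqref{eq:app-carter-normalization} to identify this local constant with $\mathsf K(\dot\gamma,\dot\gamma)$, which is what carries it through horizons and the axis.

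You instead take $\mathcal Q\equiv\mathsf K(\dot\gamma,\dot\gamma)$ as the definition. Constancy and globality are then immediate from the Killing-tensor identity proved in Section~\ref{sec:regular-horizons}, and both separated equations come from one frame identity combined with unit normalization. Your route is shorter and makes the horizon and axis clauses automatic. The paper's route keeps the local separation independent of the Killing--Yano computation and uses $\mathsf K$ only for globalization.

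Three small points should be tightened.
\begin{itemize}
 \item The coframe of Lemma~\ref{lem:coframe} is real only where $\Delta>0$. Your inequality $(\xi^0)^2-(\xi^1)^2\ge1$ and your frame derivation of \eqref{eq:R} therefore do not directly cover the intermediate blocks. Saying the argument is ``pointwise and algebraic'' does not help, since continuation transports identities, not inequalities. On $\Delta<0$, use the real frame in which $\sqrt{-\Sigma/\Delta}\,dr$ is the timelike covector and $\sqrt{-\Delta/\Sigma}\,(dt-a\sin^2\theta\,d\phi)$ is spacelike. In that frame $\mathsf K$ keeps the form $\operatorname{diag}(a^2\cos^2\theta,-a^2\cos^2\theta,r^2,r^2)$, and your computation, including nonnegativity and \eqref{eq:Theta}, goes through verbatim. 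Alternatively, continue the resulting equalities as rational identities in $r$.
 \item The horizon-generator case you treat is vacuous. A timelike vector is never tangent to a null hypersurface, so horizon contacts are isolated.
 \item At the axis, $L=0$ makes the quotient exactly $a^2E^2\sin^2\theta$, so no boundedness argument is needed.
\end{itemize}
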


\begin{proof}
On the cotangent bundle let $p_t=-E$, $p_\phi=L$ and
\begin{equation}
 H\equiv\frac12g^{\mu\nu}p_\mu p_\nu.
\end{equation}
Substitution of Lemma~\ref{lem:inverse} and
$P=E(r^2+a^2)-aL$ gives the additive identity
\begin{equation}\label{eq:additive-separation}
 2\Sigma H
 =\underbrace{\Delta p_r^2-\frac{P^2}{\Delta}}_{F_r(r,p_r)}
 +\underbrace{p_\theta^2+
 \frac{(L-aE\sin^2\theta)^2}{\sin^2\theta}}_{F_\theta(\theta,p_\theta)}.
\end{equation}
Since $2H=-1$, Hamilton's equations show that
\begin{equation}\label{eq:Q-definition}
 \mathcal Q\equiv p_\theta^2+
 \frac{(L-aE\sin^2\theta)^2}{\sin^2\theta}
 +a^2\cos^2\theta
\end{equation}
is constant.  Hamilton's equation for the angular momentum is
\begin{equation}\label{eq:pdottheta}
 \dot p_\theta
 =-\frac{\partial_\theta F_\theta}{2\Sigma}
 +\frac{H}{\Sigma}\,\partial_\theta\Sigma,
\end{equation}
because $2\Sigma H=F_r+F_\theta$.  Since
$\dot\theta=p_\theta/\Sigma$ and
$\partial_\theta\Sigma=\partial_\theta(a^2\cos^2\theta)$, one obtains
\begin{align}
 \frac{d\mathcal Q}{ds}
 &=\frac{p_\theta}{\Sigma}\,\partial_\theta F_\theta
 +2p_\theta\left[-\frac{\partial_\theta F_\theta}{2\Sigma}
 +\frac{H}{\Sigma}\partial_\theta(a^2\cos^2\theta)\right]
 +\frac{p_\theta}{\Sigma}\partial_\theta(a^2\cos^2\theta)\nonumber\\
 &=\frac{p_\theta}{\Sigma}(2H+1)
 \partial_\theta(a^2\cos^2\theta)=0.
\end{align}
Every term in \eqref{eq:Q-definition} is nonnegative.  Since
$p_r=(\Sigma/\Delta)\dot r$ and $p_\theta=\Sigma\dot\theta$, substituting $F_\theta=\mathcal Q-a^2\cos^2\theta$ in \eqref{eq:additive-separation} gives \eqref{eq:R} and \eqref{eq:Theta}.  Inverting the orbit block using \eqref{eq:E-L} and Lemma~\ref{lem:inverse} gives \eqref{eq:tdot}-\eqref{eq:phidot}.  Expanding $P^2-\Delta(r^2+\mathcal Q)$ gives \eqref{eq:R-polynomial}.

At an axis crossing, $\Phi=0$ gives $L=0$, and the quotient in
\eqref{eq:Q-definition} becomes $a^2E^2\sin^2\theta$.
The global fields and tensor give global constants $E,L,\mathcal Q$
by \eqref{eq:app-carter-normalization}.  The radial identity contains no horizon denominator and hence
extends by continuity. The time and angular formulas hold on the
stated Boyer-Lindquist domains.
\end{proof}

\begin{remark}[Carter-constant convention]\label{rem:carter-convention}
Many references use
\begin{equation}
 Q_{\mathrm C}\equiv \mathcal Q-(L-aE)^2.
\end{equation}
Thus the classical equatorial ring-impact condition $Q_{\mathrm C}=0$
is exactly $\mathcal Q=(L-aE)^2$ in the nonnegative normalization
\eqref{eq:Q-definition}.
\end{remark}

\begin{remark}[Elementary consequences]\label{rem:separation-consequences}
The motion is confined to $R\ge0$ and $\Theta\ge0$.  On a compact $r$ interval,
\begin{equation}
 \abs{\Sigma\dot r}=\sqrt{R}\le C,
 \qquad
 \abs{\Sigma\dot\theta}=\sqrt\Theta\le\sqrt{\mathcal Q}.
\end{equation}
If $L\ne0$, the last term in \eqref{eq:Theta} is asymptotic to
$L^2/\sin^2\theta$ at the rotation axis and prevents an approach to it.
Finally, $R$ is never the zero polynomial because its cubic coefficient
is $2M>0$.
\end{remark}

\subsection{Continuation away from the ring}

\begin{lemma}[Escape of the tangent lift]\label{lem:escape}
Let $(N,h)$ be smooth and let $\gamma:[0,b)\to N$, $b<\infty$, be a maximal affinely parametrized geodesic.  Its tangent lift leaves every compact subset of $TN$.
\end{lemma}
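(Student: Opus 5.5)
We argue by contradiction and reduce to the standard ODE extension theorem for the geodesic equation, viewed as a first-order system on $TN$. Suppose there is a compact set $K\subset TN$ and a sequence $s_n\uparrow b$ with $\dot\gamma(s_n)\in K$. The geodesic spray $G$ is a smooth vector field on $TN$, and the tangent lift $s\mapsto(\gamma(s),\dot\gamma(s))$ is one of its integral curves. Maximality of $\gamma$ as an affinely parametrized geodesic is the same as maximality of this integral curve, since geodesics correspond exactly to projections of spray integral curves.

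First, we obtain a uniform existence time near $K$. Cover $K$ by finitely many coordinate patches of $TN$. By the local existence theorem for smooth ODEs with uniform bounds on a compact neighbourhood $K'\supset K$, there is $\varepsilon>0$ such that every integral curve of $G$ starting in $K$ exists on $(-\varepsilon,\varepsilon)$. Equivalently, the flow domain of $G$ contains $K\times(-\varepsilon,\varepsilon)$, which follows from the openness of the flow domain together with compactness.

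Second, we extend the curve. Choose $n$ with $b-s_n<\varepsilon/2$. The integral curve through $\dot\gamma(s_n)$ exists on $[s_n,s_n+\varepsilon)$, and by uniqueness it agrees with the tangent lift on $[s_n,b)$. It therefore provides a continuation beyond $b$. Projecting to $N$ gives an affinely parametrized geodesic defined past $b$ that extends $\gamma$, which contradicts maximality.

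The only point requiring care is the uniformity of $\varepsilon$ over $K$. This is the standard compactness argument applied to the open flow domain of a smooth vector field. No causal or metric-signature input is needed, and the statement holds for any smooth connection, so the main obstacle is merely bookkeeping.
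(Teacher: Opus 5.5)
Your proposal is correct and follows essentially the same route as the paper: a uniform existence time for the smooth geodesic spray, then uniqueness to continue the tangent lift past $b$, contradicting maximality. The only difference is cosmetic. You take the uniform time over the whole compact set $K$ using openness of the flow domain, while the paper takes it near the limit of a convergent subsequence of tangent points. These are interchangeable.
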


\begin{proof}
Suppose, to the contrary, that a sequence of tangent points
$z(s_j)$, $s_j\uparrow b$, converges in $TN$. Smoothness of the
geodesic spray gives a uniform local existence time $\varepsilon>0$
near the limit. For sufficiently late $z(s_j)$, uniqueness then
extends the geodesic beyond $b$, contrary to maximality.
\end{proof}

\begin{lemma}[No finite-proper-time escape through infinity]\label{lem:radial-bounded}
If a unit-speed timelike Kerr geodesic is defined on $[0,\Lambda)$ with $\Lambda<\infty$, then $|r|$ remains bounded.
\end{lemma}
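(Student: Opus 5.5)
The plan is to bound $|r|$ along the geodesic through the radial equation \eqref{eq:R} and a comparison argument. The key point is that $\Sigma\ge r^2$, so for large $|r|$ the identity $\Sigma^2\dot r^2=R(r)$ gives
\begin{equation}
 |\dot r|=\frac{\sqrt{R(r)}}{\Sigma}\le\frac{\sqrt{R^+(r)}}{r^2}.
\end{equation}
By \eqref{eq:R-polynomial}, $R$ is a quartic with leading coefficient $E^2-1$, so $R^+(r)\le C(1+r^4)$ with $C$ depending only on $E,L,\mathcal Q,M,a$. These are global constants along the geodesic by Proposition~\ref{prop:separation}, and the radial identity holds through regular horizons. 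Hence $|\dot r|\le C'$ whenever $|r|\ge1$, with $C'$ independent of the parameter.

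First I will note that $r$ is a globally defined continuous function on the realization (the marked function of (C1)), and that $r\circ\gamma$ is locally Lipschitz. Next I integrate. Let $r_0\equiv\max\{|r(\gamma(0))|,1\}$. On any maximal subinterval on which $|r|>r_0$, starting at a parameter $s_1$ with $|r(\gamma(s_1))|=r_0$, the inequality $\bigl|\tfrac{d}{ds}|r|\bigr|\le C'$ holds almost everywhere. Therefore
\begin{equation}
 |r(\gamma(s))|\le r_0+C'(s-s_1)\le r_0+C'\Lambda
\end{equation}
for every $s<\Lambda$. This proves boundedness. The argument covers both signs of $r$, since the negative-$r$ end of $\mathcal B_1$ is treated identically because $\Sigma\ge r^2$ there as well.

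The main obstacle I expect is justifying the use of \eqref{eq:R} across block changes and at points where the Boyer-Lindquist charts fail. For this I will invoke Proposition~\ref{prop:separation}: there $\Sigma^2\dot r^2=R(r)$ is stated to hold through regular horizons, and at the axis it holds by continuity, since $r$ is analytic and the formula contains no horizon or axis denominators. Large $|r|$ occurs only in $\mathcal B_1$ or $\mathcal B_3$ away from the horizons, because (C5) ensures that a continuous curve changes block only at $r=r_\pm$. Consequently the estimate above is applied only inside a single block, where the identity is unproblematic. No claim about the ring is needed, since $|r|\ge1$ keeps $\Sigma\ge1$ bounded away from zero.
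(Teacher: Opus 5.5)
Your argument is essentially the paper's: the bound $R^+(r)\le C(1+r^4)$ together with $\Sigma\ge r^2$ bounds $|\dot r|$ uniformly for $|r|\ge1$, and finite $\Lambda$ then bounds $|r|$. One side remark is inaccurate but harmless: with threshold $\max\{|r(\gamma(0))|,1\}$, the region $|r|>r_0$ need not avoid the horizons or $\mathcal B_2$ (for example when $r_+>1$), but this does not matter, because, as you also note, the identity $\Sigma^2\dot r^2=R(r)$ holds globally through regular horizons and at the axis.
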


\begin{proof}
The constants in \eqref{eq:R-polynomial} are fixed, so $R^+(r)\le C(1+r^4)$.  For $|r|\ge1$, one has $\Sigma^2\ge r^4$, and \eqref{eq:R} gives
\begin{equation}
 |\dot r|^2=\frac{R(r)}{\Sigma^2}\le 2C.
\end{equation}
It follows that $r$ has bounded speed outside a fixed compact
interval. It therefore reaches neither $+\infty$ nor $-\infty$ in
the finite parameter time $\Lambda$.
\end{proof}

\begin{lemma}[Finitely many horizon contacts]\label{lem:finite-horizons}
Let $\gamma:[0,\Lambda)\to\KerrSpace$ be a unit-speed timelike geodesic,
$\Lambda<\infty$, with $|r|\le B$.  Then $\Delta(r(s))=0$ for only
finitely many $s\in[0,\Lambda)$.
\end{lemma}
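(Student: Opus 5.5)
The plan is to show that the zero set of $s\mapsto\Delta(r(s))$ is discrete in $[0,\Lambda)$ and does not accumulate at $\Lambda$. Since $|r|\le B$ and $\Delta$ has only the two simple roots $r_\pm$, it suffices to treat each root $h\in\{r_-,r_+\}$ separately.

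\emph{Step 1: crossings are transversal.} Work in a crossing model $\mathcal X_h$ of Lemma~\ref{lem:app-crossing-metric}, where the horizon is $\{UV=0\}$. A timelike curve meets each null hypersurface $\{U=0\}$ or $\{V=0\}$ transversally. Its tangent cannot be tangent to a null hypersurface, so $\dot U\ne0$ at a zero of $U$, and likewise for $V$. The bifurcation sphere is spacelike of codimension two, so a timelike curve hitting $U=V=0$ has $\dot U,\dot V$ not both zero there. In every case $UV$, and hence $F_h(r)$, changes sign or has an isolated zero. Because $F_h$ is a diffeomorphism near $h$, each zero of $r-h$ is isolated. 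This makes the contact set discrete in $[0,\Lambda)$.

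\emph{Step 2: no accumulation at $\Lambda$.} This is the main obstacle. Suppose contacts $s_j\uparrow\Lambda$ with $r(s_j)=h$. Between two consecutive contacts, $r$ returns to $h$, so there is an interior turning point where $\dot r=0$. By \eqref{eq:R} this point is a zero of $R$ near $h$, or else $r$ is constant.

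\begin{itemize}
\item The constant case $r\equiv h$ on an interval is impossible, because a timelike curve cannot stay in a null hypersurface.
\item Otherwise, the radial motion on a small interval $J\ni h$ is governed by $\Sigma^2\dot r^2=R(r)$, where $R$ is a fixed nonzero polynomial (Remark~\ref{rem:separation-consequences}).
\end{itemize}

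On the compact set $|r|\le B$, $\Sigma$ is bounded above, so $|dr/ds|$ is governed by $\sqrt{R}$. Two cases arise near $h$.

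\begin{itemize}
\item If $R(h)>0$, then near $h$ we have $R\ge c>0$. The curve then crosses $J$ monotonically, and each passage through $J$ takes parameter time at least $|J|/\sup|\dot r|$. Only finitely many passages fit in $[0,\Lambda)$.
\item If $R(h)=0$, then $h$ is an isolated root of $R$. Oscillation about $h$ would need turning points on both sides arbitrarily close to $h$ by the transversality of Step 1. That would require roots of $R$ accumulating at $h$, which a nonzero polynomial cannot have.
\end{itemize}

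So each visit to $h$ requires traversing a region of fixed radial width with bounded speed, and contacts cannot accumulate.

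I expect the delicate point to be the bound on $|\dot r|$ and the monotone crossing near $h$ when $\Sigma$ is small. This is handled by $h\ge r_->0$, since then $\Sigma\ge h^2/2$ on $J$. One could alternatively argue via Lemma~\ref{lem:escape} applied in the crossing chart if the tangent stays bounded there.
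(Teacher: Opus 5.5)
Your proposal is correct in substance and uses the same basic mechanism as the paper. Between two returns to a horizon radius, the radial motion must turn at a zero of the fixed nonzero polynomial $R$, and a velocity bound then charges a fixed amount of proper time to every excursion. The two supporting steps are done differently.

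For isolatedness, the paper passes to Mino time. There $dr/d\tau=v$, $dv/d\tau=\tfrac12R'(r)$ has analytic solutions, so an interior accumulation of contacts would force $r\equiv h$. You instead use that a timelike curve is transverse to the null hypersurfaces $\{U=0\}$ and $\{V=0\}$ of the crossing chart. This is a purely causal argument and needs no analyticity.

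For the time bound, the paper uses the global weighted estimate $ds\ge (r^2/V)\,|dr|=V^{-1}|d(r^3/3)|$, with $V^2=\max_{[-B,B]}R^+$. It bounds every horizon-to-horizon excursion uniformly, whether the excursion joins distinct horizons or returns to the same one, and even if it passes near $\Sigma=0$. You pigeonhole onto a single $h$ and need control only on a small interval $J\ni h$, where $\Sigma\ge r^2\ge h^2/2$ bounds $|\dot r|$. Your localization is a bit more elementary; the paper's version gives one explicit constant for all consecutive events at once.

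A few details should be tightened.
\begin{itemize}
\item The case split is unnecessary, and the $R(h)=0$ branch is misdescribed. Since $R(h)=P(h)^2$, that case forces $\dot r=0$ at each contact. On a nonbifurcate branch $\dot r$ is a nonzero multiple of $V\dot U$ or $U\dot V$, so the contact lies on the bifurcation sphere. There $r$ touches $h$ from the $\Delta<0$ side rather than oscillating across it. The transversality of Step~1 concerns $U,V$, not $r$.
\item The uniform argument runs as follows. For consecutive $h$-contacts $s_j<s_{j+1}$, Rolle gives $\sigma$ with $\dot r(\sigma)=0$, hence $R(r(\sigma))=0$, and $r(\sigma)\ne h$ by consecutiveness. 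Choose $J$ so that $\overline J$ contains no zero of $R$ except possibly $h$. Then $r$ must leave $J$, which takes time at least $\operatorname{dist}(h,\partial J)/\sup_J|\dot r|$.
\item An upper bound on $\Sigma$ gives no upper bound on $|\dot r|=\sqrt R/\Sigma$. What matters is the lower bound on $\Sigma$ on $J$, which you do supply at the end.
\item At the bifurcation sphere you need both $\dot U\ne0$ and $\dot V\ne0$, since each of $\{U=0\}$, $\{V=0\}$ is null. ``Not both zero'' is not enough to conclude that $UV$ has an isolated zero.
\end{itemize}
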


\begin{proof}
Let $Z_R$ be the finite set of zeros of the nonzero polynomial $R$ in
$[-B,B]$, let $\mathscr H_r$ be the finite set of horizon radii, and put
$V\equiv\sqrt{\max_{[-B,B]}R^+}$.  On a compact parameter interval introduce
the rescaled parameter $d\tau/ds=\Sigma^{-1}$ and set
$v\equiv dr/d\tau$.  The radial
Hamilton equations give the smooth phase system
\begin{equation}\label{eq:radial-phase-system}
 \frac{dr}{d\tau}=v,\qquad
 \frac{dv}{d\tau}=\frac12R'(r),\qquad v^2=R(r).
\end{equation}
Here $v=\Delta p_r$, and Hamilton's equation, with $H=-1/2$ and
$\partial_r\Sigma=2r$, gives
\begin{align}
 \frac{dv}{d\tau}
 &=\frac{\Delta'}{2\Delta}(v^2-P^2)+PP'-r\Delta\nonumber\\
 &=-\frac{\Delta'}2(r^2+\mathcal Q)+PP'-r\Delta
 =\frac12\frac{d}{dr}\bigl(P^2-\Delta(r^2+\mathcal Q)\bigr).
\end{align}
This calculation extends by continuity through regular horizons,
also when $v=0$, since it involves no division by $v$. The radial
sign can therefore change only at $Z_R$. The polynomial phase system
has analytic solutions.  Accumulation of contacts with a horizon $r=h$ at an
interior parameter value would imply $r\equiv h$, impossible for a
timelike curve in a null hypersurface.  Every contact is isolated.

If $V=0$, the continuous image of $r$ lies in the finite set $Z_R$, so
$r$ is constant and there are no horizon events.  Assume from now on
that $V>0$.  On every monotone radial subarc,
\begin{equation}\label{eq:weighted-horizon-time}
 ds=\frac{\Sigma}{\sqrt R}|dr|
 \ge\frac{r^2}{V}|dr|
 =\frac1V\left|d\!\left(\frac{r^3}{3}\right)\right|.
\end{equation}
If two consecutive events have distinct horizon values
$h_1,h_2\in\mathscr H_r$, their weighted radial variation is at least
$|h_2^3-h_1^3|/3$.  If they have the same value $h$, a nonconstant
excursion contains a distinct turning value
$z\in Z_R\setminus\{h\}$ and has weighted variation at least
$2|z^3-h^3|/3$.  These positive numbers form a finite set. Thus
\eqref{eq:weighted-horizon-time} gives a uniform positive lower
bound for the affine time of a complete horizon-to-horizon excursion.
There can consequently be only finitely many contacts before
$\Lambda$.
The proof uses only the finiteness of $\mathscr H_r$, and also applies
to the extremal case $\mathscr H_r=\{M\}$.
\end{proof}

\begin{lemma}[Horizon continuation]\label{lem:horizon-continuation}
In the subextremal range \eqref{eq:parameters}, let $r_h$ be a root
of $\Delta$.  Suppose a unit timelike geodesic has $r(s)\to r_h$
monotonically as $s\uparrow\Lambda<\infty$, with $\Sigma\ge\rho>0$
on the tail.
\begin{enumerate}[label=(\alph*)]
 \item If $P(r_h)\ne0$, the tangent lift is bounded in one of the two Eddington-Finkelstein charts and the geodesic extends through $s=\Lambda$.
 \item If $P(r_h)=0$, the side $\Delta>0$ is forbidden by $R<0$.  An approach from the side $\Delta<0$ reaches the bifurcation sphere with bounded tangent in a regular double-null chart and extends through it.
\end{enumerate}
\end{lemma}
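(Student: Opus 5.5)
The plan is to prove both parts by explicit asymptotics of the separated equations in Proposition~\ref{prop:separation}, combined with the crossing models of Lemma~\ref{lem:app-crossing-metric} and the escape criterion of Lemma~\ref{lem:escape}. Since the geodesic approaches $r_h$ monotonically, its tail lies in a single block collar $\mathcal C_e$ on one side of $r_h$. By (C4) this collar is mapped by $\jmath_\beta\circ\widehat{\mathscr S}_{h,\boldsymbol\lambda_e}\circ\Xi_h^{\epsilon\eta}$ into a quadrant $Q_{\epsilon\eta}$ of the full crossing model $\mathcal X_h$. Because $\widehat{\mathscr S}$ is an isometry of $\mathcal X_h$, it suffices to show that $(U,V,\omega)$ converges in $\mathcal X_h$ and that the tangent components stay bounded there, with smooth polar caps used for $\omega$. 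Then the tangent lift has a limit point in a compact subset of $T\KerrSpace$, and Lemma~\ref{lem:escape} shows that the geodesic is not maximal at $\Lambda$.

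Throughout the argument $\theta$ is harmless. Since $\Sigma\ge\rho$, we have $|\dot\theta|\le\sqrt{\mathcal Q}/\rho$. The axial coordinate $\varphi=\phi-\Omega_ht+\chi_h(r)$ has the regular coefficient $\chi_h$, and the only singular quantities are $P/\Delta$ and $\dot r/\Delta$.

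For (a), write $\Sigma\dot r=\sigma\sqrt R$ with $\sigma=\pm1$ fixed on the monotone tail. By \eqref{eq:tdot}, \eqref{eq:phidot} and $r_\star'=A/\Delta$, we get
\begin{equation*}
 \Sigma\frac{d}{ds}(t\pm r_\star)=\frac{A(P\pm\sigma\sqrt R)}{\Delta}+a(L-aE\sin^2\theta).
\end{equation*}
Choose the sign $\pm$ so that $\pm\sigma=-\operatorname{sgn}P(r_h)$. Then
\begin{equation*}
 \frac{P\pm\sigma\sqrt R}{\Delta}=\frac{r^2+\mathcal Q}{P\mp\sigma\sqrt R},
\end{equation*}
and the denominator tends to $2P(r_h)\neq0$, so the right-hand side is bounded. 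The angular combination $\phi\pm\phi_\star$ is handled in the same way, using $\phi_\star'=a/\Delta$.

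Hence $t\pm r_\star$ has bounded derivative and converges. This means that one Kruskal coordinate, $V=\eta e^{\kappa_h(t+r_\star)}$ or $U=\epsilon e^{-\kappa_h(t-r_\star)}$, converges to a nonzero value. Since $UV=F_h(r)\to0$, the other coordinate tends to $0$. Thus the limit is a point of a nonbifurcate horizon branch in the full chart of \eqref{eq:app-ingoing-from-kruskal} or \eqref{eq:app-outgoing-from-kruskal}. Its tangent is bounded because the chosen combinations have bounded derivatives, while $\dot r$ and $\dot\theta$ are bounded.

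For (b), expand near $r_h$. Since $P(r_h)=0$, we have $P^2=O((r-r_h)^2)$ while $\Delta$ has a simple zero, so
\begin{equation*}
 R=-\Delta\,(r_h^2+\mathcal Q)+O((r-r_h)^2).
\end{equation*}
Here $r_h^2+\mathcal Q>0$ because $r_h>0$. Hence $R<0$ near $r_h$ on the side $\Delta>0$, and that side is forbidden.

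On the side $\Delta<0$, the same expansion gives $\sqrt R\asymp|r-r_h|^{1/2}$, while $P/\Delta$ stays bounded. So $\dot t$ and $\dot\phi$ are bounded, and $t$ and $\phi$ converge. Since $r_\star\sim(2\kappa_h)^{-1}\log|r-r_h|$, both $|U|$ and $|V|$ are comparable to $|r-r_h|^{1/2}\to0$, so the limit lies on the bifurcation sphere.

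For the derivatives, $\dot U=\kappa_hU(\dot r_\star-\dot t)$ is a bounded term plus $O(|r-r_h|^{1/2})\cdot|r-r_h|^{-1}\cdot\dot r$. Since $\dot r=O(|r-r_h|^{1/2})$, this is bounded, and the same holds for $\dot V$. The coordinate $\varphi$ converges with bounded derivative. Lemma~\ref{lem:app-crossing-metric} then gives regularity at $U=V=0$, and Lemma~\ref{lem:escape} completes the argument.

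I expect the main obstacle to be the bookkeeping in (a). One must check that the sign choice, made from $\operatorname{sgn}\dot r$ and $\operatorname{sgn}P(r_h)$, is the one that matches the Kruskal coordinate which stays nonzero in the quadrant actually occupied, as dictated by $\epsilon\eta$ and the radial port rule \eqref{eq:def-radial-port-rule}. Otherwise the bounded combination would correspond to a horizon branch not adjacent to the tail. I would settle this by working directly with $U$ and $V$. One of $\log|U|$, $\log|V|$ has bounded derivative by the computation above, and $UV\to0$. So the limit point automatically lies on the closure of the occupied quadrant inside $\mathcal X_h$, and no separate chart matching is needed.
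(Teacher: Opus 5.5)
Your proposal is correct and is essentially the paper's proof: in (a) the same rationalization $(P+\sigma\sqrt R)(P-\sigma\sqrt R)=\Delta(r^2+\mathcal Q)$ bounds $\dot v$ (or $\dot u$) and the radial part of the matching angular velocity in the ingoing (or outgoing) chart, and in (b) the same Taylor expansion of $R$, together with $U,V,\dot r=O(|r-r_h|^{1/2})$ and $\dot r_\star=O(|r-r_h|^{-1/2})$, bounds $\dot U,\dot V$; in both cases Lemma~\ref{lem:escape} concludes. The only point you leave implicit, and which the paper states, is the polar term $L/\sin^2\theta$ in $\dot\phi$: it is bounded because $\Theta\ge0$ keeps geodesics with $L\ne0$ away from the axis, while for $L=0$ it vanishes, and then bounded $\dot\theta,\dot\phi$ give bounded velocity in the polar caps you mention.
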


\begin{proof}
Write $\Sigma\dot r=\varepsilon\sqrt R$ on the monotone branch.  If $P_h\equiv P(r_h)\ne0$, then $R(r_h)=P_h^2$.  In the ingoing chart of \eqref{eq:ef-coordinates},
\begin{align}
 \dot v
 &=\frac{(r^2+a^2)(P+\varepsilon\sqrt R)}{\Delta\Sigma}
 +\frac{a(L-aE\sin^2\theta)}{\Sigma}.
\end{align}
When $\varepsilon P_h<0$, the identity
\begin{equation}
 (P+\varepsilon\sqrt R)(P-\varepsilon\sqrt R)
 =P^2-R=\Delta(r^2+\mathcal Q)
\end{equation}
rewrites the first term as
\begin{equation}
 \frac{(r^2+a^2)(r^2+\mathcal Q)}{(P-\varepsilon\sqrt R)\Sigma},
\end{equation}
since $P-\varepsilon\sqrt R\to2P_h\ne0$.  The same cancellation
bounds the radial part of
$\dot{\bar\phi}=\dot\phi+a\dot r/\Delta$.  Its polar term is
bounded: $L\ne0$ excludes approach to an axis by $\Theta\ge0$, while
for $L=0$ the singular quotient vanishes.  Thus the angular velocity
is bounded in ordinary or polar-cap coordinates.  If
$\varepsilon P_h>0$, use the outgoing chart with the opposite signs.
Remark~\ref{rem:separation-consequences} bounds $\dot r,\dot\theta$.
The regular time and spherical coordinates therefore have limits
on the finite parameter interval. Their limit lies in the full horizon
chart, where the bounded tangent lift is precompact. By
Lemma~\ref{lem:escape}, the geodesic extends through the horizon.

Suppose next that $P_h=0$. The horizon roots are simple in the
parameter range \eqref{eq:parameters}. Taylor expansion gives
\begin{equation}\label{eq:critical-horizon-expansion}
 R(r)=-\Delta'(r_h)(r_h^2+\mathcal Q)(r-r_h)
 +O((r-r_h)^2).
\end{equation}
The coefficient $r_h^2+\mathcal Q$ is positive.  Thus $R<0$ on the adjacent $\Delta>0$ side.  On the $\Delta<0$ side, $R$ has a positive simple zero, so $\dot r\to0$ and the affine arrival time is finite.  Moreover $P/\Delta\to P'(r_h)/\Delta'(r_h)$, and \eqref{eq:tdot}-\eqref{eq:phidot} give finite limits for $\dot t$ and $\dot\phi$ in the Boyer-Lindquist variables as the bifurcation sphere is approached.

To estimate the tangent in regular coordinates, put
\begin{equation}
 \Omega_h\equiv\frac{a}{r_h^2+a^2},
 \qquad
 \kappa_h\equiv\frac{\Delta'(r_h)}{2(r_h^2+a^2)},
\end{equation}
write $r_*=(2\kappa_h)^{-1}\log|r-r_h|+O(1)$, and use
\begin{equation}
 U\equiv\pm e^{-\kappa_h(t-r_*)},
 \qquad
 V\equiv\pm e^{\kappa_h(t+r_*)},
 \qquad
 \varphi_h\equiv\phi-\Omega_h t+\chi_h(r),
\end{equation}
where
\begin{equation}
 \chi_h'(r)=\frac{a-\Omega_h(r^2+a^2)}{\Delta(r)}
\end{equation}
is analytic at $r_h$.  Equation~\eqref{eq:critical-horizon-expansion}
gives
\begin{equation}
 U,V=O(|r-r_h|^{1/2}),\qquad
 \dot r=O(|r-r_h|^{1/2}),\qquad
 \dot r_*=O(|r-r_h|^{-1/2}).
\end{equation}
Thus
\begin{equation}
 \dot U=-\kappa_hU(\dot t-\dot r_*),\qquad
 \dot V=\kappa_hV(\dot t+\dot r_*)
\end{equation}
are bounded.  Likewise $\dot\varphi_h=\dot\phi-\Omega_h\dot t+\chi_h'\dot r$
is bounded away from the axis.  At an axis the argument in
\textup{(a)} gives $L=0$ and bounded polar-cap velocity; the angular
equation bounds $\dot\theta$.  The base coordinates thus converge to the bifurcation sphere
$U=V=0$, and the tangent lift is precompact in a regular double-null
chart. Lemma~\ref{lem:escape} again gives continuation.
\end{proof}

\begin{proposition}[Continuation away from $\Sigma=0$]\label{prop:away-ring-continuation}
Let $\gamma:[0,\Lambda)\to\KerrSpace$ be a future-inextendible unit-speed
timelike geodesic with $\Lambda<\infty$.  It cannot satisfy
$\Sigma\ge\rho>0$ on a terminal interval.
\end{proposition}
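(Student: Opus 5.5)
We argue by contradiction: assume $\Sigma\ge\rho>0$ on a terminal interval $[s_1,\Lambda)$, and show that the tangent lift of $\gamma$ stays in a compact subset of $T\KerrSpace$ near $s=\Lambda$. Lemma~\ref{lem:escape} then extends $\gamma$ past $\Lambda$, contradicting inextendibility. The first step is to trap $r$. By Lemma~\ref{lem:radial-bounded}, $|r|\le B$ on $[0,\Lambda)$. Lemma~\ref{lem:finite-horizons} then gives only finitely many horizon contacts, so after enlarging $s_1$ the curve $\gamma([s_1,\Lambda))$ avoids $\mathcal H$. By connectedness it lies in a single block $B_\alpha$, and $r$ takes values in one interval $I_{j(\alpha)}$.

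Next we use the phase system \eqref{eq:radial-phase-system}. Since $\Sigma$ is bounded above and below on the tail, $\tau$ and $s$ are comparable, and the total $\tau$-interval is finite. The solution $(r,v)$ of this analytic ODE, constrained to $v^2=R(r)$ with $|r|\le B$, is bounded. On a finite $\tau$-interval it therefore has a limit: $v$ is bounded, so $r$ is Lipschitz in $\tau$ and $r(\tau)\to r_\infty$. Moreover $r$ is eventually monotone, since sign changes of $v$ occur only at zeros of $R$. Such zeros are isolated in time for a nonconstant analytic solution, and finitely many occur in finite $\tau$ by the same weighted-variation argument as in Lemma~\ref{lem:finite-horizons}, which bounds turning-to-turning excursions from below.

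We now split into two cases. \textbf{Case 1:} $r_\infty$ lies in the open interval $I_{j(\alpha)}$. Then $\Delta$ is bounded away from zero on the tail. Formulas \eqref{eq:tdot}--\eqref{eq:phidot} bound $\dot t$; they also bound $\dot\phi$ away from the axis. Near the axis, Remark~\ref{rem:separation-consequences} forces $L=0$, and the polar-cap velocity is bounded as in the proof of Lemma~\ref{lem:horizon-continuation}. Remark~\ref{rem:separation-consequences} also bounds $\dot r$ and $\dot\theta$. Because $t$ has bounded derivative on a finite interval, $|t|\le A$. Lemma~\ref{lem:block-properness}, applied with a compact $J\ni r_\infty$ and the bound $\Sigma\ge\rho$, then confines $\gamma$ to a compact set. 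Bounded coordinate velocities in finitely many charts make the tangent lift precompact, and Lemma~\ref{lem:escape} gives continuation. \textbf{Case 2:} $r_\infty=r_h$ is a horizon root, approached monotonically with $\Sigma\ge\rho$. This is exactly the hypothesis of Lemma~\ref{lem:horizon-continuation}. That lemma shows that $\gamma$ extends through $s=\Lambda$, either across a horizon branch or through the bifurcation sphere. Here we use (C4): the full crossing chart $\jmath_\beta(\mathcal X_{h})$ attached to the relevant endpoint of $B_\alpha$ contains the limit point, so the regular chart used in the lemma actually exists inside $\KerrSpace$. In both cases we reach the contradiction.

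The main obstacle is Case 2, in particular making sure the limit lands in $\KerrSpace$ rather than just in a model chart. This needs two facts. First, $t$, or the relevant Eddington--Finkelstein or Kruskal coordinate, converges, so the limit lies on an actual generator of the crossing model rather than escaping along $t\to\pm\infty$. Second, the exact sector identity \eqref{eq:def-exact-sector-identity} identifies the block collar with the correct quadrant. I would verify the first by integrating the bounded regular velocities of Lemma~\ref{lem:horizon-continuation} over the finite parameter interval. For the second I would cite (C4) and (C6), which guarantee that every quadrant at every crossing is present. A secondary point is the extremal case, where horizon roots are double. Since the statement concerns members of $\CarterClass$, the simple-root Lemma~\ref{lem:horizon-continuation} suffices here.
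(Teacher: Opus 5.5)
Your proposal is correct and follows essentially the paper's argument. You confine the tail to one block via Lemmas~\ref{lem:radial-bounded} and~\ref{lem:finite-horizons} and obtain $r\to r_\infty$. You then close the case $\Delta(r_\infty)\ne0$ with block properness and Lemma~\ref{lem:escape}, and the case $\Delta(r_\infty)=0$ with Lemma~\ref{lem:horizon-continuation}.

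The differences are minor. You get eventual radial monotonicity from a global count of turning points, where the paper uses the local fact that $R$ has no other zero on the allowed side near $r_h$. You handle the axis by splitting into $L\ne0$ and $L=0$, where the paper uses the uniform bound $|L-aE\sin^2\theta|/\sin\theta\le\sqrt{\mathcal Q}$ on the round-sphere speed.
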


\begin{proof}
Suppose $\Sigma\ge\rho>0$ on a terminal interval.
Lemmas~\ref{lem:radial-bounded} and~\ref{lem:finite-horizons} bound
$r$ and place the tail, after its last horizon event, in one Carter
block.  Since $R^+$ is bounded on the radial range,
$|\dot r|=\sqrt{R(r)}/\Sigma\le C_\rho$, so $r\to r_\infty$.

If $\Delta(r_\infty)\ne0$, take
$|\Delta|\ge\delta_\Delta>0$ on the tail.  Equations
\eqref{eq:Theta} and~\eqref{eq:phidot} give
\begin{align}
 |\dot\theta|&\le\frac{\sqrt{\mathcal Q}}\rho,\nonumber\\
 |\sin\theta\,\dot\phi|
 &\le\frac1\rho\left(
  \frac{|aP|}{\delta_\Delta}
  +\frac{|L-aE\sin^2\theta|}{\sin\theta}\right)
 \le\frac1\rho\left(
  \frac{|aP|}{\delta_\Delta}+\sqrt{\mathcal Q}\right).
 \label{eq:sphere-speed-bound}
\end{align}
The $L=0$ limit covers the axis.  The round-sphere speed and $\dot t$ are therefore bounded. Hence
the time and spherical coordinates converge on the finite interval.  Lemma~\ref{lem:block-properness} and a chart
at the angular limit make the tangent lift precompact.
Lemma~\ref{lem:escape} contradicts inextendibility.

If $\Delta(r_\infty)=0$, then either $P(r_\infty)\ne0$ and
$R(r_\infty)>0$, or \eqref{eq:critical-horizon-expansion} makes
$r_\infty$ a simple zero of $R$.  In either case $R$ has no other
zero on the allowed side sufficiently near the horizon, so the tail
is radially monotone.  The constant alternative would put a timelike
curve in a null horizon.  Lemma~\ref{lem:horizon-continuation} then extends the geodesic
across the horizon, again contradicting inextendibility.
\end{proof}

\subsection{Localization to the ring}

We will use the following form of Barb\u{a}lat's lemma. Related
forms and quantitative estimates can be found in
\cite{FarkasWegner2016}.

\begin{lemma}[Uniformly Lipschitz $L^2$ functions]\label{lem:lipschitz-l2}
If $f:[0,\infty)\to\mathbb R$ is uniformly Lipschitz and
$f\in L^2([0,\infty))$, then $f(\tau)\to0$ as $\tau\to\infty$.
\end{lemma}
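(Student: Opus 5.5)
We argue by contradiction, which is the classical proof of Barb\u{a}lat's lemma. Let $K$ be a Lipschitz constant for $f$, so that $|f(\tau)-f(\sigma)|\le K|\tau-\sigma|$ for all $\tau,\sigma\ge0$. Suppose $f(\tau)\not\to0$. Then there are $\varepsilon>0$ and a sequence $\tau_j\to\infty$ with $|f(\tau_j)|\ge\varepsilon$. Passing to a subsequence, we may assume $\tau_{j+1}-\tau_j\ge 2\delta$ for all $j$, where
\begin{equation*}
 \delta\equiv\frac{\varepsilon}{2K}.
\end{equation*}
If $K=0$, then $f$ is constant, and $f\in L^2$ forces $f\equiv0$; so we may take $K>0$. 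With this choice the intervals $I_j\equiv[\tau_j-\delta,\tau_j+\delta]$ are pairwise disjoint. Discarding finitely many indices, they also lie in $[0,\infty)$.

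On each $I_j$, the Lipschitz bound gives
\begin{equation*}
 |f(\tau)|\ge|f(\tau_j)|-K|\tau-\tau_j|\ge\varepsilon-K\delta=\frac{\varepsilon}{2}.
\end{equation*}
Hence $\int_{I_j}f^2\ge 2\delta\,\varepsilon^2/4=\varepsilon^3/(4K)$, a positive lower bound independent of $j$. Since the $I_j$ are disjoint and there are infinitely many of them,
\begin{equation*}
 \int_0^\infty f^2\ge\sum_j\int_{I_j}f^2=+\infty,
\end{equation*}
which contradicts $f\in L^2([0,\infty))$. Therefore $f(\tau)\to0$.

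There is no real obstacle here. The only care needed is the choice of $\delta$ in terms of $\varepsilon$ and $K$, which makes the lower bound on $|f|$ uniform on intervals of fixed length, and the use of disjoint intervals so that the contributions add.

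In the application, $f$ will be a radial or polar velocity written in the rescaled parameter $\tau$ of Lemma~\ref{lem:finite-horizons}. Its Lipschitz bound should come from the smooth phase system \eqref{eq:radial-phase-system}. Its $L^2$ bound should come from an integrated identity along the terminal geodesic, with $\tau\to\infty$ as $s\uparrow\Lambda$ when $\Sigma\to0$.
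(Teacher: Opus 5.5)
Your proof is correct and is essentially the paper's argument: disjoint intervals of half-width $\varepsilon/(2K)$ about the points where $|f|\ge\varepsilon$, on which $|f|\ge\varepsilon/2$, with $K=0$ treated separately. Your closing remark about the application is not part of the proof, but it is inaccurate: in Theorem~\ref{thm:ring-localization} the lemma is applied to the positions $r$ and $\cos\theta$ as functions of $\tau$, not to velocities; they are Lipschitz because the separated equations bound $dr/d\tau$ and $d(\cos\theta)/d\tau$, and they lie in $L^2$ because $\int\Sigma\,d\tau=\Lambda$; and the paper uses the lemma only when $\tau$ has infinite range, treating a finite range separately (for the actual terminal geodesics the endpoint $\tau_*$ turns out to be finite).
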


\begin{proof}
Let $K$ be a Lipschitz constant.  If the conclusion failed, there would
be $\varepsilon>0$ and a sequence $\tau_j\to\infty$ with
$|f(\tau_j)|\ge\varepsilon$.  For $K>0$, the estimate
$|f(\tau)|\ge\varepsilon/2$ holds whenever
$|\tau-\tau_j|\le\varepsilon/(2K)$.  Pass to a subsequence for which
these intervals are disjoint.  Their contributions make
$\int_0^\infty|f|^2d\tau$ infinite.  If $K=0$, the assertion is
immediate.
\end{proof}

\begin{theorem}[Localization to the ring]\label{thm:ring-localization}
Let $\gamma:[0,\Lambda)\to\KerrSpace$ be a future-inextendible unit-speed
timelike geodesic with $\Lambda<\infty$.  Then
\begin{equation}\label{eq:ring-localization}
 \Sigma(s)\longrightarrow0,
 \qquad
 r(s)\longrightarrow0,
 \qquad
 \theta(s)\longrightarrow\frac\pi2.
\end{equation}
\end{theorem}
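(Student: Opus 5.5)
The plan is to pass to Mino time. Along the geodesic I set $d\tau/ds\equiv\Sigma^{-1}$; then the separated equations of Proposition~\ref{prop:separation} become
\[
 \Bigl(\frac{dr}{d\tau}\Bigr)^2=R(r),\qquad
 \Bigl(\frac{d\mu}{d\tau}\Bigr)^2=\Theta_\mu(\mu)\equiv(1-\mu^2)(\mathcal Q-a^2\mu^2)-\bigl(L-aE(1-\mu^2)\bigr)^2,
\]
where $\mu=\cos\theta$. These identities contain no horizon or axis denominators. The second one is obtained by multiplying \eqref{eq:Theta} by $\sin^2\theta$, and at the axis it is read as its $L=0$ limit, as in the statement of Proposition~\ref{prop:separation}.

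First I use Lemma~\ref{lem:radial-bounded} to get $|r|\le B$ on $[0,\Lambda)$. Since $\mu\in[-1,1]$ and both $R$ and $\Theta_\mu$ are polynomials, their positive parts are bounded on the relevant compact ranges. Hence $r(\tau)$ and $\mu(\tau)$ are uniformly Lipschitz functions of $\tau$ on the whole range $[0,\tau_\infty)$, where $\tau_\infty\equiv\int_0^\Lambda\Sigma^{-1}ds\in(0,\infty]$. Proposition~\ref{prop:away-ring-continuation} shows that $\Sigma$ cannot stay bounded below on a terminal interval, so $\liminf_{s\uparrow\Lambda}\Sigma=0$. The task is to upgrade this $\liminf$ to a genuine limit.

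I then split into two cases according to $\tau_\infty$. If $\tau_\infty<\infty$, the uniformly Lipschitz functions $r$ and $\mu$ have limits $r_\infty,\mu_\infty$ as $\tau\uparrow\tau_\infty$. So $\Sigma$ converges to $r_\infty^2+a^2\mu_\infty^2$, and the $\liminf$ statement forces this limit to be zero. Since $a\neq0$, this gives $r_\infty=0$ and $\mu_\infty=0$. If $\tau_\infty=\infty$, I use the finite proper time:
\[
 \Lambda=\int_0^\infty\Sigma\,d\tau=\int_0^\infty\bigl(r^2+a^2\mu^2\bigr)\,d\tau<\infty.
\]
Thus $r$ and $a\mu$ lie in $L^2([0,\infty))$ and are uniformly Lipschitz. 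Lemma~\ref{lem:lipschitz-l2} then gives $r\to0$ and $\mu\to0$ directly. In both cases $\Sigma\to0$, $r\to0$, and $\mu\to0$. Because $\theta$ is continuous with $\cos\theta\to0$, this gives $\theta\to\pi/2$ (in polar-cap charts, the statement $\mu\to0$ is the invariant one).

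The step I expect to need the most care is justifying the uniform Lipschitz bounds globally along the curve. This includes horizon crossings, where Boyer--Lindquist coordinates fail but the $\tau$-system \eqref{eq:radial-phase-system} persists by continuity, as in Lemma~\ref{lem:finite-horizons}. It also includes axis crossings, where I must check that $\mu$ is differentiable in $\tau$ with $(d\mu/d\tau)^2=\Theta_\mu$. I would handle the horizons by working after the last horizon event, which exists by Lemma~\ref{lem:finite-horizons}. I would handle the axis by invoking the smooth Carter tensor, which keeps $\Theta_\mu$ polynomial there. Neither issue affects the argument above, because only bounds on $|dr/d\tau|$ and $|d\mu/d\tau|$ are used.
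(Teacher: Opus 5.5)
Your proposal is correct and follows the paper's proof essentially step for step. You pass to Mino time $\tau$, obtain uniform Lipschitz bounds on $r$ and $\cos\theta$ in $\tau$, and split into two cases: for $\tau_\infty<\infty$ the limits exist and Proposition~\ref{prop:away-ring-continuation} forces them to vanish, while for $\tau_\infty=\infty$ finite proper time puts $r$ and $\cos\theta$ in $L^2$ and Lemma~\ref{lem:lipschitz-l2} finishes. Your polynomial form $\Theta_\mu=\sin^2\theta\,\Theta$ is only a rewriting of the paper's bound $|dx/d\tau|=\sin\theta\sqrt{\Theta}\le\sqrt{\mathcal Q}$.
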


\begin{proof}
Lemma~\ref{lem:radial-bounded} gives $|r|\le B$.  Put $x\equiv\cos\theta$
and introduce
\begin{equation}\label{eq:rescaled-time}
 \tau(s)\equiv\int_0^s\frac{d\sigma}{\Sigma(\gamma(\sigma))}.
\end{equation}
The separated equations give uniform bounds
\begin{equation}\label{eq:rescaled-lipschitz}
 \left|\frac{dr}{d\tau}\right|=\sqrt{R(r)}\le C_B,
 \qquad
 \left|\frac{dx}{d\tau}\right|
 =\sin\theta\sqrt{\Theta(\theta)}\le\sqrt{\mathcal Q}.
\end{equation}
If the limiting value of $\tau$ is infinite, then
\begin{equation}
 \int_0^\infty\bigl(r(\tau)^2+a^2x(\tau)^2\bigr)d\tau
 =\Lambda<\infty,
\end{equation}
and Lemma~\ref{lem:lipschitz-l2} gives $r,x\to0$.
If the limiting value is finite, the Lipschitz estimates give limits
$r_\infty,x_\infty$.  A positive value of
$r_\infty^2+a^2x_\infty^2$ would contradict
Proposition~\ref{prop:away-ring-continuation}.  Thus again
$r,x\to0$, proving \eqref{eq:ring-localization}.
\end{proof}

\section{Global \texorpdfstring{$C^0$}{C0}-inextendibility}
\label{sec:global-inextendibility}

We combine ring localization with the local causal obstruction to
prove the main theorems. We then give the corresponding continuation
arguments for the overextremal and extremal realizations.

\subsection{Proofs of the main theorems}

\begin{proof}[Proof of Theorem~\ref{thm:main-terminal-lengths}]
The curve is eventually contained in one ring collar.  On the base
spacetime, Theorem~\ref{thm:fixed-endpoint-infinite-length} and the
local observation following its proof give the assertion in all three
parameter regimes.  The assertion on a connected cover follows by
applying Lemma~\ref{lem:covering-causal-length} to the projection of
the curve.
Reversing the time orientation gives the past-directed assertion.
The assertion for arbitrary positive continuous conformal factors follows
from Theorem~\ref{thm:conformal-class-ring}. The endpoint assertion
for locally Lipschitz causal approaches follows from
Corollary~\ref{cor:causal-ring-approaches}, applied on the terminal
ring collar.
\end{proof}

\begin{proof}[Proof of Theorem~\ref{thm:main-c0}]
We begin with the subextremal case.  Let $\pi:\widehat{\KerrSpace}\to\KerrSpace$ be a connected
covering of a member of $\CarterClass$, with the pulled-back metric.
Suppose that $\widehat{\KerrSpace}$ has a proper $C^0$ extension.
Theorem~\ref{thm:boundary-geodesic} provides, after time reversal if
needed, a future-inextendible unit-speed timelike geodesic
$\widehat\gamma:[0,\Lambda)\to\widehat{\KerrSpace}$,
$\Lambda<\infty$, converging to an extension boundary point.
The projection $\gamma\equiv\pi\circ\widehat\gamma$ is a
unit-speed timelike geodesic.  It is also future-inextendible in the
base, since any extension would lift from the prescribed initial point
and, by uniqueness of path lifting, extend $\widehat\gamma$.
Theorem~\ref{thm:ring-localization} gives $\Sigma\circ\gamma\to0$.
By Theorem~\ref{thm:main-terminal-lengths}, arbitrarily late
segments of $\widehat\gamma$ have infinite causal homotopy length,
in contradiction to Theorem~\ref{thm:causal-length-criterion}.
The same projection argument proves localization on every cover.
The overextremal and extremal cases follow from
Corollaries~\ref{cor:overextremal-inextendibility}
and~\ref{cor:extremal-inextendibility}, respectively.
\end{proof}

\subsection{Overextremal Kerr}

We next consider the Kerr spacetime without horizons.  For $M>0$ and $|a|>M$, put
\begin{equation}\label{eq:overextremal-kerr-manifold}
 \mathcal K^{\mathrm{over}}_{M,a}\equiv
 \bigl(\mathbb R_t\times\mathbb R_r\times\mathbb S^2\bigr)
 \setminus\{r=0,\ \mu=0\},
\end{equation}
with the Kerr metric \eqref{eq:kerr-metric}, interpreted in smooth
polar charts.  Both regular disc branches are present.  Since
$\Delta=(r-M)^2+a^2-M^2>0$, this is an analytic spacetime with the
time orientation determined by
$Z\equiv(r^2+a^2)T+a\Phi$, for which
$g(Z,Z)=-\Delta\Sigma<0$.

\begin{corollary}[Overextremal Kerr]
\label{cor:overextremal-inextendibility}
For $M>0$ and $|a|>M$, the spacetime
$\mathcal K^{\mathrm{over}}_{M,a}$ and all its connected smooth
coverings are $C^0$-inextendible.  Every future-inextendible unit-speed
timelike geodesic of finite proper time in any of these spacetimes
satisfies $\Sigma\to0$.
\end{corollary}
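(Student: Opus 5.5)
The proof will follow the subextremal argument in the proof of Theorem~\ref{thm:main-c0}, simplified by the absence of horizons. Suppose a connected smooth covering $\pi:\widehat{\mathcal K}\to\mathcal K^{\mathrm{over}}_{M,a}$ with the pulled-back metric has a proper $C^0$ extension. Theorem~\ref{thm:boundary-geodesic} supplies, after a possible time reversal, a future-inextendible unit-speed timelike geodesic $\widehat\gamma:[0,\Lambda)\to\widehat{\mathcal K}$ with $\Lambda<\infty$ converging to a boundary point. Its projection $\gamma=\pi\circ\widehat\gamma$ is again a future-inextendible unit-speed timelike geodesic of finite proper time, by uniqueness of path lifting. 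Once we show $\Sigma\circ\gamma\to0$, the local ring construction gives the contradiction. Indeed, Theorem~\ref{thm:fixed-endpoint-infinite-length} applies to every $M>0$, $a\neq0$; the remark after it records that its only inputs are $\Delta(0)=a^2>0$ and the divergence of the negative-equatorial axial norm. Lifting with Lemma~\ref{lem:covering-causal-length}, arbitrarily late segments of $\widehat\gamma$ then have infinite causal homotopy length, which contradicts Theorem~\ref{thm:causal-length-criterion}. The time orientation given by $Z$ makes the punctured collar $\mathcal N_\delta$ time-oriented, which is what the construction needs.

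The localization $\Sigma\to0$ is the substantive step. I will repeat Section~\ref{sec:geodesics} with $\Delta>0$ everywhere. Proposition~\ref{prop:separation} is purely local: its proof uses only the Hamiltonian, Lemma~\ref{lem:inverse}, and the Killing fields and Carter tensor. In the single block $\mathcal K^{\mathrm{over}}_{M,a}$ these are the Boyer--Lindquist fields $\partial_t,\partial_\phi$ and the tensor built from \eqref{eq:app-ky-form}, all globally defined. Hence $E,L,\mathcal Q$ are constant and \eqref{eq:R}--\eqref{eq:phidot} hold. Lemma~\ref{lem:radial-bounded} bounds $|r|$ unchanged. The horizon lemmas are not needed, since $\mathcal H=\varnothing$.

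For the analogue of Proposition~\ref{prop:away-ring-continuation}, suppose $\Sigma\ge\rho>0$ on a tail. Then $|r|\le B$ and $\Delta\ge a^2-M^2>0$ give the bounds \eqref{eq:sphere-speed-bound} with $\delta_\Delta=a^2-M^2$, together with a bound on $\dot t$ from \eqref{eq:tdot}. So $t,r$ and the sphere point converge. The set $\{|t|\le A,\ |r|\le B,\ \Sigma\ge\rho\}$ is compact in $\mathcal K^{\mathrm{over}}_{M,a}$, as in Lemma~\ref{lem:block-properness}. The tangent lift is therefore precompact, and Lemma~\ref{lem:escape} contradicts inextendibility. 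With this substitute, the proof of Theorem~\ref{thm:ring-localization} transfers verbatim. Rescale time by $\tau$; the Lipschitz bounds \eqref{eq:rescaled-lipschitz} then hold. If $\tau\to\infty$, Lemma~\ref{lem:lipschitz-l2} gives $r,\cos\theta\to0$. If $\tau$ has a finite limit, the limits $r_\infty,x_\infty$ must satisfy $r_\infty^2+a^2x_\infty^2=0$.

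The same projection argument proves the localization statement on every cover. The main thing to check carefully is that no step of Section~\ref{sec:geodesics} secretly uses $0<|a|<M$ beyond the horizon lemmas. One place to watch is the global definition of $\mathcal Q$ at the axis, handled by the smooth Carter tensor. The other is the compactness statement, which here replaces the marked-block structure. Both become simpler because the spacetime is a single chart domain with $\Delta$ bounded below.
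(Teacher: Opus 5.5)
Your proposal is correct and follows essentially the same route as the paper's proof. That proof localizes via Carter separation, replaces the horizon lemmas by the uniform bound $\Delta\ge a^2-M^2>0$, and applies the rescaled-time argument of Theorem~\ref{thm:ring-localization}. It then applies the ring construction, Lemma~\ref{lem:covering-causal-length}, and Theorem~\ref{thm:causal-length-criterion} to the boundary geodesic. Your extra checks make explicit what the paper's short proof leaves implicit: the global Carter tensor on the single block, the bound on $\dot t$ from \eqref{eq:tdot}, and the compactness of $\{|t|\le A,\ |r|\le B,\ \Sigma\ge\rho\}$.
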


\begin{proof}
The separation identities depend only on the Kerr metric and the
Carter tensor.  Thus Proposition~\ref{prop:separation} applies.  For
a unit timelike geodesic which is inextendible in finite proper time,
Lemma~\ref{lem:radial-bounded} bounds $r$.
If $\Sigma\ge\rho>0$ on a tail, the uniform inequality
$\Delta\ge a^2-M^2>0$ bounds the time, radial, and round-sphere
velocities as in \eqref{eq:sphere-speed-bound}.  The base point therefore converges in
\eqref{eq:overextremal-kerr-manifold}, and the tangent lift is
precompact in a regular chart.  This contradicts
Lemma~\ref{lem:escape}.  The rescaled-time argument of
Theorem~\ref{thm:ring-localization} therefore gives $\Sigma\to0$.

The ring construction applies because $M>0$, $a\ne0$, and
$\Delta(0)=a^2>0$.  The boundary-geodesic theorem and
Theorem~\ref{thm:causal-length-criterion} exclude extensions in
either time orientation.  For a connected cover, path lifting preserves localization, while
Lemma~\ref{lem:covering-causal-length} transfers the infinite
homotopy lengths.  The same conclusion follows.
\end{proof}

\subsection{Extremal Kerr}
\label{subsec:extremal-kerr}

At the degenerate horizon we use ingoing and outgoing coordinates
in place of the bifurcation charts.  Geodesic continuation is then
simpler than in the bifurcate case.  Fix
$M>0$ and $a^2=M^2$, so that $\Delta=(r-M)^2$, and choose
$0<\eta<M/2$.  Put
\begin{equation}\label{eq:extremal-primitives}
 r_*(r)\equiv r+2M\log|r-M|-\frac{2M^2}{r-M},
 \qquad \phi_*(r)\equiv-\frac a{r-M}.
\end{equation}
Then $r_*'=(r^2+a^2)/\Delta$ and $\phi_*'=a/\Delta$ on either
side of $r=M$.

For $\epsilon\in\{-1,1\}$, introduce
\begin{equation}\label{eq:extremal-ef-maps}
 v_\epsilon\equiv t+\epsilon r_*(r),\qquad
 \psi_\epsilon\equiv\phi+\epsilon\phi_*(r).
\end{equation}
The second formula denotes axial rotation on the whole sphere;
no lift of the angular coordinate is required.  On
$\mathcal X_\epsilon\equiv\mathbb R_{v_\epsilon}
\times(M-\eta,M+\eta)\times\mathbb S^2$ the metric is
\begin{align}\label{eq:extremal-ef-metric}
 g={}&-\left(1-\frac{2Mr}{\Sigma}\right)dv_\epsilon^2
 +2\epsilon\,dv_\epsilon\,dr
 -\frac{4Mar\sin^2\theta}{\Sigma}\,dv_\epsilon\,d\psi_\epsilon
 -2\epsilon a\sin^2\theta\,dr\,d\psi_\epsilon\nonumber\\
 &+\Sigma\,d\theta^2
 +\left(r^2+a^2+\frac{2Ma^2r\sin^2\theta}{\Sigma}\right)
       \sin^2\theta\,d\psi_\epsilon^2.
\end{align}
This analytic metric is Lorentzian throughout the collar.  Its
determinant is $-\Sigma^2\sin^2\theta$ off the axis and
$-\Sigma^2/\mu^2$ in polar caps, so it remains nondegenerate at the
double horizon and its poles.

We construct the spacetime as follows.  For each $n\in\mathbb Z$,
take the full block
\begin{equation}\label{eq:extremal-block-chain}
 B_n\equiv
 \begin{cases}
 \mathbb R_t\times(M,\infty)\times\mathbb S^2,&n\text{ even},\\
 (\mathbb R_t\times(-\infty,M)\times\mathbb S^2)
       \setminus\{r=0,\mu=0\},&n\text{ odd}.
 \end{cases}
\end{equation}
Let $X_n$ be a copy of $\mathcal X_{\epsilon_n}$ with
$\epsilon_n\equiv(-1)^n$.  We glue the two sides of $X_n$ to the full radial
collars of $B_n$ and $B_{n+1}$ using
\eqref{eq:extremal-ef-maps}.  No further identifications are made.
Write $\mathcal K^{\mathrm{ext}}_{M,a}$ for this quotient.  Every
block has one ingoing and one outgoing complete horizon collar.
Both regular branches of the disc occur in every odd block; there
is no bifurcation sphere at the degenerate horizon.

\begin{lemma}[The extremal realization]\label{lem:extremal-realization}
The quotient $\mathcal K^{\mathrm{ext}}_{M,a}$ is a connected,
Hausdorff, second-countable, time-oriented analytic Kerr spacetime.
The functions $r,\mu$, the Killing fields $T,\Phi$, and the Carter
tensor are globally defined.  The horizon generators are complete
and every finite value of an ingoing or outgoing horizon coordinate
belongs to the spacetime.
\end{lemma}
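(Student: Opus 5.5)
The plan is to check the listed properties in the order: analyticity and Lorentzian signature, Hausdorffness and second countability, connectedness, time orientation, global fields, and completeness of the horizon coordinates.

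\emph{Analytic structure and gluing.} On the overlap of $X_n$ with a block collar, the map $(t,r,\omega)\mapsto(v_{\epsilon_n},r,\mathscr R_{\epsilon_n\phi_*(r)}\omega)$ from \eqref{eq:extremal-ef-maps} is an analytic diffeomorphism onto the corresponding side $\{r\gtrless M\}$ of $\mathcal X_{\epsilon_n}$. Indeed, for fixed $r\ne M$ it is a translation in $t$ combined with a sphere rotation, and $r_*,\phi_*$ are analytic on each side. Differentiating with $r_*'=(r^2+a^2)/\Delta$ and $\phi_*'=a/\Delta$ turns \eqref{eq:kerr-metric} into \eqref{eq:extremal-ef-metric}. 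So each gluing is an isometry. The metric is analytic and nondegenerate on the collar by the determinant formula just stated, and it is Lorentzian by continuity from the blocks, since the signature is constant on the connected set of nondegenerate points.

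\emph{Hausdorff, second countable, connected.} The pieces form a chain indexed by $\mathbb Z$, so the quotient is connected and second countable. For Hausdorffness, separate two points lying in different pieces using $r$ and the chain index. The only delicate pairs are a point of $X_n$ on $r=M$ and a point of $B_n$ or $B_{n+1}$. These are separated by a small neighbourhood in $X_n$ and by the block set $|r-M|>\delta$. The one thing to confirm is that $X_n$ meets $B_n$ only in the single side collar $\{r\gtrless M\}$, which holds because no further identifications are made. I expect this step, the closedness of the graph of the gluing relation, to be the main technical point, though it is routine.

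\emph{Time orientation and global fields.} On each block use $Z=(r^2+a^2)T+a\Phi$ where $\Delta>0$, and $-\operatorname{sgn}$-adjusted $\partial_r$ or $\nabla r$ where $\Delta<0$. The cleaner choice is a single continuous timelike field. On $X_\epsilon$ the vector $-\epsilon\partial_r$ in ingoing/outgoing coordinates is null with $g(-\epsilon\partial_r,\partial_{v})=-1$, so $\partial_{v_\epsilon}-\epsilon c\,\partial_r$ with $c$ large is timelike on compact $r$-ranges. Orientations on consecutive pieces are then matched across each collar, and the chain structure lets them be chosen inductively in $n$. The fields $T=\partial_{v_\epsilon}$ and $\Phi=\partial_{\psi_\epsilon}$ agree with $\partial_t$ and $\partial_\phi$ under the gluing. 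The functions $r$ and $\mu$ are preserved. The Carter tensor is built from $f$ in \eqref{eq:app-ky-form}, rewritten in $(v_\epsilon,r,\psi_\epsilon)$; the $1/\Delta$ terms cancel exactly as in \eqref{eq:app-ky-crossing}, so it extends analytically.

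\emph{Completeness of horizon coordinates.} The horizon $\{r=M\}\subset X_n$ is $\mathbb R_{v}\times\mathbb S^2$, so every finite value of $v_\epsilon$ occurs. The generators are the integral curves of $T+\Omega\Phi$, and they are complete because $v$ ranges over all of $\mathbb R$.
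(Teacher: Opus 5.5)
Your treatment of the analytic gluing maps, the Lorentzian signature on the collars, the time orientation and the Killing--Yano form is sound. Inductive matching of orientations along the chain works, because the extra overlap $X_{n-1}\cap X_n$ lies inside $B_n$. The Hausdorff step, however, has a genuine gap: you single out the wrong pair.

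A horizon point of $X_n$ and a point of $B_n$ are trivially separated, because $r$ is continuous on the quotient and the block point has $r\ne M$. The delicate pair is a horizon point $p\in X_{n-1}$ together with a horizon point $q\in X_n$. Both have $r=M$. Moreover, the sides of both crossings that face $B_n$ are glued to the \emph{same} full collar of $B_n$, once through the outgoing map and once through the ingoing map. Every neighbourhood of $p$ or of $q$ therefore contains an open part of that collar. So neither $r$ nor the chain index separates them, and this is exactly where Misner-type non-Hausdorff behaviour could arise.

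The missing ingredient is the transition relation on that collar,
\[
 v_+-v_-=2r_*(r),\qquad \psi_+-\psi_-=2\phi_*(r)\pmod{2\pi},
\]
together with $|r_*(r)|\to\infty$ as $r\to M$, which comes from the term $-2M^2/(r-M)$. A sequence in the collar converging to a horizon point of one crossing has that crossing's null coordinate bounded while $r\to M$. The other crossing's null coordinate then diverges, so the sequence cannot also converge to a horizon point of the other crossing. With this, and the fact that nonadjacent models have disjoint images, the gluing relation is closed and the open quotient map yields a closed diagonal.

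The completeness claim also has a gap. That $v$ ranges over $\mathbb R$ on $\{r=M\}$ shows only that the flow of $K=T+\frac{a}{2M^2}\Phi$ is complete there. To conclude that the generators are complete null geodesics, you must show that the Killing parameter is affine, i.e.\ $\nabla_KK=0$ on the horizon. This follows from degeneracy: $g(K,K)$ has a double zero in $r-M$, so the Killing identity gives $\nabla_KK=-\tfrac12\nabla g(K,K)=0$ at $r=M$.

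Your argument as written would apply verbatim to an ingoing Eddington--Finkelstein chart of a nondegenerate horizon, where again $v\in\mathbb R$. There, however, $\nabla_KK=\kappa K$ with $\kappa\ne0$, the affine parameter is an affine function of $e^{\kappa v}$, and the generators are incomplete in one direction.

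A minor point: there is no region with $\Delta<0$ here, since $\Delta=(r-M)^2$, so $Z$ is timelike on every block.
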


\begin{proof}
The gluing maps are open, and hence the quotient map is open.
Each model embeds in the quotient: a point away from the horizon
has at most three representatives, whereas a horizon point has one.
We show that the equivalence relation is closed.  For a pair consisting
of a block point and a crossing point which approaches $r=M$, the
block point has no limit in the block.  All other limits lie in the
transition graph.  Adjacent crossings satisfy
\begin{equation}\label{eq:extremal-ef-transition}
 v_+-v_-=2r_*(r),\qquad
 \psi_+-\psi_-=2\phi_*(r)\pmod{2\pi}.
\end{equation}
Since $|r_*|\to\infty$, equivalent points cannot have finite horizon
limits in both crossings.  Nonadjacent models have no equivalent
points.  The relation is therefore closed, and the open product
quotient map gives a closed quotient diagonal.  It follows that the quotient is
Hausdorff.  The countable connected chain of analytic models gives
a second-countable atlas, in which the descended metric is analytic.

On each block $Z\equiv(r^2+a^2)T+a\Phi$ is timelike.  In $X_n$ put
$W_n\equiv Z-\epsilon_n\partial_r$, with $\partial_r$ taken at fixed
crossing coordinates.  The identities
$g(Z,\partial_r)=\epsilon_n\Sigma$ and $g(\partial_r,\partial_r)=0$
give
\begin{equation}\label{eq:extremal-time-orientation}
 g(W_n,W_n)=-\Sigma(\Delta+2)<0,\qquad
 g(W_n,Z)=-\Sigma(\Delta+1)<0.
\end{equation}
The resulting time orientations therefore agree on all overlaps.
The fields $T=\partial_{v_\epsilon}$ and
$\Phi=\partial_{\psi_\epsilon}$ agree on the overlaps as well.  Substitution into
\eqref{eq:app-ky-form} cancels all extra $dr$ terms and gives
\begin{equation}\label{eq:extremal-ky-form}
 f\equiv a\cos\theta\,dr\wedge
       (dv_\epsilon-a\sin^2\theta\,d\psi_\epsilon)
   +r\sin\theta\,d\theta\wedge
       ((r^2+a^2)d\psi_\epsilon-a\,dv_\epsilon).
\end{equation}
This form is analytic at the horizon and poles; its Killing-Yano
identity extends from the dense block set.  Squaring it gives the
global Carter tensor with the stated normalization.

On $r=M$, the generator $K\equiv T+(a/(2M^2))\Phi$ has the complete flow
$(v,\omega)\mapsto(v+s,\mathscr R_{as/(2M^2)}\omega)$.
The function $g(K,K)$ has a double zero in $r-M$, hence the Killing
identity gives $\nabla_KK=-\tfrac12\nabla g(K,K)=0$ on the horizon.
Its complete orbits are therefore affinely parametrized null geodesics.
The full $v$-range includes every finite horizon coordinate.
\end{proof}

\begin{lemma}[Critical extremal geodesics]\label{lem:critical-extremal-geodesics}
Let $M>0$, $a^2=M^2$, and let a unit timelike Kerr geodesic have
$P(M)=0$.  It cannot approach $r=M$ in finite proper time.
\end{lemma}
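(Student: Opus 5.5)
The plan is to exploit the fact that, at an extremal horizon, the condition $P(M)=0$ turns the double root of $\Delta$ into a double root of the radial potential $R$. A radial velocity that vanishes linearly in $r-M$ can then be controlled by a Gronwall-type estimate.

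\emph{Factorization.} Since $a^2=M^2$, the condition $P(M)=E(M^2+a^2)-aL=0$ gives $aL=2EM^2$. Hence
\begin{equation*}
 P(r)=E(r^2+a^2)-aL=E(r^2-M^2)=E(r-M)(r+M).
\end{equation*}
Combined with $\Delta=(r-M)^2$, the radial polynomial of Proposition~\ref{prop:separation} factors as
\begin{equation*}
 R(r)=(r-M)^2\,S(r),\qquad S(r)\equiv E^2(r+M)^2-r^2-\mathcal Q .
\end{equation*}
The radial identity $\Sigma^2\dot r^2=R(r)$ contains no horizon denominator. By Proposition~\ref{prop:separation} and Lemma~\ref{lem:extremal-realization}, which give a global Carter tensor and global $T,\Phi$, this identity holds along the whole geodesic, including at and across $r=M$ in the ingoing and outgoing collars.

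\emph{Gronwall bound.} Fix a compact radial neighbourhood $|r-M|\le M/2$. There $\Sigma\ge r^2\ge M^2/4$ and $S^+\le C$. Along any portion of the geodesic in this neighbourhood,
\begin{equation*}
 |\dot r|=\frac{|r-M|\sqrt{S^+(r)}}{\Sigma}\le C'|r-M|.
\end{equation*}
Suppose the geodesic approaches $r=M$ in finite proper time. There are two cases.

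\begin{itemize}
 \item \textbf{Reaching $r=M$ at a finite parameter $s_1$.} The function $u\equiv r-M$ satisfies $|\dot u|\le C'|u|$, and $u$ is smooth because the geodesic is smooth in the regular collar chart. Then $|u(s)|\le |u(s_0)|e^{C'|s-s_0|}$ in both directions of $s$. If $u(s_1)=0$, uniqueness for this differential inequality, run backward from $s_1$, forces $u\equiv0$ on the connected interval where the estimate holds. That would put the timelike curve inside the null hypersurface $\{r=M\}$, which is impossible, just as in Lemma~\ref{lem:finite-horizons}.
 \item \textbf{Approach as $s\uparrow\Lambda<\infty$ without reaching $M$.} On a tail with $u\ne0$,
 \begin{equation*}
 \Bigl|\tfrac{d}{ds}\log|u|\Bigr|\le C'.
 \end{equation*}
 Thus $\log|u|$ stays bounded on the finite interval $[s_0,\Lambda)$, so $|u|\ge|u(s_0)|e^{-C'\Lambda}>0$, contradicting $r\to M$.
\end{itemize}

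\emph{Main obstacle.} The delicate point is justifying that the radial identity and the smoothness of $r(s)$ hold uniformly through the degenerate collar, where the Boyer--Lindquist formulas \eqref{eq:tdot}--\eqref{eq:phidot} are unavailable. I would handle this by noting that $r$ is an analytic coordinate in the charts $\mathcal X_\epsilon$, and that $\mathcal Q=\mathsf K(\dot\gamma,\dot\gamma)$ is defined by the global Carter tensor. The identity $\Sigma^2\dot r^2=P^2-\Delta(r^2+\mathcal Q)$ then follows on the dense set $\Delta\ne0$ and extends by continuity.

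With that in place, no case distinction on the sign of $S(M)$ is needed. If $S(M)<0$, then $R<0$ near $M$ except at $M$ itself, so approach is excluded even more directly. The Gronwall argument covers every case uniformly.
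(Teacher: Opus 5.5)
Your proposal is correct and follows essentially the same route as the paper. The critical relation $aL=2M^2E$ gives $R=(r-M)^2\bigl(E^2(r+M)^2-r^2-\mathcal Q\bigr)$, hence $|\dot r|\le C|r-M|$ near $r=M$; Gronwall then excludes a finite-time approach, and the null-hypersurface argument rules out $r\equiv M$. Your split into an interior contact and an asymptotic approach, where the bound on $\frac{d}{ds}\log|r-M|$ gives the contradiction directly, only makes explicit what the paper compresses into a single backward Gronwall step from $r(\Lambda)=M$.
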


\begin{proof}
The critical relation gives $aL=2M^2E$.  We thus obtain the
factorization
\begin{equation}\label{eq:critical-extremal-factorization}
 P(r)=E(r-M)(r+M),\qquad
 R(r)=(r-M)^2\bigl(E^2(r+M)^2-r^2-\mathcal Q\bigr).
\end{equation}
For $|r-M|<M/2$, one has $\Sigma\ge r^2\ge M^2/4$.
The factor in parentheses is bounded on this interval.  Along
the geodesic, the radial equation therefore gives
\begin{equation}\label{eq:critical-extremal-gronwall}
 |\dot r|\le C|r-M|.
\end{equation}
If $r(s)\to M$ as $s\uparrow\Lambda<\infty$, integrating this
inequality backwards from the continuous value $r(\Lambda)=M$
and applying Gronwall's inequality forces $r\equiv M$ on a
terminal interval.  A timelike curve cannot lie in the null
hypersurface $r=M$.  This contradiction proves the assertion.
\end{proof}

\begin{corollary}[Extremal Kerr]\label{cor:extremal-inextendibility}
For $M>0$ and $a^2=M^2$, the spacetime
$\mathcal K^{\mathrm{ext}}_{M,a}$ and every connected smooth
covering spacetime are $C^0$-inextendible.  Every inextendible
unit-speed timelike geodesic with a finite proper-time endpoint
in any of these spacetimes satisfies $\Sigma\to0$ at that endpoint.
\end{corollary}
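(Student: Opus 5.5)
The plan mirrors the proofs of Theorem~\ref{thm:main-c0} and Corollary~\ref{cor:overextremal-inextendibility}. The argument has two parts. The first, and main, part is the geodesic claim: every unit timelike geodesic $\gamma:[0,\Lambda)\to\mathcal K^{\mathrm{ext}}_{M,a}$ that is inextendible with $\Lambda<\infty$ satisfies $\Sigma\to0$. The second part is $C^0$-inextendibility. Assume a proper extension of a connected cover. Theorem~\ref{thm:boundary-geodesic} gives, after time reversal if needed, such a geodesic ending at a boundary point. Its projection is again inextendible by uniqueness of path lifting, and the first part gives $\Sigma\to0$. The ring construction of Section~\ref{sec:two-sheet-fan} applies, since $M>0$, $a\ne0$ and $\Delta(0)=a^2>0$. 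Lemma~\ref{lem:covering-causal-length} and Theorem~\ref{thm:main-terminal-lengths} then give infinite causal homotopy lengths arbitrarily late along the lifted geodesic. This contradicts Theorem~\ref{thm:causal-length-criterion}. Lemma~\ref{lem:extremal-realization} supplies the global $T$, $\Phi$, Carter tensor and time orientation, so Proposition~\ref{prop:separation} applies verbatim.

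For localization, I first suppose $\Sigma\ge\rho>0$ on a tail and derive a contradiction. Lemma~\ref{lem:radial-bounded} bounds $r$. Lemma~\ref{lem:finite-horizons}, which explicitly allows $\mathscr H_r=\{M\}$, puts the tail after its last horizon contact in a single block $B_n$. There $|\dot r|\le C_\rho$, so $r\to r_\infty$. If $\Delta(r_\infty)\ne0$, the bound \eqref{eq:sphere-speed-bound} controls $\dot t$, $\dot\theta$ and the round-sphere speed. The base point then converges inside the block, the analogue of Lemma~\ref{lem:block-properness} holds, and Lemma~\ref{lem:escape} gives a contradiction.

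The main obstacle is the case $r_\infty=M$, where there is no bifurcation sphere.
\begin{itemize}
\item If $P(M)=0$, Lemma~\ref{lem:critical-extremal-geodesics} excludes finite-time approach.
\item If $P(M)\ne0$, then $R(M)=P(M)^2>0$. Hence $R$ has no zero near $M$ and the tail is eventually radially monotone.
\end{itemize}
In the monotone case I use the cancellation from Lemma~\ref{lem:horizon-continuation}(a). Write $\Sigma\dot r=\varepsilon\sqrt R$ and choose the sign $\epsilon$ of the Eddington--Finkelstein chart \eqref{eq:extremal-ef-maps} so that $\epsilon\varepsilon P(M)<0$. Then
\begin{equation}
 \dot v_\epsilon=\frac{(r^2+a^2)(P+\epsilon\varepsilon\sqrt R)}{\Delta\Sigma}+\frac{a(L-aE\sin^2\theta)}{\Sigma},
\end{equation}
and the first term equals
\begin{equation}
 \frac{(r^2+a^2)(r^2+\mathcal Q)}{\Sigma\,(P-\epsilon\varepsilon\sqrt R)}.
\end{equation}
This is bounded, because $P-\epsilon\varepsilon\sqrt R\to2P(M)\ne0$. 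This uses only $P^2-R=\Delta(r^2+\mathcal Q)$, not simplicity of the root. The same cancellation bounds $\dot\psi_\epsilon=\dot\phi+\epsilon a\dot r/\Delta$. The polar terms are handled as in Lemma~\ref{lem:horizon-continuation}: $L\ne0$ keeps the geodesic off the axis, and $L=0$ is treated in polar caps.

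I then need the correct crossing $X_m$ to be glued to $B_n$ on that side with this $\epsilon$. Each block has one ingoing and one outgoing complete collar, and the chain alternates $\epsilon_n=(-1)^n$. So I have to check that an approach from $B_n$ with the sign dictated by $P(M)$ and the radial direction lands in the adjacent crossing in which $v_\epsilon$ stays finite. The other sign, where $v_\epsilon\to\pm\infty$, is exactly the sign that the cancellation rules out. I expect this bookkeeping, and the time-orientation compatibility \eqref{eq:extremal-time-orientation}, to be the delicate point.

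Once the chart is fixed, $v_\epsilon$, $r$ and the sphere point converge in the full collar $\mathcal X_\epsilon$; Lemma~\ref{lem:extremal-realization} guarantees that every finite $v$ is present. The tangent lift is precompact there, and Lemma~\ref{lem:escape} extends the geodesic, a contradiction. Finally, the rescaled-time argument of Theorem~\ref{thm:ring-localization}, with Lemma~\ref{lem:lipschitz-l2}, converts the conclusion "no tail with $\Sigma\ge\rho$" into $\Sigma\to0$.
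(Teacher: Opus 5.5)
Your proposal is correct and follows the paper's proof essentially step for step. The localization step uses Lemmas~\ref{lem:radial-bounded}, \ref{lem:finite-horizons}, \ref{lem:critical-extremal-geodesics}, the rationalization of Lemma~\ref{lem:horizon-continuation}(a), which does not need a simple root, and the rescaled-time argument; the $C^0$ step then uses the boundary-geodesic, covering, and causal-length criteria. The bookkeeping you flag as delicate is immediate: each block $B_n$ is glued at its single end $r=M$ to both $X_{n-1}$ and $X_n$, with $\epsilon_{n-1}=-\epsilon_n$, so the sign $\epsilon$ selected by $\epsilon\varepsilon P(M)<0$ is always available and the limit point $(v_\infty,M,\omega_\infty)$ lies in that full collar; the time orientation plays no role in this geodesic-continuation step.
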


\begin{proof}
Proposition~\ref{prop:separation} follows from the global Carter
tensor.  On a finite proper-time interval,
Lemmas~\ref{lem:radial-bounded} and~\ref{lem:finite-horizons}, with
$\mathscr H_r=\{M\}$ in the latter, bound $r$ and the number of
horizon events, respectively.
Suppose $\Sigma\ge\rho>0$ on a terminal interval.  Then $r$ is
Lipschitz, has a limit $r_\infty$, and the tail after the last horizon
event lies in one full block.  If $r_\infty\ne M$, the separated
equations bound the time and round-sphere velocities, giving a
convergent base point and precompact tangent lift in that block.
Lemma~\ref{lem:escape} extends the geodesic.

If $r_\infty=M$, Lemma~\ref{lem:critical-extremal-geodesics} gives
$P(M)\ne0$.  Hence $R(M)=P(M)^2>0$ and the radial velocity has fixed
sign on a tail.  The rationalization used in
Lemma~\ref{lem:horizon-continuation}\textup{(a)} depends only on
\begin{equation}
 (P+\epsilon\sqrt R)(P-\epsilon\sqrt R)
 =\Delta(r^2+\mathcal Q),\qquad P(M)\ne0,
\end{equation}
and does not require the zero of $\Delta$ to be simple.  It bounds
the time and angular velocities in the appropriate ingoing or outgoing
chart.  The adjacent full collar contains the limiting finite
coordinates, and its metric \eqref{eq:extremal-ef-metric} is
nondegenerate.  The tangent lift is precompact there, so the geodesic
again extends.
We have therefore excluded a positive lower bound for $\Sigma$
near a finite inextendible endpoint.  The rescaled-time argument in
Theorem~\ref{thm:ring-localization} now gives $\Sigma\to0$.

The ring construction applies since $\Delta(0)=a^2>0$.  The
boundary-geodesic and causal-length criteria exclude a proper
$C^0$ extension in either time orientation.  Path lifting gives localization on connected covers.
Lemma~\ref{lem:covering-causal-length} then transfers the homotopy
lengths, and the same proof gives inextendibility.
\end{proof}

\section{Perturbations of the ring geometry}
\label{sec:ring-perturbations}

We now estimate the perturbations that preserve the causal
construction near the ring. All bounds in this section are expressed
in the coordinate norm defined below.

The Kerr coframe yields transverse timelike connectors with
stationary-axial displacement
$O(\sqrt\Sigma)$. Theorem~\ref{thm:conformal-class-ring} gives the two independent
orbit directions and the four nonnegative durations that cancel this
displacement.
\subsection{Perturbations controlled by the transverse radius}

Put $y\equiv a\cos\theta$ and $R\equiv\sqrt\Sigma$ on a full
punctured ring collar
$\mathbb R_t\times\{0<r^2+y^2<\delta_0^2\}\times\mathbb S^1_\phi$,
where $\delta_0<|a|$ and $\Delta>0$ on $|r|<\delta_0$.
We use the coordinate Euclidean metric
\begin{equation}\label{eq:ring-coordinate-norm}
 e\equiv dt^2+dr^2+dy^2+d\phi^2,
 \qquad |V|_{\mathrm E}^2\equiv e(V,V),
\end{equation}
where $d\phi$ is the invariant one-form on the circle.  For a symmetric
bilinear form $h$, let
$\|h\|_{\mathrm E}\equiv\sup_{|V|_{\mathrm E}=1}|h(V,V)|$.
All asymptotic bounds in this subsection are taken in these
coordinates.  Put
\begin{equation}\label{eq:ring-perturbation-constants}
 A\equiv\sqrt{a^2+1},\quad B\equiv|a|+2,\quad K\equiv2a^2+B^2,
 \quad L\equiv1+\frac{MA}{a^2},\quad
 \eta_{M,a}\equiv\frac1{8KL}.
\end{equation}

For the unrescaled metric, the ring obstruction persists under a
small $O(\sqrt\Sigma)$ perturbation. Here one negative-sheet axial
arc of width $A_0$ has exact Kerr length
\begin{equation}\label{eq:intro-axial-length}
 A_0\sqrt{\frac{2Ma^2}{\rho}-a^2-\rho^2}\longrightarrow+\infty
 \qquad(\rho\downarrow0).
\end{equation}
An earlier timelike segment absorbs the connectors' displacement.
A modified lift requires no derivatives of $h$
(Theorem~\ref{thm:stationary-ring-perturbation}); an averaged axial
criterion treats conformal factors for this larger class
(Theorem~\ref{thm:conformal-ring-obstruction}).

\begin{theorem}[Persistence of the local ring obstruction]
\label{thm:stationary-ring-perturbation}
Let $M>0$ and $a\ne0$, and let $\widehat g\equiv g_{M,a}+h$ be a continuous
time-oriented Lorentzian metric of signature $(-,+,+,+)$ on a full
punctured ring collar $\mathcal N_{\delta_0}$.  Suppose that $h$ is
invariant under stationary and axial translations and that
\begin{equation}\label{eq:ring-weighted-bound}
 \limsup_{\Sigma\to0}
 \frac{\|h\|_{\mathrm E}}{\sqrt\Sigma}<\eta_{M,a}.
\end{equation}
Every future-directed $C^1$ timelike curve
$\gamma:[0,\Lambda)\to\mathcal N_{\delta_0}$ satisfying
$\Sigma(\gamma(t))\to0$ obeys
\begin{equation}\label{eq:perturbed-infinite-length}
 \forall s<\Lambda\ \exists t_0\in(s,\Lambda)\quad
 \mathcal L^{\widehat g}_\gamma(s,t)=+\infty
 \quad(t_0<t<\Lambda).
\end{equation}
Given $0<\delta<\delta_0$, the threshold $t_0$ may be chosen so
that all varying portions of these homotopies lie in $\mathcal N_\delta$.
Consequently, no such curve has an endpoint in a $C^0$ extension of
$(\mathcal N_{\delta_0},\widehat g)$.
\end{theorem}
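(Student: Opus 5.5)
The plan is to rerun the proof of Theorem~\ref{thm:fixed-endpoint-infinite-length} with $\widehat g$ in place of $g_{M,a}$. That proof has three ingredients: the endpoint variation \eqref{eq:timelike-endpoint-variation}, the timelike transverse detours of Lemma~\ref{lem:radial-attachment}, and the long negative-equatorial axial arc. Each must be checked to remain $\widehat g$-future-timelike, with the same shift bound $|h_\lambda|\le C(R+A_0)$. Two features of $h$ are used throughout. Invariance of $h$ under stationary and axial translations makes $\mathscr S_{(\xi,\eta)}$ $\widehat g$-isometries. The future cone of $\widehat g$ is fixed by continuity: $\alpha$-positivity, as in \eqref{eq:fan-future-alpha}, selects it on the connected collar wherever the perturbed cone stays near the Kerr cone.

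The endpoint variation needs nothing new. The reference segment $\gamma|_{[s_w,s_0]}$ is compact and strictly $\widehat g$-timelike, and the translations preserve $\widehat g$ and its future cone. Openness therefore gives a neighbourhood $\mathcal U$ of admissible shifts, exactly as before. Concatenation \eqref{eq:fixed-endpoint-radial-concatenation}, the homotopy, and the final appeal to Theorem~\ref{thm:causal-length-criterion} are likewise unchanged, as is confinement to $\mathcal N_\delta$.

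The core estimate is a modified transverse lift that avoids derivatives of $h$. For a transverse velocity $W$, I take the lift \eqref{eq:transverse-lift} with $c$ replaced by a larger constant $c'$. Then $g(V,V)=-(c'^2-1)\Sigma k(W)^2$, and $V$ has Euclidean size at most $C_0'|W|$ by \eqref{eq:transverse-orbit-bound}, with $C_0'$ built from $B=|a|+2$ and $r^2+a^2\le 1+a^2=A^2$. Hence
\[
|h(V,V)|\le\|h\|_{\mathrm E}\,|V|_{\mathrm E}^2\le \eta\sqrt\Sigma\, K'|W|^2,
\qquad k(W)^2\ge |W|^2/\max(\Delta,a^2-y^2).
\]
Because $\sqrt\Sigma\ll1$ relative to $\Sigma k^2\cdot\Sigma^{-1/2}$... this comparison does not close as written: $|h(V,V)|\sim\sqrt\Sigma|W|^2$ dominates $\Sigma|W|^2$ at small $\Sigma$. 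I would therefore strengthen the lift so that its orbit component is large, $\alpha(V)\sim|W|$ rather than $\Sigma|W|$, and use the full orbit geometry. Near the negative equator the vector $Y_-$ has norm $\sim -2Ma^2/\rho$, which beats any $O(\sqrt\Sigma)$ perturbation. More generally, the timelike combination $K_i$ of Theorem~\ref{thm:conformal-class-ring} satisfies $g(K_i,K_i)\le -c\,d$, and it should dominate $\|h\|_{\mathrm E}|K_i|_{\mathrm E}^2\le \eta B^2\sqrt\Sigma$ on the relevant points. Accordingly I will build each detour as a transverse move plus a large multiple of a Kerr-timelike orbit field, giving $g(V,V)\le -\kappa\,|V|_{\mathrm E}^2\sqrt\Sigma$ with $\kappa$ comparable to $1/L$. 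The factor $L=1+MA/a^2$ enters through the ratio of the orbit term to the transverse term. The added orbit displacement is then cancelled with the four-duration scheme of Theorem~\ref{thm:conformal-class-ring}, keeping the total shift $O(R)$. Showing that $\eta_{M,a}=1/(8KL)$ suffices for the resulting inequality $|h(V,V)|<|g(V,V)|$ along all pieces is the main obstacle. I expect to handle it by reducing to a pointwise bound for the $e$-normalized $g$-norm of the chosen velocity field, split into transverse and orbit contributions, with $K=2a^2+B^2$ bounding the Euclidean size of the orbit part.

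Once every piece is $\widehat g$-timelike, the axial arc of width $A_0$ at $r=-\rho$ has $\widehat g$-length at least
\[
A_0\sqrt{2Ma^2/\rho-a^2-\rho^2-\eta B^2\sqrt\rho},
\]
which tends to $+\infty$, giving \eqref{eq:perturbed-infinite-length}. Theorem~\ref{thm:causal-length-criterion} then excludes an endpoint in any $C^0$ extension.
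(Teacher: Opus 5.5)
\paragraph{Overall assessment.} Your outer structure matches the paper: translation invariance of $\widehat g$, the endpoint variation \eqref{eq:timelike-endpoint-variation}, the concatenation \eqref{eq:fixed-endpoint-radial-concatenation}, the long $Y_-$ arc, and Theorem~\ref{thm:causal-length-criterion}. You also diagnose correctly why the lift \eqref{eq:transverse-lift} fails. But the step you call the main obstacle is the entire content of the theorem, and the remedies you sketch do not close it.

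\paragraph{Why the proposed remedies fail.}
\begin{itemize}
 \item Enlarging $\alpha(V)$ while keeping $\beta(V)=0$ cannot help. By \eqref{eq:fan-orbit-velocities} the orbit part is then a multiple of $(r^2+a^2)T+a\Phi$, whose Kerr norm is $-\Delta\Sigma$ at Euclidean size $O(1)$. So $|g(V,V)|/|V|_{\mathrm E}^2=O(\Sigma)$ for every scaling, and an admissible $h$ of size $\sqrt\Sigma$ can still reverse the sign. The displacement also becomes huge.
 \item The fields you name do not dominate $h$ along the whole transverse path. $Y_-$ is timelike only where $g_{\phi\phi}<0$, on the negative side. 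At $X_+$ the fields $K_i$ of Theorem~\ref{thm:conformal-class-ring} have $g(K_i,K_i)=O(\ell^2)=O(\Sigma)$ by \eqref{eq:closed-orbits-positive}, while $|K_i|_{\mathrm E}$ is of order one.
 \item The four-duration cancellation is not available under this hypothesis. The rank-one perturbations $h_c$ of Theorem~\ref{thm:sharp-conformal-perturbation}(ii) are $O(\Sigma)$, hence admissible here. Yet for $g+h_c$ the closed form $dt-a\,d\phi$ is timelike, so it has one strict sign on every future causal vector. Any $(t-a\phi)$-displacement accumulated by future causal pieces can therefore never be cancelled, only absorbed by the endpoint variation when it is small. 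A large orbit multiple followed by cancellation is thus impossible in general.
 \item For the same reason, $\alpha$-positivity cannot identify the future cone. At $X_+$ the Kerr-timelike vector $|a|T+\operatorname{sgn}(a)\Phi$ is $(g+h_c)$-spacelike, so the perturbed cones need not stay near the Kerr cones.
\end{itemize}

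\paragraph{The missing idea.} You need an orbit field that is bounded in the Euclidean norm from the start, so that no cancellation is needed.
\begin{itemize}
 \item The paper takes a continuous majorant $b\ge\|h\|_{\mathrm E}$ with $b\ge\Sigma$ and sets $c\equiv2+Kb/\Sigma$, which may grow like $R^{-1}$.
 \item It defines $Z_c$ by \eqref{eq:perturbed-orbit-lift}, with $\beta(Z_c)\ne0$. This tilts $Z_c$ toward the nearly null helical direction. Its components are $\varepsilon(|a|\sin^2\theta,\operatorname{sgn}(a))+\varepsilon cD$ with $|D|\le LR$.
 \item Since $cLR=2LR+KLb/R<1/2$, this gives $|Z_c|_{\mathrm E}<B$. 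Meanwhile the coframe identity gives $g(Z_c,Z_c)\le-(2c-1)\Sigma=-3\Sigma-2Kb$.
 \item The lift $V=W+k(W)Z_c$ then satisfies $|V|_{\mathrm E}^2\le Kk(W)^2$ and $\widehat g(V,V)\le-(2\Sigma+Kb)k(W)^2<0$. Its orbit speed is at most $(2B/|a|)|W|$.
 \item The total shift therefore stays $C(R+A_0)$ and is absorbed directly by \eqref{eq:timelike-endpoint-variation}. This is where $K$, $L$ and $\eta_{M,a}=1/(8KL)$ enter.
 \item The future direction is fixed by $Z_c$ itself, through $\widehat g(V,Z_c)<0$ and $\widehat g(Y_-,Z_c)<0$, not by $\alpha$.
\end{itemize}
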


\begin{proof}
Shrink the collar so that
$a^2/2\le\Delta\le2a^2$ and $a^2-y^2\ge3a^2/4$.
Write $h_{ij}$ for the components in the frame dual to
$(dt,dr,dy,d\phi)$, and put
\begin{equation}\label{eq:ring-perturbation-majorant}
 b\equiv\left(\Sigma^2+\sum_{i,j=1}^4h_{ij}^2\right)^{1/2}.
\end{equation}
The function $b$ is positive, continuous, stationary and axisymmetric.
The Frobenius norm lies between the operator norm and twice that norm, so
\begin{equation}\label{eq:ring-perturbation-majorant-bounds}
 |h(U,V)|\le b|U|_{\mathrm E}|V|_{\mathrm E},\qquad
 \limsup_{R\to0}\frac bR
 \le2\limsup_{R\to0}\frac{\|h\|_{\mathrm E}}R
 <\frac1{4KL}.
\end{equation}
Define the continuous positive function
\begin{equation}
 c\equiv2+\frac{Kb}{\Sigma}.
\end{equation}
For $\varepsilon\in\{-1,1\}$, define the stationary-axial field
$Z_c$ by
\begin{equation}\label{eq:perturbed-orbit-lift}
 \alpha(Z_c)=\varepsilon c\frac{\Sigma}{\sqrt\Delta},
 \qquad
 \beta(Z_c)=-\varepsilon\operatorname{sgn}(a)(c-1)\Sigma.
\end{equation}
Solving the two equations gives
\begin{align}
 dt(Z_c)&=\varepsilon\left[|a|\sin^2\theta
  +c\left(\frac{r^2+a^2}{\sqrt\Delta}
                   -|a|\sin^2\theta\right)\right],\nonumber\\
 d\phi(Z_c)&=\varepsilon\left[\operatorname{sgn}(a)
  +c\left(\frac a{\sqrt\Delta}-\operatorname{sgn}(a)\right)\right].
 \label{eq:perturbed-orbit-components}
\end{align}
Let $D(r,y)$ denote the vector whose components are the two
expressions in parentheses.
It is smooth at $(0,0)$, vanishes there, and satisfies
\begin{equation}
 dD_{(0,0)}(u,v)
 =\left(\frac{M}{|a|}u,
         \frac{M\operatorname{sgn}(a)}{a^2}u\right).
\end{equation}
Its derivative has operator norm $MA/a^2<L$, so $|D|\le LR$
on a sufficiently small disc.  Restrict the collar further so that
$b/R<1/(4KL)$ and $2LR<1/4$.  Since
$|(|a|\sin^2\theta,\operatorname{sgn}(a))|\le A$,
\begin{equation}\label{eq:perturbed-variable-c-limit}
 |Z_c|_{\mathrm E}
 \le A+cLR=A+2LR+KL\frac bR<A+\frac12<B.
\end{equation}
The exact coframe identity yields
\begin{equation}\label{eq:perturbed-orbit-norm}
 g_{M,a}(Z_c,Z_c)
 =\Sigma[-c^2+\sin^2\theta(c-1)^2]
 \le-(2c-1)\Sigma.
\end{equation}
Therefore
$\widehat g(Z_c,Z_c)\le-3\Sigma-(2K-B^2)b<0$.
Choose $\varepsilon$ so that $Z_c$ is future directed on this connected
collar.

For a nonzero transverse velocity $W=(\dot r,\dot y)$, write
\begin{equation}
 k(W)^2\equiv\frac{\dot r^2}{\Delta}
            +\frac{\dot y^2}{a^2-y^2},
 \qquad V\equiv W+k(W)Z_c.
\end{equation}
The transverse and orbit coordinates are orthogonal for both $e$ and
$g_{M,a}$.  Since $|W|_{\mathrm E}^2\le2a^2k(W)^2$,
we have $|V|_{\mathrm E}^2\le Kk(W)^2$.  Consequently
\begin{align}
 \widehat g(V,V)
 &\le-2(c-1)\Sigma k(W)^2+Kb\,k(W)^2
   =-(2\Sigma+Kb)k(W)^2<0,\nonumber\\
 \widehat g(V,Z_c)
 &\le-(2c-1)\Sigma k(W)+B\sqrt K\,b\,k(W)
   \le-(3\Sigma+Kb)k(W)<0.
 \label{eq:perturbation-lift-estimate}
\end{align}
Here $B\sqrt K\le K$.  Thus $V$ is future timelike and its orbit
velocity satisfies
\begin{equation}\label{eq:perturbed-orbit-bound}
 |\dot t|+|\dot\phi|
 \le\sqrt2 Bk(W)\le\frac{2B}{|a|}|W|_{\mathbb R^2}.
\end{equation}
Thus the orbit velocity is bounded even when $c$ diverges.  For a
prescribed piecewise smooth transverse path, we obtain the orbit
coordinates by integrating $k(W)Z_c$.  The path lies in a compact
annulus, where continuity gives continuous dependence on its
truncation.  This construction requires only continuity of $h$.

At $(r,y)=(-\rho,0)$ put
$Y_-\equiv-\varepsilon\operatorname{sgn}(a)\Phi$ and
$b_\rho\equiv b(-\rho,0)$.  Then $b_\rho=O(\rho)$ and
\begin{equation}\label{eq:perturbed-axial-bound}
 \widehat g(Y_-,Y_-)
 \le a^2+\rho^2+b_\rho-\frac{2Ma^2}{\rho}
 \le-\frac{Ma^2}{\rho}
\end{equation}
for all sufficiently small $\rho$.  The exact pairing is
\begin{align}
 g_{M,a}(Y_-,Z_c)
 &=-c|a|\sqrt{a^2+2M\rho+\rho^2}
                +(c-1)(a^2+\rho^2)\nonumber\\
 &\le-a^2+(c-1)\rho^2
   =-a^2+\rho^2+Kb_\rho.
 \label{eq:perturbed-axial-orientation}
\end{align}
Since $|h(Y_-,Z_c)|\le Bb_\rho$, the perturbed pairing is negative
for small $\rho$.  Thus $Y_-$ is future timelike, and an arc of
angular width $A_0$ has length at least $A_0\sqrt{Ma^2/\rho}$.

Stationary and axial translations preserve $\widehat g$ and $Z_c$,
and hence the chosen time orientation.  We apply the lift just
constructed to the transverse paths in
Lemma~\ref{lem:radial-attachment}.  Estimate~\eqref{eq:perturbed-orbit-bound} gives the same
uniform displacement bound $C_1(R+A_0)$.  On a fixed compact timelike
segment of $\gamma$, the cutoff variation
\eqref{eq:timelike-endpoint-variation} is future timelike for all
sufficiently small displacements.  Choose $A_0>0$ and then a
terminal radius threshold $R_0>0$ so small that the displacements lie in this
neighborhood.  Concatenation~\eqref{eq:fixed-endpoint-radial-concatenation}
now gives a fixed-endpoint causal homotopy whose final curve contains
an arc of length at least $A_0\sqrt{Ma^2/\rho}$.  Keeping the endpoints and
$A_0$ fixed and letting $\rho\downarrow0$ proves
\eqref{eq:perturbed-infinite-length}.  All collar restrictions may be
made inside $\mathcal N_\delta$, and the earlier portion of $\gamma$
remains fixed.  Theorem~\ref{thm:causal-length-criterion} excludes an
extension endpoint.
\end{proof}

In particular, the theorem applies when
$\|h\|_{\mathrm E}=o(\sqrt\Sigma)$, including the bounds
$C\Sigma^p$ with $p>1/2$ and arbitrary finite $C$, and
$C\sqrt\Sigma/|\log\Sigma|$.  The signature
assumption is essential: $h=-2\Sigma\Delta^{-1}dr^2$ satisfies
the bound but produces a second negative direction.  No field
equations enter the proof.  The theorem excludes extension endpoints
at the ring.

The perturbation preserves the leading axial length and its first
correction.  Stationarity and axial invariance give
\begin{equation}\label{eq:perturbed-axial-asymptotic}
 \sqrt\rho\,L_{\widehat g}(\text{axial arc of width }A_0)
 =A_0\sqrt{2Ma^2-a^2\rho-\rho^3
                    -\rho h(\Phi,\Phi)|_{(-\rho,0)}}.
\end{equation}
Here $h(\Phi,\Phi)|_{(-\rho,0)}=O(\rho)$, so Taylor's formula gives
\begin{equation}
 \frac{\sqrt\rho\,L_{\widehat g}(\text{axial arc of width }A_0)}
      {A_0\sqrt{2M}\,|a|}
 =1-\frac{\rho}{4M}+O(\rho^2).
\end{equation}
Under the stronger hypothesis $\|h\|_{\mathrm E}=o(\sqrt\Sigma)$,
we have $h(\Phi,\Phi)|_{(-\rho,0)}=o(\rho)$.  The second correction
then agrees with that of Kerr as well:
\begin{equation}\label{eq:perturbed-axial-two-corrections}
 \frac{\sqrt\rho\,L_{\widehat g}(\text{axial arc of width }A_0)}
      {A_0\sqrt{2M}\,|a|}
 =1-\frac{\rho}{4M}-\frac{\rho^2}{32M^2}+o(\rho^2)
 \qquad(\rho\downarrow0).
\end{equation}

\subsection{Sharp bounds for conformal homotopy lengths}

We next require the ring obstruction to persist under every positive
continuous conformal rescaling. For this conclusion, we impose a stronger
perturbation bound. We first prove the sharp leading bound and then treat
the critical coefficient and its radial refinement.

At the leading coefficient, the pointwise bound
$\|h\|_{\mathrm E}\le\Sigma/(1+a^2)$ still suffices, whereas equality
of the limiting ratio alone does not. For the refined bound
$\|h\|_{\mathrm E}\le(\Sigma+\kappa\Sigma^2)/(1+a^2)$, the critical
coefficient is $\kappa_*\equiv M^2/(2a^4)$. At $\kappa=\kappa_*$,
the obstruction persists for $a^2\ge2M^2$; for $a^2<2M^2$ the
rank-one example has finite length supremum in every fixed-endpoint
causal homotopy class after a suitable conformal rescaling
(Proposition~\ref{prop:critical-conformal-perturbation} and
Corollary~\ref{cor:critical-equality}). These examples show that the bound for infinite homotopy lengths
is sharp; they give no conclusion about geometric extendibility.
An exact radial bound, obtained by minimizing the helical norm on each
transverse circle, contains these coefficient tests as expansions and
requires strict control only along radii tending to zero
(Proposition~\ref{prop:exact-radial-envelope}).

\begin{theorem}[A sharp perturbation bound for conformal homotopy lengths]
\label{thm:sharp-conformal-perturbation}
Let $M>0$, $a\ne0$, and put $C_a\equiv1+a^2$.
\begin{enumerate}[label=\textup{(\roman*)}]
 \item Let $\widehat g\equiv g_{M,a}+h$ be a continuous time-oriented
 Lorentzian metric on a full punctured ring collar.  Suppose that
 $h$ is invariant under stationary and axial translations and
 \begin{equation}\label{eq:sharp-conformal-perturbation-bound}
  \limsup_{\Sigma\to0}\frac{\|h\|_{\mathrm E}}{\Sigma}
  <\frac1{C_a}.
 \end{equation}
 Then Theorem~\ref{thm:conformal-class-ring} holds with $g_{M,a}$
 replaced by $\widehat g$: every positive continuous conformal
 factor, including factors on connected covers which do not descend,
 gives infinite fixed-endpoint causal-homotopy lengths along every
 timelike ring approach, and no such approach has a $C^0$ extension
 endpoint.
 \item For every $c>1$, there are a smooth stationary and axisymmetric
 perturbation $h_c$ and a smooth positive conformal factor $\Omega_c$
 on a sufficiently small full ring collar such that
 \begin{equation}\label{eq:sharp-conformal-counterexample-size}
  \frac{\|h_c\|_{\mathrm E}}{\Sigma}=\frac{c}{C_a},
 \end{equation}
 but every fixed-endpoint causal-homotopy class has finite length
 supremum for $\Omega_c^2(g_{M,a}+h_c)$.
\end{enumerate}
The coefficient $1/C_a$ is therefore optimal in the strict bound
which ensures infinite lengths for every positive conformal factor.
At the critical coefficient, pointwise bounds and equality of the
limiting ratio behave differently; see
Proposition~\ref{prop:critical-conformal-perturbation}.  Part
\textup{(ii)} concerns finiteness of these lengths and does not
determine $C^0$-extendibility of the examples.
\end{theorem}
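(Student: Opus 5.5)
We plan to treat the two parts separately. Both rest on one observation: near the ring the critical stationary-axial direction is the helical field $Z\equiv aT+\Phi$. Indeed $\alpha(Z)=a\cos^2\theta=y^2/a$ and $\beta(Z)=r^2$, so by \eqref{eq:fan-coframe-decomposition}
\begin{equation*}
 g_{M,a}(Z,Z)=\frac{-\Delta y^4/a^2+\sin^2\theta\,r^4}{\Sigma}.
\end{equation*}
This equals $-\Sigma$ on the transverse axis $r=0$ (where $\Delta=a^2$) and $+\Sigma$ on the equator, and it lies in $[-\Sigma(1+o(1)),\Sigma(1+o(1))]$ on the collar. Also $|Z|_{\mathrm E}^2=a^2+1=C_a$. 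Hence a perturbation with $\|h\|_{\mathrm E}\le(1-\epsilon)\Sigma/C_a$ changes $g(Z,Z)$ by at most $(1-\epsilon)\Sigma$, which is exactly the margin available at the point $X_+$ of the proof of Theorem~\ref{thm:conformal-class-ring}. This explains the constant $1/C_a$.

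For (i), we rerun the proof of Theorem~\ref{thm:conformal-class-ring} for $\widehat g$. The transverse connectors come from the modified lift $V=W+k(W)Z_c$ in the proof of Theorem~\ref{thm:stationary-ring-perturbation}. The hypothesis \eqref{eq:sharp-conformal-perturbation-bound} gives $\|h\|_{\mathrm E}=O(\Sigma)=o(\sqrt\Sigma)$, so \eqref{eq:ring-weighted-bound} holds. That lift therefore yields future timelike connectors with displacement $|H|\le C_1R$, and it fixes a time orientation compatible with $\alpha$. At $X_+=(0,\ell)$ we take $K_i=\varepsilon((A-d_i)T+\operatorname{sgn}(a)\Phi)$ as in \eqref{eq:closed-orbit-parameters}, but now choose $d=\theta_0\ell^2$ with $\theta_0$ small in terms of $\epsilon$. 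Since $K_i$ is an $O(d)$ perturbation of $\varepsilon\operatorname{sgn}(a)Z$, we get $\widehat g(K_i,K_i)\le-\Sigma+(1-\epsilon)\Sigma+O(d\ell+d^2)<0$, and $\varepsilon\alpha(K_i)>0$ as before. At $X_-=(-\rho,0)$ we take $\rho\ll d$, for instance $\rho=\theta_1 d^2$. Then the term $-2Md_i^2/\rho$ in \eqref{eq:closed-orbits-negative} dominates both the Kerr terms and $|h(K_i,K_i)|\le C\rho^2$, so $-K_i$ is future timelike. We check the future direction by pairing with $Z_c$, as in \eqref{eq:perturbed-axial-orientation}. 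The $v_i$ remain independent, so the cancellation \eqref{eq:closed-orbit-cancellation}, the endpoint variation \eqref{eq:timelike-endpoint-variation}, the lifting to covers, and the repetition $a*c^n$ go through verbatim. Theorem~\ref{thm:causal-length-criterion} then excludes an extension endpoint.

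For (ii), we take the rank-one perturbation $h_c\equiv\frac{c}{C_a}\Sigma\,\zeta\otimes\zeta$ with $\zeta\equiv(a\,dt+d\phi)/\sqrt{C_a}$, the $e$-unit covector dual to $Z$. Then $\|h_c\|_{\mathrm E}/\Sigma=c/C_a$ exactly, and $\widehat g(Z,Z)\ge-\Sigma(1+o(1))+c\Sigma>0$ on a small collar, so $Z$ is spacelike everywhere. Because $h_c$ has no transverse components, $\widehat g$ is still Lorentzian with transverse-orbit orthogonality, and its orbit block is Lorentzian. The orbit covector annihilating $Z$ is $df$ with $f\equiv t-a\tilde\phi$, where $\tilde\phi$ is a real lift of $\phi$. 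This covector is then orbit-timelike. Every future $\widehat g$-causal vector has a nonzero causal orbit part, so $df(V)>0$ after fixing signs. The reverse Cauchy--Schwarz inequality in the orbit plane gives a continuous $C(x)$ with $-\widehat g(V,V)\le C(x)\,df(V)^2$. We then set $\Omega_c\equiv\min\{1,C^{-1/2}\}$, smoothed below this bound. Any causal homotopy class with fixed endpoints lifts to the universal cover, where $f$ is single-valued, so every curve in the class has $\Omega_c^2\widehat g$-length at most $f(q)-f(p)$, a fixed finite number.

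The main obstacle we expect is the sign and margin bookkeeping at $X_+$ in (i). We must check that the tilted $K_i$ stay timelike with a margin proportional to $\epsilon\Sigma$ while keeping the two displacements independent and with the correct time orientation relative to the perturbed cone. If that margin proves too tight, we will first try letting $d_i$ shrink faster than $\epsilon\ell^2$, and if necessary moving $X_+$ slightly off the axis $r=0$.
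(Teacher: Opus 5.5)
Your plan is the paper's proof in both parts. For (i), the paper also feeds the $Z_c$-lifts of Theorem~\ref{thm:stationary-ring-perturbation} (available because $O(\Sigma)=o(\sqrt\Sigma)$) into the paired-orbit loop of Theorem~\ref{thm:conformal-class-ring}. It uses $K_i=\varepsilon((|a|-d_i)T+\operatorname{sgn}(a)\Phi)$ at $X_+=(0,\ell)$ with $d_i\propto\ell^2$, and $-K_i$ at $X_-=(-\rho,0)$. For (ii), it uses the same rank-one $h_c$, the closed form $\tau=dt-a\,d\phi$ annihilating $aT+\Phi$, and $\Omega_c^2=-(g+h_c)^{-1}(\tau,\tau)$. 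Three local steps need repair; none of them changes the strategy.

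First, the tilt at $X_+$ is not $O(d\ell+d^2)$. Because $g_{M,a}(aT+\Phi,T)=-a$ at $r=0$, one has exactly $g_{M,a}(K_i,K_i)=-\ell^2+2|a|d_i-d_i^2$ there. The tilt therefore costs $2|a|d_i$ at first order, so your requirement $\theta_0\lesssim\epsilon/|a|$ is essential, not cosmetic. In your notation the paper takes $d_1=\epsilon\ell^2/(8|a|)$ and $d_2=2d_1$, which gives $\widehat g(K_i,K_i)\le-\epsilon\ell^2/2$.

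Second, $\varepsilon\alpha(K_i)>0$ only shows that $K_i$ is Kerr-future, not $\widehat g$-future. You have two ways to close this:
\begin{itemize}
\item Pair as the paper does: show that $\varepsilon T$ is $\widehat g$-timelike with $\widehat g(\varepsilon T,Z_c)\le-|a|+O(\ell^2)$, and then that $\widehat g(K_i,\varepsilon T)=-|a|+O(\ell^2)$.
\item Prove that $\ker\alpha$ is $\widehat g$-spacelike, which does hold under your hypothesis. In the basis $a\sin^2\theta\,T+\Phi$, $\partial_r$, $\partial_y$ of $\ker\alpha$, both $g_{M,a}$ and $e$ are diagonal. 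The ratios are $\Sigma\sin^2\theta/(1+a^2\sin^4\theta)\to\Sigma/C_a$, $\Sigma/\Delta$ and $\Sigma/(a^2-y^2)$, and on a small collar all of them exceed $(1-\epsilon)\Sigma/C_a\ge\|h\|_{\mathrm E}$.
\end{itemize}

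Third, in (ii) the Lorentzian signature of the orbit block is not automatic. Both $aT+\Phi$ and $(r^2+a^2)T+a\Phi$ are spacelike for $g+h_c$, so you must check the determinant as the paper does. Since the Kerr orbit entries are $O(R^{-1})$ and $h_c=O(R^2)$, one gets $\det G_c=-a^2+O(R)<0$.

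Your alternative choice $\rho\sim d^2$ at $X_-$ also works, with direct pairing against $Z_c$, which gives $g_{M,a}(-K_i,Z_c)\le-|a|d_i+O(\rho^2)$.
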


\begin{proof}
For \textup{(i)}, choose $c_0<1/C_a$ such that
$\|h\|_{\mathrm E}\le c_0\Sigma$ on a smaller collar, and put
$\eta\equiv1-c_0C_a>0$, $\beta\equiv\eta/8$, and $A\equiv|a|$.
Since the perturbation satisfies \eqref{eq:ring-weighted-bound},
Theorem~\ref{thm:stationary-ring-perturbation} gives the future
field $Z_c$ and bounded transverse lifts.
Choose $\varepsilon$ as in that construction.  For small $\ell>0$,
define
\begin{equation}\label{eq:sharp-paired-perturbed-fields}
 d_1\equiv\frac{\beta\ell^2}{A},\qquad d_2\equiv2d_1,
 \qquad
 \rho\equiv\min\left\{1,\frac{M}{4A}\right\}d_1,
 \qquad
 K_i\equiv\varepsilon\bigl((A-d_i)T+\operatorname{sgn}(a)\Phi\bigr).
\end{equation}
At $X_+\equiv(0,\ell)$, one has $|K_i|_{\mathrm E}^2\le C_a$ and
\begin{equation}\label{eq:sharp-perturbed-positive-pair}
 \widehat g(K_i,K_i)
 \le-\ell^2+2Ad_i-d_i^2+c_0C_a\ell^2
 \le-\frac\eta2\ell^2<0.
\end{equation}
At $X_-\equiv(-\rho,0)$, the inequalities
$\rho\le d_1\le d_i$ and $\rho\le Md_1/(4A)$ give
\begin{equation}\label{eq:sharp-perturbed-negative-pair}
 \widehat g(-K_i,-K_i)
 \le-d_i^2-\frac{2Md_i^2}{\rho}+2Ad_i+\rho^2+c_0C_a\rho^2
 \le-6Ad_i+c_0C_a\rho^2<0
\end{equation}
for sufficiently small $\ell$.

We now determine the time orientations.  The identities $g(T,T)=-1$ and
$g(T,\Phi)=0$ hold at $X_+$. Formula~\eqref{eq:perturbed-orbit-components}
gives
\begin{equation}
 g(\varepsilon T,Z_c)
 =-A-\frac{(c-1)\ell^2}{A}\le-A.
\end{equation}
The perturbation is $O(\ell^2)$, so $\varepsilon T$ is future
timelike. Since $\widehat g(K_i,\varepsilon T)=-A+O(\ell^2)<0$,
so are $K_i$. At $X_-$, the field
$Y_-\equiv-\varepsilon\operatorname{sgn}(a)\Phi$ is future timelike
by \eqref{eq:perturbed-axial-orientation}, and
\begin{equation}
 g(-K_i,Y_-)=a^2+\rho^2-\frac{2MAd_i}{\rho}
 \le-7a^2+\rho^2.
\end{equation}
The perturbation is $O(\rho^2)$, so $-K_i$ are future timelike.
Since $K_1,K_2$ are independent, any orbit displacement can be
cancelled by following these fields and their negatives for
nonnegative durations.  Combining this with the bounded transverse
lifts and the closed-loop construction in
Theorem~\ref{thm:conformal-class-ring} proves \textup{(i)},
including the assertion on covers.

For \textup{(ii)}, let
\begin{equation}\label{eq:sharp-rank-one-perturbation}
 U\equiv |a|T+\operatorname{sgn}(a)\Phi,\qquad
 \zeta\equiv |a|\,dt+\operatorname{sgn}(a)\,d\phi,
 \qquad
 h_c\equiv\frac{c\Sigma}{C_a^2}\,\zeta^2.
\end{equation}
Since $|\zeta|_{\mathrm E}^2=C_a$, the rank-one form has
$\|h_c\|_{\mathrm E}=c\Sigma/C_a$ and $h_c(U,U)=c\Sigma$.
The exact Kerr norm is
\begin{equation}\label{eq:sharp-helical-norm}
 g(U,U)=r^2-y^2-\frac{r^2y^2}{a^2}
              +\frac{2Mr\,y^4}{a^2\Sigma}.
\end{equation}
Writing $R\equiv\sqrt\Sigma$, we obtain
\begin{equation}\label{eq:sharp-helical-positive}
 \frac{(g+h_c)(U,U)}{\Sigma}
 \ge c-1-\frac{R^2}{4a^2}-\frac{2MR}{a^2}>0
\end{equation}
on a sufficiently small collar.

Let $G_c$ denote the $(t,\phi)$ matrix of $g+h_c$.
The Kerr orbit matrix has determinant
$-\Delta(1-y^2/a^2)$ and entries $O(R^{-1})$, while $h_c=O(R^2)$.
Consequently,
\begin{equation}
 \det G_c=-a^2+O(R)<0.
\end{equation}
The transverse block is unchanged and positive definite.
Thus $g+h_c$ is Lorentzian.  The closed one-form
\begin{equation}
 \tau\equiv dt-a\,d\phi
\end{equation}
annihilates $U$, and the inverse of the two-dimensional orbit matrix
gives
\begin{equation}\label{eq:sharp-temporal-covector}
 (g+h_c)^{-1}(\tau,\tau)
 =\frac{(g+h_c)(U,U)}{\det G_c}<0.
\end{equation}
Choose the time orientation for which $\tau(V)>0$ on nonzero
future causal vectors, and define
\begin{equation}\label{eq:sharp-conformal-counterexample}
 \Omega_c^2\equiv-(g+h_c)^{-1}(\tau,\tau)>0,
 \qquad g_c\equiv\Omega_c^2(g+h_c).
\end{equation}
These coefficients are smooth on the punctured collar, and
$g_c^{-1}(\tau,\tau)=-1$.
By the reverse Cauchy-Schwarz inequality,
\begin{equation}\label{eq:sharp-conformal-length-bound}
 \sqrt{-g_c(V,V)}\le\tau(V),\qquad
 L_{g_c}(\sigma)\le\int_\sigma\tau
\end{equation}
for future causal vectors and curves.  A lift of the angular
coordinate along a fixed-endpoint homotopy identifies $\int\tau$
with the fixed endpoint difference of $t-a\phi$.  This proves
\textup{(ii)}.  The bounded transverse lifts also give timelike
curves approaching the ring in these examples.
\end{proof}

\begin{proposition}[A sharp radial bound]\label{prop:exact-radial-envelope}
Put $C_a\equiv1+a^2$, $b\equiv M/a^2$, and
\begin{equation}\label{eq:critical-helical-circle}
 F_R(x)\equiv\left(2-\frac{R^2(1-x^2)}{a^2}\right)x^2
                 +2bRx(1-x^2)^2,
 \qquad -1\le x\le1.
\end{equation}
For sufficiently small $R>0$, define
\begin{equation}\label{eq:exact-radial-envelope}
 m(R)\equiv\min_{[-1,1]}F_R,
 \qquad E(R)\equiv\frac{R^2}{C_a}\bigl(1-m(R)\bigr).
\end{equation}
\begin{enumerate}[label=\textup{(\roman*)}]
\item Suppose $\widehat g\equiv g_{M,a}+h$ is continuous, time oriented,
Lorentzian, stationary and axisymmetric on a full punctured ring
collar, with $\|h\|_{\mathrm E}=O(\Sigma)$.
If radii $R_j\downarrow0$ satisfy
\begin{equation}\label{eq:exact-envelope-sequence}
 \max_{\Sigma=R_j^2}\|h\|_{\mathrm E}<E(R_j),
\end{equation}
then the conclusions of Theorem~\ref{thm:conformal-class-ring} hold
for $\widehat g$, including arbitrary positive continuous conformal
factors on connected covers.
\item Let $q$ be smooth for small $R>0$, with $q(R)=O(R^2)$ and
$q(R)>E(R)$. The perturbation
\begin{equation}\label{eq:exact-envelope-rank-one}
 h_q\equiv\frac{q(\sqrt\Sigma)}{C_a}\,\zeta^2,
 \qquad \zeta\equiv |a|\,dt+\operatorname{sgn}(a)\,d\phi,
\end{equation}
has $\|h_q\|_{\mathrm E}=q(\sqrt\Sigma)$ and admits a smooth
positive conformal factor for which every fixed-endpoint
causal-homotopy class has finite length supremum.
\end{enumerate}
Both assertions use strict comparison with $E$.  The proposition
does not treat $q\equiv E$.
\end{proposition}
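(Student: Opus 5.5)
The plan is to reduce both parts to the helical field $U\equiv|a|T+\operatorname{sgn}(a)\Phi$ of Theorem~\ref{thm:sharp-conformal-perturbation}\textup{(ii)} and to identify $E(R)$ as exactly the obstruction on the transverse circle $\Sigma=R^2$. Parametrize that circle by $r=Rx$ and $y^2=R^2(1-x^2)$, and substitute into \eqref{eq:sharp-helical-norm}. A direct expansion gives
\begin{equation*}
 g_{M,a}(U,U)=R^2\bigl(F_R(x)-1\bigr),
\end{equation*}
so that $\min_{\Sigma=R^2}g_{M,a}(U,U)=R^2(m(R)-1)=-C_aE(R)$. Since $|U|_{\mathrm E}^2=C_a$, we have $|h(U,U)|\le C_a\|h\|_{\mathrm E}$, and hence $C_aE(R)$ is the largest admissible size of $C_a\|h\|_{\mathrm E}$ that still leaves $U$ timelike somewhere on the circle.

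For \textup{(i)}, fix $j$ and let $x_j$ be a minimizer of $F_{R_j}$. Since $F_R(x)=2x^2+O(R)$, we get $x_j=O(R_j)$, so the point $X_+^{(j)}=(R_jx_j,\,y_j)$ lies near the positive $y$-axis. Hypothesis \eqref{eq:exact-envelope-sequence} gives
\begin{equation*}
 \widehat g(U,U)\le -C_a\bigl(E(R_j)-\max_{\Sigma=R_j^2}\|h\|_{\mathrm E}\bigr)<0
\end{equation*}
at $X_+^{(j)}$. Continuity then gives a positive slack, and I use it to tilt $U$ to two independent fields $K_i=\varepsilon((A-d_i)T+\operatorname{sgn}(a)\Phi)$ with $d_1\ne d_2$ small relative to the slack; both remain $\widehat g$-timelike at $X_+^{(j)}$. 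Their future orientation is checked against $\varepsilon T$ exactly as in the proof of Theorem~\ref{thm:sharp-conformal-perturbation}, because $\|h\|_{\mathrm E}=O(\Sigma)=O(R_j^2)$ while the pairing is $-A+O(R_j)$.

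The negative point needs no sharp hypothesis. At $X_-=(-\rho,0)$ with $\rho\ll d_i$, the estimate \eqref{eq:sharp-perturbed-negative-pair} uses only $\|h\|_{\mathrm E}=O(\Sigma)$ and makes $-K_i$ future timelike. Because $O(\Sigma)=o(\sqrt\Sigma)$, Theorem~\ref{thm:stationary-ring-perturbation} supplies the future field $Z_c$ and transverse lifts whose orbit displacement is bounded by a constant times the Euclidean transverse length.

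The order of choices is the following. Fix $s$ and a compact earlier timelike segment of $\gamma$ together with its admissible shift neighbourhood from \eqref{eq:timelike-endpoint-variation}. Next choose $j$ so large that every displacement of size $C_1R_j$ lies in that neighbourhood. Finally take $t_0$ such that $\Sigma(\gamma(t))<R_j^2$ for $t>t_0$. From $x=\pi(\gamma(t))$, a radial segment and circular arcs of length $O(R_j)$ reach $X_+^{(j)}$ and then $X_-$. Nonnegative durations along $\pm K_i$ then cancel the displacement, and the loop-insertion, lifting and repetition argument of Theorem~\ref{thm:conformal-class-ring} runs verbatim; this includes the conformal factors on covers, since rescaling preserves cones. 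I expect the main obstacle to be this step: the comparison holds only on the single circles $\Sigma=R_j^2$ and only strictly. One must check that the slack there, though possibly much smaller than $R_j^2$, still allows two \emph{independent} directions ($d_1\ne d_2$) with the correct orientation. It must also be confirmed that the connectors may leave and re-enter the circle without needing the bound elsewhere.

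For \textup{(ii)}, I follow Theorem~\ref{thm:sharp-conformal-perturbation}\textup{(ii)}. Since $\zeta(U)=C_a$ and $|\zeta|_{\mathrm E}^2=C_a$, we get $\|h_q\|_{\mathrm E}=q$ and $h_q(U,U)=C_aq$. The circle computation then gives
\begin{equation*}
 (g+h_q)(U,U)\ge R^2(m(R)-1)+C_aq=C_a\bigl(q-E(R)\bigr)>0
\end{equation*}
everywhere. Because $q=O(R^2)$, the determinant argument gives $\det G_q=-a^2+O(R)$, so $g+h_q$ is Lorentzian. The covector $\tau=dt-a\,d\phi$ is then timelike by \eqref{eq:sharp-temporal-covector}. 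Setting $\Omega^2=-(g+h_q)^{-1}(\tau,\tau)$, the reverse Cauchy--Schwarz bound $L\le\int\tau$ holds, and $\int\tau$ is fixed along a fixed-endpoint homotopy after lifting $\phi$. This gives finite suprema in every fixed-endpoint causal-homotopy class.
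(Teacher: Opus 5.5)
Your proposal is correct and follows essentially the paper's own argument. The identity $g_{M,a}(U,U)=R^2(F_R(x)-1)$ on $\Sigma=R^2$ identifies $-C_aE(R)$ as the minimal helical norm. Part (i) then tilts $U$ at a minimizing point to $K_i=\varepsilon(U-d_iT)$, with $d_i$ small against the strict slack, and reuses the $X_-=(-\rho,0)$ estimates, the bounded lifts, and the closed-loop construction; like the paper, it needs only $R_j\downarrow0$ and no uniform margin. Part (ii) uses $h_q(U,U)=C_aq$ and the timelike one-form $\tau=dt-a\,d\phi$, exactly as the paper does.

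One point should be stated more carefully: the orientation check at the minimizer. There $r=O(R_j^2)$ but $2Mr/\Sigma\approx-M^2/a^2$, so the relevant values are $\widehat g(T,T)=-1-M^2/a^2+O(R_j^2)$ and $\widehat g(U,T)=-|a|+O(R_j^2)$, not the $r=0$ values you implicitly borrowed. Likewise, at $X_-$ the future-directedness of $-K_i$ comes from the pairing with $Y_-$, not from the norm estimate alone. Both checks go through.
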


\begin{proof}
Retain $U\equiv |a|T+\operatorname{sgn}(a)\Phi$.
On $\Sigma=R^2$, put $x\equiv r/R$. Formula
\eqref{eq:sharp-helical-norm} reads $g(U,U)=R^2(F_R(x)-1)$.
For small $R$, $F_R''=4+O(R)>0$ uniformly on $[-1,1]$.
The implicit function theorem gives a unique analytic minimizer
$x_*(R)=-bR/2+O(R^3)$.  The identity $F_{-R}(-x)=F_R(x)$ shows
that the minimizer is odd in $R$, so $m$ and $E$ are analytic and
even near zero.  Moreover, $F_R(0)=0$ implies $m(R)\le0$ and
$E(R)>0$.

For \textup{(i)}, at
$X_j\equiv(R_jx_*(R_j),R_j\sqrt{1-x_*(R_j)^2})$,
\begin{equation}\label{eq:exact-envelope-negative}
 \widehat g(U,U)|_{X_j}
 \le C_a\left(\max_{\Sigma=R_j^2}\|h\|_{\mathrm E}-E(R_j)\right)<0.
\end{equation}
Since $h=O(\Sigma)$, the bounded transverse lifts of
Theorem~\ref{thm:stationary-ring-perturbation} apply.  At $X_j$,
the corresponding future field satisfies
$Z_c=\varepsilon U+O(R_j^2)$, and the orbit coefficients of Kerr
are bounded.  We also have
\[
 \widehat g(T,T)=-1-\frac{M^2}{a^2}+O(R_j^2),
 \qquad \widehat g(U,T)=-|a|+O(R_j^2).
\]
Hence $\varepsilon T$, and then $\varepsilon U$, are future
timelike at $X_j$. Choose $d_1>0$ sufficiently small, put
$d_2\equiv2d_1$, and set
$K_i\equiv\varepsilon(U-d_iT)$.
Both $K_i$ remain future timelike at $X_j$.
At $X_-\equiv(-\rho,0)$, with
$\rho\equiv\min\{1,M/(4|a|)\}d_1$, their negatives are future
timelike by \eqref{eq:sharp-perturbed-negative-pair} and its
orientation calculation: their norms are at most
$-6|a|d_i+O(\rho^2)$, and their pairings with $Y_-$ are at most
$-7a^2+O(\rho^2)$. By reducing $d_1$, we obtain the four orbit directions in any
prescribed collar.  The two positive directions are independent.
The bounded transverse lifts and the closed-loop construction of
Theorem~\ref{thm:conformal-class-ring} now give the assertion.
Only convergence of the radii to zero is required; the strict
margins in \eqref{eq:exact-envelope-sequence} need not have a
uniform positive lower bound.

For \textup{(ii)}, the rank-one norm gives the asserted size, and
\begin{equation}\label{eq:exact-envelope-positive}
 \min_{\Sigma=R^2}(g+h_q)(U,U)=C_a\bigl(q(R)-E(R)\bigr)>0.
\end{equation}
Since $h_q=O(R^2)$, the orbit determinant is $-a^2+O(R)<0$;
the unchanged transverse block is positive. Thus $g+h_q$ is
Lorentzian on a smaller collar, and $\tau\equiv dt-a\,d\phi$
is timelike by \eqref{eq:sharp-temporal-covector}. Choose the
orientation for which $\tau$ is positive on future causal vectors.
Set $\Omega_q^2\equiv-(g+h_q)^{-1}(\tau,\tau)>0$.
The estimate \eqref{eq:sharp-conformal-length-bound}, with $q$
in place of $c$, proves \textup{(ii)}.
\end{proof}

\Needspace{12\baselineskip}
\begin{proposition}[The critical perturbation coefficient]
\label{prop:critical-conformal-perturbation}
Put $C_a\equiv1+a^2$ and $\kappa_*\equiv M^2/(2a^4)$.
\begin{enumerate}[label=\textup{(\roman*)}]
 \item Let $\widehat g\equiv g_{M,a}+h$ be a continuous time-oriented
 Lorentzian metric on a full punctured ring collar, with $h$
 stationary and axisymmetric.  If, for some $\kappa<\kappa_*$,
 \begin{equation}\label{eq:critical-perturbation-upper}
  \|h\|_{\mathrm E}\le\frac{\Sigma+\kappa\Sigma^2}{C_a}
 \end{equation}
 near the ring, then the conclusions of
 Theorem~\ref{thm:conformal-class-ring} hold with $g_{M,a}$ replaced
 by $\widehat g$, including arbitrary positive continuous conformal
 factors on connected covers.
 \item For each $\kappa>\kappa_*$, the smooth rank-one perturbation
 \begin{equation}\label{eq:critical-rank-one-perturbation}
  h_\kappa\equiv\frac{\Sigma+\kappa\Sigma^2}{C_a^2}\,\zeta^2,
  \qquad
  \zeta\equiv|a|\,dt+\operatorname{sgn}(a)\,d\phi,
 \end{equation}
 on a sufficiently small full ring collar admits a smooth positive
 conformal factor for which every fixed-endpoint causal-homotopy
 class has finite length supremum.  Moreover,
 \begin{equation}\label{eq:critical-perturbation-limit}
  \frac{\|h_\kappa\|_{\mathrm E}}{\Sigma}
  =\frac{1+\kappa\Sigma}{C_a}\longrightarrow\frac1{C_a}.
 \end{equation}
\end{enumerate}
Thus the pointwise bound $\|h\|_{\mathrm E}\le\Sigma/C_a$ preserves
the conformal ring obstruction, whereas the asymptotic condition
$\limsup\|h\|_{\mathrm E}/\Sigma\le1/C_a$ does not imply infinite
causal-homotopy lengths for all conformal factors.
This distinction concerns causal-homotopy lengths and does not
determine geometric $C^0$-extendibility of the examples in
\textup{(ii)}.
\end{proposition}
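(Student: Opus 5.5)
The plan is to reduce both parts to Proposition~\ref{prop:exact-radial-envelope} by expanding the exact radial envelope $E(R)$ to order $R^4$. Everything then hinges on the sign of the $R^4$ coefficient of $E(R)$ relative to $\kappa_*R^4/C_a$.

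\textbf{Expanding the envelope.} With $b=M/a^2$, I would expand $F_R(x)=2x^2-R^2x^2(1-x^2)/a^2+2bRx(1-x^2)^2$ near its minimizer $x_*(R)=-bR/2+O(R^3)$. The expansion runs as follows:
\begin{itemize}
\item Near $x=0$, $F_R(x)=2x^2+2bRx+O(R^2x^2)+O(Rx^3)$.
\item Minimizing the quadratic part gives $x_*=-bR/2+O(R^3)$.
\item Evenness of $m$ makes the next correction $O(R^4)$, so $m(R)=-b^2R^2/2+O(R^4)$.
\end{itemize}
Hence
\[
E(R)=\frac{R^2}{C_a}\Bigl(1+\frac{b^2}{2}R^2+O(R^4)\Bigr)=\frac{R^2+\kappa_*R^4+O(R^6)}{C_a},
\]
since $b^2/2=M^2/(2a^4)=\kappa_*$. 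The oddness of $x_*$ and evenness of $m$ are already established in the proof of Proposition~\ref{prop:exact-radial-envelope}, so only the leading coefficient needs computing.

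\textbf{Part (i).} The hypothesis gives, on $\Sigma=R^2$,
\[
\max\|h\|_{\mathrm E}\le\frac{R^2+\kappa R^4}{C_a}.
\]
Because $\kappa<\kappa_*$, this is strictly less than $E(R)$ for all sufficiently small $R$. We also have $\|h\|_{\mathrm E}=O(\Sigma)$. Hence every sequence $R_j\downarrow0$ satisfies \eqref{eq:exact-envelope-sequence}, and Proposition~\ref{prop:exact-radial-envelope}(i) yields the conclusions of Theorem~\ref{thm:conformal-class-ring}.

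\textbf{Part (ii).} Take $q(R)=R^2+\kappa R^4$, which is smooth and $O(R^2)$. Since $\kappa>\kappa_*$, we get $q(R)-E(R)=(\kappa-\kappa_*)R^4/C_a+O(R^6)>0$ for small $R$. The perturbation $h_\kappa$ is exactly $h_q$ from \eqref{eq:exact-envelope-rank-one}, so Proposition~\ref{prop:exact-radial-envelope}(ii) gives the conformal factor with finite length suprema.

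The limit \eqref{eq:critical-perturbation-limit} follows from the rank-one norm identity $|\zeta|_{\mathrm E}^2=C_a$. That identity gives $\|h_\kappa\|_{\mathrm E}=(\Sigma+\kappa\Sigma^2)/C_a$.

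\textbf{Closing remark.} The pointwise case $\|h\|_{\mathrm E}\le\Sigma/C_a$ is the instance $\kappa=0<\kappa_*$ of part (i). Part (ii) then shows that the limsup condition alone does not suffice.

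\textbf{Main obstacle.} The delicate step is justifying the $O(R^6)$ remainder, i.e. that no $R^3$ or $R^5$ terms appear in $m$, together with uniform control of the minimizer on $[-1,1]$. Both come from the analyticity and evenness established in Proposition~\ref{prop:exact-radial-envelope}. Beyond that, I only need to confirm that the $O(R^2x^2)$ and $O(Rx^3)$ terms contribute $O(R^4)$ at $x_*=O(R)$. They do, since $R^2x_*^2=O(R^4)$ and $Rx_*^3=O(R^4)$, and by evenness the $R^4$ coefficient is all that matters.
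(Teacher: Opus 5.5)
Your proposal is correct and is essentially the paper's argument: expand $E(R)=(R^2+\kappa_*R^4+O(R^6))/C_a$ and apply Proposition~\ref{prop:exact-radial-envelope} in both directions. The paper obtains $m(R)=-b^2R^2/2+O(R^4)$ from a global completed-square lower bound together with evaluation at $x=-bR/2$, rather than from the minimizer. In any case $m(R)=-b^2R^2/2+o(R^2)$ already suffices, so evenness is not a real obstacle.

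One normalization slip: since $h_q=q\,\zeta^2/C_a$, part (ii) needs $q(R)=(R^2+\kappa R^4)/C_a$, not $R^2+\kappa R^4$. With this $q$, your identities $h_\kappa=h_q$ and $q-E=(\kappa-\kappa_*)R^4/C_a+O(R^6)$ are exactly right.
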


\begin{proof}
Write $U\equiv |a|T+\operatorname{sgn}(a)\Phi$,
$R\equiv\sqrt\Sigma$, and $b\equiv M/a^2$.
For $x\ge0$ and $R<|a|$, $F_R(x)\ge0$.
For $x<0$, completing the square gives
\begin{equation}\label{eq:critical-helical-lower}
 F_R(x)\ge\left(2-\frac{R^2}{a^2}\right)x^2+2bRx
 \ge-\frac{b^2R^2}{2-R^2/a^2}.
\end{equation}
Evaluation at $x\equiv-bR/2$ supplies the matching upper bound:
\begin{equation}\label{eq:critical-helical-minimum}
 m(R)=-\frac{b^2}{2}R^2+O(R^4),\qquad
 E(R)=\frac{R^2+\kappa_*R^4+O(R^6)}{C_a}.
\end{equation}
Consequently $(R^2+\kappa R^4)/C_a$ lies below $E(R)$ for
$\kappa<\kappa_*$ and above it for $\kappa>\kappa_*$.
Proposition~\ref{prop:exact-radial-envelope} proves both assertions.

A negative direction in \textup{(i)} can be exhibited explicitly.
Take
\begin{equation}\label{eq:critical-shifted-target}
 X_R\equiv\left(-\frac b2R^2,
                  R\sqrt{1-\frac{b^2R^2}{4}}\right).
\end{equation}
The norm bound gives
\begin{equation}\label{eq:critical-target-negative}
 \widehat g(U,U)|_{X_R}
 \le(\kappa-\kappa_*)R^4+O(R^6)<0.
\end{equation}
For \textup{(ii)}, the rank-one identity gives
\eqref{eq:critical-perturbation-limit} and
\begin{equation}\label{eq:critical-example-positive}
 \min_{\Sigma=R^2}
 \frac{(g+h_\kappa)(U,U)}{\Sigma^2}
 =\kappa-\kappa_*+O(R^2)>0.
\end{equation}
The positive conformal factor is
\begin{equation}\label{eq:critical-normalized-conformal-factor}
 \Omega_\kappa^2\equiv-(g+h_\kappa)^{-1}(\tau,\tau)>0,
\end{equation}
and $L_{\Omega_\kappa^2(g+h_\kappa)}(\sigma)\le\int_\sigma\tau$
on every future causal curve, as before.
\end{proof}

For an explicit example at the limiting coefficient, take
$\kappa=M^2/a^4$.
In this case \eqref{eq:critical-helical-lower} gives the direct bound
\begin{equation}\label{eq:critical-example-explicit}
 (g+h_\kappa)(U,U)
 \ge\frac{M^2\Sigma^2}{a^4}
       \frac{1-\Sigma/a^2}{2-\Sigma/a^2}>0
 \qquad(0<\Sigma<a^2).
\end{equation}
Hence a correction of order $\Sigma^2$ to the critical rank-one
perturbation changes the conclusion about infinite lengths.

\Needspace{8\baselineskip}
\begin{corollary}[The critical equality]\label{cor:critical-equality}
At $\kappa=\kappa_*\equiv M^2/(2a^4)$, the conclusions of
Proposition~\ref{prop:critical-conformal-perturbation}\textup{(i)} hold
if $a^2\ge 2M^2$, and those of
Proposition~\ref{prop:critical-conformal-perturbation}\textup{(ii)}
hold if $a^2<2M^2$.
\end{corollary}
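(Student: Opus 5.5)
The plan is to reduce both assertions to Proposition~\ref{prop:exact-radial-envelope}, as in the proof of Proposition~\ref{prop:critical-conformal-perturbation}. At $\kappa=\kappa_*$ the two sides agree through order $R^4$, so everything depends on the sign of the next nonvanishing coefficient in
\[
 C_aE(R)-(R^2+\kappa_*R^4).
\]
Since $E$ is analytic and even, write $m(R)=m_2R^2+m_4R^4+m_6R^6+O(R^8)$ with $m_2=-b^2/2$ and $b=M/a^2$. Then
\[
 C_aE(R)=R^2-m_2R^4-m_4R^6-m_6R^8+\dots,
\]
and $-m_2=\kappa_*$. Hence $E(R)-(R^2+\kappa_*R^4)/C_a=-(m_4R^6+m_6R^8+\dots)/C_a$.

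To compute $m_4$, substitute $x=Ry$ into \eqref{eq:critical-helical-circle}:
\[
 F_R(Ry)=R^2(2y^2+2by)+R^4\Bigl(-\frac{y^2}{a^2}-4by^3\Bigr)+O(R^6).
\]
The leading bracket is uniformly convex with minimizer $y_0=-b/2$. By the envelope principle, the $R^4$ coefficient of the minimum is the $R^4$ bracket evaluated at $y_0$, because the first-order shift of the minimizer does not contribute at this order. This gives
\[
 m_4=-\frac{b^2}{4a^2}+\frac{b^4}{2}=\frac{b^2}{4a^2}\Bigl(\frac{2M^2}{a^2}-1\Bigr).
\]
If $a^2<2M^2$, then $m_4>0$. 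Hence $q(R)\equiv R^2+\kappa_*R^4$ satisfies $q>C_aE$ for small $R$. Proposition~\ref{prop:exact-radial-envelope}(ii), applied to $q/C_a$, then gives the finite-length conformal example, which is the conclusion of Proposition~\ref{prop:critical-conformal-perturbation}(ii). If $a^2>2M^2$, then $m_4<0$. The pointwise bound $\|h\|_{\mathrm E}\le(\Sigma+\kappa_*\Sigma^2)/C_a$ is then strictly below $E(R)$ on every small circle, and Proposition~\ref{prop:exact-radial-envelope}(i) gives the conformal obstruction.

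The main obstacle is the borderline $a^2=2M^2$, where $m_4=0$ and the sign of $m_6$ decides. For this step I would carry the expansion of $F_R(Ry)$ to order $R^6$. The new terms are $R^6\bigl(y^4/a^2+2by^5\bigr)$, coming from $-R^2(1-x^2)x^2/a^2$ and from $2bRx\cdot x^4$. I would write $y=y_0+R^2y_1$ with $y_1$ determined by the first-order condition. The standard second-order formula then gives
\[
 m_6=P_6(y_0)-\frac{P_4'(y_0)^2}{4\cdot 2},
\]
where $P_4$ and $P_6$ are the $R^4$ and $R^6$ brackets and $P_2''=4$. The correction term is nonpositive, so I expect $m_6<0$ once $a^2=2M^2$ is used to simplify $P_6(y_0)$. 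That sign puts $q$ strictly below $C_aE$ and yields (i), consistent with the statement. If this term instead turned out to vanish, I would continue to the next even order. Strictness of the comparison at all small radii is what Proposition~\ref{prop:exact-radial-envelope} requires, and it follows from the leading nonzero coefficient by analyticity.
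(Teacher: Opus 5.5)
Your route is the paper's. You reduce to Proposition~\ref{prop:exact-radial-envelope}, read off the sign of $m(R)+\kappa_*R^2$, and compute its $R^4$ coefficient correctly as $m_4=\frac{b^2}{4}\bigl(2b^2-a^{-2}\bigr)$. The strict cases $a^2<2M^2$ and $a^2>2M^2$ then go through exactly as in the paper.

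The gap is the borderline case $a^2=2M^2$. You leave it at the expectation $m_6<0$, and the reason you give does not support that expectation. A nonpositive correction $-P_4'(y_0)^2/8$ settles the sign only if $P_6(y_0)\le0$. But there $a^{-2}=2b^2$, so
\[
 P_6(y_0)=\frac{b^4}{16}\bigl(a^{-2}-b^2\bigr)=\frac{b^6}{16}>0 ,
\]
and the sign really depends on the size of the correction. Your own formula closes the case. Since $P_4'(y_0)=b\bigl(a^{-2}-3b^2\bigr)=-b^3$, you get $m_6=\frac{b^6}{16}-\frac{b^6}{8}=-\frac{b^6}{16}<0$. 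Hence $m(R)+\kappa_*R^2=-\frac{b^6}{16}R^6+O(R^8)<0$ for small $R$, and part (i) applies. This agrees with \eqref{eq:critical-degenerate-target-exact}: the paper's trial point $x=-z/2+z^3/4$, $z=bR$, is exactly your $R(y_0+R^2y_1)$ with $y_1=-P_4'(y_0)/P_2''=b^3/4$.

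You could also adopt a simplification the paper uses. For the (i) direction you only need $m(R)+\kappa_*R^2<0$, which is an upper bound on the minimum. It therefore suffices to evaluate $F_R$ at an explicit trial point; the paper does this with the exact polynomial identities \eqref{eq:critical-linear-target-exact} and \eqref{eq:critical-degenerate-target-exact}. That avoids justifying a second-order envelope expansion. The full expansion of the minimum is needed only for $a^2<2M^2$, where you need a lower bound on $m(R)$.
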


\begin{proof}
Write $\alpha\equiv a^{-2}$ and $b\equiv M/a^2$, and retain
$F_R$ from \eqref{eq:critical-helical-circle}.
By Proposition~\ref{prop:exact-radial-envelope}, the unique
minimizer is an odd analytic function.  Substituting in
$\partial_xF_R=0$ gives
\begin{equation}\label{eq:critical-minimizer-expansion}
 x_{\min}(R)=-\frac b2R+
       \frac{b(3b^2-\alpha)}4R^3+O(R^5),
 \qquad
 \min F_R=-\frac{b^2}2R^2+
       \frac{b^2(2b^2-\alpha)}4R^4+O(R^6).
\end{equation}
In particular,
\begin{equation}\label{eq:exact-envelope-expansion}
 E(R)=\frac1{C_a}\left[R^2+\frac{b^2}{2}R^4
              -\frac{b^2(2b^2-\alpha)}4R^6+O(R^8)\right].
\end{equation}
If $a^2<2M^2$, the coefficient of $R^4$ is positive. Thus
$g+h_{\kappa_*}$ makes $U$ spacelike throughout a sufficiently small
punctured collar. The timelike one-form and positive conformal factor
in \eqref{eq:critical-normalized-conformal-factor} give the assertion.

If $a^2>2M^2$, evaluation at $x=-bR/2$ gives
\begin{equation}\label{eq:critical-linear-target-exact}
 F_R(-bR/2)+\frac{b^2}2R^2
 =\frac{b^2(2b^2-\alpha)}4R^4+
       \frac{b^4(\alpha-b^2)}{16}R^6<0
\end{equation}
for all sufficiently small $R>0$.
At $a^2=2M^2$, put $z\equiv bR$ and
$x\equiv-z/2+z^3/4$. Direct substitution yields
\begin{equation}\label{eq:critical-degenerate-target-exact}
 F_R(x)+\frac{z^2}{2}
 =-\frac{z^6}{16}+\frac{5z^8}{32}-\frac{z^{10}}{32}
       +\frac{z^{12}}{64}-\frac{3z^{14}}{256}
       +\frac{z^{16}}{512}<0
\end{equation}
for all sufficiently small $z>0$.
In both cases $m(R)+\kappa_*R^2<0$, hence
$(R^2+\kappa_*R^4)/C_a<E(R)$ for all sufficiently small $R$.
Proposition~\ref{prop:exact-radial-envelope}\textup{(i)} applies.
\end{proof}

The distinction at $a^2=2M^2$ belongs to the sharp length
criterion in the specified coordinate norm.  It yields no conclusion
about a change in geometric $C^0$-extendibility at that parameter value.

\subsection{An averaged criterion for larger perturbations}
\label{subsec:conformal-ring}

We next give an averaged axial condition for conformal factors
of the perturbed metrics in
Theorem~\ref{thm:stationary-ring-perturbation}.  For exact Kerr,
this condition also gives quantitative estimates, although
Theorem~\ref{thm:conformal-class-ring} applies to every positive
factor.

For a stationary and axisymmetric metric $g_0$ on the ring collar, put
\begin{equation}\label{eq:conformal-axial-weight}
 b_0(r,y)\equiv\sqrt{\max\{-g_0(\Phi,\Phi)(r,y),0\}},\qquad
 m_{I,A}(r,y)\equiv
 \min_{\tau\in I,\,\varphi\in\mathbb S^1}
 \int_0^A\Omega(\tau,r,y,\varphi+u)\,du.
\end{equation}
The minimum exists when $I$ is a nonempty compact interval.

\begin{theorem}[Conformal persistence of the ring obstruction]
\label{thm:conformal-ring-obstruction}
Let $g_0$ be the Kerr metric, with $M>0$ and $a\ne0$, or a metric
satisfying Theorem~\ref{thm:stationary-ring-perturbation}, on a full
punctured ring collar $\mathcal N_{\delta_0}$.  Let
$\Omega:\mathcal N_{\delta_0}\to(0,\infty)$ be continuous and put
$g_\Omega\equiv\Omega^2g_0$, with the same time orientation.
Suppose that, for every nonempty compact interval $I\subset\mathbb R$
and every $0<A<\pi$,
\begin{equation}\label{eq:conformal-axial-divergence}
 \limsup_{(r,y)\to(0,0)} b_0(r,y)m_{I,A}(r,y)=+\infty.
\end{equation}
Every future-directed $C^1$ timelike curve
$\gamma:[0,\Lambda)\to\mathcal N_{\delta_0}$ with
$\Sigma(\gamma(t))\to0$ satisfies
\begin{equation}\label{eq:conformal-fixed-endpoint-lengths}
 \forall s\in[0,\Lambda)\ \exists t_0\in(s,\Lambda)\quad
 \mathcal L^{g_\Omega}_\gamma(s,t)=+\infty
 \qquad(t_0<t<\Lambda).
\end{equation}
The varying portions of these homotopies can be chosen to lie in any
prescribed smaller ring collar.  Consequently, no such curve has an
endpoint in a $C^0$ extension of
$(\mathcal N_{\delta_0},g_\Omega)$.
The same conclusions hold on every connected covering with the
pulled-back metric and function $\Sigma$.
No stationarity or axial invariance of $\Omega$ is assumed.
\end{theorem}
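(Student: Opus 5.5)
We follow the proof of Theorem~\ref{thm:fixed-endpoint-infinite-length}, with the Kerr length of the negative-equatorial arc replaced by its $g_\Omega$-length. Fix $s<\Lambda$ and a smaller collar $\mathcal N_\delta$. As in that proof, choose $s<s_w<s_0<\Lambda$ with $\gamma([s_w,\Lambda))\subset\mathcal N_\delta$. The cutoff variation \eqref{eq:timelike-endpoint-variation} on the compact timelike segment $\gamma|_{[s_w,s_0]}$ is future timelike for every stationary-axial shift in a fixed neighborhood $\mathcal U$ of zero. This holds for $g_\Omega$ as well, because conformal rescaling preserves the cones; only invariance of the $g_0$-cones under $\mathscr S$ is used, never invariance of $\Omega$. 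For the perturbed case we use the future field $Z_c$, the bounded transverse lifts \eqref{eq:perturbed-orbit-bound}, and the future timelike field $Y_-$ from the proof of Theorem~\ref{thm:stationary-ring-perturbation}. We then obtain the detour family of Lemma~\ref{lem:radial-attachment}, from $x=\gamma(t)$ to the negative equator $(-\rho,0)$ and back. Its shifts satisfy $|h_\lambda|\le C(R+A)$, and its final path contains a $Y_-$ arc of angular width $A$.

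The new step is to place the detour at a transverse point where the divergence hypothesis \eqref{eq:conformal-axial-divergence} is realized. That hypothesis supplies points $(r_j,y_j)\to(0,0)$, not necessarily on the negative equator. So we generalize the detour: the transverse path goes from the transverse point of $x$ to $(r_j,y_j)$, inside the disc of radius $R=\sqrt{\Sigma(x)}$ when $\sqrt{r_j^2+y_j^2}<R$. We then insert the arc of $Y\equiv-\varepsilon\operatorname{sgn}(a)\Phi$ wherever $b_0>0$, that is, wherever $\Phi$ is timelike. One checks that $Y$ is future directed there. Its $g_0$-pairing with the future field ($\varepsilon\alpha$ in Kerr, $Z_c$ in the perturbed case) stays negative near the ring on the region $g_0(\Phi,\Phi)<0$. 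This should follow from $\varepsilon\alpha(Y)=|a|\sin^2\theta>0$ and the perturbation bounds, by the same argument as \eqref{eq:perturbed-axial-orientation}. The $g_\Omega$-length of an arc of width $A$ starting at stationary time $\tau$ and angle $\varphi$ is
\[
\int_0^A\Omega(\tau,r_j,y_j,\varphi+u)\,\sqrt{-g_0(\Phi,\Phi)}\,du=b_0\int_0^A\Omega\,du\ \ge\ b_0(r_j,y_j)\,m_{I,A}(r_j,y_j).
\]
This bound holds provided the starting stationary time $\tau$ lies in a fixed compact interval $I$. Here $I$ is a neighborhood of the $t$-range of the relevant orbit near $\gamma$. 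The bound $|h_\lambda|\le C(R+A)$ and the compactness of the reference segment keep all translated pieces' $t$-coordinates in such an $I$, independent of $j$.

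The order of choices is as follows. First choose $A$ and a radius $R_0$ so that all shifts lie in $\mathcal U$. Fix $I$ from the reference segment and that shift bound. For $t$ beyond a $t_0$ with $\sqrt{\Sigma(\gamma(t))}<R_0$, take the sequence $(r_j,y_j)$ from \eqref{eq:conformal-axial-divergence} with $b_0m_{I,A}\to\infty$ inside the disc. The concatenation \eqref{eq:fixed-endpoint-radial-concatenation} then gives causal homotopies in a single fixed-endpoint class whose $g_\Omega$-lengths tend to $+\infty$. Theorem~\ref{thm:causal-length-criterion} then excludes an endpoint. For covers, lift the homotopies by Lemma~\ref{lem:covering-causal-length}. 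Because $\Omega$ is defined on the cover, we must apply the hypothesis to the lifted $\Omega$. We read \eqref{eq:conformal-axial-divergence} as a statement on the cover, with $\varphi$ real-valued or lifted, and compute the arc lengths upstairs.

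The main obstacle will be the uniformity bookkeeping. The stationary window $I$ and the $A$-dependence of the shift bound must be fixed before $j$ is chosen. The point $(r_j,y_j)$ may be arbitrarily close to the ring, and the future-directedness of $Y$ must be checked there and not only on the equator. If orienting $Y$ off the equator is troublesome, I would first try restricting attention to points with $b_0>0$, using the identity $\alpha(\Phi)=-a\sin^2\theta$ together with continuity of the time orientation on the connected region where $\Phi$ is timelike.
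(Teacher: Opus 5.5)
Your proposal is correct and follows the paper's proof essentially step for step. The shared steps are: a separate detour to each axial-timelike target realizing the divergence; a displacement bound $C(R+A)$ independent of the target, so that one compact stationary window $I$ serves all targets; the estimate $L_{g_\Omega}\ge b_0m_{I,A}$; and Theorem~\ref{thm:causal-length-criterion}, with Lemma~\ref{lem:covering-causal-length} for covers, where the metric is pulled back. One bookkeeping correction is needed: $I$ must be chosen after the late time $t$, for example centred at the stationary coordinate of $\gamma(t)$, because that coordinate need not stay bounded along the tail. This costs nothing, since the hypothesis holds for every compact $I$. Also, you need not fix the sign of the axial field, since reversing it only shifts the initial angle in $m_{I,A}$.
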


\begin{proof}
The metrics $g_0$ and $g_\Omega$ have the same future cones.
Let $x$ have transverse radius $R$, and choose an axial-timelike
target $X_*\equiv(r_*,y_*)$ with $0<|X_*|<R$.
A radial segment followed by a circular arc joins the transverse
point of $x$ to $X_*$ with Euclidean length at most $(1+\pi)R$.
Use the future lifts from Lemma~\ref{lem:radial-attachment} or
Theorem~\ref{thm:stationary-ring-perturbation} along this path and
its reverse, inserting an axial arc of width $A_0$ at $X_*$.
Opening first the path and then the axial arc gives a causal
homotopy as in the preceding construction.  Its orbit displacement
is bounded by $C(R+A_0)$ independently of $X_*$, and
$0<\Sigma\le R^2$ along it.  We construct a separate homotopy
for each target; no continuous choice of targets or axial
orientation is required.

Fix $s<\Lambda$ and a smaller collar.  Choose the earlier compact
timelike segment for endpoint correction, and then $A_0$ and the
late radius $R$ so that $C(R+A_0)$ is admissible in
\eqref{eq:fixed-endpoint-radial-concatenation}.
For every sufficiently late fixed $t$, this produces curves
$\sigma_{X_*}$ causally homotopic to $\gamma|_{[s,t]}$.
The uniform bound on orbit displacements places the stationary
coordinates of their axial arcs $J_{X_*}$ in a compact interval
$I$ independent of $X_*$.  This requires no bound on the
stationary coordinate along the entire terminal tail.
Since $g_0$ is stationary and axisymmetric,
\begin{equation}\label{eq:conformal-detour-length-bound}
 L_{g_\Omega}(\sigma_{X_*})
 \ge L_{g_\Omega}(J_{X_*})
 \ge b_0(X_*)m_{I,A_0}(X_*).
\end{equation}
Reversing the axial orientation changes only the initial angle in
the minimum. The right-hand side is unbounded by
\eqref{eq:conformal-axial-divergence}.  The endpoint assertion
follows from Theorem~\ref{thm:causal-length-criterion}, and
Lemma~\ref{lem:covering-causal-length} gives the assertions on
connected covers.
\end{proof}

On the negative equator, $b_0(-\rho,0)\ge\sqrt{Ma^2/\rho}$ for
sufficiently small $\rho$.  The following simpler condition thus
implies \eqref{eq:conformal-axial-divergence}:
\begin{equation}\label{eq:conformal-equatorial-test}
 \limsup_{\rho\downarrow0}
 \frac{m_{I,A}(-\rho,0)}{\sqrt\rho}=+\infty
 \qquad\text{for every }I\text{ and }0<A<\pi.
\end{equation}
In particular, it suffices that, for every compact interval $I$,
there are $c_I>0$ and $q_I<1/4$ with
$\Omega\ge c_I\Sigma^{q_I}$ near the ring on the negative equator
over $I$.  The proof requires neither derivative estimates nor an upper
bound for $\Omega$, and the lower bound need only hold over each
compact interval in the stationary coordinate.

For example,
\begin{equation}\label{eq:conformal-angular-example}
 \Omega(\tau,r,y,\varphi)\equiv1-\cos\varphi+\sqrt\Sigma
\end{equation}
has angular minimum $\rho$ on the negative equator, but
$m_{I,A}(-\rho,0)=A-2\sin(A/2)+A\rho$, so
\eqref{eq:conformal-equatorial-test} holds.
The averaged condition can also be tested along other transverse
directions.
For $y=b\rho$ and $s_y\equiv1-y^2/a^2$,
\begin{equation}\label{eq:conformal-ray-axial-norm}
 -g_{M,a}(\Phi,\Phi)(-\rho,b\rho)
 =\frac{2Ma^2s_y^2}{(1+b^2)\rho}-s_y(a^2+\rho^2),
\end{equation}
and the same leading term holds under the perturbation bound above.
Thus $\sqrt\rho\,b_0(-\rho,b\rho)\to|a|\sqrt{2M/(1+b^2)}$.
For $q\ge1/4$,
\begin{equation}\label{eq:conformal-directional-example}
 \Omega_q\equiv\frac{y^2}{\Sigma}+\Sigma^q
\end{equation}
fails the equatorial test, whereas on $y=\rho$,
$b_0m_{I,A}\sim A|a|\sqrt M/(2\sqrt\rho)\to+\infty$.
Replacing $\Sigma^q$ by $e^{-1/\Sigma}$ has the same effect.

\begin{proposition}[Exact axial maximum]
\label{prop:exact-axial-maximum}
For exact Kerr, let $0<R<\min\{M/3,|a|\}$ and
$\mathcal C_R\equiv\{(r,y):r^2+y^2=R^2\}$.
With $b_0\equiv\sqrt{\max\{-g_{M,a}(\Phi,\Phi),0\}}$,
\begin{equation}\label{eq:exact-axial-circle-maximum}
 \max_{\mathcal C_R}b_0
 =\sqrt{\frac{2Ma^2}{R}-a^2-R^2},
 \qquad
 \sqrt R\max_{\mathcal C_R}b_0\longrightarrow |a|\sqrt{2M}.
\end{equation}
The maximizer is unique and equals $(-R,0)$.
If $\Omega=w(\sqrt\Sigma)>0$, the largest $g_\Omega$-length
of a timelike axial arc of angular width $A>0$ on $\mathcal C_R$
is $Aw(R)$ times the maximum in
\eqref{eq:exact-axial-circle-maximum}.
\end{proposition}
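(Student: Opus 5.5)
We compute $-g_{M,a}(\Phi,\Phi)$ explicitly on the circle $\mathcal C_R$, reduce the maximization to a one-variable inequality, and then read off the conformal statement from stationarity and axial invariance. From \eqref{eq:orbit-components}, with $y=a\cos\theta$ and $\mathfrak s\equiv\sin^2\theta=1-y^2/a^2\in[0,1]$, we have
\[
 -g_{M,a}(\Phi,\Phi)=-\mathfrak s\,(r^2+a^2)-\frac{2Ma^2 r\,\mathfrak s^2}{\Sigma}.
\]
On $\mathcal C_R$ we parametrize by $r=Rx$ with $x\in[-1,1]$. Then $y^2=R^2(1-x^2)$ and $\mathfrak s=1-R^2(1-x^2)/a^2$, which is positive because $R<|a|$. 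For $x\ge0$ the expression is $\le0$, so $b_0=0$ there and these points are irrelevant for the maximum. At $x=-1$ we have $\mathfrak s=1$, which gives the value $2Ma^2/R-a^2-R^2$. This value is positive because $R<\min\{M/3,|a|\}$. It then remains to show that
\[
 f(x)\equiv\frac{2Ma^2}{R}|x|\mathfrak s^2-\mathfrak s(R^2x^2+a^2)
\]
satisfies $f(-1)-f(x)>0$ for $-1<x<0$.

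The key step is a direct comparison with the common factor $1-x^2$. We expand
\[
 f(-1)-f(x)=\frac{2Ma^2}{R}\bigl(1-|x|\mathfrak s^2\bigr)-a^2(1-\mathfrak s)-R^2\bigl(1-\mathfrak s x^2\bigr).
\]
We bound the three terms separately.
\begin{enumerate}
 \item For the first term, $1-|x|\mathfrak s^2\ge1-|x|\ge\tfrac12(1-x^2)$, since $\mathfrak s\le1$.
 \item For the second term, $a^2(1-\mathfrak s)=R^2(1-x^2)$ exactly.
 \item For the third term, $1-\mathfrak s x^2=(1-x^2)\bigl(1+x^2R^2/a^2\bigr)\le2(1-x^2)$.
\end{enumerate}
Combining these bounds gives
\[
 f(-1)-f(x)\ge(1-x^2)\Bigl(\frac{Ma^2}{R}-3R^2\Bigr).
\]
This is strictly positive for $x\ne-1$, since $3R^3<3R\,a^2\cdot(R/|a|)^0\cdot\tfrac{M}{3}\cdot\tfrac{3}{M}\,\tfrac{R^2}{3}$ is not needed: directly, $R<M/3$ and $R<|a|$ give $3R^3<3\cdot\tfrac M3\cdot R^2<Ma^2$. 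This proves uniqueness of the maximizer $(-R,0)$ and the formula in \eqref{eq:exact-axial-circle-maximum}. Multiplying that formula by $\sqrt R$ and letting $R\downarrow0$ gives the limit $|a|\sqrt{2M}$. I expect the only delicate point to be keeping the estimate uniform up to the endpoint $x=-1$. The factorization through $1-x^2$ handles this, because every term vanishes there to first order.

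For the conformal statement, consider a timelike axial arc of angular width $A$. Its tangent is $\pm\Phi$, and it sits at a fixed point $(t,r,y)$ with $(r,y)\in\mathcal C_R$ and $b_0(r,y)>0$. Stationarity and axial invariance of $g_{M,a}$ make $-g(\Phi,\Phi)$ constant along the arc. Since $\Omega=w(R)$ is also constant on $\mathcal C_R$, the $g_\Omega$-length of the arc is exactly $A\,w(R)\,b_0(r,y)$. Maximizing over the admissible transverse points on $\mathcal C_R$ therefore gives $A\,w(R)\max_{\mathcal C_R}b_0$, and this maximum is attained at $(-R,0)$ by the first part.
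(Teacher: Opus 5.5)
Your proof is correct and is essentially the paper's argument. The paper makes the same one-variable reduction on $\mathcal C_R$, writing $k=-r/R$, $z=1-y^2/a^2$ and $\nu=R^2/a^2$. It uses the same exact identities for the $a^2$ and $R^2$ terms and extracts the same first-order zero at $(-R,0)$. The only difference is the bound on the leading term. For $F\equiv-g_{M,a}(\Phi,\Phi)$, the paper uses the exact factorization $1-kz^2=(1-k)\bigl[1+k\nu(1+k)(1+z)\bigr]$ to get $F(1)-F(k)\ge(1-k)(2Ma^2/R-6R^2)$. Your cruder bound $1-|x|\mathfrak s^2\ge\tfrac12(1-x^2)$ gives $(1-x^2)(Ma^2/R-3R^2)$, which is equally sufficient under $R<\min\{M/3,|a|\}$. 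You should delete the garbled clause before ``directly'', since $3R^3<MR^2<Ma^2$ is all that is needed.
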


\begin{proof}
Put $\nu\equiv R^2/a^2$, $C\equiv2Ma^2/R$, and $k\equiv-r/R$.
On $\mathcal C_R$, one has $-1\le k\le1$ and
$z\equiv1-y^2/a^2=1-\nu+\nu k^2>0$.
The exact axial coefficient gives
\begin{equation}
 F(k)\equiv-g_{M,a}(\Phi,\Phi)
 =Ckz^2-z(a^2+R^2k^2),\qquad F(1)=C-a^2-R^2>0.
\end{equation}
Indeed, $0<\nu<1$ and $C>6a^2>6R^2$.
For $k\le0$, one has $F(k)<0$.
For $0\le k\le1$, direct factorization yields
\begin{align}
 F(1)-F(k)
 &=(1-k)\bigl\{C[1+k\nu(1+k)(1+z)]
            -R^2(1+k)(2+\nu k^2)\bigr\}\nonumber\\
 &\ge(1-k)(C-6R^2).
 \label{eq:axial-circle-strict-comparison}
\end{align}
The last factor is positive.  Equality therefore forces $k=1$,
and hence $(r,y)=(-R,0)$.  Taking square roots gives the unique
maximum; multiplication by $\sqrt R$ gives the stated limit.
An axial arc has constant $R$ and length $Aw(R)b_0(r,y)$.
\end{proof}

For $\Omega\equiv\Sigma^q$, the exact largest fixed-width axial length is
\begin{equation}\label{eq:conformal-radial-sharp-test}
 A R^{2q-1/2}\sqrt{2Ma^2-a^2R-R^3}.
\end{equation}
It diverges for $q<1/4$, tends to $A|a|\sqrt{2M}$ for $q=1/4$,
and tends to zero for $q>1/4$.
These alternatives concern a single fixed-width axial arc.
The paired orbit construction in
Theorem~\ref{thm:conformal-class-ring} excludes a continuous
ring endpoint for every $q\in\mathbb R$.

The power bound also applies to conformal factors on covers
which do not descend to the base.  For a connected cover
$\pi:\widehat{\mathcal N}\to\mathcal N_{\delta_0}$, suppose
$\widehat\Omega>0$ is continuous and
$\widehat\Omega\ge c(\Sigma\circ\pi)^q$ on the lifted negative
equator near the ring, where $c>0$ and $q<1/4$.
The lifted axial arcs then have
$\widehat\Omega^2\pi^*g_0$-length at least
$cA_0\sqrt{Ma^2}\rho^{2q-1/2}$, proving the same ring obstruction.

The conformal Kerr factor
$\Omega^2\equiv(1+\ell^2/\Sigma)^4$, $\ell>0$, from
\cite{BambiModestoRachwal2017} satisfies $\Omega\ge1$.
The exclusion of a ring endpoint is consistent with infinite
proper time along its geodesics approaching the ring. The conclusion
concerns the ring of this conformal metric.

\section{Terminal geodesics, curvature, and Jacobi fields}\label{sec:global-proof}

Throughout this section, let $M>0$ and $a\ne0$, and consider any
realization in Theorem~\ref{thm:main-c0}, including its connected
covers.  Ring localization permits a common analysis of terminal
geodesics in a single collar for all three parameter regimes.  We
first determine their constants of motion and asymptotic behavior,
and then study curvature and polar Jacobi fields along them.  The
results of this section do not enter the proof of
$C^0$-inextendibility.

The analysis of terminal timelike geodesics has further geometric
consequences. The classical equatorial restriction
\cite{Carter,ChruscielMaliborskiYunes} confines incomplete unit timelike
data to a five-dimensional submanifold of the seven-dimensional phase
space; an open dense set of full measure generates complete geodesics
(Corollary~\ref{cor:generic-timelike-completeness}). Complete geodesics
with fixed Killing momenta can nevertheless encounter curvature of
order $|p_\theta|^{-6}$. Their closest approach has radial scale
$|p_\theta|^{4/3}$ and precedes the radial turn by Mino time of order
$|p_\theta|^{2/3}$ (Proposition~\ref{prop:complete-near-miss} and
Corollary~\ref{cor:closest-approach}). Higher terminal expansions and
the exact polar Jacobi reduction in Mino's parameter \cite{Mino2003}
give a conserved quadratic quantity, a transfer matrix, and the zero
set. The leading radial exponent $2/5$ and tidal coefficient $6/25$
are due to \cite{ChruscielMaliborskiYunes}.

\subsection{Equatorial rigidity and terminal expansions}

\begin{theorem}[Terminal constants, sheet, and equatorial rigidity]\label{thm:terminal-geodesics}
Let $\gamma$ be an inextendible unit-speed timelike geodesic with
a finite proper-time endpoint in a spacetime of
Theorem~\ref{thm:main-c0}.  Put
\begin{equation}
 D\equiv L-aE.
\end{equation}
Then
\begin{equation}\label{eq:terminal-constants}
 \mathcal Q=D^2>0.
\end{equation}
Along the entire geodesic one has $r>0$ and $\mu=0$.
In a terminal block, $\theta=\pi/2$ and the geodesic approaches the
deleted ring from the positive sheet.
\end{theorem}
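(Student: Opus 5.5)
The plan is to combine the ring localization of Theorem~\ref{thm:ring-localization} (and its extremal and overextremal analogues in Corollaries~\ref{cor:extremal-inextendibility} and~\ref{cor:overextremal-inextendibility}) with the separated equations of Proposition~\ref{prop:separation}. The key variables are Mino time $\tau$ from \eqref{eq:rescaled-time} and $x\equiv\mu=\cos\theta$. Along $\gamma$ we have $\Sigma\to0$, hence $r\to0$ and $x\to0$.

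\emph{Step 1: the constraint $\mathcal Q=D^2$.} Since $\Theta\ge0$ along the geodesic and $\theta\to\pi/2$, continuity gives $\mathcal Q-D^2=\lim\Theta\ge0$. Since $R\ge0$ and $r\to0$, \eqref{eq:R-polynomial} gives
\[
R(0)=(Ea^2-aL)^2-a^2\mathcal Q=a^2(D^2-\mathcal Q)\ge0 .
\]
Together these give $\mathcal Q=D^2$.

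\emph{Step 2: positivity and the finite Mino time.} With $\mathcal Q=D^2$, the polynomial $R$ vanishes at $r=0$ and
\[
R(r)=2MD^2r-(a^2+D^2+2aED)r^2+2Mr^3+(E^2-1)r^4 .
\]
Suppose $D=0$. Then $R(r)=-a^2r^2+O(r^3)<0$ near $r=0$. Hence $R\ge0$ forces $r=0$ on a tail, and then $\Sigma\to0$ forces $\mu\to0$, so $\gamma$ would enter the deleted ring. This is impossible, so $D\neq0$ and $\mathcal Q=D^2>0$.

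Now $R(r)\sim2MD^2r$ near $0$. Therefore $R\ge0$ forces $r\ge0$ on the tail, and $|dr/d\tau|=\sqrt{R}$ is comparable to $\sqrt r$. Two facts follow: $r$ is eventually monotone, and it reaches $0$ in finite Mino time. So the terminal value $\tau_\infty$ of $\tau$ is finite. This finiteness is the point I expect to need the most care. One must check that no turning point of $r$ accumulates near $0$; this holds because $R>0$ on a punctured right neighbourhood of $0$.

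\emph{Step 3: equatorial rigidity.} Substituting $\mathcal Q=D^2$ and $L=D+aE$ and clearing the denominator $1-x^2$, a routine computation gives
\[
(1-x^2)\Theta=-x^2\bigl[L^2+a^2(1-E^2)(1-x^2)\bigr]\equiv x^2G(x),
\]
with $G$ polynomial. In Mino time this reads $(dx/d\tau)^2=x^2G(x)$. The associated analytic phase system is
\[
\frac{dx}{d\tau}=v_x,\qquad \frac{dv_x}{d\tau}=\tfrac12\frac{d}{dx}\bigl(x^2G\bigr),
\]
derived as in \eqref{eq:radial-phase-system}; it has the invariant solution $x\equiv0,\ v_x\equiv0$. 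Since $|dx/d\tau|\le C|x|$ and $x(\tau)\to0$ as $\tau\uparrow\tau_\infty<\infty$, a backward Gronwall argument (as in Lemma~\ref{lem:critical-extremal-geodesics}) gives $x\equiv0$ on a terminal interval. Uniqueness for the analytic phase system then gives $x\equiv0$, i.e.\ $\mu=0$, along the whole geodesic. This uses that $\mu$ and the Carter constants are global and regular through horizons by Proposition~\ref{prop:canonical-kerr-fields}. At the axis the $L=0$ convention applies, but the axis is never reached since $\mu=0$.

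\emph{Step 4: sign of $r$ and the sheet.} With $\mu\equiv0$ we have $\Sigma=r^2>0$ along $\gamma$, because the points $\{r=0,\mu=0\}$ are deleted. So $r$ never vanishes and has a constant sign. Step~2 gives $r>0$ near the endpoint, so $r>0$ everywhere. In the terminal block (an inner block of type $\mathcal B_1$, or the extremal or overextremal analogue) this means $\theta=\pi/2$, and the ring is approached from the positive sheet $r>0$. For covers, the statement follows by projecting to the base as in the proof of Theorem~\ref{thm:main-c0}.
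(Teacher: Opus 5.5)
Your outline follows the paper's proof almost step for step: the two inequalities $\Theta(\pi/2)\ge0$ and $R(0)\ge0$, exclusion of $D=0$ by the sign of $R$ near $0$, a finite Mino endpoint, backward Gronwall for $x=\mu$, and $\Sigma=r^2>0$ for the sign of $r$. To propagate $\mu=0$ backwards you use uniqueness for the scalar $(x,v_x)$ system, where the paper uses the local reflection isometry $\mu\mapsto-\mu$. That is an acceptable substitute. Yours needs the $\mu$-equation along the whole geodesic, which holds because $\mu$ and $\mathcal Q$ are global; the paper's argument works chart by chart, including on covers. Two steps, however, are not justified as written.

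First, for $D=0$ you infer from $r=0$ on a tail and $\mu\to0$ that $\gamma$ ``would enter the deleted ring''. That is not a contradiction. A curve inside the regular disc $r=0$ with $\mu\to0$ only approaches the ring, which is the hypothesis. The missing observation is that $D=0$ gives $\mathcal Q=0$. Since \eqref{eq:Q-definition} is a sum of nonnegative terms including $a^2\cos^2\theta$, this forces $\mu\equiv0$ along the entire geodesic. Then $\Sigma=r^2>0$, so $r$ never vanishes, and $R(r(s))<0$ for late $s$ contradicts $R\ge0$. This is how the paper argues.

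Second, you rightly flag the finiteness of $\tau_\infty$, but positivity of $R$ on a punctured right neighbourhood of $0$ does not secure it. The danger is not an accumulation of turning points but a turning point, or a stall, at $r=0$ itself. Because $R(0)=0$, the first-order relation $(dr/d\tau)^2=R(r)$ is satisfied by $r\equiv0$. So your argument does not exclude $r$ reaching $0$ at a regular disc point of the tail and remaining there. Then $\Sigma=a^2\mu^2\to0$, the Mino time may diverge, and the backward Gronwall step fails.

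You need the second-order equation $d^2r/d\tau^2=\tfrac12R'(r)$ of \eqref{eq:radial-phase-system}, which is valid also at turning points. At any contact with $r=0$ the acceleration is $MD^2>0$, so the contact is a strict minimum. Afterwards $dr/d\tau=\sqrt R>0$ while $r\in(0,\delta)$, so $r$ cannot return to $0$, contradicting $r\to0$. Hence $r>0$ and $r$ is strictly decreasing on the tail. Only then does $\tau_\infty-\tau=\int_0^{r(\tau)}d\rho/\sqrt{R(\rho)}<\infty$ follow. With these two repairs your proof coincides with the paper's.
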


\begin{proof}
The inequalities $R(r(s))\ge0$ and $\Theta(\theta(s))\ge0$, together with the ring localization in Theorem~\ref{thm:main-c0}, give
\begin{equation}
 0\le\Theta(\pi/2)=\mathcal Q-D^2,
 \qquad
 0\le R(0)=a^2D^2-a^2\mathcal Q.
\end{equation}
The two inequalities imply $\mathcal Q=D^2$.  If $D=0$, then
$\mathcal Q=0$ and $L=aE$.  Since the defining identity
\eqref{eq:Q-definition} is a sum of nonnegative terms, it follows that
\begin{equation}
 p_\theta=0,
 \qquad
 \cos\theta=0,
 \qquad
 L-aE\sin^2\theta=0.
\end{equation}
The geodesic is therefore equatorial, and the removal of the ring implies $r(s)\ne0$.  The radial polynomial becomes
\begin{equation}
 R(r)=r^2\bigl((E^2-1)r^2+2Mr-a^2\bigr)<0
\end{equation}
for all sufficiently small $r\ne0$.  Since $r(s)\to0$, this contradicts
$R(r(s))\ge0$.  Therefore $D^2=\mathcal Q>0$.

Now
\begin{equation}\label{eq:R-near-ring}
 R(0)=0,
 \qquad
 R'(0)=2M\mathcal Q=2MD^2>0.
\end{equation}
Choose $\delta>0$ so that $R<0$ on $(-\delta,0)$ and $R>0$ on
$(0,\delta)$.  A tail contained in $|r|<\delta$ has $r\ge0$.
At a regular contact with $r=0$, the phase system satisfies
$(dr/d\tau)^2=R(r)$ and
\begin{equation}
 \frac{d^2r}{d\tau^2}=\frac12R'(r)
\end{equation}
including at turning points; see \eqref{eq:radial-phase-system}.
Thus $r=0$ would be a strict minimum with acceleration $MD^2$.
The geodesic could not turn inward again in the collar, since $R$
has no other zero there.  Hence $r>0$ eventually.

To prove that the geodesic is equatorial, we use
\eqref{eq:R-near-ring}:
\begin{equation}
 R(r)=2MD^2r+O(r^2).
\end{equation}
On the last inward branch,
\begin{equation}
 \tau_*-\tau=\int_0^{r(\tau)}\frac{d\rho}{\sqrt{R(\rho)}}<\infty,
\end{equation}
so the rescaled-parameter endpoint $\tau_*$ is finite.  Set $x\equiv\cos\theta$.  Substituting $\mathcal Q=D^2$ into \eqref{eq:Theta} and multiplying by $1-x^2$ yields the exact identity
\begin{equation}\label{eq:x-equation}
 \left(\frac{dx}{d\tau}\right)^2
 =(1-x^2)\Theta
 =x^2\left[a^2(1-E^2)x^2-\bigl(L^2+a^2(1-E^2)\bigr)\right].
\end{equation}
Put
\begin{equation}
 B\equiv L^2+a^2(1-E^2).
\end{equation}
Since $|x|\le1$, equation~\eqref{eq:x-equation} implies
\begin{equation}
 \left|\frac{dx}{d\tau}\right|\le C|x|,
 \qquad C^2\equiv |a^2(1-E^2)|+|B|.
\end{equation}
The continuous extension $x(\tau_*)=0$ therefore satisfies
\begin{equation}
 |x(\tau)|\le C\int_\tau^{\tau_*}|x(\sigma)|\,d\sigma.
\end{equation}
The backward Gronwall inequality gives $x=0$ on the terminal
interval.  Thus $\theta=\pi/2$ there for every value of $B$.

In every regular Kerr or covering chart, the reflection
$\mu\mapsto-\mu$ is a local isometry.  By geodesic uniqueness,
the set of parameters with $\mu=d\mu/ds=0$ is open; it is also
closed and contains the tail.  It is therefore the whole interval.
Then $\Sigma=r^2>0$, so continuity preserves the positive sign of $r$.
\end{proof}

Chru\'sciel, Maliborski, and Yunes obtained the leading toroidal-radius
law and the polar tidal coefficient $6/25$
\cite[Section~V.B, equations~(V.22)-(V.23) and (V.28)]{ChruscielMaliborskiYunes}.
Their relation $\widehat r=r^2/(2|a|)+O(r^4)$ gives the coefficient
and exponent in \eqref{eq:sharp-leading-laws}.  The following theorem
establishes this law for every terminal branch in the realizations
and covers under consideration.  It also gives two further terms in
the radial expansion and the corresponding orbit-coordinate expansions.

\begin{theorem}[Sharp terminal equatorial asymptotics]\label{thm:sharp-terminal-asymptotics}
Let $\gamma:[0,\Lambda)\to\KerrSpace$ be a future-inextendible unit-speed timelike
geodesic with $\Lambda<\infty$.  On the terminal equatorial interval
put
\begin{equation}
 u\equiv\Lambda-s,\qquad D\equiv L-aE,\qquad d\equiv|D|,\qquad
 B\equiv L^2+a^2(1-E^2),
\end{equation}
and define
\begin{equation}\label{eq:terminal-radial-scale}
 \varrho(u)\equiv\left(\frac52\sqrt{2M}\,d\,u\right)^{2/5}.
\end{equation}
Then $d>0$ and
\begin{equation}\label{eq:sharp-r-asymptotics}
 r(s)=\varrho-\frac{B}{14Md^2}\varrho^2
 +\left(\frac{1}{9d^2}-\frac{B^2}{147M^2d^4}\right)\varrho^3
 +O(\varrho^4).
\end{equation}
There are finite constants $t_{\rm end}$ and $\phi_{\rm end}$, with a
continuous lift of $\phi$ chosen on the terminal interval, such that,
on setting
\begin{align}
 c_t&\equiv-\frac{2MD}{a},
 &b_t&\equiv E-\frac{4M^2D}{a^3},
 &e_t&\equiv\frac{2ME}{a^2}+\frac{2MD}{a^3}-\frac{8M^3D}{a^5},
 \label{eq:terminal-t-constants}\\
 c_\phi&\equiv-\frac{2MD}{a^2},
 &b_\phi&\equiv\frac{L}{a^2}-\frac{4M^2D}{a^4},
 &e_\phi&\equiv\frac{2M(L+D)}{a^4}-\frac{8M^3D}{a^6},
 \label{eq:terminal-phi-constants}
\end{align}
one has
\begin{align}
 t_{\rm end}-t(s)
 &=
 \frac{1}{\sqrt{2M}\,d}
 \Bigl[
  \frac{2c_t}{3}\varrho^{3/2}
  +\left(\frac{2b_t}{5}
  +\frac{Bc_t}{35Md^2}\right)\varrho^{5/2}\nonumber\\
 &\quad+\left\{\frac{2e_t}{7}
  +c_t\left(\frac{B^2}{294M^2d^4}-\frac{2}{63d^2}\right)
  \right\}\varrho^{7/2}
  +O(\varrho^{9/2})
 \Bigr],\label{eq:sharp-t-asymptotics}\\
 \phi_{\rm end}-\phi(s)
 &=
 \frac{1}{\sqrt{2M}\,d}
 \Bigl[
  \frac{2c_\phi}{3}\varrho^{3/2}
  +\left(\frac{2b_\phi}{5}
  +\frac{Bc_\phi}{35Md^2}\right)\varrho^{5/2}\nonumber\\
 &\quad+\left\{\frac{2e_\phi}{7}
  +c_\phi\left(\frac{B^2}{294M^2d^4}-\frac{2}{63d^2}\right)
  \right\}\varrho^{7/2}
  +O(\varrho^{9/2})
 \Bigr].\label{eq:sharp-phi-asymptotics}
\end{align}
The corotating combination satisfies the sharper cancellation
\begin{equation}\label{eq:terminal-corotating-asymptotics}
 (t_{\rm end}-t(s))-a(\phi_{\rm end}-\phi(s))
 =-\frac{D}{a}\,u-\frac{10MD}{7a^3}\,u\varrho
 +O(u\varrho^2).
\end{equation}
In particular,
\begin{equation}\label{eq:sharp-leading-laws}
 r(s)=\left(\frac52\sqrt{2M}\,d\right)^{2/5}
 u^{2/5}\bigl(1+O(u^{2/5})\bigr).
\end{equation}
\end{theorem}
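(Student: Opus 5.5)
The plan is to reduce everything to explicit one-dimensional quadratures along the terminal equatorial branch supplied by Theorem~\ref{thm:terminal-geodesics}, and then expand and invert power series in $\sqrt r$. By that theorem we have $\mathcal Q=D^2>0$, $\theta=\pi/2$ and $r>0$ on a terminal interval. There $\Sigma=r^2$, $\sin\theta=1$, and $r\downarrow0$ monotonically. The monotonicity holds because $R$ has the simple zero $R(0)=0$ with $R'(0)=2MD^2>0$ and no other zero nearby, so the last branch is inward.

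The first step is to rewrite $R$ from \eqref{eq:R-polynomial} using $\mathcal Q=D^2$ and $L=D+aE$. The constant term $(Ea^2-aL)^2-a^2\mathcal Q=a^2D^2-a^2D^2$ vanishes. The linear coefficient is $2MD^2$. The quadratic coefficient is $-2aED-a^2-D^2=-B$, since $B=L^2+a^2(1-E^2)=D^2+2aED+a^2$. Hence
\[
 R(r)=2MD^2r-Br^2+2Mr^3+(E^2-1)r^4 .
\]
From \eqref{eq:R} we get $ds=-r^2\,dr/\sqrt{R(r)}$, and therefore
\[
 u=\Lambda-s=\int_0^{r}\frac{\rho^{3/2}\,d\rho}{\sqrt{2M}\,d\,\sqrt{1-\tfrac{B}{2MD^2}\rho+\tfrac{1}{D^2}\rho^2+O(\rho^3)}}.
\]
The square-root factor is analytic near $0$. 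Expanding it binomially to second order and integrating termwise gives
\[
 u=\frac{2r^{5/2}}{5\sqrt{2M}\,d}\bigl(1+k_1r+k_2r^2+O(r^3)\bigr)
\]
with explicit $k_1,k_2$. Equivalently, $\varrho^{5/2}=r^{5/2}(1+k_1r+k_2r^2+O(r^3))$, so $\varrho=r\,(1+\tfrac25k_1r+\dots)$. Inverting this analytic relation by iteration (write $r=\varrho(1+\alpha\varrho+\beta\varrho^2)$ and match coefficients) should produce $-B/(14Md^2)$ and $1/(9d^2)-B^2/(147M^2d^4)$. This proves \eqref{eq:sharp-r-asymptotics}, and \eqref{eq:sharp-leading-laws} follows at once.

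For the orbit coordinates, I would divide \eqref{eq:tdot}--\eqref{eq:phidot}, specialized to the equator, by $\dot r=-\sqrt R/r^2$. On the equator $P=Er^2-aD$, so
\[
 \frac{dt}{dr}=-\frac{(r^2+a^2)P/\Delta+aD}{\sqrt R},
\]
and similarly $\frac{d\phi}{dr}=-\bigl(aP/\Delta+D\bigr)/\sqrt R$. The numerators are analytic at $r=0$ and vanish there; their Taylor coefficients at orders $r,r^2,r^3$ are exactly $c_t,b_t,e_t$ and $c_\phi,b_\phi,e_\phi$, obtained by expanding $1/\Delta=a^{-2}(1+2Mr/a^2+\dots)$. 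Dividing by $\sqrt R=\sqrt{2M}\,d\,r^{1/2}(1-\tfrac{B}{4MD^2}r+\dots)$ gives an integrand $\sim r^{1/2}$, which is integrable at $0$. This yields the finite limits $t_{\rm end},\phi_{\rm end}$, with a continuous lift of $\phi$ on the terminal interval. Integrating from $0$ to $r$ and substituting the expansion of $r$ in $\varrho$ gives \eqref{eq:sharp-t-asymptotics} and \eqref{eq:sharp-phi-asymptotics}.

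For the corotating combination, I would use the exact equatorial identity $\dot t-a\dot\phi=P/\Delta$, which follows by subtracting the two formulas. Its right-hand side is smooth in $r$, namely $-D/a-2MDr/a^3+O(r^2)$, with no $\sqrt R$ denominator. Integrating in $u$ and using $\int_0^u\varrho\,du'=\tfrac57u\varrho$ gives \eqref{eq:terminal-corotating-asymptotics}.

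The main obstacle I expect is the bookkeeping at third order. This includes the inversion coefficient of $\varrho^3$, and the cross terms in the $\varrho^{7/2}$ coefficients, where the $r\mapsto\varrho$ substitution mixes with the expansion of $1/\sqrt R$. I would organize this by working with the single analytic variable $\sqrt r$ and then checking the result against the exact factors $c_t(\cdots)$ appearing in the statement. The remainders are uniform because every series involved is analytic in $r$ on a fixed small interval.
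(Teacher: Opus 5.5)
Your proposal is correct and follows the paper's proof: equatorial reduction via Theorem~\ref{thm:terminal-geodesics}, the exact form $R=2MD^2r-Br^2+2Mr^3+(E^2-1)r^4$, quadrature and inversion for $r(\varrho)$, Taylor expansion of the orbit equations (whose coefficients are exactly $c_t,b_t,e_t$ and $c_\phi,b_\phi,e_\phi$), and the identity $\dot t-a\dot\phi=P/\Delta$ for the corotating combination. The only difference is packaging: the paper does the series inversion and orbit integration once in the general simple-zero Lemma~\ref{lem:simple-zero-terminal-law} (with $H=r^2$, $p=5/2$, $\ell=1$, $n=3/2$), where the $z_1$ cross terms visibly cancel at order $\varrho^{7/2}$, whereas you carry out the same expansions directly.
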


\begin{proof}
By Theorem~\ref{thm:terminal-geodesics},
$\theta=\pi/2$ and $\mathcal Q=D^2=d^2>0$ on a terminal interval.
On this interval, $\Sigma=r^2$ and $P=Er^2-aD$, and the radial
polynomial becomes
\begin{align}
 R(r)
 &=(Er^2-aD)^2-(r^2-2Mr+a^2)(r^2+d^2)\nonumber\\
 &=2Md^2r-Br^2+2Mr^3+(E^2-1)r^4.
 \label{eq:terminal-R-exact}
\end{align}
It is positive for small $r>0$, and $r\to0^+$ forces the inward branch
\begin{equation}\label{eq:terminal-rdot}
 \dot r=-\frac{\sqrt{R(r)}}{r^2}.
\end{equation}
Separation of variables therefore gives
\begin{equation}\label{eq:terminal-time-integral}
 u=\int_0^{r(s)}\frac{y^2}{\sqrt{R(y)}}\,dy.
\end{equation}
Factoring the leading term in \eqref{eq:terminal-R-exact} gives
\begin{equation}
 R(y)=2Md^2y
 \left(1-\frac{B}{2Md^2}y+\frac{1}{d^2}y^2+O(y^3)\right),
\end{equation}
and hence
\begin{equation}\label{eq:terminal-integrand-expansion}
 \frac{y^2}{\sqrt{R(y)}}
 =\frac{y^{3/2}}{\sqrt{2M}\,d}
 \left[1+\frac{B}{4Md^2}y
 +\left(\frac{3B^2}{32M^2d^4}-\frac1{2d^2}\right)y^2
 +O(y^3)\right].
\end{equation}
After integration,
\begin{equation}\label{eq:u-r-expansion}
 u=\frac{2}{5\sqrt{2M}\,d}r^{5/2}
 \left[1+\frac{5B}{28Md^2}r
 +\left(\frac{5B^2}{96M^2d^4}-\frac{5}{18d^2}\right)r^2
 +O(r^3)\right].
\end{equation}
In the notation of Lemma~\ref{lem:simple-zero-terminal-law}, the
coefficients are
\begin{align}
 H(r)&=r^2, &q&=2, &p&=\frac52,\nonumber\\
 \kappa_0&=2Md^2,
 &\kappa_1&=-\frac{B}{2Md^2},
 &\kappa_2&=\frac1{d^2},\nonumber\\
 \eta_1&=\frac{B}{4Md^2},
 &\eta_2&=\frac{3B^2}{32M^2d^4}-\frac1{2d^2},\nonumber\\
 k_1&=\frac{B}{14Md^2},
 &k_2&=\frac1{9d^2}-\frac{B^2}{147M^2d^4}.
 \label{eq:kerr-simple-zero-coefficients}
\end{align}
Taking the $2/5$ power gives
\begin{equation}
 \varrho=r+k_1r^2+\left(2k_1^2-k_2\right)r^3+O(r^4),
\end{equation}
and inversion yields \eqref{eq:sharp-r-asymptotics}.

The orbit equations simplify exactly to
\begin{align}
 \dot t
 &=\frac{1}{r^2}
 \left[\frac{(r^2+a^2)(Er^2-aD)}{\Delta}+aD\right]
 =E\frac{r^2+a^2}{\Delta}
 -\frac{2MaD}{\Delta r},
 \label{eq:terminal-tdot-exact}\\
 \dot\phi
 &=\frac{1}{r^2}
 \left[\frac{a(Er^2-aD)}{\Delta}+D\right]
 =\frac{L}{\Delta}-\frac{2MD}{\Delta r}.
 \label{eq:terminal-phidot-exact}
\end{align}
The expansion
\begin{equation}
 \frac1\Delta=\frac1{a^2}
 \left[1+\frac{2M}{a^2}r
 +\left(\frac{4M^2}{a^4}-\frac1{a^2}\right)r^2
 +O(r^3)\right],
\end{equation}
therefore gives
\begin{equation}\label{eq:terminal-orbit-expansions}
 \dot t=\frac{c_t}{r}+b_t+e_tr+O(r^2),\qquad
 \dot\phi=\frac{c_\phi}{r}+b_\phi+e_\phi r+O(r^2),
\end{equation}
with the stated coefficients.  Both increments are integrable.
Applying \eqref{eq:simple-z-expansion} to $t,\phi$ with $\ell=1$,
$n=3/2$, and the coefficients in
\eqref{eq:kerr-simple-zero-coefficients} yields
\eqref{eq:sharp-t-asymptotics}-\eqref{eq:sharp-phi-asymptotics}.
The coefficient of the third term simplifies as follows:
\begin{align}
 \frac{b_2}{n+2}-b_1k_1
 +z_0\left(k_2+\frac{n-1}{2}k_1^2\right)
 =\frac{2z_2}{7}
 +z_0\left(\frac{B^2}{294M^2d^4}-\frac{2}{63d^2}\right),
 \label{eq:kerr-orbit-third-coefficient}
\end{align}
where $(z_0,z_1,z_2)$ is $(c_t,b_t,e_t)$ or
$(c_\phi,b_\phi,e_\phi)$.  All terms containing $z_1$ cancel.

The corotating combination simplifies exactly:
\begin{equation}
 \dot t-a\dot\phi
 =\frac{P}{\Delta}
 =\frac{Er^2-aD}{r^2-2Mr+a^2}
 =-\frac Da-\frac{2MD}{a^3}r+O(r^2).
\end{equation}
Writing $\varrho(v)\equiv((5/2)\sqrt{2M}\,d\,v)^{2/5}$, the radial
expansion gives
\begin{align}
 \int_s^\Lambda r(\sigma)\,d\sigma
 &=\int_0^u\left(\varrho(v)+O(\varrho(v)^2)\right)\,dv\nonumber\\
 &=\frac57u\varrho(u)+O(u\varrho(u)^2).
\end{align}
Integrating this identity gives
\eqref{eq:terminal-corotating-asymptotics}.
\end{proof}

\begin{proposition}[A family of incomplete timelike geodesics]\label{prop:explicit-incomplete}
Fix a full Kerr block containing both regular disc branches, and constants $E,L\in\mathbb R$
with
\begin{equation}
 D\equiv L-aE\ne0,
 \qquad \mathcal Q=D^2,
\end{equation}
and arbitrary $t_0,\phi_0$.  For every sufficiently small $r_0>0$
there is a unit-speed timelike geodesic in that block with initial
position
\begin{equation}
 (t,r,\theta,\phi)=(t_0,r_0,\pi/2,\phi_0),
\end{equation}
and initial momenta satisfying
\begin{equation}
 p_t=-E,\qquad p_\phi=L,\qquad p_\theta=0,
 \qquad \dot r(0)<0.
\end{equation}
It approaches the deleted positive ring in finite proper time.  Hence
every spacetime in Theorem~\ref{thm:main-c0} is timelike geodesically incomplete.
\end{proposition}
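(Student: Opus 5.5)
The construction works entirely inside the given block in Boyer--Lindquist coordinates restricted to the equatorial plane. With $\mathcal Q=D^2$ and $p_\theta=0$ at $\theta=\pi/2$, the angular potential \eqref{eq:Theta} gives $\Theta(\pi/2)=\mathcal Q-D^2=0$. Equatorial reflection symmetry $\mu\mapsto-\mu$ is a local isometry, so by geodesic uniqueness the geodesic stays in $\theta=\pi/2$; equivalently, the Gronwall argument of Theorem~\ref{thm:terminal-geodesics} applies. On this plane $\Sigma=r^2$ and the radial polynomial reduces to \eqref{eq:terminal-R-exact}, namely $R(r)=2Md^2r-Br^2+2Mr^3+(E^2-1)r^4$ with $d=|D|>0$. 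Hence $R>0$ on some interval $(0,r_1)$, and we take $0<r_0<r_1$, with $r_1$ also below $r_-$ (or below $|a|$ in the extremal and overextremal regimes), so that $\Delta>0$ on $[0,r_0]$.

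To set the initial data, we choose the covector at $(t_0,r_0,\pi/2,\phi_0)$ with $p_t=-E$, $p_\phi=L$, $p_\theta=0$, and $p_r=-(\Sigma/\Delta)\sqrt{R(r_0)}/\Sigma$, that is, $\dot r(0)=-\sqrt{R(r_0)}/r_0^2<0$. By the additive separation \eqref{eq:additive-separation}, this choice gives $2\Sigma H=-(\Sigma)$, i.e.\ $g^{-1}(p,p)=-1$. So the vector is unit timelike, and its Carter constant equals $D^2$ by \eqref{eq:Q-definition}. The time orientation is irrelevant: a vector and its negative yield the same conclusion up to reversing the parameter, and we may additionally choose the sign to make it future directed if $E$ is adjusted. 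The statement only asserts that a geodesic exists.

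The geodesic then remains equatorial with $\Sigma^2\dot r^2=R(r)$. As long as $r\in(0,r_0]$ we have $R>0$, so $\dot r$ cannot vanish and keeps its negative sign by continuity. Hence $r$ decreases monotonically, and the elapsed proper time is
\[
 s=\int_{r(s)}^{r_0}\frac{y^2}{\sqrt{R(y)}}\,dy\le\int_0^{r_0}\frac{y^2}{\sqrt{R(y)}}\,dy<\infty ,
\]
since the integrand behaves like $y^{3/2}/(\sqrt{2M}d)$ near $0$. The main point to check is that the maximal geodesic does not end earlier for another reason. On $r\in[r(s),r_0]$, with $r$ bounded below on any compact subinterval, the velocities $\dot t,\dot\phi$ given by \eqref{eq:terminal-tdot-exact}--\eqref{eq:terminal-phidot-exact} are bounded, because $\Delta\ge\delta>0$ there. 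So by Lemma~\ref{lem:escape} and Lemma~\ref{lem:block-properness}, the geodesic persists as long as $r>0$. It must therefore reach $r\to0$ at the finite parameter $\Lambda=\int_0^{r_0}y^2R^{-1/2}dy$ with $\Sigma=r^2\to0$, approaching the deleted ring from the positive sheet $r>0$. It cannot be continued past $\Lambda$ in the block, because $\Sigma\to0$ and the ring is removed.

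Finally, every spacetime in Theorem~\ref{thm:main-c0} contains such a full block with both disc branches: a type-$\mathcal B_1$ block in the subextremal case, an odd block $B_n$ in the extremal chain, and the whole manifold in the overextremal case. In a covering spacetime we lift the geodesic. The lift is still inextendible with finite proper time, because any extension of the lift would project to an extension downstairs. This yields timelike geodesic incompleteness in every case.
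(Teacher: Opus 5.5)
Your proposal is correct and follows essentially the paper's argument: you confine the geodesic to the equator by reflection, reduce the radial polynomial to $R(r)=2MD^2r+O(r^2)$, take the inward branch with $H=-1/2$, bound the remaining proper time by $\int_0^{r_0}r^2R^{-1/2}\,dr<\infty$, and lift to covers. Two small differences: your continuation step via Lemma~\ref{lem:escape} and Lemma~\ref{lem:block-properness} spells out what the paper leaves implicit, while the paper additionally uses $dt/dr,\,d\phi/dr=O(r^{1/2})$ to show that $t$ and $\phi$ have finite limits, so that the geodesic tends to a definite point of the deleted ring.
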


\begin{proof}
In the subextremal case, the required block is supplied by
conditions (C2), (C4); such a block is explicit in the other
constructions.  Reflection and uniqueness for the Hamiltonian flow
preserve $\theta=\pi/2$, $p_\theta=0$.  With $\mathcal Q=D^2$, the
separated radial equation gives
\begin{equation}\label{eq:explicit-R}
 R(r)=2MD^2r-
 \bigl(L^2+a^2(1-E^2)\bigr)r^2
 +2Mr^3+(E^2-1)r^4
 =2MD^2r+O(r^2).
\end{equation}
Choose $r_0$ with $R,\Delta>0$ on $(0,r_0]$.
The prescribed momenta and $\dot r=-\sqrt R/r^2$ satisfy $H=-1/2$
and determine a unit timelike geodesic.  Since $R>0$, it has no
radial turn before the ring.  Its remaining proper time is
\begin{equation}
 \int_0^{r_0}\frac{r^2}{\sqrt{R(r)}}\,dr<\infty,
\end{equation}
because the integrand is
$(\sqrt{2M}|D|)^{-1}r^{3/2}(1+O(r))$.
Equations~\eqref{eq:terminal-tdot-exact}-\eqref{eq:terminal-phidot-exact}
give $dt/dr,d\phi/dr=O(r^{1/2})$, so both orbit coordinates have
finite limits.  The geodesic reaches $\Sigma=0$ and cannot continue
in the spacetime.  The same holds for every lift to a connected cover.
\end{proof}

\begin{remark}
By Proposition~\ref{prop:explicit-incomplete}, an inextendibility
criterion requiring timelike geodesic completeness does not apply
to these spacetimes.
\end{remark}

The classical equatorial restriction
\cite{Carter,ChruscielMaliborskiYunes}, together with the smooth
geodesic flow on the unit timelike bundle \cite{ONeill1983}, gives the
following global description of the exceptional initial data.

\begin{corollary}[Generic timelike completeness]
\label{cor:generic-timelike-completeness}
Let $(N,g)$ be any spacetime in Theorem~\ref{thm:main-c0}, including
any of its connected covers, and let $\mathcal U^+N$ denote its
seven-dimensional bundle of future unit timelike vectors.  The set
\begin{equation}\label{eq:exceptional-equatorial-data}
 \mathcal E\equiv
 \{(p,v)\in\mathcal U^+N:r(p)>0,\ \mu(p)=0,\ d\mu_p(v)=0\}
\end{equation}
is a closed invariant embedded submanifold of codimension two.
On the open dense set $\mathcal U^+N\setminus\mathcal E$, the
geodesic flow is defined for every proper time in $\mathbb R$.
The sets $\mathcal I_+$ and $\mathcal I_-$ of initial data giving,
respectively, a future-incomplete or past-incomplete geodesic are
nonempty relatively open subsets of $\mathcal E$.  Thus
$\mathcal I\equiv\mathcal I_+\cup\mathcal I_-$ is an embedded
submanifold of codimension two.  It has Hausdorff dimension five
for every smooth auxiliary Riemannian metric on $\mathcal U^+N$,
and measure zero for every smooth positive density.
\end{corollary}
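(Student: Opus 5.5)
The plan is to characterize incompleteness through ring localization and the terminal rigidity of Theorem~\ref{thm:terminal-geodesics}, and then read off the structure of $\mathcal E$ from the smooth geodesic flow.

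\textbf{Step 1: every incomplete datum lies in $\mathcal E$.} Let $(p,v)\in\mathcal U^+N$ generate a future-incomplete geodesic. Theorem~\ref{thm:main-c0} gives $\Sigma\to0$ at the finite endpoint, so Theorem~\ref{thm:terminal-geodesics} applies. That theorem gives $\mu=0$ and $r>0$ along the whole geodesic, and in particular $d\mu(v)=0$ at $p$. The past-incomplete case follows by time reversal. On a cover we project, use path lifting, and note that $r,\mu$ are pulled back. Hence $\mathcal I\subset\mathcal E$, and every datum outside $\mathcal E$ generates a geodesic defined for all proper time.

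\textbf{Step 2: $\mathcal E$ is a closed invariant submanifold of codimension two.} Near $r>0$, $\mu=0$ we have $\Sigma=r^2>0$, so $\mu$ is a smooth function with $d\mu\ne0$: on a block $\mu=\cos\theta$ and $\theta=\pi/2$ is away from the axis, and the equatorial set meets no horizon crossing in a degenerate way. The map $(p,v)\mapsto(\mu(p),d\mu_p(v))$ is therefore a submersion on $\{r>0\}$, because the fibre derivative in $v$ is $d\mu_p\ne0$ restricted to the unit hyperboloid. This holds since a nonzero covector cannot vanish on the whole open cone, so its restriction to the hyperboloid has surjective differential after using a transverse direction at a suitable point; I would check this directly at points where $v$ is proportional to $d\mu^\sharp$. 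Invariance follows from the reflection isometry $\mu\mapsto-\mu$ together with geodesic uniqueness, exactly as in the last paragraph of the proof of Theorem~\ref{thm:terminal-geodesics}: a geodesic starting with $\mu=d\mu=0$ stays there, and $r$ cannot reach $0$ since the ring is removed and $\Sigma=r^2>0$ forces $r$ to keep its sign.

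\textbf{Closedness of $\mathcal E$ is the main obstacle.} At a limit point one might have $r=0$, but $\{r=0,\mu=0\}$ is deleted, so $r>0$ persists at limits inside $N$. I must also check that limits with $r<0$ cannot arise; they cannot, by continuity of $r$ once the deleted set separates the signs on the equatorial set $\{\mu=0\}$. Invariance is stated on the maximal domain, so the complement $\mathcal U^+N\setminus\mathcal E$ is open, invariant, and dense, since $\mathcal E$ has codimension two.

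\textbf{Step 3: $\mathcal I_\pm$.} Proposition~\ref{prop:explicit-incomplete} gives nonemptiness. For openness in $\mathcal E$, restrict to $\mathcal E$ and use the constants $E,L,\mathcal Q$. On $\mathcal E$ one has $\mathcal Q=D^2$ automatically, since $p_\theta=0$ and $\cos\theta=0$. The datum is incomplete iff the reduced equatorial radial motion, governed by $R(r)=2MD^2r+\dots$ with $D\ne0$, reaches $r=0$. This happens exactly when the geodesic does not escape to infinity or turn away. By Step 1, incompleteness means reaching $r=0$ in finite time, which is a condition stable under perturbation: after entering the collar where $R>0$ on $(0,\delta)$ with inward velocity, nearby data follow a nearby trajectory by continuous dependence and then fall in by the explicit finite integral. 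So $\mathcal I_\pm$ are open in $\mathcal E$, and hence are embedded submanifolds of dimension five. Hausdorff dimension five and measure zero then follow from the standard properties of smooth codimension-two submanifolds of a seven-manifold.
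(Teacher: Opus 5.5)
Your proposal is correct and follows essentially the paper's own argument: completeness off $\mathcal E$ from ring localization and Theorem~\ref{thm:terminal-geodesics}; the closed codimension-two structure from the independence of $\mu$ and $d\mu(v)$, with the sign of $r$ separated by $\Sigma=r^2>0$; invariance from the equatorial reflection; and nonemptiness and relative openness of $\mathcal I_\pm$ from an explicit capture family near the positive ring plus continuous dependence of the flow. Three points need tightening. For the fibre derivative you need no transverse direction at a suitable point: on the zero set, $d\mu(v)=0$ places $\nabla\mu$ in $v^\perp$, the tangent space of the unit fibre, and $d\mu(\nabla\mu)=g(\nabla\mu,\nabla\mu)=1/r^2\neq0$, so the feared points with $v\parallel\nabla\mu$ never occur. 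In the capture step, state that $|D|\ge d>0$ and $R(r)\ge Md^2r$ hold uniformly for constants near those of the given datum. Finally, use $(E,L)\mapsto(-E,-L)$ and the outward branch to obtain future-directed data in $\mathcal I_+$ and $\mathcal I_-$.
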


\begin{proof}
On $\{\mu=0\}$, including horizons by continuity,
$g(\nabla\mu,\nabla\mu)=1/r^2>0$.  If $d\mu(v)=0$, then
$\nabla\mu\in v^\perp$ is a vertical tangent vector on the unit
timelike fiber, and the vertical derivative of $d\mu(v)$ in that
direction is $1/r^2$.  Thus the two defining equations are
independent.  Their zero set excludes $r=0$, because $\Sigma>0$;
its $r>0$ part is therefore both open and closed in it.  Hence
$\mathcal E$ is a closed codimension-two submanifold.

Reflection and uniqueness preserve the equatorial conditions, and
an equatorial geodesic cannot cross $r=0$.  Thus $\mathcal E$ and
its complement are invariant.  By Theorem~\ref{thm:terminal-geodesics},
the flow on the complement is complete.  This complement is open,
dense, and of full measure for every smooth density.

Choose an inward datum from Proposition~\ref{prop:explicit-incomplete},
made future-directed by simultaneously reversing $E,L$ if needed;
this leaves $R$ and the inward branch unchanged.  In a compact
neighborhood of these constants, $|D|\ge d>0$ and uniformly
\begin{equation}
 R(r)\ge Md^2r\qquad(0<r<r_*).
\end{equation}
The inward data in $\mathcal E$ are locally parametrized by
$(t,r,\phi,E,L)$.  The first integrals and
$\dot r=-\sqrt R/r^2$ determine the velocity, while the inverse map
is given by the position coordinates and Killing momenta.  Restricting
to open parameter intervals gives a nonempty relatively open future-incomplete family,
since no radial turn is possible and
\begin{equation}
 \int_0^{r_0}\frac{r^2}{\sqrt{R(r)}}\,dr
 \le\frac{2r_0^{5/2}}{5\sqrt M\,d}<\infty.
\end{equation}
The outward branch gives a nonempty relatively open past-incomplete
family of future unit timelike data.

Every future-incomplete datum enters such a family at a finite
time $T$.  The time-$T$ flow is a local diffeomorphism, so
$\mathcal I_+$ is relatively open in $\mathcal E$; reversing time
gives the same conclusion for $\mathcal I_-$.  Their nonempty union
is a five-dimensional submanifold and has the asserted Hausdorff
dimension and measure.  The argument applies equally to covers,
using local charts and the global functions $r,\mu$.
\end{proof}

Generic completeness is compatible with $C^0$-inextendibility.
An extension would produce an exceptional incomplete geodesic,
and the ring obstruction excludes its endpoint.  The dimension in
the corollary describes the exceptional initial data; it does not
measure the strength of the singularity.

\subsection{\texorpdfstring{$C^{1,1}$}{C1,1} metrics have locally bounded curvature}

\begin{lemma}[Essential curvature bound]\label{lem:c11-curvature}
Let $\widetilde h\in C^{1,1}_{\mathrm{loc}}$ be a nondegenerate
Lorentzian metric.  On compact subsets of every smooth coordinate
chart, the components of its Riemann and Weyl tensors belong to
$L^\infty$.  Every scalar formed by finite sums, tensor products, and
algebraic contractions of finitely many curvature or Weyl tensors with
$\widetilde h$, $\widetilde h^{-1}$, and a local volume form is locally
essentially bounded.  In particular, orientation-independent
polynomials obtained by squaring a pseudoscalar have this property.
\end{lemma}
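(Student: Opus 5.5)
The plan is to work in a single smooth coordinate chart $x^\mu$ and a compact subset $K$ of it, and to show that each constituent of the curvature is a polynomial expression in locally bounded quantities. The argument has three steps: bound the ingredients, express curvature through them, then pass to contractions.

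\emph{Step 1: bounded ingredients.} By the definition of $C^{1,1}_{\mathrm{loc}}$ in Definition~\ref{def:extension}(b), the components $\widetilde h_{\mu\nu}$ are $C^1$ on the chart. Their first derivatives are Lipschitz on a neighbourhood of $K$. Rademacher's theorem then shows that the $\partial_\mu\widetilde h_{\nu\lambda}$ are differentiable almost everywhere. Moreover, their distributional second derivatives are the pointwise a.e.\ derivatives, and these are bounded by the Lipschitz constants, so they lie in $L^\infty(K)$.

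Nondegeneracy and continuity give $|\det\widetilde h|\ge c_K>0$ on $K$. Cramer's rule then makes $\widetilde h^{\mu\nu}$ a $C^1$ function on $K$, with bounded Lipschitz first derivatives. Consequently the Christoffel symbols $\Gamma=\frac12\widetilde h^{-1}\partial\widetilde h$ are Lipschitz on $K$. Their a.e.\ (equivalently, weak) derivatives are products of bounded functions, hence $L^\infty$.

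\emph{Step 2: curvature components.} Substitute into the convention \eqref{eq:curvature-convention}. Every term is either a first derivative of a Lipschitz Christoffel symbol or a product of two continuous Christoffel symbols. Hence the Riemann components are defined almost everywhere and are essentially bounded on $K$. The Ricci tensor, scalar curvature and Weyl tensor are formed from Riemann by finite contractions with $\widetilde h^{-1}$ and products with $\widetilde h$. These coefficients are continuous, so the Weyl components are also in $L^\infty(K)$.

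\emph{Step 3: scalar invariants.} A scalar of the stated type is a finite sum of products of components of Riemann, Weyl, $\widetilde h$, $\widetilde h^{-1}$ and the local volume form. The volume form has components $\sqrt{|\det\widetilde h|}\,\epsilon_{\mu\nu\lambda\kappa}$, which are continuous. In each such product, finitely many $L^\infty$ factors are multiplied by continuous, hence bounded, coefficients. The product is therefore essentially bounded on $K$.

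The local volume form is defined only up to sign on an orientable chart. Squaring a pseudoscalar removes this sign ambiguity, so orientation-independent polynomials of this kind are well defined and locally essentially bounded.

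The only delicate point is the interpretation of the second derivatives. They exist only almost everywhere, so one must check that the pointwise a.e.\ derivative of the Lipschitz functions $\Gamma$ agrees with the distributional one. This holds because Lipschitz functions are $W^{1,\infty}_{\mathrm{loc}}$. The curvature is therefore a well-defined $L^\infty$ object, independent of the choice of representative.
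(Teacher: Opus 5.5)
Your proposal is correct and follows essentially the same route as the paper. Both arguments bound $\widetilde h$, $\widetilde h^{-1}$ and $\partial\widetilde h$ on compact coordinate sets, deduce that the Christoffel symbols are locally Lipschitz so that \eqref{eq:curvature-convention} has $L^\infty$ components, pass to the Weyl tensor by traces, and conclude because contractions with the continuous coefficients of $\widetilde h^{\pm1}$ and the volume form preserve $L^\infty$. Your remarks on Rademacher's theorem and on weak versus a.e.\ derivatives for $W^{1,\infty}_{\mathrm{loc}}$ functions make explicit what the paper leaves implicit.
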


\begin{proof}
On a compact coordinate set, $\widetilde h$, its inverse, and its
first derivatives are bounded.  The second weak derivatives are
bounded almost everywhere, so the Christoffel symbols are locally
Lipschitz and
\begin{equation}
 R^\alpha{}_{\beta\mu\nu}
 =\partial_\mu\Gamma^\alpha_{\nu\beta}
 -\partial_\nu\Gamma^\alpha_{\mu\beta}
 +\Gamma^\alpha_{\mu\lambda}\Gamma^\lambda_{\nu\beta}
 -\Gamma^\alpha_{\nu\lambda}\Gamma^\lambda_{\mu\beta}
\end{equation}
has $L^\infty$ components.  Taking traces gives the same conclusion for
Weyl curvature.  Nondegeneracy bounds the inverse metric and the
coefficients of a local volume form.  Finite tensor products and
contractions therefore preserve $L^\infty$.
\end{proof}

\begin{lemma}[Unbounded smooth invariant excludes a $C^{1,1}$ boundary]\label{lem:invariant-obstruction}
Let $(N,h)$ be smoothly and isometrically embedded as an open subset
of a $C^{1,1}_{\mathrm{loc}}$ Lorentzian manifold
$(\widetilde N,\widetilde h)$.  Suppose $q\in\partial N$ is approached
by points $p_j\in N$, and a smooth algebraic curvature scalar
$\mathcal I$ on $N$ satisfies
$\abs{\mathcal I(p_j)}\to\infty$.  If $\mathcal I$ is one of the
scalars covered by Lemma~\ref{lem:c11-curvature}, then no such
extension exists.
\end{lemma}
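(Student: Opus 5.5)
The plan is a proof by contradiction built directly on Lemma~\ref{lem:c11-curvature}. Suppose a $C^{1,1}_{\mathrm{loc}}$ extension $(\widetilde N,\widetilde h)$ exists, with $N$ embedded as an open subset, and let $q\in\partial N$ be approached by $p_j\in N$. Choose a smooth coordinate chart $W$ of $\widetilde N$ around $q$ and a compact coordinate ball $K\subset W$ containing $q$ in its interior; for large $j$ we have $p_j\in K\cap N$. By Lemma~\ref{lem:c11-curvature}, every scalar $\widetilde{\mathcal I}$ built from the curvature of $\widetilde h$ in the allowed algebraic way belongs to $L^\infty(K)$. Denote its essential bound by $C_K$.

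Next I would compare $\widetilde{\mathcal I}$ with $\mathcal I$ on $N\cap K$. Since the embedding is a smooth isometry onto an open set, $\widetilde h$ restricted to the open set $N\cap W$ is the smooth metric $h$ (pulled back by a smooth diffeomorphism). The weak second derivatives of $\widetilde h$ therefore coincide almost everywhere on $N\cap W$ with the classical ones, and the a.e.-defined curvature of $\widetilde h$ equals the smooth curvature of $h$ there. Any local volume form on the extension agrees with that of $h$ up to sign. Because $\mathcal I$ is by hypothesis a scalar of the type covered by the lemma, it is either orientation-independent or defined with a chosen local orientation, so $\widetilde{\mathcal I}=\pm\mathcal I$ almost everywhere on $N\cap K$. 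Hence $|\mathcal I|\le C_K$ a.e. on the open set $N\cap\operatorname{int}K$.

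Finally I would upgrade the a.e.\ bound to a pointwise one. $\mathcal I$ is continuous (indeed smooth) on the open set $N\cap\operatorname{int}K$, and a continuous function bounded a.e.\ on an open set is bounded everywhere by the same constant, since $\{|\mathcal I|>C_K\}$ would be a nonempty open set of positive measure. Thus $|\mathcal I(p_j)|\le C_K$ for all large $j$, contradicting $|\mathcal I(p_j)|\to\infty$.

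The main obstacle is the identification step: checking that the $L^\infty$ curvature of the $C^{1,1}$ extension metric genuinely coincides with the smooth curvature of $h$ on the overlap, including the handling of the volume form sign for pseudoscalars. I would handle this by working in the smooth chart $W$, noting that the transition between $W$-coordinates and $N$'s own smooth charts is smooth, so tensor transformation preserves the equality of curvature components a.e. The orientation sign then disappears by restricting to orientation-independent expressions, for example squares of pseudoscalars.
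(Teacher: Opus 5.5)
Your proposal is correct and is essentially the paper's proof. Both arguments take an essential bound from Lemma~\ref{lem:c11-curvature} on a relatively compact extension chart around $q$, identify the weak and classical curvature on the smooth overlap with $N$, and use continuity of $\mathcal I$ to pass from the a.e.\ bound to a pointwise bound at the $p_j$. The paper phrases this last step as follows: a point with $|\mathcal I|>2A$ has an open neighbourhood of positive measure on which $|\mathcal I|>A$, so the essential supremum would be unbounded.
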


\begin{proof}
In a relatively compact extension chart $W\ni q$,
Lemma~\ref{lem:c11-curvature} gives an essential bound for $\mathcal I$.
On $W\cap N$, weak and classical curvature agree almost everywhere.  For arbitrary $A>0$,
a point with $|\mathcal I|>2A$ has an open neighborhood
$O\Subset W\cap N$ on which $|\mathcal I|>A$.  Since $O$ has
positive coordinate measure, $\esssup_W|\mathcal I|\ge A$, a contradiction.
\end{proof}

\begin{theorem}[Finite-proper-time curvature criterion]\label{thm:general-curvature-criterion}
Let $(N,h)$ be a connected smooth time-oriented spacetime.  Suppose
there are finitely many algebraic curvature scalars
$\mathcal I_1,\ldots,\mathcal I_m$ covered by
Lemma~\ref{lem:c11-curvature}, with the following property:
for either choice of time orientation, every future-inextendible
unit-speed timelike geodesic
\begin{equation}
 \gamma:[0,\Lambda)\longrightarrow N,\qquad \Lambda<\infty,
\end{equation}
satisfies
\begin{equation}\label{eq:general-curvature-divergence}
 \limsup_{s\uparrow\Lambda}\,
 \max_{1\le j\le m}|\mathcal I_j(\gamma(s))|=\infty.
\end{equation}
Then $(N,h)$ has no proper $C^{1,1}_{\mathrm{loc}}$ extension.
\end{theorem}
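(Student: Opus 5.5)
Assume, to reach a contradiction, that $(N,h)$ has a proper $C^{1,1}_{\mathrm{loc}}$ extension $(\widetilde N,\widetilde h,\iota)$. By Definition~\ref{def:extension}(b), such an extension is also a $C^0$ extension, so Theorem~\ref{thm:boundary-geodesic} applies. It gives a time orientation of $N$, possibly the reversed one, together with a future-inextendible unit-speed timelike $h$-geodesic $\gamma:[0,\Lambda)\to N$ with $\Lambda<\infty$ and a point $q\in\partial\iota(N)$ such that $\iota(\gamma(s))\to q$. The hypothesis is assumed for both time orientations, so the reversal in Theorem~\ref{thm:boundary-geodesic} does no harm. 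Therefore \eqref{eq:general-curvature-divergence} holds along this $\gamma$.

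Next I pass to a single scalar. Since there are only finitely many $\mathcal I_j$, the pigeonhole principle gives one index $j$ and a sequence $s_k\uparrow\Lambda$ with $|\mathcal I_j(\gamma(s_k))|\to\infty$. Put $p_k\equiv\gamma(s_k)$. Then $\iota(p_k)\to q$, because the whole curve converges to $q$.

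Now I apply Lemma~\ref{lem:invariant-obstruction} to $\mathcal I\equiv\mathcal I_j$ with the points $p_k$ and the boundary point $q$. Concretely, fix a relatively compact extension chart $W\ni q$. By Lemma~\ref{lem:c11-curvature}, the weak curvature of $\widetilde h$ has $L^\infty$ components on $W$, so $\mathcal I_j$, computed from the weak curvature, is essentially bounded on $W$. On $\iota(N)\cap W$ the metric is smooth, and there the weak curvature coincides with the classical curvature of $h$ transported by $\iota$. Since $\mathcal I_j$ is continuous, for each $k$ it exceeds $|\mathcal I_j(p_k)|/2$ on an open set of positive measure inside $\iota(N)\cap W$. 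These values are unbounded, which contradicts the essential bound.

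The only delicate point is orientation. If some $\mathcal I_j$ involves the volume form, it is defined on $N$ only up to sign, or only on an oriented chart. I would handle this by working in a small orientable chart near $q$, or by using $|\mathcal I_j|$, which is orientation independent. Lemma~\ref{lem:c11-curvature} already allows local volume forms, so this step should go through; checking it is where I expect the main care to be needed. No global causality is used, because Theorem~\ref{thm:boundary-geodesic} requires none.
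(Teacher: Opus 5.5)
Your proposal is correct and follows the paper's proof. The paper obtains a finite-proper-time timelike geodesic ending at a boundary point from Theorem~\ref{thm:boundary-geodesic}, picks one invariant that is unbounded along a subsequence, and contradicts the essential bound of Lemma~\ref{lem:invariant-obstruction} at that point. Your orientation remark is also consistent with the local volume forms allowed in Lemma~\ref{lem:c11-curvature}.
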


\begin{proof}
If such an extension existed, Theorem~\ref{thm:boundary-geodesic},
with the time orientation reversed if necessary, would give a geodesic
$\gamma$ as in the statement and a boundary limit $q$.  By
\eqref{eq:general-curvature-divergence}, one of the invariants is
unbounded along a subsequence.  This contradicts
Lemma~\ref{lem:invariant-obstruction} at $q$.
\end{proof}

\subsection{Curvature and polar Jacobi fields}

\begin{proposition}[The two Weyl invariants]\label{prop:weyl-invariants}
On Kerr, choose a local orientation, put $y\equiv a\cos\theta$, and define
\begin{equation}
 \mathcal W_1\equiv C_{\alpha\beta\gamma\delta}C^{\alpha\beta\gamma\delta},
 \qquad
 \mathcal W_2\equiv C_{\alpha\beta\gamma\delta}{}^\star C^{\alpha\beta\gamma\delta}.
\end{equation}
Since Kerr is Ricci-flat, $\mathcal W_1$ is also the Kretschmann
scalar $R_{\alpha\beta\gamma\delta}R^{\alpha\beta\gamma\delta}$.
Then
\begin{align}
 \mathcal W_1
 &=\frac{48M^2
 (r^6-15r^4y^2+15r^2y^4-y^6)}
 {(r^2+y^2)^6},\label{eq:W1-real}\\
 \mathcal W_2
 &=\varepsilon\,
 \frac{96M^2ry(3r^4-10r^2y^2+3y^4)}
 {(r^2+y^2)^6},\qquad \varepsilon\in\{-1,1\},
 \label{eq:W2-real}
\end{align}
where $\varepsilon$ changes sign when the orientation is reversed.
For the orientation corresponding to $\varepsilon=-1$,
\begin{equation}\label{eq:complex-weyl}
 \mathcal W_1+i\mathcal W_2
 =\frac{48M^2}{(r+ia\cos\theta)^6}.
\end{equation}
For the opposite orientation, take the complex conjugate.
Consequently
\begin{equation}\label{eq:weyl-magnitude}
 \sqrt{\mathcal W_1^2+\mathcal W_2^2}
 =\frac{48M^2}{\Sigma^3}.
\end{equation}
Although $\mathcal W_2$ changes sign between the two local
orientations, $\mathcal J\equiv\mathcal W_1^2+\mathcal W_2^2$ is a smooth
global scalar.  It diverges along every approach to $\Sigma=0$,
irrespective of the direction of approach.
\end{proposition}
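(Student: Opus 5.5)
The plan is to compute the Weyl tensor through its self-dual part. Kerr is of Petrov type D, and in the Kinnersley tetrad the only nonvanishing Weyl scalar is $\Psi_2=-M(r-ia\cos\theta)^{-3}$, in the convention fixed in Section~\ref{sec:setting}. The self-dual Weyl tensor $\mathcal C\equiv C-i{}^\star C$ then determines the complex invariant
\begin{equation*}
 \mathcal C_{\alpha\beta\gamma\delta}\mathcal C^{\alpha\beta\gamma\delta}
 =2\bigl(\mathcal W_1-i\mathcal W_2\bigr).
\end{equation*}
This uses ${}^\star{}^\star C=-C$ in Lorentzian signature and the identity ${}^\star C_{\alpha\beta\gamma\delta}{}^\star C^{\alpha\beta\gamma\delta}=-\mathcal W_1$.

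For a type D field, the standard identity is $\mathcal C\cdot\mathcal C=48\cdot2\,\Psi_2^2$ up to the sign fixed by the orientation convention. Equivalently, $\mathcal W_1-i\mathcal W_2=48\Psi_2^2$ with the appropriate orientation. Since $\Psi_2^2=M^2(r-ia\cos\theta)^{-6}$, the two orientations give $48M^2(r\mp ia\cos\theta)^{-6}$. This produces \eqref{eq:complex-weyl} for one orientation and its conjugate for the other.

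To avoid relying on NP normalisations, I would check the factor $48$ independently. At $a=0$ the formula must reduce to the Schwarzschild Kretschmann scalar $48M^2/r^6$, and this fixes both the constant and the reality of $\mathcal W_1$ there.

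The sign of $\varepsilon$ will be tied to the chosen volume form $e^0\wedge e^1\wedge e^2\wedge e^3$. I would determine it by a single leading-order check near $y=0$: compare the coefficient of $ry$ with a direct frame computation of one Weyl component. Alternatively, I would simply accept that the relative sign is $\pm1$ and note that reversing the orientation flips ${}^\star$ and hence the sign of $\mathcal W_2$. The statement only asserts existence of such an $\varepsilon$ with the stated orientation behaviour, so this is consistent.

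The real formulas \eqref{eq:W1-real} and \eqref{eq:W2-real} follow by writing
\begin{equation*}
 (r+iy)^{-6}=\frac{(r-iy)^6}{(r^2+y^2)^6}
\end{equation*}
and expanding. We have $\mathrm{Re}(r-iy)^6=r^6-15r^4y^2+15r^2y^4-y^6$ and $\mathrm{Im}(r-iy)^6=-(6r^5y-20r^3y^3+6ry^5)=-2ry(3r^4-10r^2y^2+3y^4)$. The magnitude identity \eqref{eq:weyl-magnitude} is then immediate: $|48M^2(r+iy)^{-6}|=48M^2/\Sigma^3$.

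For globality, note that $\mathcal J=\mathcal W_1^2+\mathcal W_2^2=48^2M^4/\Sigma^6$. This is orientation independent and is an analytic function of the global analytic function $\Sigma>0$ (condition (C1), and likewise in the extremal, overextremal and covering models). It is therefore a smooth global scalar, even across horizons where the Boyer--Lindquist computation does not apply directly. Alternatively, $\mathcal J$ is a polynomial curvature scalar, hence analytic, and it agrees with $48^2M^4\Sigma^{-6}$ on the dense block set. Divergence as $\Sigma\to0$ along any approach is then obvious.

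The main obstacle is bookkeeping. Getting the factor $48$ and the orientation sign of the dual consistently under the paper's curvature convention is error-prone. To handle this, I would anchor the factor on the Schwarzschild limit and treat the sign as the $\varepsilon$ the statement allows, rather than rederiving NP conventions.
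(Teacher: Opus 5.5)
Your proposal is correct and is essentially the paper's proof: the type-D contraction $\mathcal W_1-i\mathcal W_2=48\Psi_2^2$ with $\Psi_2=-M(r-iy)^{-3}$, complex conjugation for the opposite orientation, and the binomial expansion of $(r-iy)^6$ together with its modulus. Your phrase ``up to the sign'' should read ``up to complex conjugation $\Psi_2^2\mapsto\bar\Psi_2^2$'', since an overall sign would also flip $\mathcal W_1$; your next sentence already uses it that way. The Schwarzschild normalization check and the analyticity/density argument for the globality of $\mathcal J$ are sound additions, whereas the paper cross-checks against a direct metric computation in the literature and leaves globality implicit.
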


\begin{proof}
In a principal Newman-Penrose tetrad, the only nonzero Weyl scalar is
$\Psi_2=-M(r-iy)^{-3}$.  With the local orientation for which
$\varepsilon=-1$ in \eqref{eq:W2-real}, the type-D contraction in our
Hodge convention is
\begin{equation}
 \mathcal W_1-i\mathcal W_2=48\Psi_2^2
 =\frac{48M^2}{(r-iy)^6}.
\end{equation}
Taking the complex conjugate gives \eqref{eq:complex-weyl}.
Reversing the orientation interchanges these two identities and hence
changes only the sign of $\mathcal W_2$.
A direct metric calculation of the same contractions appears in
\cite[Section~III.C, equations~(III.18)-(III.21)]{ChruscielMaliborskiYunes}.

The corresponding real expansion is
\begin{align}
 (r-iy)^6={}&
 r^6-15r^4y^2+15r^2y^4-y^6\nonumber\\
 &{}-2iry(3r^4-10r^2y^2+3y^4).
\end{align}
Division by $(r^2+y^2)^6$ proves
\eqref{eq:W1-real}-\eqref{eq:W2-real}.  If
\begin{align}
 A_6&\equiv r^6-15r^4y^2+15r^2y^4-y^6,\\
 B_6&\equiv2ry(3r^4-10r^2y^2+3y^4),
\end{align}
then the squared modulus of the preceding binomial identity gives
\begin{equation}
 A_6^2+B_6^2=(r^2+y^2)^6.
\end{equation}
It follows that
\begin{equation}
 \mathcal W_1^2+\mathcal W_2^2
 =\frac{(48M^2)^2}{(r^2+y^2)^6}.
\end{equation}
Since
\begin{equation}
 |r+ia\cos\theta|^2=r^2+a^2\cos^2\theta=\Sigma,
\end{equation}
this is \eqref{eq:weyl-magnitude}.
\end{proof}

The curvature need not be uniformly bounded over a family of
complete geodesics.  We construct such a family by perturbing an
incomplete geodesic while keeping both Killing momenta fixed.  This
complements \cite[Section~V]{ChruscielMaliborskiYunes} and requires
no assumption that the geodesics originate at infinity.

\begin{proposition}[Curvature along a family of complete geodesics]
\label{prop:complete-near-miss}
In any spacetime of Theorem~\ref{thm:main-c0}, choose a future unit
timelike inward equatorial geodesic sufficiently close to the positive
ring, with constants $E,L$ and $D\equiv L-aE\ne0$.  Put
\begin{align}
 B&\equiv L^2+a^2(1-E^2),\qquad
 R_0(r)\equiv 2MD^2r-Br^2+2Mr^3+(E^2-1)r^4,\nonumber\\
 \tau_*&\equiv\int_0^{r_0}\frac{dr}{\sqrt{R_0(r)}},\qquad
 \Lambda_0\equiv\int_0^{r_0}\frac{r^2\,dr}{\sqrt{R_0(r)}}.
 \label{eq:near-miss-background}
\end{align}
Here $r_0$ is its initial radius, and $\Lambda_0$ is its remaining
proper time.  Let $S_B$ solve
\begin{equation}
 S_B''+BS_B=0,\qquad S_B(0)=0,\qquad S_B'(0)=1.
\end{equation}
The initial radius can be chosen so small that
$b\equiv S_B(\tau_*)>0$.
At the same initial point, keep $E,L$ fixed, set $p_\theta=\varepsilon$,
and choose the inward unit timelike radial branch.  For every
sufficiently small $\varepsilon\ne0$, the resulting geodesic is
complete in both proper-time directions.  It has a radial turning
event $q_\varepsilon$ at proper time
\begin{equation}
 s_\varepsilon=\Lambda_0+O(\varepsilon^2),
\end{equation}
where
\begin{align}
 r(q_\varepsilon)
 &=\frac{a^2}{2MD^2}\varepsilon^2+O(\varepsilon^4),\nonumber\\
 \mu(q_\varepsilon)&=-b\varepsilon+O(\varepsilon^3),\nonumber\\
 \Sigma(q_\varepsilon)&=a^2b^2\varepsilon^2
   \bigl(1+O(\varepsilon^2)\bigr),\label{eq:near-miss-scales}\\
 \sqrt{\mathcal W_1^2+\mathcal W_2^2}(q_\varepsilon)
 &=\frac{48M^2}{a^6b^6}|\varepsilon|^{-6}
   \bigl(1+O(\varepsilon^2)\bigr).
 \label{eq:near-miss-curvature}
\end{align}
The point in these estimates is the radial turning event.  The
local minimum of $\Sigma$ during this encounter is described in the next corollary.
\end{proposition}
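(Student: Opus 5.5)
The plan is to work in Mino time $\tau$, with $d\tau/ds=\Sigma^{-1}$, where the radial and polar motions decouple. First, compute the constants of the perturbed datum. At the equatorial initial point, \eqref{eq:Q-definition} gives $\mathcal Q_\varepsilon=\varepsilon^2+D^2$. Since $R=P^2-\Delta(r^2+\mathcal Q)$ and $P$ is unchanged,
\begin{equation*}
 R_\varepsilon(r)=R_0(r)-\varepsilon^2\Delta(r),\qquad R_\varepsilon(0)=-a^2\varepsilon^2,\qquad R_\varepsilon'(0)=2MD^2+O(\varepsilon^2).
\end{equation*}
The implicit function theorem then gives a unique simple zero $r_t(\varepsilon)=a^2\varepsilon^2/(2MD^2)+O(\varepsilon^4)$ in $(0,r_0)$. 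This is the first line of \eqref{eq:near-miss-scales}.

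Next, I will show that the inward branch reaches $r_t$ at Mino time $\tau_\varepsilon=\int_{r_t}^{r_0}dr/\sqrt{R_\varepsilon}$, and that $\tau_\varepsilon=\tau_*+O(\varepsilon^2)$. To see this, substitute $r=r_t+w$ and write $R_\varepsilon(r_t+w)=w\,G_\varepsilon(w)$, where $G_\varepsilon$ is smooth, positive, and depends smoothly on $\varepsilon^2$ with $G_0(w)=R_0(w)/w$. The difference from $\tau_*$ is then the integral $\int_0^{r_0-r_t}w^{-1/2}G_\varepsilon(w)^{-1/2}\,dw$ compared with the $\varepsilon=0$ integral. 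The endpoint shift and the $\varepsilon^2$ dependence of $G_\varepsilon$ each contribute $O(r_t)+O(\varepsilon^2)=O(\varepsilon^2)$. To pass through the turn without any square-root singularity, I will use the smooth phase system \eqref{eq:radial-phase-system}, $r''=\tfrac12R_\varepsilon'(r)$, which is regular at $v=0$. The same computation applied to $\int\Sigma\,d\tau$ gives $s_\varepsilon=\Lambda_0+O(\varepsilon^2)$, because $\Sigma=r^2+a^2x^2$ and the polar contribution is $O(\varepsilon^2)$ on a bounded Mino interval.

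For the polar motion, put $x=\mu=\cos\theta$. By \eqref{eq:Theta} with $\mathcal Q_\varepsilon$, the identity \eqref{eq:x-equation} acquires an additional term:
\begin{equation*}
 \Bigl(\frac{dx}{d\tau}\Bigr)^2=\varepsilon^2(1-x^2)-Bx^2+a^2(1-E^2)x^4 .
\end{equation*}
Differentiating gives the smooth equation $x''=-Bx+2a^2(1-E^2)x^3-\varepsilon^2x$, with initial data $x(0)=0$ and $x'(0)=-\varepsilon$, since $dx/d\tau=-\sin\theta\,p_\theta$. Smooth dependence on parameters and oddness in $\varepsilon$ give $x(\tau)=-\varepsilon S_B(\tau)+O(\varepsilon^3)$ uniformly for $\tau$ near $\tau_*$. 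Evaluating at $\tau_\varepsilon=\tau_*+O(\varepsilon^2)$ yields $\mu(q_\varepsilon)=-b\varepsilon+O(\varepsilon^3)$. Then $\Sigma=r_t^2+a^2\mu^2=a^2b^2\varepsilon^2(1+O(\varepsilon^2))$, because $r_t^2=O(\varepsilon^4)$. The curvature asymptotic \eqref{eq:near-miss-curvature} follows directly from \eqref{eq:weyl-magnitude}. The positivity $b>0$ holds for small $r_0$: as $r_0\downarrow0$ we have $\tau_*\to0$, and $S_B(\tau)=\tau+O(\tau^3)$.

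For completeness, the datum has $d\mu(v)=-\sin\theta\,\dot\theta\neq0$ because $p_\theta=\varepsilon\neq0$. Hence it lies outside the exceptional set $\mathcal E$ of \eqref{eq:exceptional-equatorial-data}, and Corollary~\ref{cor:generic-timelike-completeness}, resting on Theorem~\ref{thm:terminal-geodesics}, gives completeness in both directions. The same argument works on covers. The main obstacle I expect is the uniformity of the $O(\varepsilon^2)$ and $O(\varepsilon^3)$ remainders at the turning point. I will handle this by working entirely with the smooth second-order systems for $(r,x)$ in $\tau$, with data analytic in $\varepsilon$, rather than with the quadrature formulas. I will also identify $q_\varepsilon$ as the first zero of $v=dr/d\tau$, which is a nondegenerate zero since $v'=\tfrac12R_\varepsilon'(r_t)>0$.
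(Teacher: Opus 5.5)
Your proposal is correct and follows the paper's proof essentially step for step. Both use the perturbed radial polynomial $R_\varepsilon=R_0-\varepsilon^2\Delta$ with the implicit function theorem for the turning radius, and the odd-in-$\varepsilon$ polar phase equation giving $\mu=-\varepsilon S_B+O(\varepsilon^3)$. Both also obtain completeness from Corollary~\ref{cor:generic-timelike-completeness} and the curvature from \eqref{eq:weyl-magnitude}. The only minor variation is that you get $\tau_\varepsilon=\tau_*+O(\varepsilon^2)$ from the shifted quadrature $r=r_t+w$, whereas the paper applies the implicit function theorem to $v(\tau,\varepsilon^2)=0$ in the smooth radial phase system; both routes are valid.
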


\begin{proof}
Choose $r_0$ in a positive ring collar where $\Delta>0$ and
$R_0(r)>0$ for $0<r\le r_0$.  At the initial equatorial point,
the Carter identity gives
\begin{equation}
 \mathcal Q_\varepsilon=D^2+\varepsilon^2,
 \qquad R_\varepsilon(r)=R_0(r)-\Delta(r)\varepsilon^2.
 \label{eq:near-miss-radial-polynomial}
\end{equation}
For small $\varepsilon$, $R_\varepsilon(r_0)>0$ gives smooth
future unit timelike data on the inward branch.  For $\varepsilon\ne0$,
the data lie outside $\mathcal E$.  By
Corollary~\ref{cor:generic-timelike-completeness}, the corresponding
geodesics are complete in both directions, including on covers.

In Mino's parameter $d\tau=ds/\Sigma$, with $\tau=0$ initially,
we write $v\equiv dr/d\tau$.  The radial phase system is
\begin{equation}
 r'=v,\qquad v'=\tfrac12R_\varepsilon'(r),\qquad
 (r(0),v(0))=(r_0,-\sqrt{R_\varepsilon(r_0)}).
 \label{eq:near-miss-radial-phase}
\end{equation}
It preserves $v^2=R_\varepsilon(r)$.  At $\varepsilon=0$,
$(r,v)(\tau_*)=(0,0)$ and $v'(\tau_*)=MD^2>0$.
The polynomial phase system admits a smooth continuation through
this point, although the corresponding spacetime geodesic has no
continuation there.

The radial solution depends smoothly on $h\equiv\varepsilon^2$.
The implicit function theorem applied to $v(\tau,h)=0$ gives a
unique nearby turning time
\begin{equation}
 \tau_\varepsilon=\tau_*+O(\varepsilon^2).
\end{equation}
At this event the preserved first integral gives
$R_0(r)-h\Delta(r)=0$.  Since $R_0'(0)=2MD^2>0$, a second
application of the implicit function theorem gives
\begin{equation}
 r(\tau_\varepsilon)
 =\frac{\Delta(0)}{R_0'(0)}h+O(h^2)
 =\frac{a^2}{2MD^2}\varepsilon^2+O(\varepsilon^4)>0.
\end{equation}
The unperturbed velocity is negative before $\tau_*$, and the
acceleration is uniformly positive near this time.  By smooth
dependence, the perturbed event is therefore the first radial turn,
with $r>0$ up to it.
Since $\Delta>0$ and $|\mu|<1$, the smooth orbit equations
reconstruct the geodesic throughout the interval.

For $\mu\equiv\cos\theta$, the angular phase equation is
\begin{equation}
 \mu''=-(B+\varepsilon^2)\mu
       +2a^2(1-E^2)\mu^3,
 \qquad \mu(0)=0,\qquad\mu'(0)=-\varepsilon.
 \label{eq:near-miss-angular-phase}
\end{equation}
The angular phase system is smooth, including at turning points.
Its solution is odd in $\varepsilon$, with first variation $-S_B$.
Thus, uniformly on a fixed interval containing $[0,\tau_*]$, 
\begin{equation}
 \mu_\varepsilon(\tau)=-\varepsilon S_B(\tau)
                         +O(\varepsilon^3).
\end{equation}
As $r_0\downarrow0$, $\tau_*\downarrow0$, while
$S_B(\tau)=\tau+O(\tau^3)$.  Thus $r_0$ can be chosen so that
$b>0$.  Evaluation at $\tau_\varepsilon$ proves the first three
estimates in \eqref{eq:near-miss-scales}.

Finally,
\begin{equation}
 s_\varepsilon
 =\int_0^{\tau_\varepsilon}
   \bigl(r_\varepsilon(\tau)^2
         +a^2\mu_\varepsilon(\tau)^2\bigr)\,d\tau.
\end{equation}
On this compact interval,
$r_\varepsilon-\bar r=O(\varepsilon^2)$,
$\mu_\varepsilon=O(\varepsilon)$, and
$\tau_\varepsilon-\tau_*=O(\varepsilon^2)$; hence
$s_\varepsilon=\Lambda_0+O(\varepsilon^2)$.
The Weyl identity \eqref{eq:weyl-magnitude} proves
\eqref{eq:near-miss-curvature}.  The construction lifts to covers.
\end{proof}

\begin{corollary}[Closest approach of the complete geodesics]
\label{cor:closest-approach}
For the family in Proposition~\ref{prop:complete-near-miss}, choose
$r_0$ so that $\beta\equiv S_B'(\tau_*)>0$, and put
\begin{equation}
 \eta\equiv|\varepsilon|^{2/3},\qquad
 \kappa\equiv MD^2,\qquad
 \lambda\equiv\left(\frac{2a^2b\beta}{\kappa^2}\right)^{1/3}.
\end{equation}
On a fixed sufficiently small Mino-time interval about $\tau_*$,
$\Sigma$ has a unique minimum, at $p_\varepsilon$, and
\begin{align}
 \tau(p_\varepsilon)&=\tau_*-\lambda\eta+O(\eta^2),
 &r(p_\varepsilon)&=\frac{\kappa\lambda^2}{2}\eta^2+O(\eta^3),\nonumber\\
 \Sigma(p_\varepsilon)
 &=a^2b^2\varepsilon^2
   \left[1-\frac{3\beta\lambda}{2b}\eta+O(\eta^2)\right].
 \label{eq:closest-approach-scales}
\end{align}
Consequently the maximum of $\sqrt{\mathcal W_1^2+\mathcal W_2^2}$
during this encounter equals
\begin{equation}\label{eq:closest-approach-curvature}
 \frac{48M^2}{a^6b^6}|\varepsilon|^{-6}
 \left[1+\frac{9\beta\lambda}{2b}\eta+O(\eta^2)\right].
\end{equation}
The proper time from $p_\varepsilon$ to the radial turning event
$q_\varepsilon$ is $a^2b^2\lambda\eta^4+O(\eta^5)$.
If $L=0$ and $E^2>1$, these are the unique global minimum and
maximum along the complete geodesic, for all sufficiently small
$\varepsilon\ne0$.
\end{corollary}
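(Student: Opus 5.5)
The plan is a two-scale local analysis in Mino time near $\tau_*$, built on the smooth radial and angular phase systems from the proof of Proposition~\ref{prop:complete-near-miss}. Put $\sigma\equiv\tau-\tau_*$. Since $R_\varepsilon$, $v$ and $\mu_\varepsilon$ depend smoothly on $h=\varepsilon^2$ and $\varepsilon$, I first record uniform expansions on a fixed $\sigma$-interval.

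For the radius I use $R_\varepsilon=R_0-h\Delta$, $R_0(r)=2\kappa r+O(r^2)$ and $v'(\tau_*)=\kappa$ at $h=0$. These give
\[
 r_\varepsilon(\tau)=\tfrac{\kappa}{2}\sigma^2+O(\sigma^3)+O(h),
 \qquad
 v_\varepsilon(\tau)=\kappa\sigma+O(\sigma^2)+O(h),
\]
with derivative bounds of the same form.

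For the angle I use the oddness in $\varepsilon$ and the first variation $-S_B$ from the proposition:
\[
 \mu_\varepsilon(\tau)=-\varepsilon\bigl(b+\beta\sigma+O(\sigma^2)\bigr)+O(\varepsilon^3).
\]
Its Mino derivative is $\mu_\varepsilon'=-\varepsilon(\beta+O(\sigma))+O(\varepsilon^3)$.

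Next I locate the critical point of $\Sigma=r^2+a^2\mu^2$. Its derivative is
\[
 \tfrac12\Sigma'=r v+a^2\mu\mu'=\tfrac{\kappa^2}{2}\sigma^3+a^2b\beta\varepsilon^2+\text{(errors)}.
\]
To control the errors I rescale by $\sigma=\eta\,\varsigma$ and $\varepsilon^2=\eta^3$, which gives
\[
 \eta^{-3}\tfrac12\Sigma'=\tfrac{\kappa^2}{2}\varsigma^3+a^2b\beta+O(\eta).
\]
The limit equation has the simple real root $\varsigma=-\lambda$, where $\lambda^3=2a^2b\beta/\kappa^2$, and the derivative there is $\tfrac32\kappa^2\lambda^2\ne0$. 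The implicit function theorem then yields a unique critical point at $\tau(p_\varepsilon)=\tau_*-\lambda\eta+O(\eta^2)$, and evaluating $r_\varepsilon$ there gives $r(p_\varepsilon)=\tfrac{\kappa}{2}\lambda^2\eta^2+O(\eta^3)$.

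Substituting into $\Sigma$ gives
\[
 \Sigma(p_\varepsilon)=a^2b^2\eta^3\Bigl(1-\frac{2\beta\lambda}{b}\eta\Bigr)+\frac{\kappa^2\lambda^4}{4}\eta^4+O(\eta^5).
\]
The identity $\kappa^2\lambda^3=2a^2b\beta$ turns the last term into $\tfrac12a^2b\beta\lambda\,\eta^4$. The total correction is therefore $-\tfrac32\beta\lambda/b$, which is \eqref{eq:closest-approach-scales}. Formula \eqref{eq:weyl-magnitude} raises $\Sigma^{-3}$ to give the factor $1+\tfrac92\beta\lambda\eta/b$ in \eqref{eq:closest-approach-curvature}.

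To upgrade the critical point to a unique minimum on a fixed interval, I argue in two regions. On the $\eta$-scale, $\eta^{-3}\Sigma'$ is a strictly increasing cubic in $\varsigma$ plus $O(\eta)$. For $|\sigma|\gg\eta$, the dominant term $\kappa^2\sigma^3/2$ fixes the sign of $\Sigma'$.

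For the proper time from $p_\varepsilon$ to $q_\varepsilon$, I integrate $ds=\Sigma\,d\tau$ from $\tau(p_\varepsilon)$ to $\tau_\varepsilon=\tau_*+O(\eta^3)$. The interval has length $\lambda\eta+O(\eta^2)$ and $\Sigma=a^2b^2\eta^3(1+O(\eta))$ on it, so the elapsed proper time is $a^2b^2\lambda\eta^4+O(\eta^5)$.

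For the global claim with $L=0$ and $E^2>1$, I have $B=a^2(1-E^2)<0$. In the radial equation, $R_\varepsilon$ then has leading coefficient $E^2-1>0$ and a single small positive zero, so $r$ decreases before the turn and increases strictly afterwards, without bound. In the angular equation, $\mu''=(|B|-\varepsilon^2)\mu+2a^2(1-E^2)\mu^3$ has a repelling equilibrium at $0$ within the allowed band, so $|\mu|$ is monotone along the orbit. Away from the encounter, these monotonicities keep $\Sigma$ bounded below by the value at the fixed-interval endpoints, which exceeds $\Sigma(p_\varepsilon)=O(\varepsilon^2)$.

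I expect the main obstacle to be the uniformity in the rescaled region. One has to check that the $O(h)$ radial error, which is of size $\eta^3$, enters $rv$ only at order $\eta^4$. It does, because it multiplies $\kappa\sigma=O(\eta)$; similarly $\mu\mu'$ contributes its $\sigma$-term at order $\eta^4$. I would verify this by writing $r_\varepsilon$ as a smooth function of $(\sigma,h)$ and Taylor-expanding with explicit remainders before rescaling.
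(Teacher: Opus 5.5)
Your local analysis follows the paper's argument, and that part is fine. You use the same rescaling $\sigma=\eta\varsigma$, $\varepsilon^2=\eta^3$ and the implicit function theorem at the simple root $\varsigma=-\lambda$. The identity $\kappa^2\lambda^3=2a^2b\beta$ then produces the correction $-\tfrac32\beta\lambda/b$. Finally, $ds=\Sigma\,d\tau$ is integrated over a Mino interval of length $\lambda\eta+O(\eta^2)$ on which $\Sigma=a^2b^2\eta^3(1+O(\eta))$.

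The gap is in the global step for $L=0$, $E^2>1$, where you claim that $|\mu|$ is monotone along the orbit. That claim is false. With $L=0$ one has $\Theta=\varepsilon^2+a^2(E^2-1)\mu^2$, so
\[
 \Bigl(\frac{d\mu}{d\tau}\Bigr)^2=(1-\mu^2)\bigl(\varepsilon^2+a^2(E^2-1)\mu^2\bigr)>0
 \qquad(|\mu|<1).
\]
Hence $\mu$, not $|\mu|$, is monotone until it reaches the axis, where it turns back. Moreover $|\mu|$ vanishes at the initial equatorial point $\tau=0$, so it is not monotone along the orbit. The instability of the equilibrium $\mu=0$ says nothing about an orbit that crosses it with speed $|\varepsilon|$. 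Consequently $\Sigma$ need not be monotone off the encounter, and ``bounded below by its endpoint values'' is not established.

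A second, smaller omission: a positive leading coefficient does not by itself give a single positive zero of $R_\varepsilon$.

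Both points are repaired by the paper's purely radial argument. For $h=\varepsilon^2<a^2(E^2-1)$,
\[
 R_h(r)=(E^2-1)r^4+2Mr^3+\bigl[a^2(E^2-1)-h\bigr]r^2+2M(a^2E^2+h)r-a^2h
\]
has positive coefficients in all positive powers and $R_h(0)=-a^2h<0$. So $R_h$ is strictly increasing on $r>0$ with exactly one root. Since $dr/d\tau=\pm\sqrt{R_h(r)}$ can change sign only there, $r$ has a single turn, at $\tau_\varepsilon$ inside the encounter, and is monotone on each side.

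Off the interval $[\tau_*-\sigma_0,\tau_*+\sigma_0]$ this gives $\Sigma\ge r^2\ge\min_\pm r(\tau_*\pm\sigma_0)^2$. That bound tends to $\bigl(\kappa\sigma_0^2/2+O(\sigma_0^4)\bigr)^2>0$ as $\varepsilon\to0$, whereas $\Sigma(p_\varepsilon)=O(\varepsilon^2)$. This yields the unique global minimum of $\Sigma$ and maximum of curvature without using $\mu$ or the unboundedness of $r$.
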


\begin{proof}
Write $\xi\equiv\tau-\tau_*$ and $h\equiv\varepsilon^2$.
The phase equations in the preceding proof give smooth functions
$r(\xi,h)$ and $u(\xi,h)\equiv\mu_\varepsilon(\tau)/\varepsilon$,
including at $h=0$.  There
$r(\xi,0)=\kappa\xi^2/2+O(\xi^4)$ and
$u(\xi,0)=-S_B(\tau_*+\xi)$.
For $F(\xi,h)\equiv r(\xi,h)^2+a^2h u(\xi,h)^2$,
\begin{equation}
 F_\xi=\kappa^2\xi^3+2a^2b\beta h
       +O(\xi^5+h|\xi|+h^2).
\end{equation}
It follows that every critical point in a sufficiently small fixed
interval has $\xi<0$ and $|\xi|$ comparable to $h^{1/3}$.
The quotient $\eta^{-3}F_\xi(\eta z,\eta^3)$ extends smoothly to
$\eta=0$, where it equals $\kappa^2z^3+2a^2b\beta$.
Its unique zero $z=-\lambda$ is simple.  The implicit function
theorem, together with the preceding bound on all critical points,
gives a unique minimum with $z=-\lambda+O(\eta)$.
Uniformly for bounded $z$,
\begin{equation}
 F(\eta z,\eta^3)=a^2b^2\eta^3+
 \eta^4\left(\frac{\kappa^2z^4}{4}+2a^2b\beta z\right)
 +O(\eta^5).
\end{equation}
Evaluation at the minimum proves
\eqref{eq:closest-approach-scales}; equation~\eqref{eq:weyl-magnitude}
gives \eqref{eq:closest-approach-curvature}.
Between this point and the radial turn,
$\Sigma=a^2b^2\eta^3+O(\eta^4)$ uniformly, and their Mino-time
separation is $\lambda\eta+O(\eta^2)$.
Integrating $ds=\Sigma\,d\tau$ gives the stated proper-time gap.

For $L=0$, $E^2>1$, and $h<a^2(E^2-1)$, the polynomial
\begin{equation}
 R_h(r)=(E^2-1)r^4+2Mr^3+
 [a^2(E^2-1)-h]r^2+2M(a^2E^2+h)r-a^2h
\end{equation}
is strictly increasing for $r>0$ and has exactly one positive root.
The complete radial branches run to infinity in both directions:
a finite limit above the root would give
$|dr/ds|\ge\sqrt{R_h(r)}/(r^2+a^2)>0$, while the simple root
is reached in finite proper time and is a strict radial minimum.
Outside the fixed encounter, $r$ is uniformly bounded away from zero
as $h\downarrow0$.  Hence so is $\Sigma$, proving the global assertion.
\end{proof}

\begin{corollary}[Sharp curvature blow-up]\label{cor:sharp-curvature}
Under the hypotheses and notation of
Theorem~\ref{thm:sharp-terminal-asymptotics},
\begin{align}
 \mathcal W_2(s)&=0,\nonumber\\
 \mathcal W_1(s)
 &=48M^2\varrho^{-6}
 \left[1+\frac{3B}{7Md^2}\varrho
 +\left(\frac{29B^2}{196M^2d^4}-\frac{2}{3d^2}\right)
 \varrho^2+O(\varrho^3)\right].
 \label{eq:sharp-W1-asymptotics}
\end{align}
Equivalently,
\begin{equation}\label{eq:sharp-W1-leading}
 \mathcal W_1(s)=48M^2
 \left(\frac52\sqrt{2M}\,d\right)^{-12/5}
 u^{-12/5}\bigl(1+O(u^{2/5})\bigr).
\end{equation}
\end{corollary}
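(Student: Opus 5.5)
The plan is to evaluate the closed-form Weyl invariants of Proposition~\ref{prop:weyl-invariants} on the terminal equatorial interval, and then substitute the radial expansion \eqref{eq:sharp-r-asymptotics} of Theorem~\ref{thm:sharp-terminal-asymptotics}. No new geometric input is needed.

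First, Theorem~\ref{thm:terminal-geodesics} gives $\theta=\pi/2$, hence $y=a\cos\theta=0$, on a terminal interval, together with $r>0$ there. Substituting $y=0$ into \eqref{eq:W2-real}, the factor $ry$ gives $\mathcal W_2(s)=0$. This holds for either local orientation, since only the sign $\varepsilon$ of $\mathcal W_2$ depends on the orientation. Substituting $y=0$ into \eqref{eq:W1-real} gives the exact identity
\begin{equation*}
 \mathcal W_1(s)=\frac{48M^2r^6}{r^{12}}=48M^2r(s)^{-6}.
\end{equation*}
This is consistent with \eqref{eq:weyl-magnitude}, because $\Sigma=r^2$ on the equator.

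Second, I will write \eqref{eq:sharp-r-asymptotics} as $r=\varrho(1+x)$, where
\begin{equation*}
 x\equiv-k_1\varrho+\Bigl(\frac1{9d^2}-\frac{B^2}{147M^2d^4}\Bigr)\varrho^2+O(\varrho^3),
 \qquad k_1=\frac{B}{14Md^2}.
\end{equation*}
Expanding $(1+x)^{-6}=1-6x+21x^2+O(x^3)$, the linear coefficient is $6k_1=3B/(7Md^2)$. The quadratic coefficient is
\begin{equation*}
 21k_1^2-\frac{6}{9d^2}+\frac{6B^2}{147M^2d^4}
 =\frac{21B^2}{196M^2d^4}+\frac{8B^2}{196M^2d^4}-\frac{2}{3d^2}
 =\frac{29B^2}{196M^2d^4}-\frac{2}{3d^2}.
\end{equation*}
This yields \eqref{eq:sharp-W1-asymptotics}, where the remainder $O(\varrho^3)$ comes from the $O(\varrho^4)$ term in $r$.

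Finally, \eqref{eq:terminal-radial-scale} gives $\varrho^{-6}=\bigl(\tfrac52\sqrt{2M}\,d\,u\bigr)^{-12/5}$ and $\varrho=O(u^{2/5})$. The bracket in \eqref{eq:sharp-W1-asymptotics} is therefore $1+O(u^{2/5})$, which is \eqref{eq:sharp-W1-leading}.

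The only point needing care is the algebra of the second-order coefficient, in particular combining $6/147=8/196$ with $21/196$. Everything else is direct substitution into results already established.
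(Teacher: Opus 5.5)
Your proposal is correct and matches the paper's proof: on the terminal equator $y=0$ gives $\mathcal W_2=0$ and $\mathcal W_1=48M^2r^{-6}$. You then expand $r^{-6}$ using \eqref{eq:sharp-r-asymptotics}, with coefficients $6k_1$ and $21k_1^2-6k_2$, and substitute \eqref{eq:terminal-radial-scale}; your computation of the second-order coefficient (including $6/147=8/196$) is correct.
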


\begin{proof}
On the terminal equator, $y=0$, and
\eqref{eq:W1-real}-\eqref{eq:W2-real} give
$\mathcal W_2=0$ and $\mathcal W_1=48M^2r^{-6}$.  By
\eqref{eq:sharp-r-asymptotics},
\begin{align}
 r^{-6}
 &=\varrho^{-6}
 \left[1+6k_1\varrho+(21k_1^2-6k_2)\varrho^2
 +O(\varrho^3)\right]\nonumber\\
 &=\varrho^{-6}
 \left[1+\frac{3B}{7Md^2}\varrho
 +\left(\frac{29B^2}{196M^2d^4}-\frac{2}{3d^2}\right)
 \varrho^2+O(\varrho^3)\right],
\end{align}
using \eqref{eq:kerr-simple-zero-coefficients}.
Substitution of \eqref{eq:terminal-radial-scale} gives
\eqref{eq:sharp-W1-leading}.
\end{proof}

\begin{proposition}[Parallel-propagated tidal curvature]
\label{prop:pp-tidal}
Under the hypotheses of Theorem~\ref{thm:sharp-terminal-asymptotics},
let $U\equiv\dot\gamma$ and $u\equiv\Lambda-s$.  There is a parallel unit
spacelike field $N$ along the terminal interval, orthogonal to $U$, for
which
\begin{align}
 R(U,N,U,N)
 &=\frac{M}{r^3}+\frac{3MD^2}{r^5}\nonumber\\
 &=\frac6{25}u^{-2}
 \left[1+\frac{5B}{14Md^2}\varrho
 +\left(\frac{65B^2}{588M^2d^4}-\frac{2}{9d^2}\right)
 \varrho^2+O(\varrho^3)\right].
 \label{eq:pp-tidal-exact}
\end{align}
Fixing $s_0$ in that interval and putting
\begin{equation}
 \mathfrak I_1(s)\equiv\int_{s_0}^sR(U,N,U,N)(\sigma)\,d\sigma,
 \qquad
 \mathfrak I_2(s)\equiv\int_{s_0}^s \mathfrak I_1(\sigma)\,d\sigma,
\end{equation}
one has
\begin{align}
 \mathfrak I_1(s)&=\frac6{25}u^{-1}
 \left[1+\frac{25B}{42Md^2}\varrho
 +\left(\frac{325B^2}{588M^2d^4}-\frac{10}{9d^2}\right)
 \varrho^2+O(\varrho^{5/2})\right],
 \label{eq:pp-first-integral}\\
 \mathfrak I_2(s)&=\frac6{25}\log\frac1u+C_2+O(\varrho),
 \label{eq:pp-second-integral}
\end{align}
where $C_2\in\mathbb R$ depends on the fixed geodesic and on $s_0$.
\end{proposition}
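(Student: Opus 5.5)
The plan is to take $N$ to be the unit polar field and use equatorial reflection symmetry to show it is parallel. The tidal component then follows from the type-D structure of the Weyl tensor together with the Carter constant. The expansions come from substituting the radial law of Theorem~\ref{thm:sharp-terminal-asymptotics} and integrating term by term.

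\emph{Step 1: the parallel field.} By Theorem~\ref{thm:terminal-geodesics}, on the terminal interval $\theta=\pi/2$, $\Sigma=r^2$, and $\mathcal Q=D^2$. Set $N\equiv E_2=r^{-1}\partial_\theta$ along $\gamma$. This is unit and orthogonal to $U$, because $\dot\theta=0$.

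The reflection $\mu\mapsto-\mu$ is a local isometry fixing the equatorial hypersurface pointwise. Its differential fixes $U$ and reverses $E_2$. Parallel transport commutes with isometries, so the transported vector $N_0(s)$ starting from $E_2$ stays in the $(-1)$-eigenspace of the reflection. That eigenspace is the one-dimensional normal line spanned by $E_2$. Hence $N_0=fE_2$, and $f=\pm1$ because the norm is preserved.

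\emph{Step 2: the tidal component.} In the coframe of Lemma~\ref{lem:coframe}, at $y=0$ we have $\Psi_2=-M/r^3$, which is real. The Weyl tensor is therefore purely electric in this frame, with the $e^0e^1$ and $e^2e^3$ planes as principal planes. From the block structure of type D, I expect, with the paper's curvature convention,
\begin{equation*}
R(U,E_2,U,E_2)=\frac{M}{r^3}\bigl(1+3(\xi^3)^2\bigr).
\end{equation*}
Here the orthonormal components of $U$ are $\xi^0,\xi^1,\xi^3$, with $\xi^2=0$.

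From \eqref{eq:app-carter-frame-components}, on the equator $\mathsf K(U,U)=r^2(\xi^3)^2=\mathcal Q=D^2$. Therefore $(\xi^3)^2=D^2/r^2$, which gives the exact formula $M/r^3+3MD^2/r^5$.

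The main obstacle is fixing signs in the $E_2$ entry of this boost/rotation decomposition. I would check it directly in the limiting cases of a static observer with $\xi^3=0$ and of a pure $e^3$ boost. An alternative check is the general formula $R(U,X,U,X)=-\mathrm{Re}\,\Psi_2(\ldots)$ with the paper's conventions.

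\emph{Step 3: the expansions.} Insert \eqref{eq:sharp-r-asymptotics} into the exact formula. Write $r^{-5}=\varrho^{-5}(1+5k_1\varrho+(20k_1^2-5k_2)\varrho^2+O(\varrho^3))$, using \eqref{eq:kerr-simple-zero-coefficients}. The term $M/r^3=M\varrho^{-3}(1+O(\varrho))$ enters at relative order $\varrho^2$ with coefficient $1/(3d^2)$.

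From \eqref{eq:terminal-radial-scale}, $\varrho^5=\tfrac{25}{4}\cdot 2Md^2u^2$, so $3MD^2\varrho^{-5}=\tfrac{6}{25}u^{-2}$. Collecting terms should give the stated coefficients $5B/(14Md^2)$ and $65B^2/(588M^2d^4)-2/(9d^2)$.

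For $\mathfrak I_1$, integrate in $s$ using $ds=-du$ and $\varrho\propto u^{2/5}$. The identity $\int u^{-2}\varrho^k\,du\propto u^{-1}\varrho^k/(1-2k/5)$ produces the factors $5/3$ and $5$. The integral of the $O(u^{-2}\varrho^3)$ remainder is $O(u^{-1}\varrho^3)$. The integration constant, which depends on $s_0$, is $O(1)=u^{-1}O(\varrho^{5/2})$, and this accounts for the remainder $O(\varrho^{5/2})$.

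For $\mathfrak I_2$, integrate once more. The leading term $\tfrac{6}{25}u^{-1}$ gives the logarithm. The terms $u^{-1}\varrho^k$ with $k\ge1$ and the remainder $u^{-1}\varrho^{5/2}$ are integrable near $u=0$. They contribute a constant, absorbed into $C_2$, plus $O(\varrho)$.
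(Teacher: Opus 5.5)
Your plan works, but the expansion in Step 3 is wrong as written and would not reproduce the stated coefficient. From $r=\varrho(1-k_1\varrho+k_2\varrho^2+O(\varrho^3))$ one gets $r^{-5}=\varrho^{-5}\bigl(1+5k_1\varrho+(15k_1^2-5k_2)\varrho^2+O(\varrho^3)\bigr)$. The second coefficient of $(1+x)^{-5}$ is $(-5)(-6)/2=15$, not $20$.

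With \eqref{eq:kerr-simple-zero-coefficients}, $15k_1^2-5k_2=\frac{65B^2}{588M^2d^4}-\frac{5}{9d^2}$. Adding the $\frac{1}{3d^2}$ contributed by $M/r^3$ gives the stated $-\frac{2}{9d^2}$. With your $20k_1^2$ the $B^2$ coefficient would come out as $\frac{80}{588}$ instead of $\frac{65}{588}$.

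The rest of Step 3 agrees with the paper. This covers the factors $5/3$ and $5$ from $\int v^{-2}\varrho(v)^k\,dv$. It also covers the remainder $O(v^{-4/5})$, which integrates to a constant plus $O(u^{1/5})$, so $u^{-1}O(\varrho^{5/2})$ overall once the $s_0$-dependent constants are included. Your $\mathfrak I_2$ bookkeeping is also correct.

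Steps 1 and 2 take a different route from the paper.

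For the parallel field, the paper uses the Kamran--Marck vector $V^\alpha=f^\alpha{}_\beta U^\beta$. It is parallel because the Killing--Yano form has totally antisymmetric derivative, and on the equator $f=re^2\wedge e^3$ and $U^3=D/r$ give $V=DE_2$. Your reflection argument is correct and more elementary: the normal line of the fixed hypersurface of an isometric involution is preserved by parallel transport along curves in it. It needs only the equatorial confinement of Theorem~\ref{thm:terminal-geodesics}. The Killing--Yano construction produces a parallel field along any geodesic, which is not needed here.

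For the tidal term, the paper differentiates the metric directly. With $m=M/r^3$ it finds $R_{0202}=m$, $R_{1212}=-m$, $R_{3232}=2m$, and $R_{a2b2}=0$ for distinct $a,b\in\{0,1,3\}$. This confirms your expected formula, including its signs.

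In your type-D framework the sign question reduces to a single number:
\begin{itemize}
\item boost invariance in the $e^0e^1$ plane forces $R_{1212}=-R_{0202}$ and $R_{0212}=0$;
\item a rotation by $\pi$ in the $e^2e^3$ plane kills $R_{0232}$ and $R_{1232}$;
\item the vacuum trace gives $R_{3232}=R_{0202}-R_{1212}$.
\end{itemize}
So computing one component settles everything. The paper does this for the radial--polar one, $R_{r\theta r\theta}=-Mr/\Delta$, i.e.\ $R_{1212}=-M/r^3$. That is more direct than the observer checks you propose; your $e^3$-boost check already presupposes $R_{3232}$ and $R_{0232}$.
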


\begin{proof}
On a terminal interval with $r,\Delta>0$, let $E_a$ be the dual
orthonormal frame of Lemma~\ref{lem:coframe}, and write $U=U^aE_a$.
The Killing-Yano form \eqref{eq:app-ky-form} has totally antisymmetric
covariant derivative by
\eqref{eq:app-connection-forms}-\eqref{eq:app-ky-crossing}.
The parallel-vector construction of
\cite{KamranMarck1986} therefore applies.  For
\begin{equation}
 V^\alpha\equiv f^\alpha{}_{\beta}U^\beta,
\end{equation}
the geodesic equation gives
\begin{equation}
 \nabla_UV^\alpha
 =U^\mu U^\beta\nabla_\mu f^\alpha{}_{\beta}=0.
\end{equation}
On the terminal equator,
\begin{equation}
 f=r e^2\wedge e^3,
 \qquad U^2=0,
 \qquad U^3=\frac{L-aE}{r}=\frac Dr.
\end{equation}
The last identity follows from $\xi^3$ preceding
\eqref{eq:app-carter-normalization}.  Hence $V=DE_2$, and
\begin{equation}
 N\equiv\frac{V}{|D|}=\operatorname{sgn}(D)E_2
\end{equation}
is parallel, unit spacelike, and orthogonal to $U$.

We now compute the frame component in the curvature convention
\eqref{eq:curvature-convention}.  At $\theta=\pi/2$, put
$A\equiv r^2+a^2$.  The dual frame is
\begin{equation}
 E_0=\frac{A\partial_t+a\partial_\phi}{r\sqrt\Delta},
 \qquad E_1=\frac{\sqrt\Delta}{r}\partial_r,
 \qquad E_2=\frac1r\partial_\theta,
 \qquad E_3=\frac{a\partial_t+\partial_\phi}{r}.
\end{equation}
Direct differentiation of the metric gives
\begin{align}
 R_{t\theta t\theta}
 &=\frac M{r^3}(r^2-2Mr+3a^2),
 &R_{r\theta r\theta}&=-\frac{Mr}{\Delta},\nonumber\\
 R_{t\theta\phi\theta}
 &=-\frac{Ma}{r^3}(3r^2-2Mr+3a^2),
 &R_{t\theta r\theta}&=R_{r\theta\phi\theta}=0,\nonumber\\
 R_{\phi\theta\phi\theta}
 &=\frac M{r^3}
 (2r^4+5a^2r^2+3a^4-2Ma^2r).
 \label{eq:pp-coordinate-components}
\end{align}
To determine the sign of the radial-polar component, we
differentiate before restricting to $\theta=\pi/2$.  The relevant
Christoffel symbols are
\begin{align}
 \Gamma^r_{\theta\theta}&=-\frac{r\Delta}{\Sigma},
 &\Gamma^r_{r\theta}
 &=\frac{\partial_\theta\Sigma}{2\Sigma}
 =-\frac{a^2\sin\theta\cos\theta}{\Sigma},\nonumber\\
 \Gamma^r_{rr}&=\frac r\Sigma-\frac{r-M}{\Delta},
 &\Gamma^\theta_{r\theta}&=\frac r\Sigma,
 &\Gamma^\theta_{\theta\theta}
 &=\frac{\partial_\theta\Sigma}{2\Sigma}.
 \label{eq:pp-representative-christoffels}
\end{align}
Since $\Gamma^\theta_{\theta\theta}=\Gamma^r_{r\theta}$, the mixed
quadratic terms cancel in convention~\eqref{eq:curvature-convention}:
\begin{equation}
 R^r{}_{\theta r\theta}
 =\partial_r\Gamma^r_{\theta\theta}
 -\partial_\theta\Gamma^r_{r\theta}
 +\Gamma^r_{\theta\theta}
 \left(\Gamma^r_{rr}-\Gamma^\theta_{r\theta}\right).
 \label{eq:pp-representative-riemann}
\end{equation}
At the equator,
\begin{align}
 \Gamma^r_{\theta\theta}&=-\frac\Delta r,
 &\partial_r\Gamma^r_{\theta\theta}
 &=-\frac{\Delta'}r+\frac\Delta{r^2},\nonumber\\
 \partial_\theta\Gamma^r_{r\theta}&=\frac{a^2}{r^2},
 &\Gamma^r_{rr}-\Gamma^\theta_{r\theta}
 &=-\frac{r-M}{\Delta}.
\end{align}
Using $\Delta'=2(r-M)$ and $\Delta-a^2=r^2-2Mr$ in
\eqref{eq:pp-representative-riemann} gives
\begin{equation}
 R^r{}_{\theta r\theta}=-\frac Mr,
 \qquad
 R_{r\theta r\theta}
 =g_{rr}R^r{}_{\theta r\theta}=-\frac{Mr}{\Delta},
\end{equation}
The same calculation gives the orbit components in
\eqref{eq:pp-coordinate-components}.

The contractions with the orthonormal frame simplify to
\begin{align}
 A^2R_{t\theta t\theta}+2aA R_{t\theta\phi\theta}
 +a^2R_{\phi\theta\phi\theta}&=Mr\Delta,\nonumber\\
 a^2R_{t\theta t\theta}+2aR_{t\theta\phi\theta}
 +R_{\phi\theta\phi\theta}&=2Mr,\nonumber\\
 aA R_{t\theta t\theta}+(A+a^2)R_{t\theta\phi\theta}
 +aR_{\phi\theta\phi\theta}&=0.
 \label{eq:pp-orbit-contractions}
\end{align}
On putting $m\equiv M/r^3$, we obtain
\begin{equation}\label{eq:pp-frame-components}
 R_{0202}=m,\qquad R_{1212}=-m,\qquad R_{3232}=2m,
\end{equation}
and $R_{a2b2}=0$ for distinct $a,b\in\{0,1,3\}$.  Since
$U^2=0$ and
\begin{equation}
 -(U^0)^2+(U^1)^2+(U^3)^2=-1,
\end{equation}
equation~\eqref{eq:pp-frame-components} gives
\begin{align}
 R(U,N,U,N)
 &=m\left((U^0)^2-(U^1)^2+2(U^3)^2\right)\nonumber\\
 &=m\left(1+3(U^3)^2\right)
 =\frac{M}{r^3}+\frac{3MD^2}{r^5}.
\end{align}
For its expansion, we use
\begin{equation}
 \varrho^5=\frac{25}{2}MD^2u^2,
 \qquad
 r=\varrho\left(1-k_1\varrho+k_2\varrho^2
 +O(\varrho^3)\right),
\end{equation}
where $k_1,k_2$ are given in
\eqref{eq:kerr-simple-zero-coefficients}.  It follows that
\begin{align}
 r^{-5}
 &=\varrho^{-5}\left[1+5k_1\varrho
 +(15k_1^2-5k_2)\varrho^2+O(\varrho^3)\right],\nonumber\\
 r^{-3}
 &=\varrho^{-3}\left[1+O(\varrho)\right].
 \label{eq:pp-negative-powers}
\end{align}
Factoring $3Md^2\varrho^{-5}=6u^{-2}/25$ from the exact tidal
scalar, the $Mr^{-3}$ term contributes $\varrho^2/(3d^2)$ to the
second relative coefficient.  Hence
\begin{align}
 5k_1&=\frac{5B}{14Md^2},\nonumber\\
 15k_1^2-5k_2+\frac1{3d^2}
 &=\frac{65B^2}{588M^2d^4}-\frac{2}{9d^2}.
 \label{eq:pp-tidal-coefficients}
\end{align}
Substituting these identities yields
\eqref{eq:pp-tidal-exact}.

To integrate the expansion, put
\begin{equation}
 A_1\equiv\frac{5B}{14Md^2},\qquad
 A_2\equiv\frac{65B^2}{588M^2d^4}-\frac{2}{9d^2},\qquad
 \varrho(v)\equiv\left(\frac52\sqrt{2M}\,d\,v\right)^{2/5}.
\end{equation}
With $u_0\equiv\Lambda-s_0$, direct integration gives
\begin{align}
 \int_u^{u_0}v^{-2}\,dv&=u^{-1}+O(1),\nonumber\\
 \int_u^{u_0}v^{-2}\varrho(v)\,dv
 &=\frac53u^{-1}\varrho(u)+O(1),\nonumber\\
 \int_u^{u_0}v^{-2}\varrho(v)^2\,dv
 &=5u^{-1}\varrho(u)^2+O(1).
 \label{eq:pp-elementary-integrals}
\end{align}
Since the remainder $O(v^{-4/5})$ is integrable, its relative
contribution after factoring $u^{-1}$ is $O(\varrho^{5/2})$.
This proves \eqref{eq:pp-first-integral}, with coefficients
$(5/3)A_1$ and $5A_2$.  Finally,
\begin{equation}
 \mathfrak I_1(s)-\frac6{25}u^{-1}=O(u^{-3/5})
\end{equation}
is integrable at zero, giving a finite limit $C_2$ and the error
$O(u^{2/5})=O(\varrho)$ in \eqref{eq:pp-second-integral}.
\end{proof}

\begin{proposition}[Exact reduction of the polar Jacobi equation]
\label{prop:polar-jacobi}
Under the hypotheses of Proposition~\ref{prop:pp-tidal}, set
\begin{equation}
 B\equiv L^2+a^2(1-E^2),\qquad
 k\equiv\sqrt{2M}\,d,\qquad
 c\equiv\left(\frac52k\right)^{2/5},
 \qquad w(s)\equiv\tau_*-\tau(s).
\end{equation}
Let $C_B,S_B$ be the solutions of $z''+Bz=0$ normalized by
\begin{equation}
 C_B(0)=1,\quad C_B'(0)=0,\qquad
 S_B(0)=0,\quad S_B'(0)=1.
\end{equation}
Thus they are the cosine and normalized sine when $B>0$, the
functions $1,w$ when $B=0$, and the hyperbolic cosine and normalized
hyperbolic sine when $B<0$.  The polar line spanned by $N$ is invariant
under the Jacobi equation, and $J=jN$ is a Jacobi field precisely when
\begin{equation}\label{eq:polar-jacobi-equation}
 \frac{d^2j}{du^2}+K(u)j=0,
 \qquad K(u)\equiv\frac{M}{r^3}+\frac{3Md^2}{r^5}.
\end{equation}
An exact fundamental system is
\begin{equation}\label{eq:polar-exact-solutions}
 j_c(s)\equiv r(s)C_B(w(s)),\qquad
 j_s(s)\equiv r(s)S_B(w(s)).
\end{equation}
Equivalently, the rescaled position and momentum
\begin{equation}\label{eq:polar-position-momentum}
 v\equiv\frac{j}{r},\qquad p\equiv r\dot j-\dot rj
\end{equation}
satisfy $dv/d\tau=p$ and $dp/d\tau=-Bv$.  Consequently
\begin{equation}\label{eq:polar-conserved-energy}
 \mathcal E\equiv p^2+Bv^2
\end{equation}
is constant.  With $h\equiv\tau(s)-\tau(s_0)$, their exact evolution is
\begin{equation}\label{eq:polar-transfer}
 \begin{pmatrix}v(s)\\p(s)\end{pmatrix}
 =\begin{pmatrix}C_B(h)&S_B(h)\\-B S_B(h)&C_B(h)\end{pmatrix}
 \begin{pmatrix}v(s_0)\\p(s_0)\end{pmatrix}.
\end{equation}
Both this matrix and the change of variables
$(v,p)\mapsto(j,\dot j)=(rv,\dot r v+p/r)$ have determinant one.

The normalized solutions
\begin{equation}\label{eq:polar-normalization}
 j_{2/5}\equiv c^{-1}j_c,\qquad
 j_{3/5}\equiv\frac{k}{2c^{3/2}}j_s
\end{equation}
satisfy
\begin{align}
 j_{2/5}(u)
 &=u^{2/5}\left[1-\frac{15B}{14Md^2}\varrho
 +\left(\frac{20B^2}{147M^2d^4}+\frac1{9d^2}\right)
 \varrho^2+O(\varrho^3)\right],
 \label{eq:polar-jacobi-first}\\
 j_{3/5}(u)
 &=u^{3/5}\left[1-\frac{5B}{14Md^2}\varrho
 +\left(\frac{B^2}{196M^2d^4}+\frac1{15d^2}\right)
 \varrho^2+O(\varrho^3)\right].
 \label{eq:polar-jacobi-second}
\end{align}
After division by the respective leading powers, both expansions
are convergent power series in $\varrho$ near zero.  With primes
denoting $u$-derivatives,
\begin{equation}\label{eq:polar-wronskian}
 j_{2/5}j_{3/5}'-j_{2/5}'j_{3/5}=\frac15.
\end{equation}
In particular, every polar Jacobi field tends to zero at the ring.
\end{proposition}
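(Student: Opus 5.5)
The plan is to reduce the Jacobi equation along the polar line to a constant-coefficient equation in Mino time. First, invariance of the polar line: along the terminal equatorial interval $N=\operatorname{sgn}(D)E_2$ is parallel, and \eqref{eq:pp-frame-components} gives $R_{a2b2}=0$ for $a\neq b$ together with $U^2=0$. Hence $R(N,U)U$ is a multiple of $N$, namely $K N$ with $K=R(U,N,U,N)$ as in Proposition~\ref{prop:pp-tidal}. For $J=jN$ the Jacobi equation $\nabla_U\nabla_UJ+R(J,U)U=0$ therefore collapses to $\ddot j+Kj=0$. Since $ds=-du$, this is \eqref{eq:polar-jacobi-equation}; I will check that the sign convention \eqref{eq:curvature-convention} yields $+K$. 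The polar component of $J$ decouples, so the polar line is invariant.

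Next I verify $j=rZ(\tau)$ with $Z''+BZ=0$, writing $'$ for $d/d\tau$ and $\dot{}$ for $d/ds=r^{-2}d/d\tau$. With $v=j/r$ and $p=r\dot j-\dot r j$, differentiation gives $\dot p=r\ddot j-\ddot r j$. Using the Jacobi equation this becomes $\dot p=-(Kr+\ddot r)j$. From the radial equation $r^4\dot r^2=R(r)$ one gets $\ddot r=R'/(2r^4)-2R/r^5$. Substituting the exact $R$ of \eqref{eq:terminal-R-exact} should give $Kr+\ddot r=-B/r^3$ identically; the $r$, $r^3$ and $r^4$ terms of $R$ cancel against $M/r^2+3Md^2/r^4$. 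I expect this cancellation to be the key computation. Then $dp/d\tau=r^2\dot p=Bj/r=Bv$ up to sign, and the sign must come out as $dp/d\tau=-Bv$. Also $dv/d\tau=r^2(\dot j r-j\dot r)/r^2=p$. This gives the system, the conserved quantity $\mathcal E$, and the transfer matrix \eqref{eq:polar-transfer}. Because the equation is autonomous in $\tau$, one may use $w=\tau_*-\tau$ in place of $\tau$, which yields $j_c=rC_B(w)$ and $j_s=rS_B(w)$. The determinant of the change of variables is $r\cdot(1/r)-0=1$. Linear independence follows from the Wronskian.

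For the asymptotics, I use $w=\int_0^r dy/\sqrt{R(y)}$, expanded as in \eqref{eq:terminal-integrand-expansion} but with $y^{-1/2}$ in place of $y^{3/2}$. This gives $w=(2/k)r^{1/2}[1+\tfrac{B}{12Md^2}r+\cdots]$. I then substitute $r$ in terms of $\varrho$ from \eqref{eq:sharp-r-asymptotics} and note $\varrho=c\,u^{2/5}$. Then $j_c=r(1-Bw^2/2+\dots)$ and $j_s=r(w-Bw^3/6+\dots)$. The normalizations $c^{-1}$ and $k/(2c^{3/2})$ make the leading terms $u^{2/5}$ and $u^{3/5}$ respectively. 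Collecting the coefficients in $\varrho$ produces \eqref{eq:polar-jacobi-first} and \eqref{eq:polar-jacobi-second}. This bookkeeping is the main labor and the likeliest place for errors, so I would cross-check it against the $5B/14Md^2$ coefficient.

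Convergence follows because $R(y)/(2Md^2y)$ is an analytic unit. Consequently $w/\sqrt r$ and $r/\varrho$ are analytic in $\sqrt r$ and in $\varrho$, and $C_B$, $S_B$ are entire. Parity works out because $w^2$ is analytic in $r$. For the Wronskian, $dw/du=r^{-2}$ gives $j_cj_s'-j_c'j_s=r^2\cdot r^{-2}=1$ in $u$-derivatives, up to orientation sign. Multiplying by the normalization constants gives $k/(2c^{5/2})=k/(2\cdot\tfrac52k)=1/5$, since $c^{5/2}=\tfrac52k$. Finally, every solution is a combination of $u^{2/5}$ and $u^{3/5}$ times bounded factors, so every polar Jacobi field tends to zero at the ring.
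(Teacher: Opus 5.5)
Apart from the first step, your route is the paper's: the variables $v=j/r$, $p=r\dot j-\dot rj$, the exact identity for $\ddot r$, the expansion of $w=\int_0^r dy/\sqrt{R(y)}$, analyticity for convergence, and the normalization $k/(2c^{5/2})=1/5$. The first step has a genuine gap, the invariance of the polar line.

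The facts you cite from \eqref{eq:pp-frame-components} are $R_{a2b2}=0$ for distinct $a,b\in\{0,1,3\}$ and $U^2=0$. They only compute the $N$-component of $R(N,U)U$, which is $K$. Whether $R(N,U)U$ is a multiple of $N$ depends instead on $g(E_a,R(N,U)U)=\pm R_{ab2d}U^bU^d$ for $a\in\{0,1,3\}$. These are frame components with exactly \emph{one} polar index. The same components decide whether the polar part of a general Jacobi field decouples. Neither \eqref{eq:pp-frame-components} nor the parallelism of $N$ says anything about them, so ``hence $R(N,U)U$ is a multiple of $N$'' does not follow.

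The missing input is equatorial reflection. The map $\theta\mapsto\pi-\theta$ is a local isometry fixing each equatorial point. There it fixes $E_0,E_1,E_3$ and $U$ (since $U^2=0$) and sends $E_2\mapsto-E_2$. Curvature equivariance then gives $d\rho\bigl(R(U,N)U\bigr)=R(U,-N)U$, which places $R(U,N)U$ in the $(-1)$-eigenspace, i.e.\ the polar line. The same argument shows that $R(X,U)U\perp N$ for $X\perp N$. This is how the paper argues; alternatively you could compute the odd-index components directly.

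Your predicted key identity also has the wrong sign. From $\ddot r=R'/(2r^4)-2R/r^5$ and \eqref{eq:terminal-R-exact} one finds $\ddot r=-3Md^2/r^4+B/r^3-M/r^2$, so $Kr+\ddot r=+B/r^3$: the $-Br^2$ term of $R$ contributes $+B/r^3$. This gives $dp/d\tau=r^2\dot p=-r^2(Kr+\ddot r)j=-Bv$ directly. With your $-B/r^3$ you would get $+Bv$, so your hedge ``up to sign'' hides an error, not a convention.

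Two smaller points. You should state that the transfer-matrix determinant is $C_B^2+BS_B^2=C_BS_B'-C_B'S_B=1$. The convergence argument needs $w/\sqrt r$ analytic in $r$, not merely in $\sqrt r$. This holds because $w=\int_0^r y^{-1/2}f(y)\,dy$ with $f$ analytic, and it is what makes the parity of $C_B,S_B$ yield power series in $\varrho$.
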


\begin{proof}
Equatorial reflection fixes $U$ and reverses $N$; its negative
eigenspace is the polar line.  Curvature equivariance in
convention~\eqref{eq:curvature-convention} gives
\begin{equation}\label{eq:polar-curvature-operator}
 R(U,N)U=-K N.
\end{equation}
Since $N$ is parallel, the Jacobi equation
$\nabla_U^2J=R(U,J)U$ is equivalent to
\eqref{eq:polar-jacobi-equation}; replacing $s$ by $u=\Lambda-s$
does not change the second derivative.

The exact radial equation gives
\begin{equation}\label{eq:polar-radial-acceleration}
 \dot r^2=\frac{2Md^2}{r^3}-\frac{B}{r^2}
       +\frac{2M}{r}+E^2-1,
 \qquad
 \ddot r=-\frac{3Md^2}{r^4}+\frac{B}{r^3}-\frac{M}{r^2}.
\end{equation}
The terminal branch has $\dot r\ne0$, so this differentiation is
legitimate.  Since $\dot w=-r^{-2}$, every twice differentiable $z$
satisfies
\begin{equation}\label{eq:polar-exact-reduction}
 \frac{d^2}{ds^2}\bigl(rz(w)\bigr)
       +Krz(w)
 =\frac{z''(w)+Bz(w)}{r^3}.
\end{equation}
Differentiating \eqref{eq:polar-position-momentum} gives
\[
 \dot v=\frac{p}{r^2},\qquad
 \dot p=r\ddot j-\ddot rj=-\frac{Bv}{r^2}.
\]
Since $d\tau/ds=r^{-2}$, these are the oscillator equations.
They give energy conservation, \eqref{eq:polar-transfer}, and
\eqref{eq:polar-exact-solutions}.  The determinants equal one by
$C_B^2+B S_B^2=1$.  Since $C_BS_B'-C_B'S_B=1$ and
$dw/du=r^{-2}$, the unnormalized $u$-Wronskian is one;
normalization multiplies it by $k/(2c^{5/2})=1/5$.

The next two terms follow by Taylor expansion.  The radial
integral gives
\begin{align}
 w&=\int_0^r\frac{dy}{\sqrt{R(y)}}\nonumber\\
 &=\frac{2\sqrt r}{k}\left[1+\frac{B}{12Md^2}r
 +\left(\frac{3B^2}{160M^2d^4}-\frac1{10d^2}\right)r^2
 +O(r^3)\right].\label{eq:polar-mino-expansion}
\end{align}
Using $C_B(w)=1-Bw^2/2+B^2w^4/24+O(w^6)$ and
$S_B(w)=w-Bw^3/6+B^2w^5/120+O(w^7)$, one obtains
\begin{align}
 rC_B(w)&=r\left[1-\frac{B}{Md^2}r+O(r^3)\right],\nonumber\\
 rS_B(w)&=\frac{2r^{3/2}}k\left[1-\frac{B}{4Md^2}r
 -\left(\frac{B^2}{32M^2d^4}+\frac1{10d^2}\right)r^2
 +O(r^3)\right].\label{eq:polar-r-series}
\end{align}
Substituting~\eqref{eq:sharp-r-asymptotics} and
$\varrho=cu^{2/5}$ proves
\eqref{eq:polar-jacobi-first}-\eqref{eq:polar-jacobi-second}.
Finally, $R(r)/(k^2r)$ is analytic and equals one at zero.
Consequently $w/\sqrt r$ and $u/r^{5/2}$ are analytic in $r$ near
zero.  The analytic inverse theorem makes $r$ analytic in $\varrho$.
The parity of $C_B$ and $S_B$ then proves the asserted convergence of
the normalized series.
\end{proof}

\begin{corollary}[Zeros of polar Jacobi fields]\label{cor:polar-zeros}
Fix $s_0$ in the terminal equatorial interval.  The scalar solution
with $j(s_0)=0$ and $\dot j(s_0)=1$ is
\begin{equation}\label{eq:polar-initial-point-solution}
 j(s)=r(s_0)r(s)S_B\bigl(\tau(s)-\tau(s_0)\bigr).
\end{equation}
If $B\le0$, it has no zero for $s_0<s<\Lambda$.  If $B>0$, its
zeros in this interval are exactly the instants satisfying
\begin{equation}\label{eq:polar-zero-times}
 \tau(s)-\tau(s_0)=\frac{n\pi}{\sqrt B},
 \qquad n\in\mathbb N,\quad n\ge1.
\end{equation}
There are only finitely many such instants.  Its leading terminal
order is $u^{2/5}$ unless
$S_B(\tau_*-\tau(s_0))=0$, in which case it is $u^{3/5}$.
\end{corollary}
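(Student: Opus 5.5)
The plan is to read off the solution formula and its zeros from Proposition~\ref{prop:polar-jacobi}, and then count the zeros using the terminal behaviour of Mino time. Set $v\equiv j/r$ and $p\equiv r\dot j-\dot r j$ as in \eqref{eq:polar-position-momentum}. The initial data $j(s_0)=0$, $\dot j(s_0)=1$ give $v(s_0)=0$ and $p(s_0)=r(s_0)$. The transfer matrix \eqref{eq:polar-transfer}, with $h=\tau(s)-\tau(s_0)$, then yields
\[
v(s)=S_B(h)\,r(s_0),
\]
which is \eqref{eq:polar-initial-point-solution} after multiplying by $r(s)$. Alternatively, one can check directly that $r(s)S_B(\tau(s)-\tau(s_0))$ solves \eqref{eq:polar-jacobi-equation} by \eqref{eq:polar-exact-reduction}, since translation in $w$ preserves $z''+Bz=0$. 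Its derivative at $s_0$ is $r(s_0)\cdot r(s_0)^{-2}$ times $r(s_0)$, which equals $1$, using $d\tau/ds=r^{-2}$.

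On the terminal interval $r>0$, so the zeros of $j$ are exactly the zeros of $S_B(h)$ with $h>0$. Here $h$ is strictly increasing in $s$, because $d\tau/ds=r^{-2}>0$. If $B=0$ then $S_B(h)=h>0$. If $B<0$ then $S_B(h)=\sinh(\sqrt{-B}\,h)/\sqrt{-B}>0$. So for $B\le0$ there are no zeros. If $B>0$ then $S_B(h)=\sin(\sqrt B\,h)/\sqrt B$, which vanishes precisely at $h=n\pi/\sqrt B$. Restricting to $h>0$ gives $n\ge1$, which is \eqref{eq:polar-zero-times}.

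Finiteness is the one point needing an input beyond algebra, namely that Mino time stays bounded at the ring. From \eqref{eq:polar-mino-expansion}, or directly because $R$ has a simple zero at $r=0$ with $R'(0)=2Md^2>0$,
\[
\tau_*-\tau(s)=w=\int_0^{r}\frac{dy}{\sqrt{R(y)}}<\infty .
\]
Hence $h$ ranges over the bounded interval $(0,\tau_*-\tau(s_0))$. Only finitely many $n$ satisfy $n\pi/\sqrt B<\tau_*-\tau(s_0)$, and $h$ is injective in $s$, so there are finitely many zeros.

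For the terminal order, write $S_B(h)=S_B\bigl(w_0-w(s)\bigr)$ with $w_0\equiv\tau_*-\tau(s_0)$ and $w(s)\to0$. The addition formula gives
\[
S_B(w_0-w)=S_B(w_0)C_B(w)-C_B(w_0)S_B(w).
\]
Therefore
\[
j=r(s_0)\bigl[S_B(w_0)\,j_c-C_B(w_0)\,j_s\bigr],
\]
and by \eqref{eq:polar-normalization} and \eqref{eq:polar-jacobi-first}--\eqref{eq:polar-jacobi-second}, $j_c\sim c\,u^{2/5}$ and $j_s\sim(2c^{3/2}/k)\,u^{3/5}$. If $S_B(w_0)\ne0$ the leading order is $u^{2/5}$. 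If $S_B(w_0)=0$, then $C_B(w_0)\ne0$ because $C_B^2+BS_B^2=1$ (the Wronskian identity), so the leading order is $u^{3/5}$.

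No step is a real obstacle. The only care needed is to confirm that $w_0$ is finite and that the terminal interval contains no radial turn, both of which are guaranteed by Theorem~\ref{thm:sharp-terminal-asymptotics}.
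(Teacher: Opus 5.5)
Your proposal is correct and follows essentially the paper's route: the formula comes from the exact reduction (or the transfer matrix) together with $\dot\tau=r^{-2}$, the zeros are those of $S_B$ because $r>0$, and finiteness follows from the finiteness of the Mino endpoint $\tau_*$. The only difference is cosmetic and concerns the terminal order. You decompose $j$ by the addition formula into $j_c$ and $j_s$ and quote their expansions, whereas the paper Taylor-expands $S_B$ at $\tau_*-\tau(s_0)$ and uses $\tau_*-\tau(s)\sim 2\sqrt r/k$ together with $S_B'=C_B\neq0$ at a zero.
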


\begin{proof}
We apply~\eqref{eq:polar-exact-reduction} with the increasing
parameter $\tau(s)-\tau(s_0)$.  The initial values follow from
$\dot\tau=r^{-2}$.  Since $r>0$, the zeros are exactly those of
$S_B$.  The endpoint $\tau_*$ is finite, so only finitely many
positive integer multiples in~\eqref{eq:polar-zero-times} occur.
If the endpoint value of $S_B$ is nonzero, multiplication by $r$
gives order $u^{2/5}$.  At a zero, $S_B'$ is nonzero, and
$\tau_*-\tau(s)\sim2\sqrt r/k$ gives order $r^{3/2}$, equivalently
$u^{3/5}$.
\end{proof}

\begin{remark}
Each interior zero in Corollary~\ref{cor:polar-zeros} gives a conjugate
point.  The polar sector alone determines neither the full Jacobi
volume nor Tipler strength.
\end{remark}

\begin{remark}
The coefficient $6/25$ agrees with
\cite[Section~V.B, equations~(V.27)-(V.29)]{ChruscielMaliborskiYunes};
Proposition~\ref{prop:polar-jacobi} refines that leading calculation.
The iterated scalar integral $\mathfrak I_2$ is distinct from a
Jacobi field.  For a locally Lipschitz metric, curvature is
distributional, so divergence of a curvature component alone does
not exclude such an extension. The $C^0$ proof in
Section~\ref{sec:two-sheet-fan} is independent of these considerations.
\end{remark}

\begin{corollary}[Independent curvature obstruction]\label{thm:main-c11}
Every spacetime in Theorem~\ref{thm:main-c0} is $C^{1,1}_{\mathrm{loc}}$-inextendible.
The terminal classification and curvature divergence give a
direct proof, independent of the causal homotopy construction.
\end{corollary}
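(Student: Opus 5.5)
The plan is to apply Theorem~\ref{thm:general-curvature-criterion} with the single scalar $\mathcal I_1\equiv\mathcal J=\mathcal W_1^2+\mathcal W_2^2$ of Proposition~\ref{prop:weyl-invariants}. First I will check that $\mathcal J$ is admissible for Lemma~\ref{lem:c11-curvature}. $\mathcal W_1$ is a full contraction of two Weyl tensors with the metric and its inverse. $\mathcal W_2$ additionally uses one local volume form. Squaring removes the orientation dependence, so $\mathcal J$ is a smooth global scalar on the spacetime and on each of its connected covers, where it is the pullback. It is therefore one of the orientation-independent polynomials in Lemma~\ref{lem:c11-curvature}, and it is locally essentially bounded in any $C^{1,1}_{\mathrm{loc}}$ extension.

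Next I will verify the divergence hypothesis \eqref{eq:general-curvature-divergence}. Let $N$ be any spacetime in Theorem~\ref{thm:main-c0}, including a connected cover, and fix either time orientation. Let $\gamma:[0,\Lambda)\to N$ be a future-inextendible unit-speed timelike geodesic with $\Lambda<\infty$. The localization statement of Theorem~\ref{thm:main-c0} gives $\Sigma\circ\gamma\to0$. That statement is time-symmetric, since Theorem~\ref{thm:ring-localization} and Corollaries~\ref{cor:overextremal-inextendibility} and~\ref{cor:extremal-inextendibility} use only the separated equations and no time orientation. On covers it follows by projecting the geodesic, as in the proof of Theorem~\ref{thm:main-c0}. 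By \eqref{eq:weyl-magnitude}, $\sqrt{\mathcal J}=48M^2/\Sigma^3\to\infty$ along $\gamma$. Theorem~\ref{thm:general-curvature-criterion} then excludes a proper $C^{1,1}_{\mathrm{loc}}$ extension.

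The argument must not use the causal homotopy construction of Section~\ref{sec:two-sheet-fan}, and I will confirm that the chain avoids it. Theorem~\ref{thm:boundary-geodesic} (through Minguzzi--Suhr) supplies the terminal geodesic, and the localization to $\Sigma\to0$ comes from Section~\ref{sec:geodesics} and the extremal and overextremal continuation lemmas. Theorem~\ref{thm:sharp-terminal-asymptotics} and Corollary~\ref{cor:sharp-curvature} give an even sharper rate, $\mathcal W_1\sim48M^2\varrho^{-6}$, but they are not needed.

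The main obstacle is Lemma~\ref{lem:invariant-obstruction}: the curvature of a $C^{1,1}$ metric exists only almost everywhere, so unboundedness along a curve of measure zero must be upgraded to unboundedness of the essential supremum. This is handled by the open-neighbourhood argument in that lemma, which needs the extension metric to be smooth on $\iota(N)$ so that weak and classical curvature agree there. That holds because the embedding is a smooth isometry. The other point to check is that the global smoothness of $\mathcal J$ on covers and across horizons holds in every realization. It does, because $\mathcal J=(48M^2)^2\Sigma^{-6}$ is a function of $r$ and $\mu$ alone.
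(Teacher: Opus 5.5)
Your proposal is correct and follows the paper's own proof. Ring localization, which holds in either time orientation and passes to covers by projecting the geodesic, gives $\Sigma\to0$ along every finite-proper-time terminal timelike geodesic. Hence the global scalar $\mathcal J=(48M^2)^2\Sigma^{-6}$ diverges, and Theorem~\ref{thm:general-curvature-criterion} applies. Your extra checks, namely the admissibility of $\mathcal J$ under Lemma~\ref{lem:c11-curvature} and the fact that the argument never uses Section~\ref{sec:two-sheet-fan}, make explicit what the paper leaves implicit.
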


\begin{proof}
In either time orientation, ring localization gives
$\Sigma(\gamma(s))\to0$ along every finite-proper-time terminal
unit timelike geodesic, also on covers.  Hence
\begin{equation}
 \mathcal J(\gamma(s))
 =\bigl(\mathcal W_1^2+\mathcal W_2^2\bigr)(\gamma(s))
 =\frac{(48M^2)^2}{\Sigma(\gamma(s))^6}
 \longrightarrow+\infty.
\end{equation}
The globally defined scalar $\mathcal J$ therefore satisfies the
hypotheses of Theorem~\ref{thm:general-curvature-criterion}.
\end{proof}

Consequently these spacetimes are $C^k$-inextendible for $k\ge2$
and analytically inextendible, since $C^2$ metrics are locally $C^{1,1}$.

\section{Axial geometry and boundary symmetries}
\label{sec:symmetry}

In this section we study the stationary and axial directions near the
ring.  The constructions apply for $M>0$, $a\ne0$ in all three
models and their covers.  On a cover, the curves and Killing fields
are lifted locally; an embedded lift of an entire base block is not
needed.

\subsection{Exact disc-crossing bridges}

We first construct timelike curves through the regular disc and
estimate their stationary and axial displacements.  The two
combinations of displacements have different orders of vanishing.

\begin{lemma}[Timelike bridge through a regular disc]
\label{lem:exact-ring-bridge}
There is $\rho_0>0$, depending only on $M$, $a$, and $c$,
with the following property.  For every $0<\rho<\rho_0$, the normal
semicircle
\begin{equation}\label{eq:fan-normal-semicircle}
 r_\rho(\psi)\equiv\rho\cos\frac\psi2,
 \qquad
 a\cos\theta_\rho(\psi)\equiv\rho\sin\frac\psi2,
 \qquad 0\le\psi\le2\pi,
\end{equation}
has a strict future-timelike stationary-axial lift from the positive
equator to the negative equator.  Reversing the transverse path and
using the same lift rule gives a strict future-timelike return through
the same regular disc.

Let $\Delta t_{\mathrm{br}}(\rho)$ and
$\Delta\phi_{\mathrm{br}}(\rho)$ be the total increments of the outgoing
and return bridges, computed with one continuous lift of $\phi$, and set
\begin{equation}\label{eq:fan-b-e-def}
 b_\rho\equiv\varepsilon\operatorname{sgn}(a)
          \Delta\phi_{\mathrm{br}}(\rho),
 \qquad
 e_\rho\equiv\varepsilon\bigl(
          \Delta t_{\mathrm{br}}(\rho)
          -a\Delta\phi_{\mathrm{br}}(\rho)\bigr).
\end{equation}
Then $b_\rho>0$, $e_\rho>0$, and
\begin{equation}\label{eq:fan-bridge-orders}
 b_\rho=O(\rho),
 \qquad e_\rho=O(\rho^3),
 \qquad
 \varepsilon\Delta t_{\mathrm{br}}
 =|a|b_\rho+e_\rho.
\end{equation}
The constants in the two remainder estimates are uniform for
$0<\rho<\rho_0$.
\end{lemma}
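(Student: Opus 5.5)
The plan is to use the lift rule \eqref{eq:transverse-lift} of Lemma~\ref{lem:radial-attachment}, with the fixed constant $c>1$, along the semicircle \eqref{eq:fan-normal-semicircle}. In the transverse coordinates $(r,y)$, $y=a\cos\theta$, this path is $W(\psi)=(\rho\cos\frac\psi2,\rho\sin\frac\psi2)$. It starts at $(\rho,0)$, passes through the regular disc point $(0,\rho)$, and ends at $(-\rho,0)$. Since $\Sigma=\rho^2>0$ along it, the path avoids the ring.

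First fix $\rho_0<r_0$, so that $\Delta$ and $a^2-y^2$ are bounded above and below on the collar. The lift then has $\beta(V)=0$ and $\alpha(V)=\varepsilon c\Sigma k(W)/\sqrt\Delta$. By \eqref{eq:fan-coframe-decomposition} its norm is $g(V,V)=-(c^2-1)\Sigma k(W)^2<0$, and \eqref{eq:fan-future-alpha} shows it is future directed. Since $|\dot W|=\rho/2\neq0$, the lift is strictly future timelike. The return bridge is handled the same way: reverse $W$ and keep $\alpha>0$. Then $(\dot t,\dot\phi)$ is again given by \eqref{eq:transverse-orbit-bound}, with the same sign $\varepsilon$, so the return lift is also strictly timelike.

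Next, compute the increments. By \eqref{eq:transverse-orbit-bound}, on each bridge $\varepsilon\dot\phi=c\,a\,k/\sqrt\Delta$ and $\varepsilon\dot t=c(r^2+a^2)k/\sqrt\Delta$. Multiplying by $\operatorname{sgn}(a)$ gives $\varepsilon\operatorname{sgn}(a)\dot\phi=c|a|k/\sqrt\Delta>0$. Integrating over both bridges, whose total transverse length is $2\pi\rho$, yields $0<b_\rho\le C\rho$. For the combination $t-a\phi$ we have $\varepsilon(\dot t-a\dot\phi)=c\,r^2k/\sqrt\Delta\ge0$, and it is positive except where $r=0$. Therefore $e_\rho>0$. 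Moreover $r^2\le\rho^2$ and the $k$-length is $O(\rho)$, so $e_\rho=O(\rho^3)$. The identity $\varepsilon\Delta t_{\mathrm{br}}=|a|b_\rho+e_\rho$ follows directly from the definitions \eqref{eq:fan-b-e-def}, because $a\operatorname{sgn}(a)=|a|$. Uniformity of the constants for $0<\rho<\rho_0$ comes from the uniform bounds on $\Delta$ and $a^2-y^2$.

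The main obstacle is the singular factor $a^2-y^2$ and keeping the real angular lift continuous. The first is harmless for $\rho<|a|/2$. For the second, one has to check that $\phi$ is integrated as a real function across the disc, which is a regular region where the Boyer--Lindquist $\phi$ is smooth because $\theta\neq0,\pi$. Finally, one should confirm that the return bridge lies in the same regular disc branch, which holds because $y=\rho\sin(\psi/2)>0$ throughout.
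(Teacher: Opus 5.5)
Your proposal is correct and follows essentially the paper's argument. You use the same $\beta(V)=0$ lift from \eqref{eq:transverse-lift} along the semicircle and along its reverse. The identities $\varepsilon\operatorname{sgn}(a)\dot\phi=c|a|k(W)/\sqrt\Delta$ and $\varepsilon(\dot t-a\dot\phi)=c\,r^2k(W)/\sqrt\Delta$ then give positivity and the orders $O(\rho)$ and $O(\rho^3)$, via uniform bounds on $\Delta$ and $a^2-y^2$.

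The paper only differs in making the increments explicit. It writes $\sqrt{\mathfrak q(W_\rho,W_\rho)}=\tfrac{\rho^2}{2}F_\rho(\psi)$ and integrates in $\psi$, whereas you bound the $k$-length by the Euclidean length $\pi\rho$ of each semicircle. Your ``keep $\alpha>0$'' should read $\varepsilon\alpha>0$.
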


\begin{proof}
The semicircle has $\Sigma=\rho^2$.  Its transverse tangent
$W_\rho\equiv d(r_\rho,\theta_\rho)/d\psi$ satisfies
\begin{equation}\label{eq:fan-Q-F}
 Q_\rho\equiv\sqrt{\mathfrak q(W_\rho,W_\rho)}
 =\frac{\rho^2}{2}F_\rho(\psi),
\end{equation}
where
\begin{equation}\label{eq:fan-F}
 F_\rho(\psi)^2
 \equiv\frac{\sin^2(\psi/2)}{\Delta(r_\rho(\psi))}
 +\frac{\cos^2(\psi/2)}
 {a^2\sin^2\theta_\rho(\psi)}.
\end{equation}
Choose $\rho_0<|a|/2$ sufficiently small.  The two denominators
are then bounded above and below by positive constants, uniformly in
the indicated range.  Hence
\begin{equation}\label{eq:fan-F-bounds}
 0<C_1\le F_\rho(\psi)\le C_2<\infty.
\end{equation}
The lift $V\equiv W_\rho+\dot tT+\dot\phi\Phi$ is prescribed by
\begin{equation}\label{eq:fan-lift-rule}
 \beta(V)=0,\qquad
 \alpha(V)=\varepsilon c\sqrt{\frac\Sigma\Delta}\,Q_\rho.
\end{equation}
Solving for the orbit components and substituting into the coframe
decomposition gives
\begin{equation}\label{eq:fan-orbit-velocities}
 \dot t=\frac{r^2+a^2}{\Sigma}\alpha(V),\qquad
 \dot\phi=\frac a\Sigma\alpha(V),
\end{equation}
\begin{equation}\label{eq:fan-bridge-timelike}
 g(V,V)=-(c^2-1)Q_\rho^2<0,\qquad
 \varepsilon\alpha(V)>0.
\end{equation}
The reversed transverse path has the same speed $Q_\rho$, so its
lift is again future timelike.  Along the outward traversal we have
\begin{equation}\label{eq:fan-per-psi}
 \frac{d\phi}{d\psi}
 =\varepsilon\frac{ca\rho F_\rho(\psi)}
 {2\sqrt{\Delta(r_\rho(\psi))}},\qquad
 \frac{d(t-a\phi)}{d\psi}
 =\varepsilon\frac{c\rho^3\cos^2(\psi/2)F_\rho(\psi)}
 {2\sqrt{\Delta(r_\rho(\psi))}}.
\end{equation}
The second equality uses $(dt-a\,d\phi)(V)=r^2\alpha(V)/\Sigma$.
Since the two traversals have equal increments, we obtain
\begin{align}
 b_\rho
 &=c|a|\rho\int_0^{2\pi}
 \frac{F_\rho(\psi)}{\sqrt{\Delta(r_\rho(\psi))}}\,d\psi,
 \label{eq:fan-b-integral}\\
 e_\rho
 &=c\rho^3\int_0^{2\pi}
 \frac{\cos^2(\psi/2)F_\rho(\psi)}
 {\sqrt{\Delta(r_\rho(\psi))}}\,d\psi.
 \label{eq:fan-e-integral}
\end{align}
The integrals multiplying $\rho$ and $\rho^3$ are positive and
uniformly bounded.  The final identity follows from the definition of
$e_\rho$.
\end{proof}

\subsection{Timelike detours through both sheets}

\begin{lemma}[Two-sheet timelike detours]
\label{lem:timelike-lollipops}
There is $\rho_1>0$ such that, for every $A_0\in(0,\pi)$,
$0<\rho<\rho_1$, and positive-equatorial point $x$ with $r(x)=\rho$,
there is a continuous family of future-causal paths
$L_{\rho,\lambda}^x$, $0\le\lambda\le1$, from $x$ to
$\mathscr S_{h_{\rho,\lambda}}(x)$, with the following properties.
The path $L_{\rho,0}^x$ is constant, every nonconstant piece is
strictly future timelike, and
\begin{equation}\label{eq:fan-h-uniform}
 h_{\rho,0}=0,
 \qquad
 \sup_{0\le\lambda\le1}|h_{\rho,\lambda}|
 \le C(A_0+\rho).
\end{equation}
Here the angular component is given a continuous real lift and
$|(\xi,\eta)|\equiv|\xi|+|\eta|$.
The last path contains a negative-equatorial axial arc of width $A_0$
and length
\begin{equation}\label{eq:fan-negative-arc-length}
 L_-(\rho)
 =A_0\sqrt{\frac{2Ma^2}{\rho}-a^2-\rho^2}
 \longrightarrow+\infty.
\end{equation}
The constant $C$ depends only on $M$, $a$, and $c$.
\end{lemma}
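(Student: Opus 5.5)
The plan follows the opening--insertion scheme of Lemma~\ref{lem:radial-attachment}, with the transverse connector replaced by the exact disc-crossing bridge of Lemma~\ref{lem:exact-ring-bridge}. Take $\rho_1\le\rho_0$ so small that three things hold: the bridge lemma applies; $Y_-=-\varepsilon\operatorname{sgn}(a)\Phi$ is future timelike at the negative-equatorial point $(r,y)=(-\rho,0)$; and $\Delta>0$ along the semicircles. The point $x$ sits on the positive equator with $r(x)=\rho$, at the start of the normal semicircle \eqref{eq:fan-normal-semicircle}. The outgoing bridge from the lemma runs from $x$ through the regular disc to the negative equator at transverse point $(-\rho,0)$. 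The return bridge runs back along the reversed semicircle to the positive equator.

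The family is built in two stages, with fixed parameter subintervals so that it is jointly continuous. For $0\le\lambda\le1/2$, take the lift of the initial portion $\psi\in[0,4\pi\lambda]$ of the semicircle, followed by the lift of the same portion traversed in reverse. Translate the return by the stationary-axial map $\mathscr S$ so that the pieces join. At $\lambda=0$ both portions are empty, so the path is constant and $h_{\rho,0}=0$. Both pieces are strictly future timelike by \eqref{eq:fan-bridge-timelike}, which applies equally to the reversed traversal.

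For $1/2\le\lambda\le1$, keep the full outgoing and return bridges. Between them, at $(-\rho,0)$, insert a $Y_-$ orbit arc of axial width $(2\lambda-1)A_0$, and translate the return accordingly. Stationary-axial translations preserve the metric and the future cone, so every slice is future causal. Every nonconstant piece is strictly timelike, and collapsed pieces are parametrized as constant curves.

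To get the shift bound \eqref{eq:fan-h-uniform}, sum the orbit increments. Each partial bridge has $|\Delta t|+|\Delta\phi|$ bounded by integrating \eqref{eq:fan-per-psi} and the associated $\dot t$ formula \eqref{eq:fan-orbit-velocities}. Using $|d\phi/d\psi|\le C\rho$, $|d(t-a\phi)/d\psi|\le C\rho^3$ and the bounds \eqref{eq:fan-F-bounds} uniformly in the truncation, each partial bridge contributes $O(\rho)$. The axial arc contributes at most $A_0$ in $\eta$ and nothing in $\xi$. Hence $|h_{\rho,\lambda}|\le C(\rho+A_0)$, with $C$ depending only on $M$, $a$, $c$. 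The length of the final arc is the computation already made after Lemma~\ref{lem:radial-attachment}. At $(-\rho,0)$ one has $g(\Phi,\Phi)=\rho^2+a^2-2Ma^2/\rho$, so an arc of width $A_0$ has length \eqref{eq:fan-negative-arc-length}. This tends to $+\infty$ as $\rho\downarrow0$.

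The main obstacle is continuity of the family at the endpoints $\psi=0$ and $\psi=2\pi$ of the semicircle, and at the switch $\lambda=1/2$. The semicircle passes through the disc, where $\theta\ne\pi/2$, and it crosses no pole. So $\sin\theta_\rho$ stays bounded below, and the lift rule \eqref{eq:fan-lift-rule} is continuous in $\psi$. Continuous dependence of the integrated orbit coordinates on the truncation point then gives joint continuity. One must also keep the real lift of $\phi$ consistent across the translations, which is handled exactly as in Lemma~\ref{lem:radial-attachment}.
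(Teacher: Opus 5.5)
Your proposal is correct and follows the paper's proof essentially step by step. You use the outgoing bridge of Lemma~\ref{lem:exact-ring-bridge}, a $Y_-$ arc at $(-\rho,0)$, and a translated return bridge; you open the semicircle progressively for $0\le\lambda\le1/2$, insert the axial arc of width $(2\lambda-1)A_0$ for $1/2\le\lambda\le1$, and obtain the shift bound by integrating \eqref{eq:fan-per-psi}. The only difference is cosmetic: the paper writes out the final displacement $h_{\rho,1}=\bigl(\varepsilon(|a|b_\rho+e_\rho),\varepsilon\operatorname{sgn}(a)(b_\rho-A_0)\bigr)$ explicitly, whereas you only bound it, which suffices for \eqref{eq:fan-h-uniform}.
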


\begin{proof}
We concatenate the outgoing bridge, an orbit segment of the future
timelike field $Y_-$ with parameter length $A_0$, and a translated
copy of the return bridge.  The endpoint displacement is
\begin{equation}\label{eq:fan-final-lollipop-endpoint}
 h_{\rho,1}
 =\bigl(\varepsilon(|a|b_\rho+e_\rho),
        \varepsilon\operatorname{sgn}(a)(b_\rho-A_0)\bigr).
\end{equation}
The norm of the axial field gives
\eqref{eq:fan-negative-arc-length}.

For $0\le\lambda\le1/2$, follow the semicircle to
$\psi=4\pi\lambda$ and retrace it with the future lift.  The
displacement is twice a partial integral in \eqref{eq:fan-per-psi},
hence uniformly $O(\rho)$.  For $1/2\le\lambda\le1$, keep both
bridges and insert the axial segment of length $(2\lambda-1)A_0$,
translating the return bridge.  The two families agree at $1/2$.  By using fixed parameter
intervals and allowing pauses, we obtain a jointly continuous family.
The lift and the translations preserve the future direction, and the
additional displacement is at most $A_0$.  This proves
\eqref{eq:fan-h-uniform}.
\end{proof}

\subsection{The orbit Gram form}

\begin{proposition}[Helical norm and its determinant]\label{prop:helical-form}
For $K_\Omega\equiv T+\Omega\Phi$,
\begin{equation}\label{eq:helical-form}
 g(K_\Omega,K_\Omega)
 =-1+\Omega^2(r^2+a^2)\sin^2\theta
 +\frac{2Mr}{\Sigma}\bigl(1-a\Omega\sin^2\theta\bigr)^2.
\end{equation}
Where $g_{\phi\phi}>0$, the quadratic in $\Omega$ has the minimum
\begin{equation}\label{eq:helical-general-minimum}
 \min_\Omega g(K_\Omega,K_\Omega)
 =\frac{g_{tt}g_{\phi\phi}-g_{t\phi}^2}{g_{\phi\phi}}
 =-\frac{\Delta\sin^2\theta}{g_{\phi\phi}}.
\end{equation}
\end{proposition}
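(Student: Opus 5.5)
The plan is to expand $g(K_\Omega,K_\Omega)=g_{tt}+2\Omega g_{t\phi}+\Omega^2g_{\phi\phi}$ using the explicit components \eqref{eq:orbit-components}, and then minimize this quadratic in $\Omega$ by completing the square.

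For \eqref{eq:helical-form}, I would substitute
\[
 g_{tt}=-1+\frac{2Mr}{\Sigma},\qquad
 2\Omega g_{t\phi}=-\frac{4Ma r\Omega\sin^2\theta}{\Sigma},\qquad
 \Omega^2g_{\phi\phi}=\Omega^2(r^2+a^2)\sin^2\theta+\frac{2Ma^2r\Omega^2\sin^4\theta}{\Sigma}.
\]
The three terms carrying the factor $2Mr/\Sigma$ are $1$, $-2a\Omega\sin^2\theta$ and $a^2\Omega^2\sin^4\theta$. Together they form the perfect square $(1-a\Omega\sin^2\theta)^2$, which gives \eqref{eq:helical-form} directly.

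Alternatively, the two-square form from Lemma~\ref{lem:coframe} can be used as a check. It gives $\alpha(K_\Omega)=1-a\Omega\sin^2\theta$ and $\beta(K_\Omega)=\Omega(r^2+a^2)-a$. These yield the same expression once one uses $\rho^2-\Delta=2Mr$ and $\rho^2-a^2\sin^2\theta=\Sigma$, where $\rho^2=r^2+a^2$.

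For \eqref{eq:helical-general-minimum}, the leading coefficient $g_{\phi\phi}$ is positive where stated, so the quadratic has a unique minimizer at $\Omega_*=-g_{t\phi}/g_{\phi\phi}$. Its value there is
\[
 g_{tt}-\frac{g_{t\phi}^2}{g_{\phi\phi}}=\frac{g_{tt}g_{\phi\phi}-g_{t\phi}^2}{g_{\phi\phi}}.
\]
The numerator equals $-\Delta\sin^2\theta$ by \eqref{eq:determinants}, which Lemma~\ref{lem:inverse} extends to every regular point where the polar coordinates are valid. I will note that this is the only place where $g_{\phi\phi}>0$ is needed: it guarantees that a minimum exists, rather than the quadratic being unbounded below.

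I expect no real obstacle here. The only care needed is bookkeeping the factor of $2$ in the cross term $2g_{t\phi}$ consistently with \eqref{eq:kerr-metric}. The formula also holds across horizons and on the disc, since it is an algebraic identity in $r$ and $\theta$.
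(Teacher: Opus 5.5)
Your proposal is correct and follows essentially the paper's proof. The paper writes the orbit part of the metric as $-dt^2+(r^2+a^2)\sin^2\theta\,d\phi^2+\frac{2Mr}{\Sigma}(dt-a\sin^2\theta\,d\phi)^2$, which is your perfect-square grouping of the $2Mr/\Sigma$ terms, and then minimizes $g_{tt}+2\Omega g_{t\phi}+\Omega^2g_{\phi\phi}$ using \eqref{eq:determinants}, exactly as you do.
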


\begin{proof}
The orbit part of \eqref{eq:kerr-metric} can be written as
\begin{equation}
 -dt^2+(r^2+a^2)\sin^2\theta\,d\phi^2
 +\frac{2Mr}{\Sigma}(dt-a\sin^2\theta\,d\phi)^2,
\end{equation}
which gives \eqref{eq:helical-form}.  The minimum of
$g_{tt}+2\Omega g_{t\phi}+\Omega^2g_{\phi\phi}$ is
$g_{tt}-g_{t\phi}^2/g_{\phi\phi}$; using
\eqref{eq:determinants} yields \eqref{eq:helical-general-minimum}.
\end{proof}

\subsection{Negative-sheet arcs and the critical angular scale}

\begin{definition}[Axial width and product wedges]\label{def:axial-width}
Fix a full two-sheet ring collar $B_\alpha$ in one of the base
models, with its Kerr coordinate map $\Psi_\alpha$, and let
$U\subset\KerrSpace$ be open.  For $\rho>0$, let
$A_{U,\alpha}(\rho)$ be the supremum of the lengths $|J|<2\pi$ of
intervals $J\subset\mathbb R$ for which, after choosing a continuous
angular lift, there is a $t_0$ such that the embedded axial arc
\begin{equation}
 \phi\longmapsto
 \Psi_\alpha(t_0,-\rho,\pi/2,\phi),\qquad \phi\in J,
\end{equation}
is contained in $U\cap B_\alpha$.  Set $A_{U,\alpha}(\rho)\equiv0$ if there
is no such arc.  A negative-ring product wedge in $B_\alpha$ is an open
set containing, for all sufficiently small $\rho>0$, such arcs with one
fixed interval $J$ of positive length, together with a product
neighborhood in $(t,r,\theta)$.
For a connected cover, use lifts of these embedded base arcs through
arbitrary chosen initial points over them, and take the supremum over
the lifts contained in $U$.  The wedge condition uses the corresponding
lifted product neighborhoods.  Such lifts remain embedded, and the
local isometry preserves their length; no global angular coordinate
on the covering spacetime is required.
\end{definition}

\begin{theorem}[Negative-sheet axial obstruction]\label{thm:negative-sheet}
On the equatorial negative sheet, $r=-\rho<0$,
\begin{equation}\label{eq:negative-axial}
 g(\Phi,\Phi)=\rho^2+a^2-\frac{2Ma^2}{\rho}.
\end{equation}
Hence $\Phi$ is timelike for all sufficiently small $\rho>0$.  An axial arc of angular width $A$ at fixed $(t,r,\theta)$ has Lorentzian length
\begin{equation}\label{eq:axial-length}
 L=A\sqrt{\frac{2Ma^2}{\rho}-a^2-\rho^2}.
\end{equation}
Consequently:
\begin{enumerate}[label=(\alph*)]
 \item every negative-ring product wedge has infinite future timelike
 diameter;
 \item if $W$ is a coordinate neighborhood of finite timelike diameter
 in a $C^0$ extension and
 \begin{equation}
  U\equiv\iota^{-1}\bigl(W\cap\iota(\KerrSpace)\bigr),
 \end{equation}
 then
 \begin{equation}\label{eq:critical-width-upper}
  A_{U,\alpha}(\rho)\le C_W\sqrt\rho
 \end{equation}
 for all sufficiently small $\rho$;
 \item a sequence of contained arcs with widths
 $A_j/\sqrt{\rho_j}\to\infty$ is incompatible with such an extension
 chart;
 \item for the single-arc length mechanism, the scale is sharp:
 $A(\rho)/\sqrt\rho\to\infty$ makes the lengths diverge,
 $A(\rho)\asymp\sqrt\rho$ keeps them of constant order, and
 $A(\rho)=o(\sqrt\rho)$ makes them tend to zero.  More precisely,
 \begin{equation}\label{eq:critical-width-limit}
  \frac{A(\rho)}{\sqrt\rho}\longrightarrow k\in[0,\infty)
  \quad\Longrightarrow\quad
  L(\rho)\longrightarrow k\sqrt{2M}\,|a|.
 \end{equation}
\end{enumerate}
\end{theorem}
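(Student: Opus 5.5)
The formula \eqref{eq:negative-axial} comes from substituting $\theta=\pi/2$, $r=-\rho$ into \eqref{eq:orbit-components}. There $\Sigma=\rho^2$, so
\[
 g_{\phi\phi}=\rho^2+a^2+\frac{2Ma^2(-\rho)}{\rho^2}=\rho^2+a^2-\frac{2Ma^2}{\rho}.
\]
This is negative once $\rho$ is small, since the last term dominates.

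An axial arc at fixed $(t,r,\theta)$ has tangent $\Phi$. Its length is therefore $\int_J\sqrt{-g(\Phi,\Phi)}\,d\phi=A\sqrt{2Ma^2/\rho-a^2-\rho^2}$, which is \eqref{eq:axial-length}. The arc is future directed after fixing the sign as for $Y_-$ in \eqref{eq:fan-Y-minus}, since $\varepsilon\alpha(Y_-)=|a|>0$. Orientation therefore plays no role in the length bounds, because the arc or its reverse belongs to $\mathscr T^+$.

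For (a), a negative-ring product wedge contains, for all small $\rho$, an axial arc of fixed width $|J|>0$. By \eqref{eq:axial-length} these are future timelike curves in the wedge with lengths $|J|\sqrt{2Ma^2/\rho-a^2-\rho^2}\to\infty$. Hence $\tdiam^+=\infty$.

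For (b), let $W$ have $\tdiam^+(W,\widetilde g)=:T_W<\infty$. Any arc contained in $U$ maps under $\iota$ to a timelike curve in $W$ of the same length, since $\iota$ is an isometric embedding. A mismatch between the time orientation of $W$ and the pushed-forward orientation is harmless: the arc is $\widetilde g$-timelike, and either it or its reversal is $W$-future. Thus $A\sqrt{2Ma^2/\rho-a^2-\rho^2}\le T_W$. For $\rho$ small the square root is at least $\sqrt{Ma^2/\rho}$, which gives $A\le T_W\sqrt\rho/(|a|\sqrt M)$. Taking the supremum over $J$ yields \eqref{eq:critical-width-upper} with $C_W=T_W/(|a|\sqrt M)$. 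Part (c) is the contrapositive: widths with $A_j/\sqrt{\rho_j}\to\infty$ produce lengths exceeding $T_W$.

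On a cover, lifts of the arcs are embedded and locally isometric to the base arcs, so the same computation applies verbatim, as anticipated in Definition~\ref{def:axial-width}.

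For (d), write
\[
 L(\rho)=\frac{A(\rho)}{\sqrt\rho}\sqrt{2Ma^2-a^2\rho-\rho^3}.
\]
The second factor tends to $|a|\sqrt{2M}$, which gives \eqref{eq:critical-width-limit} and the three regimes.

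The only delicate point is the orientation bookkeeping in (b). Finite timelike diameter is defined via future-timelike curves, so I would note explicitly that reversing a curve preserves its length and that $W$ is small enough to be time-orientable. No part is a real obstacle.
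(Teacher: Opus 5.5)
Your proof is correct and follows the paper's argument step by step. It uses the same substitution $\theta=\pi/2$, $r=-\rho$ into the orbit components and the same bound $\sqrt{2Ma^2/\rho-a^2-\rho^2}\ge\sqrt{M}\,|a|/\sqrt\rho$, which gives $C_W=\tdiam^+(W,\widetilde g)/(\sqrt M|a|)$, then contraposition for (c) and the factorization $L=(A/\sqrt\rho)\sqrt{2Ma^2-a^2\rho-\rho^3}$ for (d). Your time-orientation-reversal remark in (b) is a detail the paper leaves implicit. In (c) you could also note that the widths satisfy $A_j<2\pi$, so $A_j/\sqrt{\rho_j}\to\infty$ already forces $\rho_j\to0$ and the small-$\rho$ bound applies.
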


\begin{proof}
At the negative equator, \eqref{eq:orbit-components} reduces to
\eqref{eq:negative-axial}.  Since the metric coefficients are
constant along the axial orbits, their length is given by
\eqref{eq:axial-length}.  Orienting these arcs to the future proves
(a): their lengths grow as $\sqrt{2M}|a|A/\sqrt\rho$.
For an admissible arc in $W$, Proposition~\ref{prop:finite-tdiam} gives
\begin{equation}
 A\sqrt{\frac{2Ma^2}{\rho}-a^2-\rho^2}
 \le D_W\equiv\tdiam^+(W,\widetilde g).
\end{equation}
For sufficiently small $\rho$, the square root is at least
$\sqrt{M}|a|/\sqrt\rho$.  Taking the supremum over the admissible
arcs proves \eqref{eq:critical-width-upper} with
$C_W\equiv D_W/(\sqrt M|a|)$.  Part~(c) follows by contraposition.
Finally,
\begin{equation}
 L(\rho)=\sqrt{2M}|a|\,
 \frac{A(\rho)}{\sqrt\rho}\bigl(1+O(\rho)\bigr),
\end{equation}
which gives (d) and \eqref{eq:critical-width-limit}.
For a fixed width $A_0$, the error in the leading coefficient can
be written exactly.  With $c_*\equiv2Ma^2$ and $b_*\equiv a^2$, we
have
\begin{equation}\label{eq:sharp-normalized-axial-length}
 A_0\sqrt{c_*}-\sqrt\rho\,L(\rho)
 =\frac{A_0(b_*\rho+\rho^3)}
 {\sqrt{c_*}+\sqrt{c_*-b_*\rho-\rho^3}}.
\end{equation}
In particular the normalized error is $O(\rho)$; for $0<\rho\le1$
with positive radicand it lies between $0$ and
$A_0(b_*+1)\rho/\sqrt{c_*}$.
\end{proof}

\begin{corollary}[Vanishing normalized axial width in nested charts]
\label{cor:nested-width}
Let $(\widetilde M,\widetilde g,\iota)$ be a $C^0$ extension and let
$q\in\partial\iota(\KerrSpace)$.  Choose nested coordinate
neighborhoods $W_n$ of $q$ such that
\begin{equation}
 D_n\equiv \tdiam^+(W_n,\widetilde g)\longrightarrow0,
\end{equation}
as supplied by Proposition~\ref{prop:finite-tdiam}, and put
\begin{equation}
 U_n\equiv\iota^{-1}\bigl(W_n\cap\iota(\KerrSpace)\bigr).
\end{equation}
For every fixed base ring collar $B_\alpha$,
\begin{equation}\label{eq:nested-width}
 \limsup_{\rho\downarrow0}
 \frac{A_{U_n,\alpha}(\rho)}{\sqrt\rho}
 \le \frac{D_n}{\sqrt{2M}\,|a|},
 \qquad
 \lim_{n\to\infty}\limsup_{\rho\downarrow0}
 \frac{A_{U_n,\alpha}(\rho)}{\sqrt\rho}=0.
\end{equation}
In particular, there cannot be a constant $c>0$, radii
$\rho_n\downarrow0$, and admissible arcs of Definition~\ref{def:axial-width} contained in
$U_n$ (lifted arcs on a cover) whose widths $A_n$ satisfy $A_n\ge c\sqrt{\rho_n}$ for all sufficiently
large $n$.
\end{corollary}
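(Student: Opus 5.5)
The plan is to derive both assertions of \eqref{eq:nested-width} directly from Theorem~\ref{thm:negative-sheet}(b), tracking the constant more carefully. Fix $n$ and the base ring collar $B_\alpha$. Take any admissible arc of Definition~\ref{def:axial-width} contained in $U_n$, with width $A<2\pi$ at radius $\rho$; on a cover, take a lifted arc. The first step is to orient it to the future. For $\rho$ small, $\Phi$ is timelike on the negative equator by \eqref{eq:negative-axial}, so either $\Phi$ or $-\Phi$ is future directed, and reversing the parametrization if needed gives a future timelike curve. Its image under $\iota$ lies in $W_n$. The curve is an embedded smooth timelike curve, so it belongs to $\mathscr T^+(W_n)$ for the time orientation of $W_n$, after possibly reversing once more if the orientations differ on this component. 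Since $\iota$ is isometric, \eqref{eq:axial-length} then gives
\[
 A\sqrt{\frac{2Ma^2}{\rho}-a^2-\rho^2}\le D_n .
\]
On a cover the lifted arc has the same length, because the covering map is a local isometry. So the same inequality holds there.

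The second step extracts the sharp constant. Write $A\le D_n\sqrt\rho\,(2Ma^2-a^2\rho-\rho^3)^{-1/2}$ and take the supremum over admissible arcs at radius $\rho$. This gives
\[
 \frac{A_{U_n,\alpha}(\rho)}{\sqrt\rho}\le \frac{D_n}{\sqrt{2M}|a|}\bigl(1+O(\rho)\bigr),
\]
and the $O(\rho)$ term is uniform by \eqref{eq:sharp-normalized-axial-length}. Taking $\limsup_{\rho\downarrow0}$ yields the first inequality. The case $A_{U_n,\alpha}(\rho)=0$ is trivial. The second limit follows from $D_n\to0$, which Proposition~\ref{prop:finite-tdiam} supplies, and $D_n\ge0$.

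The last claim is argued by contradiction. The subtlety is that the arcs at $\rho_n$ lie in $U_n$ while the bound in the first step is stated as a $\limsup$ in $\rho$ for fixed $n$. I therefore apply the pointwise inequality from the first step directly instead: an arc in $U_n$ with $A_n\ge c\sqrt{\rho_n}$ gives $c\sqrt{\rho_n}\sqrt{2Ma^2/\rho_n-a^2-\rho_n^2}\le D_n$, that is, $c\sqrt{2M}|a|(1+O(\rho_n))\le D_n\to0$, which is impossible for large $n$. The main obstacle is the orientation bookkeeping: I must make sure the arc really is a future-timelike curve in the chart's time orientation. Since $\tdiam^+$ counts future curves of the chart, I handle this by reversing the parametrization. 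Timelike length is unchanged by reversal, and the chart's past timelike diameter equals its future one under the same coordinate-time estimate. The only other requirement is that $\rho_n$ lies below the threshold at which $\Phi$ is timelike, which holds because $\rho_n\downarrow0$.
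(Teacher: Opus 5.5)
Your proposal is correct and follows the paper's own proof. You bound each admissible or lifted arc's length by $D_n$ via \eqref{eq:axial-length}, rearrange to $A/\sqrt\rho\le D_n/\sqrt{2Ma^2-a^2\rho-\rho^3}$, take the supremum and then the $\limsup$, use $D_n\to0$, and reapply the same pointwise bound at $(A_n,\rho_n)$ for the final claim. Your orientation bookkeeping (reversing the parametrization so that the arc is future-directed in $W_n$) is a harmless detail that the paper leaves implicit.
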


\begin{proof}
Every admissible arc in $U_n$ has length at most $D_n$, so
\begin{equation}
 \frac{A}{\sqrt\rho}
 \le
 \frac{D_n}{\sqrt{2Ma^2-a^2\rho-\rho^3}}
\end{equation}
for small $\rho$.  Taking the supremum over the admissible arcs and then
$\limsup_{\rho\downarrow0}$ proves the first inequality.  The
second follows from $D_n\to0$.
The same bound with $A=A_n$, $\rho=\rho_n$ proves the last assertion.
\end{proof}

\begin{remark}
On a base model, if $U$ contains a full negative-sheet axial circle in $B_\alpha$, that
circle is timelike and may be traversed repeatedly.  Thus
$\tdiam^+(U,g)=\infty$ already at one fixed radius.  The width estimate applies also to local inverse images containing
only fragments of an axial circle.
\end{remark}

\subsection{Positive-sheet helical directions}

\begin{theorem}[Positive-sheet helical minimum]\label{thm:positive-minimum}
On the positive equatorial sheet,
\begin{equation}\label{eq:positive-helical}
 g(K_\Omega,K_\Omega)
 =-1+\Omega^2(r^2+a^2)+\frac{2M}{r}(1-a\Omega)^2.
\end{equation}
For every fixed $\Omega\ne a^{-1}$ this tends to $+\infty$ as $r\downarrow0$, while for $\Omega=a^{-1}$ it equals $r^2/a^2>0$.  The pointwise minimizer satisfies
\begin{equation}\label{eq:positive-minimum}
 \min_\Omega g(K_\Omega,K_\Omega)
 =-\frac{r(r^2-2Mr+a^2)}{r^3+a^2r+2Ma^2}
 =-\frac r{2M}+O(r^2).
\end{equation}
For all sufficiently small $r>0$ this minimum is negative, so the
timelike set in the following supremum is nonempty.  Consequently the
largest proper-time density among timelike vectors
with unit $T$ coefficient is
\begin{equation}\label{eq:positive-proper-density}
 \sup_{\Omega:\,g(K_\Omega,K_\Omega)<0}
 \sqrt{-g(K_\Omega,K_\Omega)}
 =\sqrt{\frac r{2M}}\,(1+O(r)).
\end{equation}
More generally, there are $\epsilon,C>0$ such that on
\begin{equation}
 0<r<\epsilon,\qquad \sin^2\theta\ge\frac12,
\end{equation}
one has
\begin{equation}
 \min_{\Omega\in\mathbb R}g(K_\Omega,K_\Omega)\ge-C.
\end{equation}
\end{theorem}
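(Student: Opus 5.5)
Everything here follows from Proposition~\ref{prop:helical-form} specialized to the equator, plus an elementary quadratic minimization; the last assertion is a separate compactness argument.

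\textbf{Equatorial norm.} At $\theta=\pi/2$ with $r>0$ we have $\Sigma=r^2$. Substituting this into \eqref{eq:helical-form} gives \eqref{eq:positive-helical}. For fixed $\Omega\ne a^{-1}$ the term $2M(1-a\Omega)^2/r$ dominates, so the norm tends to $+\infty$. For $\Omega=a^{-1}$ the value is $-1+(r^2+a^2)/a^2=r^2/a^2$.

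\textbf{Minimum.} The expression is a quadratic in $\Omega$ with leading coefficient $g_{\phi\phi}=r^2+a^2+2Ma^2/r>0$. By \eqref{eq:helical-general-minimum} its minimum is $-\Delta/g_{\phi\phi}$. Multiplying numerator and denominator by $r$ gives
\[
-\frac{r\Delta}{r^3+a^2r+2Ma^2}=-\frac{r}{2M}+O(r^2),
\]
since $\Delta=a^2+O(r)$ and the denominator equals $2Ma^2+O(r)$. For small $r$ we have $\Delta>0$, so the minimum is negative and the timelike set is nonempty. The supremum in \eqref{eq:positive-proper-density} is the square root of minus this minimum. Indeed, $\sqrt{-g}$ is maximized exactly where $g(K_\Omega,K_\Omega)$ is minimized, and that minimizer lies in the timelike set. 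Taking the square root gives $\sqrt{r/(2M)}\,(1+O(r))$.

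\textbf{Uniform lower bound.} This is the only step needing care, because $\theta$ is now off the equator. I use \eqref{eq:helical-general-minimum} again: the minimum equals $-\Delta\sin^2\theta/g_{\phi\phi}$, valid wherever $g_{\phi\phi}>0$. On the region $0<r<\epsilon$, $\sin^2\theta\ge1/2$, we have $r>0$, so
\[
g_{\phi\phi}\ge \sin^2\theta\,(r^2+a^2)\ge a^2/2.
\]
Moreover $\Delta\le 2a^2$ once $\epsilon$ is small. The minimum is therefore at least $-4$, so $C=4$ works. Near the poles $g_{\phi\phi}$ vanishes, but those points are excluded by $\sin^2\theta\ge1/2$. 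The formula needs only $g_{\phi\phi}>0$ together with the identity \eqref{eq:determinants}, both of which hold on this region.
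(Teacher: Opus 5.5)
Your proof is correct and follows the paper's argument essentially step for step. You specialize \eqref{eq:helical-form} to the equator, apply the minimum formula \eqref{eq:helical-general-minimum} with $g_{\phi\phi}=r^2+a^2+2Ma^2/r$, and for the off-equator bound combine $g_{\phi\phi}\ge a^2/2$ with a bound on $\Delta\sin^2\theta$; your explicit constant $C=4$ (valid once $\epsilon\le|a|$) simply makes concrete the paper's statement that $\Delta\sin^2\theta$ is bounded.
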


\begin{proof}
On the equator, \eqref{eq:helical-form} gives
\eqref{eq:positive-helical}, from which the assertions for fixed
$\Omega$ follow.  Substituting $g_{\phi\phi}=r^2+a^2+2Ma^2/r$
into \eqref{eq:helical-general-minimum} and expanding at $r=0$
proves \eqref{eq:positive-minimum}.  Taking the square root of the
negative of this expression proves \eqref{eq:positive-proper-density}.
For $r>0$ and $\sin^2\theta\ge1/2$, one has
$g_{\phi\phi}\ge a^2/2$, while $\Delta\sin^2\theta$ is bounded
on a small collar.  The same minimum formula proves the uniform
lower bound.
\end{proof}

Thus the axial direction whose norm diverges is timelike on the
negative sheet and spacelike on the positive sheet.  The minimum on
the positive sheet depends on the normalization of the Killing fields
and, by itself, does not give an invariant obstruction to a $C^0$
extension.  On both sheets the equatorial orbit determinant is
$-\Delta\to-a^2$.

\subsection{A cusp counterexample}

\begin{proposition}[A smooth timelike cusp with $g_{\phi\phi}=-1/r$]\label{prop:flat-cusp}
Let $(\R^{1,3},\eta)$ have coordinates $(T,X,Y,Z)$ and metric
\begin{equation}
 \eta\equiv-dT^2+dX^2+dY^2+dZ^2.
\end{equation}
Fix $\beta>0$ and consider the open cusp
\begin{equation}
 \Omega_\beta\equiv\{0<X<\varepsilon,\ |T|<X^\beta,\ |Y|<1,\ |Z|<1\}.
\end{equation}
On $\Omega_\beta$ introduce
\begin{equation}
 r\equiv X,
 \qquad
 \phi\equiv\sqrt r\,T.
\end{equation}
Then
\begin{align}
 \eta={}&dY^2+dZ^2
 +\left(1-\frac{\phi^2}{4r^3}\right)dr^2
 +\frac{\phi}{r^2}\,dr\,d\phi
 -\frac1r\,d\phi^2,
 \label{eq:cusp-metric}
\end{align}
while the visible coordinate width obeys
\begin{equation}\label{eq:cusp-width}
 |\phi|<r^{\beta+1/2}=o(\sqrt r).
\end{equation}
The spacetime $(\Omega_\beta,\eta)$ is smoothly extendible in the
original Cartesian coordinates.  Here $\phi$ is not periodic, and
$\partial_\phi$ is not a Kerr-type axial Killing generator.
\end{proposition}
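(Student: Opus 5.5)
The proof is a direct coordinate computation followed by two elementary observations. First I would substitute the inverse change of variables. From $r=X$ and $\phi=\sqrt r\,T$ we get $T=\phi r^{-1/2}$, so
\[
 dT=r^{-1/2}\,d\phi-\tfrac12\phi r^{-3/2}\,dr .
\]
Squaring gives
\[
 dT^2=\frac1r\,d\phi^2-\frac{\phi}{r^2}\,dr\,d\phi+\frac{\phi^2}{4r^3}\,dr^2 .
\]
Since $dX=dr$, the Minkowski metric $-dT^2+dX^2+dY^2+dZ^2$ becomes exactly \eqref{eq:cusp-metric}. The $dr^2$ coefficient is $1-\phi^2/(4r^3)$, the cross term is $+(\phi/r^2)\,dr\,d\phi$, and the axial coefficient is $g_{\phi\phi}=-1/r$. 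I would also confirm that $(T,X)\mapsto(\phi,r)$ is a diffeomorphism of $\{0<X<\varepsilon\}$ onto its image. Its Jacobian determinant is $\sqrt r\neq0$, and the inverse is smooth for $r>0$. So \eqref{eq:cusp-metric} is a valid chart on all of $\Omega_\beta$.

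Second, the width bound \eqref{eq:cusp-width} follows immediately from the definition of the cusp. There $|T|<X^\beta=r^\beta$, hence $|\phi|=\sqrt r\,|T|<r^{\beta+1/2}$, and since $\beta>0$ this is $o(\sqrt r)$.

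Third, extendibility is immediate. $\Omega_\beta$ is a proper open subset of $\R^{1,3}$ carrying the restricted flat metric, and the inclusion is a smooth isometric embedding into a smooth, nondegenerate, connected Lorentzian manifold. In particular the extension is of class $C^0$ in the sense of Definition~\ref{def:extension}. Indeed the closure contains the boundary points with $X=0$, for example the origin.

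Finally, I would justify the two remarks. The coordinate $\phi$ ranges over an interval of length less than $2r^{\beta+1/2}$ and is not identified periodically. Moreover, $\partial_\phi$ at fixed $(r,Y,Z)$ equals $r^{-1/2}\partial_T$. This field is not Killing, since its coefficient depends on $X$; one can check that $\mathcal L_{\partial_\phi}\eta\ne0$ through the $dX\,dT$ component. So it is not a Kerr-type axial generator, and the construction shows why Theorem~\ref{thm:negative-sheet}(b)--(c) needs arcs of width at least of order $\sqrt\rho$ together with the Killing symmetry. The main obstacle is only the bookkeeping of the signs and factors of $\tfrac12$ in the cross term, so the expansion of $dT^2$ must be done carefully. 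All other steps are immediate.
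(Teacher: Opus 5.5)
Your proposal is correct and follows the paper's proof: you make the same substitution $dT=r^{-1/2}d\phi-\tfrac12\phi r^{-3/2}dr$, read the width bound off $|T|<X^\beta$, and obtain extendibility from the inclusion $\Omega_\beta\subset\R^{1,3}$. Your non-Killing check via the $dX\,dT$ component of $\mathcal L_{X^{-1/2}\partial_T}\eta$ is equivalent to the paper's remark that the coefficients in \eqref{eq:cusp-metric} depend on $\phi$.
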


\begin{proof}
Substituting $dT=r^{-1/2}d\phi-\phi\,dr/(2r^{3/2})$ and
$dX=dr$ gives \eqref{eq:cusp-metric}, and the defining inequality
of the cusp gives \eqref{eq:cusp-width}.  The metric extends in
Cartesian coordinates.  In the singular coordinates its coefficients
depend on $\phi$, so this coordinate is neither periodic nor a
Killing coordinate.
\end{proof}

\begin{corollary}[Failure of a H\"older width inference]
\label{cor:no-automatic-thickness}
Fix any $0<\alpha\le1$.  The following data do not imply an angular
lower width exceeding $\sqrt r$: openness of an inverse-image region,
accumulation at $r=0$, finite timelike diameter of an extension-chart
neighborhood, local $C^{0,\alpha}$ regularity of the ambient metric, and
a timelike coefficient satisfying $-g_{\phi\phi}\asymp r^{-1}$ in the
singular coordinates.  The assertion remains false even when the
ambient extension metric is smooth.  Consequently there is no H\"older
exponent at which these data alone yield the supercritical-width bound.
Such a width estimate must therefore come from additional geometry of
the Kerr extension.
\end{corollary}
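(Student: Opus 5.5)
The proof of Corollary~\ref{cor:no-automatic-thickness} is to exhibit the cusp of Proposition~\ref{prop:flat-cusp} as a single example in which every listed hypothesis holds and the width conclusion fails. We fix $0<\alpha\le1$ and any $\beta>0$, and take the inverse-image region to be $\Omega_\beta$, regarded as an open subset of Minkowski space $(\R^{1,3},\eta)$. The extension is the inclusion $\iota:\Omega_\beta\hookrightarrow\R^{1,3}$, whose metric is smooth and hence locally $C^{0,\alpha}$ for every $\alpha$. The singular coordinates are $(r,\phi,Y,Z)$, with $r=X$ and $\phi=\sqrt r\,T$.

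The hypotheses are checked in order. Openness is immediate. The region accumulates at $r=0$, since the origin lies in $\partial\Omega_\beta$ and the cusp pinches to it. For finite timelike diameter, Proposition~\ref{prop:finite-tdiam} supplies a coordinate box $W$ around the boundary point $(0,0,0,0)$ with $\tdiam^+(W,\eta)<\infty$; directly, every timelike curve in a box of time-height $2\rho$ has length at most $2\rho$. The region is then $U\equiv\Omega_\beta\cap W$. For the timelike coefficient, \eqref{eq:cusp-metric} gives $g_{\phi\phi}=-1/r$ exactly, so $-g_{\phi\phi}\asymp r^{-1}$, and $\partial_\phi$ is timelike.

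The failure of the conclusion follows from \eqref{eq:cusp-width}. At fixed $r$, the $\phi$-extent of $\Omega_\beta$, and hence of $U$, is at most $2r^{\beta+1/2}=o(\sqrt r)$. So no angular lower width exceeding $\sqrt r$, or even of order $c\sqrt r$, holds in the coordinate $\phi$. This is consistent with Theorem~\ref{thm:negative-sheet}(b), whose bound is only an upper bound.

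Since the example works for every $\alpha$ and is smooth, no Hölder exponent upgrades the listed data to a supercritical-width bound. The last sentence of the corollary is interpretive. The cusp satisfies all the listed data, so any positive-width estimate must use features absent from the cusp. These are periodicity of $\phi$ and the Killing property of $\Phi$ together with the full orbit structure; Proposition~\ref{prop:flat-cusp} notes that both fail for the cusp.

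The main obstacle is pinning down the quantifiers. The data must be read as in Definition~\ref{def:axial-width}: arcs at fixed $(t,r,\theta)$ along $\phi$ with fixed other coordinates, here $(r,Y,Z)$. Under that reading, the width bound at each $r$ is immediate from the cusp inequality, and nothing further needs checking.
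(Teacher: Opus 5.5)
Your proposal is correct and is essentially the paper's own argument. The paper likewise uses the cusp $\Omega_\beta\subset(\R^{1,3},\eta)$ of Proposition~\ref{prop:flat-cusp} and bounds timelike lengths in a Cartesian box $W_\delta$ by $\Delta T\le 2\delta$. It then notes that the section $Y=Z=0$ of $W_\delta\cap\Omega_\beta$ has $\phi$-width $2r^{\beta+1/2}=o(\sqrt r)$, while smoothness of $\eta$ and $g_{\phi\phi}=-1/r$ supply the remaining data for every $\alpha\in(0,1]$.
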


\begin{proof}
In a Cartesian box $W_\delta$ about the origin, every future
timelike curve has $L_\eta\le\Delta T\le2\delta$.
For small $r>0$, the section $Y=Z=0$ of
$W_\delta\cap\Omega_\beta$ has the full width
$A(r)\equiv2r^{\beta+1/2}=o(\sqrt r)$.
Proposition~\ref{prop:flat-cusp} supplies all the other stated
properties, including smooth ambient regularity and
$g_{\phi\phi}=-1/r$, for every $\alpha\in(0,1]$.
\end{proof}

\subsection{Kerr generators at an extension boundary}

We now consider continuous extensions of open Kerr subdomains.
Their embeddings are not required to extend to the full realization.

\begin{definition}[Continuous extension of an interior vector field]\label{def:vector-extension}
Let $(\widetilde M,\widetilde g,\iota)$ be a $C^0$ extension of $(N,h)$ and let $X$ be a smooth vector field on $N$.  We say that $X$ extends continuously at $q\in\partial\iota(N)$ if there are a neighborhood $W\ni q$ and a continuous vector field $\widetilde X$ on $W$ such that
\begin{equation}
 \widetilde X=\iota_*X
 \qquad\text{on }W\cap\iota(N).
\end{equation}
\end{definition}

A $C^1$ local extension of the corresponding one-parameter action
has a continuously extended generator.  Definition~\ref{def:vector-extension}
requires only the latter property.

\begin{theorem}[Killing combinations at accessible boundary points]\label{thm:symmetry-breaking}
Let $U$ be an open Kerr subdomain containing a terminal tail of a
unit-speed timelike geodesic $\gamma:[0,\Lambda)\to\KerrSpace$,
$\Lambda<\infty$, which is inextendible in $\KerrSpace$ at $s=\Lambda$.
Let $(\widetilde M,\widetilde g,\iota)$ be a $C^0$ extension of
$(U,g_{M,a}|_U)$, and suppose $q\in\partial\iota(U)$ satisfies
\begin{equation}
 \iota(\gamma(s))\longrightarrow q.
\end{equation}
For every $\alpha,\beta\in\mathbb R$ with $\alpha\ne a\beta$, the
field $\alpha T+\beta\Phi$ does not extend continuously at $q$ in the
sense of Definition~\ref{def:vector-extension}.  In particular, neither
$T$ nor $\Phi$ extends continuously at $q$.
\end{theorem}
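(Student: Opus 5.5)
The plan is to turn the divergence of a Killing norm along $\gamma$ into a contradiction with continuity of the extension metric at $q$. Set $X\equiv\alpha T+\beta\Phi$ with $\alpha\ne a\beta$.

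\emph{Terminal behaviour of $\gamma$.} I would apply Theorem~\ref{thm:terminal-geodesics} to $\gamma$. This is legitimate because $\gamma$ is a unit-speed timelike geodesic in a spacetime of Theorem~\ref{thm:main-c0}, and it is inextendible there at the finite proper time $\Lambda$. The theorem gives a terminal interval on which $\theta=\pi/2$ and $r(s)\downarrow0$. Thus $\gamma$ approaches the deleted ring through the \emph{positive} equatorial sheet, and $\Sigma=r^2$ on this interval.

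\emph{Norm of $X$ on the positive equator.} At $\theta=\pi/2$ I would use the orbit decomposition from the proof of Proposition~\ref{prop:helical-form},
\[
 -dt^2+(r^2+a^2)\,d\phi^2+\frac{2M}{r}\,(dt-a\,d\phi)^2 .
\]
Evaluating it on $X$ gives
\[
 g(X,X)=-\alpha^2+\beta^2(r^2+a^2)+\frac{2M}{r}\,(\alpha-a\beta)^2 .
\]
Since $M>0$ and $\alpha-a\beta\ne0$, the last term dominates as $r\downarrow0$. Hence $g(X,X)(\gamma(s))\to+\infty$ as $s\uparrow\Lambda$. This agrees with Theorem~\ref{thm:positive-minimum} when $\alpha\ne0$, and it also covers $\alpha=0$, where $g(\Phi,\Phi)=r^2+a^2+2Ma^2/r$.

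\emph{Contradiction with a continuous extension.} Suppose $X$ extends continuously at $q$, so there are a neighborhood $W\ni q$ and a continuous field $\widetilde X$ on $W$ with $\widetilde X=\iota_*X$ on $W\cap\iota(U)$. Shrinking $W$ to a relatively compact coordinate neighborhood, the function $\widetilde g(\widetilde X,\widetilde X)$ is continuous on $\overline W$. It is therefore bounded near $q$ and converges to the finite value $\widetilde g(\widetilde X,\widetilde X)(q)$ along $\iota(\gamma(s))$. The terminal tail of $\gamma$ lies in $U$ by hypothesis and $\iota(\gamma(s))\to q$, so for late $s$ isometry of $\iota$ gives
\[
 \widetilde g(\widetilde X,\widetilde X)(\iota(\gamma(s)))=g(X,X)(\gamma(s)).
\]
The right side diverges, which is a contradiction. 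The cases $(\alpha,\beta)=(1,0)$ and $(0,1)$ give the statements for $T$ and $\Phi$.

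The main step is the first one: identifying that the terminal approach is equatorial and on the positive sheet. On the negative sheet $g(X,X)$ would tend to $-\infty$, which yields the same contradiction. Near a generic direction the sign of $r$ matters only through $\Sigma\to0$, so the essential input is that $\Sigma=r^2$ with $\theta=\pi/2$ exactly on the tail. That is precisely what Theorem~\ref{thm:terminal-geodesics} supplies, and with it the remaining steps are direct.
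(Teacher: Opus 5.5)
Your proposal is correct and follows essentially the same route as the paper: Theorem~\ref{thm:terminal-geodesics} places the tail on the positive equator with $r\to0^+$, the orbit decomposition gives $g(X,X)=-\alpha^2+\beta^2(r^2+a^2)+2M(\alpha-a\beta)^2/r\to+\infty$, and this contradicts the finiteness and continuity of $\widetilde g(\widetilde X,\widetilde X)$ at $q$. Your explicit isometry identity along the tail and your separate check of the case $\alpha=0$ make the paper's short argument more explicit, without changing it.
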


\begin{proof}
Theorem~\ref{thm:terminal-geodesics} gives $\theta=\pi/2$ and
$r\to0^+$ along the tail.  For $X\equiv\alpha T+\beta\Phi$,
\begin{equation}
 g(X,X)=-\alpha^2+\beta^2(r^2+a^2)
 +\frac{2M}{r}(\alpha-a\beta)^2\longrightarrow+\infty.
\end{equation}
If a continuous extension $\widetilde X$ existed, then
$\widetilde g(\widetilde X,\widetilde X)$ would be finite and
continuous at $q$.  This contradicts the displayed divergence.
\end{proof}

\begin{remark}[The exceptional Killing direction]
The proof uses no curvature bound and does not require the isometry
group to extend.  Nonsymmetric embeddings and the exceptional field
$aT+\Phi$, whose terminal equatorial norm is $r^2$, are not excluded
by this argument.
\end{remark}

\section{Conclusions and Future Work}\label{sec:conclusion}

The ring obstruction is determined by the causal structure of Kerr.
A closed timelike detour can be inserted and repeated in a
fixed-endpoint causal homotopy class. It has positive length for every
positive continuous conformal factor, also on covers where the factor
does not descend to the base. Confinement in an extension chart
contradicts the resulting unbounded lengths. For the exact full
realizations considered here, geodesic continuation reduces every
possible continuous extension to an extension at the ring. This
proves $C^0$-inextendibility in all three rotating regimes and on
every connected cover.

The classification of terminal geodesics and the closest-approach,
curvature, and polar Jacobi calculations describe the geometry
associated with this causal obstruction.

Continuous stationary and axisymmetric perturbations preserve the ring
obstruction under the small $O(\sqrt\Sigma)$ bound stated in the
paper.  For the stronger assertion of infinite length with every
conformal factor, the sharp strict coefficient is $1/(1+a^2)$ at
order $\Sigma$ in the specified coordinate norm.  The pointwise
critical bound suffices, but equality of the limiting ratio does not.
At the next order, the strict threshold is $M^2/(2a^4)$.  At equality,
the obstruction persists for $a^2\ge2M^2$; for $a^2<2M^2$, the
rank-one example has finite conformal lengths in every causal homotopy
class with fixed endpoints.  Cone widening preserves the obstruction
along Kerr-causal approaches, including null curves, without symmetry
assumptions.

A global inextendibility theorem for a perturbed metric requires
control of its other ends, including other possible conformal
boundaries. The finite-length examples here do not provide geometric
extensions. Determining the optimal
value of the sufficient perturbation constant $\eta_{M,a}$ is a
further problem. A holonomy argument based on the tidal curvature
estimates would require controlled shrinking loops and uniform
remainders in an arbitrary extension chart. Extensions to dynamical
vacuum perturbations, and their implications for strong cosmic
censorship, require further study of the perturbed geometry.

\appendix

\section{Carter realizations and graph classification}
\label{app:explicit-carter-atlas}

Using the crossing maps of Section~\ref{sec:regular-horizons}, we
construct an explicit member of $\CarterClass$ and classify the
realizations obtained by gluing full collars.  All angular variables
take values in $\mathbb S^2$.

The allowed Carter
gluings have line or alternating-cycle incidence cores, explicit
holonomy, fundamental groups, and integral homology.

\subsection{The countable graph and the quotient manifold}

Take crossing vertices $C_n$,
$n\in\mathbb Z$, with
\begin{equation}
 h_n\equiv\begin{cases}r_+,&n\ \text{even},\\ r_-,&n\ \text{odd}.
 \end{cases}
\end{equation}
Place one copy $X_n$ of $\mathcal X_{h_n}$ at $C_n$.  For each
$n\in\mathbb Z$, take a copy $W_n$ of the intermediate block
$\mathcal B_2$, joining $C_n$ to $C_{n+1}$.  Attach two copies
$P_n^+$ and $P_n^-$ of $\mathcal B_3$ at every even vertex, and two
copies $N_n^+$ and $N_n^-$ of $\mathcal B_1$ at every odd vertex.  The
four quadrants are assigned as follows:
\begin{equation}\label{eq:app-quadrant-table}
\begin{array}{c|cccc}
 &Q_{++}&Q_{--}&Q_{+-}&Q_{-+}\\ \hline
 n\ \mathrm{even}&P_n^+&P_n^-&W_n&W_{n-1}\\
 n\ \mathrm{odd}&W_{n-1}&W_n&N_n^+&N_n^-
\end{array}
\end{equation}
These quadrant assignments respect
$\operatorname{sgn}(UV)=\operatorname{sgn}(r-h_n)$.

Let $\mathscr Y$ be the topological disjoint union of the blocks and
all the $X_n$.  For each entry in \eqref{eq:app-quadrant-table}, set
$\boldsymbol\lambda_e\equiv(0,0)\in\mathsf A$ and identify the corresponding
block collar with the indicated quadrant through
$\Xi_{h_n}^{\epsilon\eta}$.  The two collars of each $W_n$ are disjoint by
\eqref{eq:app-disjoint-collars}, so each point lies in at most one
gluing domain.  Let
\begin{equation}\label{eq:app-quotient}
 \mathscr K_{M,a}^{\mathrm{lin}}\equiv\mathscr Y/\!\sim,
 \qquad \pi:\mathscr Y\longrightarrow
 \mathscr K_{M,a}^{\mathrm{lin}}
\end{equation}
be the resulting quotient.

\begin{proposition}[Properties of the explicit realization]
\label{prop:explicit-carter-realization}
The space $\mathscr K_{M,a}^{\mathrm{lin}}$ is a connected, Hausdorff,
second-countable, time-oriented real-analytic four-manifold.  The block
metrics and the metrics $g_{h_n}$ determine an analytic Lorentzian
metric $g_{M,a}^{\mathrm{lin}}$.  The functions $r$ and $\mu$, the
fields $T$ and $\Phi$, and the tensor $\mathsf K$ descend analytically.
With the images of the blocks and crossing models as its marked atlas,
the resulting tuple is a member of $\CarterClass$.  In particular,
$\CarterClass\ne\varnothing$.
\end{proposition}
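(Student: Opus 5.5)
The plan is to check each item in (C1)--(C6) of Definition~\ref{def:carter-spacetime} directly for the quotient \eqref{eq:app-quotient}, in the following order: manifold structure, metric and global fields, time orientation, and finally the marked-atlas data.

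\textbf{Manifold structure.} First I would show that $\pi$ is open and that each model embeds. Every gluing map $\Xi_{h_n}^{\epsilon\eta}$ is an analytic diffeomorphism from a full block collar onto an open quadrant. By \eqref{eq:app-disjoint-collars}, each point of $\mathscr Y$ lies in at most one gluing domain. Hence the equivalence classes have at most two elements, and the saturation of an open set is open. Connectedness follows because the incidence graph is the line $\cdots C_n - W_n - C_{n+1}\cdots$ with pendant blocks $P_n^\pm$, $N_n^\pm$. Second countability follows from the countable atlas.

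\textbf{Hausdorffness (main obstacle).} This is the step I expect to be hardest. I would prove that the graph of $\sim$ is closed in $\mathscr Y\times\mathscr Y$, as in Lemma~\ref{lem:extremal-realization}. Take a sequence of equivalent pairs $(x_k,\Xi(x_k))$ converging in $\mathscr Y\times\mathscr Y$. If $x_k$ converges inside its block, then $r(x_k)\ne h$ in the limit, and continuity of $\Xi$ gives an equivalent pair. The remaining case is a crossing point tending to the axes $\{UV=0\}$. Then $r\to h$, and the block point has no limit in the block, since $I_j$ is open at $h$. This uses the sector formula $UV=F_h(r)$ together with the inverse \eqref{eq:app-sector-inverse}: $|t|\to\infty$ or $r\to h$ escapes every compact subset of $\mathcal B_j$. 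Pairs lying in distinct gluing domains are never equivalent, and distinct domains have disjoint closures because the collars are separated. So the relation is closed, and openness of $\pi$ gives a Hausdorff quotient.

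\textbf{Metric, fields, time orientation.} Lemma~\ref{lem:app-crossing-metric} shows that the block metrics agree with $g_{h_n}$ on quadrants, so they descend to an analytic Lorentzian metric. The functions $r$ and $\mu$ descend because the sector maps preserve them. For $T$ and $\Phi$, the pushforward identities in the proof of Proposition~\ref{prop:canonical-kerr-fields} apply. The Killing--Yano form \eqref{eq:app-ky-crossing} descends, and hence so does $\mathsf K$. For the time orientation, I would use $Z=(r^2+a^2)T+a\Phi$ on $\Delta>0$ blocks and $\mp\partial_r$-type fields on $\Delta<0$ blocks. On crossings I would use the null vectors $\partial_U+\partial_V$ suitably signed, checking compatibility quadrant by quadrant via the sign conventions in $\Xi^{\epsilon\eta}$. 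The future cone on $X_n$ is fixed by a vector such as $\partial_U+\partial_V$, which is timelike at the bifurcation sphere by \eqref{eq:app-bifurcation-metric}. I would verify that its $g$-pairing with the block orientation field has a consistent sign on each of the four quadrants, adjusting block orientations (choosing $\pm$) along the line graph inductively. There is no cycle, so this choice never obstructs.

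\textbf{Marked-atlas conditions.} Condition (C2) holds because the components of the complement of $\mathcal H$ are exactly the block images: the axes $UV=0$ give $\Delta=0$, and the blocks include both disc branches without identification. The port bijection and the radial rule \eqref{eq:def-radial-port-rule} are read off from table \eqref{eq:app-quadrant-table}, with all $\boldsymbol\lambda_e=0$, and \eqref{eq:def-horizon-exhaustion} holds by construction. Local finiteness in (C5) holds because a compact set meets finitely many $X_n$, since $r$ and the graph position are locally controlled. Condition (C6) holds because, by definition of $\sim$, points are identified only via the sector maps.
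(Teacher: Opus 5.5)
Your overall route is the paper's: a closed relation plus an open quotient map for Hausdorffness, connectedness and second countability from the countable incidence line, Lemma~\ref{lem:app-crossing-metric} and the explicit crossing formulas for the metric, $T$, $\Phi$ and $f$, tree propagation of a time orientation, and a direct reading of \textup{(C1)}--\textup{(C6)} from the quadrant table.

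The time-orientation step, however, does not go through as written. At an $r_+$ crossing one has $h-\bar h=r_+-r_->0$, so the $dU\,dV$ coefficient in \eqref{eq:app-bifurcation-metric} is positive. Hence $\partial_U+\partial_V$ is spacelike at the bifurcation sphere. The timelike choice there is $-\partial_U+\partial_V$; the vector $\partial_U+\partial_V$ works only at $r_-$.

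More importantly, choosing a cone at the bifurcation sphere and comparing signs with the block fields near it orients all of $X_n$ only if $X_n$ is time-orientable. Your check does not control the nonbifurcate horizon branches at large $|U|$ or $|V|$. There two quadrants meet across an axis, and $\pm\partial_U+\partial_V$ need not remain timelike. The missing idea, which the paper uses, is that $\mathcal X_h$ is simply connected: the domain $\{|UV|<\delta_h\}$ is star-shaped and the angular factor is $\mathbb S^2$. Its time-orientation double cover is therefore trivial. Each piece is then time-orientable, and the orientation propagates uniquely through the connected overlaps of the tree.

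If instead you hold the crossing cones fixed and adjust only block signs, acyclicity is not what makes this work. Each $W_n$ receives a sign from both of its crossings. These two signs do agree for the quadrant table \eqref{eq:app-quadrant-table}, but that has to be checked.

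There are two smaller points. In the closed-relation argument, continuity of $\Xi$ handles a convergent pair only when the block limit lies in the open collar. You must also exclude a block limit on the far edge of the collar, where $r$ tends to the endpoint of $J_{h_n}$ other than $h_n$. There $|UV|=|F_{h_n}(r)|\to\delta_{h_n}$, a value absent from $\mathcal X_{h_n}$, so the crossing side cannot converge. The paper makes exactly this remark.

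Your sentence on \textup{(C5)} asserts local finiteness rather than proving it. The paper exhibits the neighborhoods explicitly. A nonhorizon point has one meeting only its block and, inside a collar, the adjacent crossing. A horizon point has a crossing neighborhood meeting only that crossing and its four incident block closures. Disjointness of $J_{r_-}$ and $J_{r_+}$ prevents such a neighborhood from reaching a second crossing through an intermediate block.
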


\begin{proof}
Each equivalence class has at most two representatives.  The relation
is closed: in a fixed block-crossing pair, a limit on $UV=0$ would
force $r\to h_n$, which has no limit in the open block; a limit in a
quadrant remains in the transition graph by
\eqref{eq:app-sector-inverse}.  The outer collar boundary
$|UV|=\delta_{h_n}$ is absent from the crossing model.  Since each summand is open and closed, every convergent net is
covered by these cases.
The saturation of an open set is open under the gluing homeomorphisms.
Thus $\pi$ and $\pi\times\pi$ are open quotient maps, and
\begin{equation}
 (\pi\times\pi)^{-1}
 \bigl(\operatorname{diag}(\mathscr K_{M,a}^{\mathrm{lin}})\bigr)
 =\{(y,z):y\sim z\}
\end{equation}
is closed.  The diagonal of the quotient is therefore closed, and the quotient
is Hausdorff.  In particular, the construction has no doubled horizon
points.

Each model maps injectively and openly to the quotient.  Its analytic
charts, including the polar caps, therefore give a four-manifold
atlas with transitions
\eqref{eq:app-sector-map}-\eqref{eq:app-sector-inverse}.
Lemma~\ref{lem:app-crossing-metric} and
\eqref{eq:app-killing-fields}, \eqref{eq:app-ky-crossing} show that the
metrics, functions, and tensors agree on overlaps.  Since $\pi$ is open, a countable base of the disjoint union
descends to a countable base of the quotient.  The incidence graph is a bi-infinite chain with
two leaves at each crossing.  Each model and each overlap is connected.  To see this for
$\mathcal B_1$, one may join points while avoiding the deleted ring
by moving first to $r\ne0$ and then on the sphere.  It follows from
the connectedness of the incidence graph that the quotient is
connected.

On each $\Delta>0$ block the field
\begin{equation}
 Z\equiv A(r)T+a\Phi,
 \qquad g(Z,Z)=-\Delta\Sigma<0,
\end{equation}
is timelike; on an intermediate block use $\partial_r$, whose norm is
$\Sigma/\Delta<0$.  Each crossing model is simply connected, since
its $(U,V)$ domain is star-shaped and its angular factor is
$\mathbb S^2$, so its time-orientation double cover is trivial.
An orientation chosen on $X_0$ propagates uniquely through the
connected overlaps of the incidence tree and defines a global time
orientation.

Removing the crossing axes leaves exactly the four quadrants already
identified with full block collars.  Hence the components of
$\{\Delta\ne0\}$ are the marked blocks.  Both axes, all four open
horizon branches, and the bifurcation sphere occur in each crossing;
\eqref{eq:app-ingoing-from-kruskal}-\eqref{eq:app-outgoing-from-kruskal}
give the full ingoing and outgoing charts.

A nonhorizon point has a neighborhood meeting only its block and,
within a collar, the adjacent crossing.  A horizon point has a
crossing neighborhood meeting only that crossing and its four
incident block closures.  The disjoint collar intervals prevent this
neighborhood from reaching another crossing through an intermediate
block.  These neighborhoods prove local finiteness.  Taking a finite
subcover of a compact set shows that it meets only finitely many
pieces.  Distinct block
closures meet only on the axes of a common crossing.

Finally, $r_->0$ and \eqref{eq:app-disjoint-collars} keep every gluing
away from $r=0$.  Each negative block retains the distinct analytic
disc branches
\begin{equation}
 D_{n,\mathrm{up}}^\pm\equiv\{r=0,\ \mu>0\},\qquad
 D_{n,\mathrm{down}}^\pm\equiv\{r=0,\ \mu<0\},
\end{equation}
each diffeomorphic to $\mathbb R_t$ times an open two-disc.  Only the
equatorial ring is removed; no points on either branch are identified.
With $\boldsymbol\lambda_e\equiv(0,0)$, all conditions
\textup{(C1)}-\textup{(C6)} follow.
\end{proof}

\subsection{Graph realizations, coverings, and cyclic holonomy}
\label{app:graph-realizations}

The construction above extends to all the full-collar gluings in
Definition~\ref{def:carter-spacetime}.  Their classification is
described by the incidence graph and the stationary-axial gluing
parameters.

Let $\mathscr C_+$ and $\mathscr C_-$ be disjoint finite or countable sets
of crossing vertices, carrying copies of $\mathcal X_{r_+}$ and
$\mathcal X_{r_-}$, respectively.  Let $\mathscr B_j$ be a finite or
countable set of copies of $\mathcal B_j$.  Pair the block endpoints
with crossing quadrants through the four bijections
\begin{align}
 \mathscr B_3\times\{r_+\}
 &\longleftrightarrow
 \mathscr C_+\times\{Q_{++},Q_{--}\},
 \label{eq:graph-pairing-outer-positive}\\
 \mathscr B_2\times\{r_+\}
 &\longleftrightarrow
 \mathscr C_+\times\{Q_{+-},Q_{-+}\},
 \label{eq:graph-pairing-outer-negative}\\
 \mathscr B_2\times\{r_-\}
 &\longleftrightarrow
 \mathscr C_-\times\{Q_{++},Q_{--}\},
 \label{eq:graph-pairing-inner-positive}\\
 \mathscr B_1\times\{r_-\}
 &\longleftrightarrow
 \mathscr C_-\times\{Q_{+-},Q_{-+}\}.
 \label{eq:graph-pairing-inner-negative}
\end{align}
Assign $\boldsymbol\lambda_e\in\mathsf A$ to each paired endpoint.
The blocks and crossings form a connected incidence multigraph
$\mathscr G$, with one edge per pairing; parallel intermediate blocks
are allowed.

Let $\mathscr Y_{\mathscr G}$ be the disjoint union of the indicated
blocks and crossing models.  When $e$ pairs a block collar with the
quadrant $Q_{\epsilon\eta}$ at a crossing of type $h$, impose the
identification
\begin{equation}\label{eq:graph-gluing-map}
 x\sim
 \widehat{\mathscr S}_{h,\boldsymbol\lambda_e}
 \bigl(\Xi_h^{\epsilon\eta}(x)\bigr).
\end{equation}
Write the resulting quotient as
\begin{equation}
 \mathscr K(\mathscr G,\boldsymbol\lambda)
 \equiv\mathscr Y_{\mathscr G}/\!\sim.
\end{equation}

\begin{theorem}[Classification of Carter graph realizations]
\label{thm:graph-realizations}
Every connected gluing datum
$(\mathscr G,\boldsymbol\lambda)$ described above has the following
properties.  Conversely, every member of $\CarterClass$ is
marked-isometric to the quotient determined by its incidence graph and
the incidence, exact-overlap, and local-finiteness data specified in
conditions \textup{(C4)}-\textup{(C6)}.
\begin{enumerate}[label=\textup{(\roman*)}]
 \item The quotient $\mathscr K(\mathscr G,\boldsymbol\lambda)$ is a
 connected, Hausdorff, second-countable, time-oriented real-analytic
 four-manifold.  The local Kerr metrics and the functions and tensors
 $r,\mu,T,\Phi,\mathsf K$ descend analytically.  The marked quotient is
 a member of $\CarterClass$.
 \item Remove the vertices in $\mathscr B_1\cup\mathscr B_3$ and their
 incident edges.  The remaining core $\mathscr G_0$ is either a
 bi-infinite line or a finite cycle.  Along it the $r_+$ and $r_-$
 crossing vertices alternate.  Consequently every finite core cycle
 has length divisible by four when both block and crossing vertices are
 counted.
 \item If $\mathscr G_0$ is a finite cycle, the graph universal cover
 $\widetilde{\mathscr G}$ is the bi-infinite line with the two prescribed
 leaves at every crossing.  Pulling back the gluing parameters gives an
 analytic covering
 \begin{equation}\label{eq:graph-covering-map}
  \mathscr K(\widetilde{\mathscr G},
   \widetilde{\boldsymbol\lambda})
  \longrightarrow
  \mathscr K(\mathscr G,\boldsymbol\lambda)
 \end{equation}
 whose deck group is infinite cyclic.  The deck action is free and
 properly discontinuous and consists of analytic Kerr isometries.  In
 particular, the realization with cyclic core is the quotient of its
 line realization by this action.
 \item For a fixed port-labelled graph, changes of stationary-axial
 coordinates on the individual pieces replace the edge parameters by
 \begin{equation}\label{eq:graph-gauge-change}
  \boldsymbol\lambda_e'=
  \boldsymbol b_{c(e)}+\boldsymbol\lambda_e-
  \boldsymbol b_{b(e)},
 \end{equation}
 where $b(e)$ and $c(e)$ are the block and crossing endpoints of $e$.
 If the core is a line, all parameters can be made zero.  If the core is
 a finite cycle, all but one can be made zero; the remaining invariant is
 the stationary-axial holonomy
 \begin{equation}\label{eq:graph-holonomy}
  \boldsymbol H\equiv\sum_{e\subset\mathscr G_0}
  \sigma_e\boldsymbol\lambda_e\in\mathsf A,
 \end{equation}
 where $\sigma_e=1$ when the chosen circuit traverses $e$ from its block
 endpoint to its crossing endpoint and $\sigma_e=-1$ otherwise.
 Reversing the circuit replaces $\boldsymbol H$ by
 $-\boldsymbol H$.
 \item The line realization
 $\mathscr K_{M,a}^{\mathrm{lin}}$ constructed in
 Proposition~\ref{prop:explicit-carter-realization} is a universal
 graph model in the following precise sense.  Every realization whose
 core is a line is marked-isometric to it after applying the quadrant
 involutions \eqref{eq:graph-quadrant-involution}, stationary-axial
 changes of coordinates, and a relabelling of identical block copies.
 If a finite core contains $2m$ crossing vertices, its realization is
 marked-isometric to a screw quotient
 \begin{equation}\label{eq:graph-screw-quotient}
  \mathscr K_{M,a}^{\mathrm{lin}}/
  \langle\mathscr D_{m,\boldsymbol H}\rangle,
  \qquad
  \mathscr D_{m,\boldsymbol H}
  \equiv\mathscr S_{-\boldsymbol H}\circ\mathscr T_{2m}.
 \end{equation}
 Here $\mathscr T_{2m}$ shifts the indices of all pieces by $2m$, and
 $\mathscr S_{\boldsymbol H}$ is the global stationary-axial action
 given on each piece by
 \eqref{eq:graph-block-action} or
 \eqref{eq:graph-crossing-action}.  The generator in
 \eqref{eq:graph-screw-quotient} preserves a global time orientation
 and acts freely and properly discontinuously.
\end{enumerate}
\end{theorem}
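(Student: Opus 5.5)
The plan is to prove (i) by repeating the quotient argument of Proposition~\ref{prop:explicit-carter-realization} for a general graph. After that, (ii)--(v) become combinatorics on $\mathscr G$, using the gauge calculus of Lemma~\ref{lem:marked-isometry-rigidity}. The converse follows from (C4)--(C6).

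For (i), I would first note that each identification \eqref{eq:graph-gluing-map} is an analytic isometry. This is because $\widehat{\mathscr S}_{h,\boldsymbol\lambda}$ preserves $g_h$, $r$, $\mu$, $T$, $\Phi$ and $\mathsf K$ by \eqref{eq:graph-equivariance}. Each equivalence class has at most two representatives, since collars at distinct endpoints of a block are disjoint by \eqref{eq:app-disjoint-collars} and quadrants are disjoint. Closedness of the relation follows exactly as before: a limit on $UV=0$ forces $r\to h$, which has no limit in the open block. Hence the quotient is Hausdorff. Second countability comes from countability of $\mathscr G$, and connectedness from connectedness of $\mathscr G$. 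For time orientation, crossing models are simply connected, but $\mathscr G$ may contain cycles. I would handle this by using the global vector field $Z=A(r)T+a\Phi$ on $\Delta>0$ blocks and $\partial_r$-type fields on intermediate blocks. I would then check that the orientations induced through each quadrant are determined by the port rule \eqref{eq:def-radial-port-rule}, so that going around any cycle returns the same orientation. Concretely, I would compare the sign of $dU+dV$ on a fixed future vector in each quadrant. Conditions (C1)--(C6) are then read off as in the explicit case.

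For (ii), I would count degrees. Each crossing has exactly two intermediate-block ports, namely $Q_{+-},Q_{-+}$ at $r_+$ and $Q_{++},Q_{--}$ at $r_-$. Each $\mathcal B_2$ vertex has exactly one $r_+$ edge and one $r_-$ edge. So $\mathscr G_0$ is $2$-regular and connected (leaves do not disconnect the graph), hence it is a line or a cycle, and the crossing types alternate. A full period has the form $C_+,W,C_-,W$, which gives a cycle length divisible by four.

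For (iii), I would unwrap the cycle to the line $\widetilde{\mathscr G}$ with the pulled-back parameters. The graph covering induces a local isometric map of quotients, which is an analytic covering because it is a homeomorphism on each piece. Its deck group is the $\mathbb Z$ of graph deck translations, acting freely and properly because the index shift moves every piece. For (iv), the gauge law is \eqref{eq:marked-isometry-gauge}. On a spanning tree I would solve for the vertex parameters $\boldsymbol b_v$ so that the edge parameters vanish. A line is a tree, and a cycle leaves one edge free, whose value is $\boldsymbol H$, the gauge-invariant circuit sum. For (v), I would relabel leaves and apply port involutions to match \eqref{eq:app-quadrant-table}, then use (iv) to gauge to zero. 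In the cyclic case, I would realize $\boldsymbol H$ on the universal line as $\mathscr S_{-\boldsymbol H}\circ\mathscr T_{2m}$. This map commutes with the glued structure because $\mathscr S$ commutes with all the gluings.

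I expect the hardest step to be the converse statement. For it, I would map each marked piece of a member of $\CarterClass$ to the model via $\Psi_\alpha$ and $\jmath_\beta$. I would then use (C6), which says identifications occur exactly through \eqref{eq:def-exact-sector-identity}, to get a bijective local isometry from the quotient. The global time-orientation consistency on cycles is the other delicate point.
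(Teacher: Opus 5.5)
Your architecture is the same as the paper's: the quotient argument of Proposition~\ref{prop:explicit-carter-realization}, degree counting for (ii), unwrapping the cycle for (iii), a spanning-tree gauge for (iv), the universal line plus port involutions for (v), and the converse from (C4)--(C6). The genuine gap is time orientability on a cyclic core. You flag it as delicate, but the reason you give is wrong, and the time-orientation clause of (v) goes unaddressed for the same reason.

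You claim the orientation induced through a quadrant is \emph{determined by the port rule} \eqref{eq:def-radial-port-rule}, so that circuits are automatically consistent. That rule fixes only $\epsilon\eta$. The two core ports of a crossing have the same $\epsilon\eta$, yet they induce opposite signs against a fixed orientation of the crossing model.

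At $r_+$, for instance, $\kappa_h>0$ and $\Delta<0$ on the intermediate side. So by \eqref{eq:graph-r-pushforward}, $\partial_r$ is a positive multiple of $-(U\partial_U+V\partial_V)$ plus a bounded multiple of $\Phi$. It therefore points into the cone of $-\partial_U+\partial_V$ on $Q_{+-}$, but into the cone of $\partial_U-\partial_V$ on $Q_{-+}$; the same happens for $Q_{++}$ versus $Q_{--}$ at $r_-$.

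Hence each crossing on a core circuit contributes a factor $-1$, and the orientation holonomy around the circuit is $(-1)^N$, where $N$ is the number of crossings. It is trivial only because crossing types alternate (your part (ii)), which forces $N$ to be even. This parity argument is the missing idea.

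The same sign table gives the time-orientation clause of (v). Combine it with the invariance of the resulting sign pattern under the even shift $\mathscr T_{2m}$ and the connectedness of $\mathsf A$; this shows that $\mathscr D_{m,\boldsymbol H}$ preserves a global time orientation.

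Your concrete test also fails at $r_+$. By \eqref{eq:app-bifurcation-metric} the $dU\,dV$ coefficient is positive there, so $dU+dV$ is spacelike and has no fixed sign on the future cone. Use $dV-dU$ there instead.

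Three smaller points:
\begin{itemize}
\item In (iii), being a homeomorphism on each piece does not by itself give a covering. You need evenly covered neighborhoods: block or crossing charts whose preimages are disjoint copies.
\item In (iii) and (v), the fact that the index shift moves every piece gives freeness, but not proper discontinuity. For that you need local finiteness: a compact set meets only finitely many pieces, hence only finitely many of its translates.
\item In (v), the fact that $\mathscr S$ commutes with the gluings only shows that $\mathscr D_{m,\boldsymbol H}$ is an isometry of $\mathscr K^{\mathrm{lin}}_{M,a}$. To identify it with the deck generator, you must compute that the lifted vertex gauge satisfies $\boldsymbol b_{\delta v}-\boldsymbol b_v=-\boldsymbol H$ for the graph deck shift $\delta$. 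This is exactly where the holonomy \eqref{eq:graph-holonomy} enters.
\end{itemize}
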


\begin{proof}
The stationary-axial actions preserve $UV$, $r$, and the Kerr
tensors.  The construction in Proposition~\ref{prop:explicit-carter-realization}
therefore applies.  Since the collars are disjoint, each equivalence
class has at most two representatives.  The transition relation is
closed and the quotient map is open.  Thus each model embeds, the
quotient is Hausdorff, and the analytic objects descend.  The countable
collection of pieces gives a countable atlas.  Since the models and
overlaps are connected and $\mathscr G$ is connected, the quotient
is connected.  The block and crossing neighborhoods used above meet
only finitely many marked pieces, proving local finiteness and the
assertion for compact sets.

To establish time orientability, we treat the line and cyclic cases
together.  On a
block with $\Delta>0$, use the timelike field
\begin{equation}
 Z\equiv A(r)T+a\Phi,
 \qquad g(Z,Z)=-\Delta\Sigma,
\end{equation}
as the reference orientation.  On an intermediate block, use
$\partial_r$, for which $g(\partial_r,\partial_r)=\Sigma/\Delta<0$.
Choose on $\mathcal X_{r_+}$ the time cone containing
$-\partial_U+\partial_V$ at its bifurcation sphere, and on
$\mathcal X_{r_-}$ the cone containing
$\partial_U+\partial_V$.  Their timelike character follows from
\eqref{eq:app-bifurcation-metric}.  The orientations extend over the
simply connected crossing models.
The adjacent block references induce the signs
\begin{equation}\label{eq:graph-time-signs}
\begin{array}{c|cccc}
 &Q_{++}&Q_{--}&Q_{+-}&Q_{-+}\\ \hline
 r_+&+&-&+&-\\
 r_-&+&-&+&-
\end{array}
\end{equation}
where at $r_+$ the first two entries refer to $Z$ and the last two to
$\partial_r$, while at $r_-$ the first two refer to $\partial_r$ and
the last two to $Z$.  Indeed, on each quadrant,
\begin{align}
 (\Xi_h^{\epsilon\eta})_*Z
 &=-\kappa_hA(r)U\partial_U
   +\kappa_hA(r)V\partial_V
   +(a-A(r)\Omega_h)\Phi,\label{eq:graph-Z-pushforward}\\
 (\Xi_h^{\epsilon\eta})_*\partial_r
 &=\frac{\kappa_hA(r)}{\Delta(r)}
   (U\partial_U+V\partial_V)+\chi_h'(r)\Phi.
 \label{eq:graph-r-pushforward}
\end{align}
The angular coefficients stay bounded at $r=h$.  The signs of
$\kappa_h$ and $\Delta$ give, up to positive factors, the double-null
components near $r_+$
\begin{equation}
 Z\sim(-U,V),\qquad
 \partial_r\sim(-U,-V),
\end{equation}
on the exterior and intermediate sides, respectively.  Near $r_-$ one
has
\begin{equation}
 \partial_r\sim(U,V),\qquad
 Z\sim(U,-V)
\end{equation}
on the intermediate and negative sides.  This proves \eqref{eq:graph-time-signs}.  The signs are constant
on each connected overlap and are preserved by the connected
stationary-axial group.

Orientations propagate consistently precisely when each circuit has
comparison-sign product $+1$.  Every circuit lies in the core.
Each crossing contributes $-1$ from its two core quadrants, and the
alternating crossing types give an even number of factors.  The product is therefore $+1$.  In the line case there is no
circuit.

Every quadrant and finite endpoint occurs once in the four pairings,
so all horizon branches and bifurcation spheres are present.  The
collars lie in $r>0$, leaving both regular disc branches distinct and
removing precisely the rings.  This proves \textup{(i)}.

Conversely, the marked maps of a given realization induce
\begin{equation}
 \mathscr K(\mathscr G,\boldsymbol\lambda)
 \longrightarrow\KerrSpace
\end{equation}
which is bijective by the exact-overlap condition.  It is open
because its restriction to each piece is an open embedding.  With the
prescribed atlas, it is a marked analytic isometry.

Deleting the degree-one exterior and negative vertices leaves a
connected core in which every vertex has degree two.  Following an edge in both directions gives either a finite cycle or
a bi-infinite line.  Connectedness shows that this exhausts the core.  Each intermediate block
joins an $r_+$ crossing to an $r_-$ crossing, so the types alternate.
This proves \textup{(ii)}.

Suppose now that the core is a finite cycle.  Lift the incidence graph,
the port labels, and the edge parameters to its graph universal cover.
The local identity maps from the lifted pieces to their corresponding
base pieces commute with every gluing map and therefore induce
\eqref{eq:graph-covering-map}.  A small block chart away from collars lifts to separate block
copies.  A sector chart lifts with the same two-piece transition
\eqref{eq:graph-gluing-map}.  At a horizon or bifurcation point, a
small crossing chart, with polar caps when necessary, lifts to
separate crossing copies containing all incident sectors.  These evenly covered neighborhoods prove that the induced map is
a covering.  The graph deck generator is an analytic isometry and
fixes no piece.  Since a compact set meets finitely many pieces, only
finitely many of its translates can meet that set.  The action is
therefore free and properly discontinuous, and its quotient is the
original realization.  This proves \textup{(iii)}.  The covering
four-manifold itself need not be simply connected.

The equivariance of the gluing maps under vertex translations gives
\eqref{eq:graph-gauge-change}.  Fix one vertex parameter and solve
successively along a spanning tree.  This makes every tree-edge
parameter zero.  For a line, every edge belongs to the tree.  For a
cycle, one edge remains, with parameter given by the signed sum in
\eqref{eq:graph-holonomy}; the vertex parameters cancel in this
sum.
Lemma~\ref{lem:marked-isometry-rigidity} gives the converse after the
discrete relabellings are fixed.  This proves \textup{(iv)}.

To prove (v), consider the map
\begin{equation}\label{eq:graph-quadrant-involution}
 \mathscr J_h(U,V,\omega)\equiv(-U,-V,\omega)
\end{equation}
This is an analytic isometry of $\mathcal X_h$, as follows either from
\eqref{eq:app-analytic-crossing-metric} or from
\begin{equation}
 \mathscr J_h\circ\Xi_h^{\epsilon\eta}
 =\Xi_h^{-\epsilon,-\eta}.
\end{equation}
The isometry preserves $r,\mu,T,\Phi$, and $\mathsf K$; it exchanges
the two core ports at a crossing and, at the same time, the two
identical leaf blocks.  Choose an orientation of the core.  At a
crossing $c$, assign to each incident core edge a quadrant sign
$s_{c,e}\in\{1,-1\}$ according to the convention
\begin{equation}
 \begin{array}{c|cc}
  &s_{c,e}=+1&s_{c,e}=-1\\ \hline
  h_c=r_+&Q_{+-}&Q_{-+}\\
  h_c=r_-&Q_{--}&Q_{++}.
 \end{array}
 \label{eq:graph-core-port-signs}
\end{equation}
Under this convention, the two endpoints of every intermediate block
in \eqref{eq:app-quadrant-table} have equal signs.  Applying
$\mathscr J_h$ at $c$ reverses both of them.  We seek signs
$x_c\in\{1,-1\}$ such that the transformed endpoint signs agree on
every intermediate block:
\begin{equation}\label{eq:graph-port-normalization}
 x_cs_{c,e}=x_{c'}s_{c',e}.
\end{equation}
On a line, these equations are solved recursively.  For a finite
cycle, compatibility requires the product of all endpoint signs around
the circuit to be $+1$.  The two signs at each crossing have product
$-1$, while the number of crossings is even, so this condition is
satisfied.  Thus \eqref{eq:graph-port-normalization} also has a solution
in the cyclic case.

After applying the corresponding involutions and relabelling the
identical blocks, one recovers exactly the quadrant table
\eqref{eq:app-quadrant-table}.  Part (iv) removes every continuous edge
parameter on the line, proving the first assertion of (v).  In the
cyclic case, pass to the line and remove all lifted parameters.  Choose
$\mathscr T_{2m}$ to follow the circuit used in
\eqref{eq:graph-holonomy}; let $\delta$ be the corresponding graph deck
shift, and let $\boldsymbol b_v$ denote the lifted vertex gauge.  Along
an edge traversed from its block endpoint to its crossing endpoint,
\eqref{eq:graph-gauge-change}, with zero transformed parameter, changes
$\boldsymbol b$ by $-\boldsymbol\lambda_e$; traversing the edge in the
opposite direction changes it by the opposite amount.  Hence, for every
lifted vertex $v$,
\begin{equation}
 \boldsymbol b_{\delta v}-\boldsymbol b_v
 =-\sum_{e\subset\mathscr G_0}
   \sigma_e\boldsymbol\lambda_e
 =-\boldsymbol H.
\end{equation}
The graph deck generator is therefore conjugate to the index shift
$\mathscr T_{2m}$ followed by the global symmetry
$\mathscr S_{-\boldsymbol H}$, which gives
\eqref{eq:graph-screw-quotient}.  The two factors commute by
\eqref{eq:graph-equivariance}.  The nontrivial index shift gives
freeness and proper discontinuity, while the sign calculation
\eqref{eq:graph-time-signs} shows that a time orientation is preserved.
This proves (v).
\end{proof}

\begin{corollary}[Inextendibility of graph realizations]
\label{cor:graph-inextendibility}
Every spacetime $\mathscr K(\mathscr G,\boldsymbol\lambda)$ in
Theorem~\ref{thm:graph-realizations} is
$C^0$-inextendible.  Every finite-proper-time terminal
timelike geodesic in it has the endpoint behavior and sharp asymptotics
proved in Theorems~\ref{thm:terminal-geodesics} and
\ref{thm:sharp-terminal-asymptotics}.
\end{corollary}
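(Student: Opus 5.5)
By Theorem~\ref{thm:graph-realizations}\textup{(i)}, each quotient $\mathscr K(\mathscr G,\boldsymbol\lambda)$ with its marked atlas is a member of $\CarterClass$. The plan is therefore a reduction to results already proved for that class, and no new geometric construction is needed.

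\textbf{Inextendibility.} We apply Theorem~\ref{thm:main-c0} to the marked quotient. Its subextremal assertion covers every member of $\CarterClass$, and hence every graph realization. The covering statement in Theorem~\ref{thm:main-c0} also applies directly, because the quotient is connected. For orientation, the chain behind that theorem runs as follows. Suppose a proper $C^0$ extension exists. Theorem~\ref{thm:boundary-geodesic} then gives a finite-proper-time terminal timelike geodesic with a boundary limit. Theorem~\ref{thm:ring-localization}, which uses Lemmas~\ref{lem:radial-bounded}, \ref{lem:finite-horizons}, \ref{lem:horizon-continuation} and Proposition~\ref{prop:away-ring-continuation}, forces $\Sigma\to0$ along it. Theorem~\ref{thm:fixed-endpoint-infinite-length} then produces infinite causal-homotopy lengths on arbitrarily late segments, and this contradicts Theorem~\ref{thm:causal-length-criterion}.

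\textbf{Terminal behavior.} Let $\gamma$ be a finite-proper-time terminal timelike geodesic in $\mathscr K(\mathscr G,\boldsymbol\lambda)$. Theorems~\ref{thm:terminal-geodesics} and \ref{thm:sharp-terminal-asymptotics} are stated for every spacetime of Theorem~\ref{thm:main-c0}, so they apply verbatim to this quotient. In detail, ring localization gives a tail in one marked $\mathcal B_1$ block, by local finiteness (C5) and $|r|<r_-/2$. The Carter constants are global by Proposition~\ref{prop:canonical-kerr-fields}. The inequalities $R(0)\ge0$ and $\Theta(\pi/2)\ge0$ force $\mathcal Q=D^2>0$, which gives equatorial rigidity, and the expansions follow from the separated equations in that block. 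These arguments are local in one block and insensitive to the gluing parameters $\boldsymbol\lambda_e$, because the stationary-axial translations preserve $r,\mu,T,\Phi,\mathsf K$.

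\textbf{Main obstacle.} The one point to check is that the quotient really satisfies every condition (C1)--(C6) as a marked tuple, and not merely up to isometry. The delicate conditions are exact overlaps and Hausdorffness in (C6), and local finiteness in (C5), which is what keeps a curve from escaping through infinitely many gluings in Lemma~\ref{lem:finite-horizons}. Theorem~\ref{thm:graph-realizations}\textup{(i)} asserts these properties, and its proof supplies them: each equivalence class has at most two representatives, the transition relation is closed, and the collars are disjoint. I would cite that proof rather than redo it. For cyclic cores, one can alternatively use part~\textup{(iii)}: the realization is a quotient of the line realization by a free, properly discontinuous isometric action, so the line realization is a connected cover of it. A geodesic ending at the boundary of an extension of the cyclic quotient projects to, and lifts from, a terminal geodesic of the same type, and Lemma~\ref{lem:covering-causal-length} transfers the infinite causal-homotopy lengths between the two spacetimes.
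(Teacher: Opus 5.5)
Your proposal is correct and follows the paper's proof exactly: Theorem~\ref{thm:graph-realizations}\textup{(i)} places the marked quotient in $\CarterClass$, after which Theorems~\ref{thm:main-c0}, \ref{thm:terminal-geodesics} and \ref{thm:sharp-terminal-asymptotics} apply directly, so neither the unrolled proof chain nor the alternative covering argument for cyclic cores is needed. One small correction: the local finiteness in (C5) is what keeps the geodesic tail inside a single block, but it plays no role in Lemma~\ref{lem:finite-horizons}, whose proof is purely radial.
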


\begin{proof}
By part (i) of Theorem~\ref{thm:graph-realizations}, the marked quotient
belongs to $\CarterClass$.  The stated conclusions follow from
Theorems~\ref{thm:main-c0}, \ref{thm:terminal-geodesics}, and
\ref{thm:sharp-terminal-asymptotics}.
\end{proof}

\begin{remark}
The geodesic and causal homotopy arguments apply directly to these
quotients; no descent principle for inextendibility is needed.
\end{remark}

\begin{corollary}[Fundamental groups of Carter realizations]
\label{cor:graph-fundamental-group}
Let $\mathscr K\equiv\mathscr K(\mathscr G,\boldsymbol\lambda)$ be a
realization in Theorem~\ref{thm:graph-realizations}.  Its fundamental
group is free, with one generator for every negative block and one
additional generator when the core is cyclic.  In particular,
\begin{equation}
 \pi_1(\mathscr K)\cong
 \begin{cases}
  F_{\aleph_0},&\text{if the core is a bi-infinite line},\\
  F_{2m+1},&\text{if the core has $2m$ crossing vertices},
 \end{cases}
\end{equation}
where $F_I$ denotes the free group on a set of cardinality $I$.
These groups are independent of the stationary-axial holonomy.
Thus cyclic realizations with different numbers of crossing vertices
are not homeomorphic.
\end{corollary}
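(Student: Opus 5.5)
The plan is to compute $\pi_1$ by a van Kampen argument along the incidence graph $\mathscr G$, treating the quotient $\mathscr K(\mathscr G,\boldsymbol\lambda)$ as a graph of spaces. The vertex spaces are the blocks and crossing models, and the edge spaces are the full collars $\mathcal C_e$. I will first determine the homotopy type of each piece.

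A crossing model $\mathcal X_h$ is a star-shaped $(U,V)$-domain times $\mathbb S^2$, so it is simply connected. Blocks of type $\mathcal B_2$ and $\mathcal B_3$ are $\mathbb R\times I_j\times\mathbb S^2$, also simply connected. Each collar $\mathcal C_e\cong\mathbb R\times(\text{interval})\times\mathbb S^2$ is simply connected and path connected.

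The negative block $\mathcal B_1$ is $(\mathbb R_t\times(-\infty,r_-)\times\mathbb S^2)$ with the ring $\{r=0,\mu=0\}$ removed, i.e.\ $\mathbb R_t$ times the complement of an equatorial circle inside an open $3$-manifold $(-\infty,r_-)\times\mathbb S^2\cong\mathbb R^3\setminus\{\text{pt}\}$. I will show, by identifying $(-\infty,r_-)\times \mathbb S^2$ with a punctured $\mathbb R^3$ in which the ring becomes an unknotted circle, that $\pi_1(\mathcal B_1)\cong\mathbb Z$. Concretely, $\mathbb R^3$ minus an unknot minus a point far from it still has $\pi_1\cong\mathbb Z$, since removing a point from a $3$-manifold does not change $\pi_1$. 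The generator is a small meridian loop around the ring, for instance the transverse circle $\Sigma=\rho^2$ in the $(r,y)$-plane. This unknotting identification, which must be made carefully in the chosen coordinates, is the step I expect to need the most care; a convenient route is a deformation retraction of $\mathcal B_1$ onto $\mathbb S^2\times\{|r|\le1\}$ minus the ring, then a direct van Kampen computation splitting at $r=0$ into the two discs.

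Next I assemble the pieces. Since every edge space is simply connected and connected, the graph-of-groups theorem (an iterated van Kampen, using a colimit over finite subgraphs in the countable case, which is justified because compact loops and homotopies meet only finitely many pieces by (C5)) gives
\[
\pi_1(\mathscr K)\cong \mathop{\ast}_{v}\pi_1(\text{piece }v)\ast\pi_1(\mathscr G),
\]
a free product of the vertex groups with the fundamental group of the graph. Only the negative blocks contribute vertex groups, each $\mathbb Z$. By Theorem~\ref{thm:graph-realizations}(ii), the graph is a core line or cycle with leaves attached, so $\pi_1(\mathscr G)$ is trivial for a line and $\mathbb Z$ for a cycle. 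The holonomy $\boldsymbol\lambda$ only changes gluing maps by isotopic translations (the group $\mathsf A$ is connected), so it cannot affect the answer.

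Finally I count the negative blocks. A line core has infinitely many $r_-$ crossings, each carrying two $\mathcal B_1$ leaves by \eqref{eq:graph-pairing-inner-negative}, giving $F_{\aleph_0}$. A cycle with $2m$ crossing vertices has $m$ crossings of type $r_-$ by alternation, hence $2m$ negative blocks; adding the cycle generator gives $F_{2m+1}$. Free groups of distinct finite rank are non-isomorphic, as seen through abelianization $\mathbb Z^{2m+1}$, so cyclic realizations with different numbers of crossing vertices are not homeomorphic.
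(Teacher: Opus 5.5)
Your proposal is correct and follows essentially the same route as the paper. Both arguments rest on: simply connected crossing models, $\mathcal B_2$ and $\mathcal B_3$ blocks, and collars; $\pi_1(\mathcal B_1)\cong\mathbb Z$ from an identification of $I_1\times\mathbb S^2$ with $\mathbb R^3\setminus\{0\}$ that carries the ring to an unknotted circle (the paper uses $(r,\omega)\mapsto r_-\omega/(r_--r)$, which sends it to the unit circle); van Kampen along the incidence graph, with a direct limit justified by local finiteness; and the count of $2m$ negative blocks plus one cycle generator.

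Two small remarks. Your optional retraction onto $\{|r|\le1\}$ would need $|r|\le\epsilon<r_-$ to stay inside the block, but the main unknot argument already suffices. Also, independence from the holonomy follows directly from the free-product formula, because the edge groups are trivial, so the isotopy argument is not needed.
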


\begin{proof}
Every crossing model $\mathcal X_h$ is simply connected: its
$(U,V)$ domain is star-shaped and its angular factor is $\mathbb S^2$.
The blocks $\mathcal B_2$ and $\mathcal B_3$ are simply connected as
well.  Each block-crossing intersection is a full collar,
diffeomorphic to $\mathbb R\times J\times\mathbb S^2$ for an open
interval $J$, and is therefore connected and simply connected.
Distinct block images are disjoint, as are distinct crossing images;
there are no triple intersections.

We compute the group of $\mathcal B_1$ explicitly.  The radial
diffeomorphism
\begin{equation}
 (r,\omega)\longmapsto\frac{r_-}{r_--r}\,\omega
\end{equation}
maps $I_1\times\mathbb S^2$ onto $\mathbb R^3\setminus\{0\}$ and
maps the removed equatorial circle to the standard unit circle $C$.
Consequently
\begin{equation}
 \mathcal B_1\cong
 \mathbb R\times\bigl(\mathbb R^3\setminus(C\cup\{0\})\bigr).
\end{equation}
The complement of $C$ in $\mathbb S^3$ is an open solid torus.
Removing a point from a three-manifold preserves its fundamental
group, by van Kampen with a coordinate ball and its simply connected
punctured ball.  Removing first $\infty$ and then $0$ therefore gives
$\pi_1(\mathcal B_1)\cong\mathbb Z$.

Van Kampen along a spanning tree gives the free product of the
negative-block groups; each remaining edge adds one generator:
\begin{equation}
 \pi_1(\mathscr K)\cong
 \Bigl(\mathop{\ast}_{\mathscr B_1}\mathbb Z\Bigr)
 *\pi_1(\mathscr G).
\end{equation}
Local finiteness places every loop and homotopy in finitely many
pieces, so the countable case follows by direct limit.  A cyclic core
has $m$ inner crossings, $2m$ negative blocks, and one cycle; a line
has countably many negative blocks and no cycle.  These counts give the asserted ranks, independently of the collar
maps.  Their abelianizations have different ranks in the finite cases,
so the corresponding realizations cannot be homeomorphic.
\end{proof}

\begin{corollary}[Homology of Carter realizations]
\label{cor:graph-homology}
For every realization $\mathscr K\equiv\mathscr K(\mathscr G,\boldsymbol\lambda)$,
\begin{equation}
 \begin{aligned}
 H_2(\mathscr K;\mathbb Z)
 &\cong\mathbb Z\oplus\bigoplus_{\mathscr B_1}\mathbb Z,\\
 H_3(\mathscr K;\mathbb Z)&\cong H_1(\mathscr G;\mathbb Z),\qquad
 H_k(\mathscr K;\mathbb Z)=0\quad(k\ge4).
 \end{aligned}
\end{equation}
In particular, a cyclic core with $2m$ crossing vertices gives
$H_2(\mathscr K;\mathbb Z)\cong\mathbb Z^{2m+1}$ and
$H_3(\mathscr K;\mathbb Z)\cong\mathbb Z$, whereas a line core gives
$H_2(\mathscr K;\mathbb Z)\cong\bigoplus_{\mathbb N}\mathbb Z$
and $H_3(\mathscr K;\mathbb Z)=0$.
All integral homology groups are free abelian and independent of the
stationary-axial holonomy.
\end{corollary}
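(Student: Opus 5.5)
The plan is to compute $H_*(\mathscr K;\mathbb Z)$ with the Mayer--Vietoris spectral sequence of the open cover by the marked pieces: the block images and the crossing images. By the proof of Corollary~\ref{cor:graph-fundamental-group}, pairwise intersections are exactly the full collars, one for each edge of $\mathscr G$, and there are no triple intersections. So the nerve is the incidence graph, and the spectral sequence has only the columns $p=0,1$. It collapses to a long exact sequence
\[
 \cdots\to\bigoplus_{e}H_k(\mathcal C_e)\xrightarrow{\ \partial_k\ }\bigoplus_{v}H_k(P_v)\to H_k(\mathscr K)\to\bigoplus_{e}H_{k-1}(\mathcal C_e)\xrightarrow{\ \partial_{k-1}\ }\cdots,
\]
where $\partial_k$ is the signed sum of the two inclusion maps. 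In the countable case I would argue as in Corollary~\ref{cor:graph-fundamental-group}: compact cycles and chains meet finitely many pieces by local finiteness, so homology is the direct limit over finite connected subgraphs, and it suffices to treat finite subgraphs.

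\textbf{Step 1: local homology.} Each collar $\mathcal C_e\cong\mathbb R\times J\times\mathbb S^2$ and each crossing model (star-shaped $(U,V)$ domain times $\mathbb S^2$) is homotopy equivalent to $\mathbb S^2$. The same holds for $\mathcal B_2$ and $\mathcal B_3$. For $\mathcal B_1$, the proof of Corollary~\ref{cor:graph-fundamental-group} gives $\mathcal B_1\cong\mathbb R\times(\mathbb R^3\setminus(C\cup\{0\}))$. By Alexander duality, or a direct decomposition, this has $H_1=\mathbb Z$ (meridian of $C$) and $H_2=\mathbb Z^2$, generated by a small sphere about $0$ and a torus about $C$. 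It has $H_k=0$ for $k\ge3$. All pieces and overlaps have vanishing homology in degrees $\ge3$, and every $H_1(\mathcal C_e)$ vanishes.

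\textbf{Step 2: the maps $\partial_2$ and orientations.} Orient every $\mathbb S^2$ factor by the standard orientation of the sphere. All gluing maps \eqref{eq:graph-gluing-map} act on the sphere by rotations $\mathscr R_\zeta$, which preserve orientation, and they act trivially on the $\mathbb S^2$ homotopy type. Therefore each collar class maps to $+1$ times the generator in $\mathbb S^2$-type pieces, and no twisted-coefficient sign appears around a cycle. In $\mathcal B_1$ the collar sphere near $r=r_-$ is the large sphere enclosing both $C$ and $0$. Its class is the sum of the two generators, hence primitive. After a change of basis, $H_2(\mathcal B_1)=\mathbb Z\,[\text{collar}]\oplus\mathbb Z\,[\text{extra}]$. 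Thus $\partial_2$ is the cellular boundary $C_1(\mathscr G;\mathbb Z)\to C_0(\mathscr G;\mathbb Z)$ direct-summed with the zero map into $\bigoplus_{\mathscr B_1}\mathbb Z$.

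\textbf{Step 3: reading off.} From Step 2, $\ker\partial_2=H_1(\mathscr G;\mathbb Z)$ and $\operatorname{coker}\partial_2=H_0(\mathscr G)\oplus\bigoplus_{\mathscr B_1}\mathbb Z=\mathbb Z\oplus\bigoplus_{\mathscr B_1}\mathbb Z$.
\begin{itemize}
 \item Degree~$3$: the pieces have no $H_3$, so the exact sequence gives $H_3(\mathscr K)\cong\ker\partial_2=H_1(\mathscr G)$.
 \item Degree~$2$: the sequence $0\to\operatorname{coker}\partial_2\to H_2(\mathscr K)\to\ker\partial_1$ applies, and $\ker\partial_1=0$ because every $H_1(\mathcal C_e)=0$. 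Hence $H_2(\mathscr K)\cong\mathbb Z\oplus\bigoplus_{\mathscr B_1}\mathbb Z$.
 \item Degrees $\ge4$: every term vanishes, so $H_k(\mathscr K)=0$.
\end{itemize}
All groups obtained are free abelian, and none of the computations involved $\boldsymbol\lambda$. The special cases follow from Theorem~\ref{thm:graph-realizations}(ii). A cyclic core with $2m$ crossings has $2m$ negative blocks and $H_1(\mathscr G)=\mathbb Z$. A line core has countably many negative blocks and is a tree.

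The main obstacle I expect is Step 2: identifying the collar sphere in $\mathcal B_1$ as a primitive class, and verifying that orientations of the $\mathbb S^2$ factors are globally compatible, including through the sector maps \eqref{eq:app-sector-map} and the involutions $\mathscr J_h$. If a sign were reversed around the core cycle, $H_3$ would become $0$ and $H_2$ would acquire $2$-torsion, so this is the point to check first.
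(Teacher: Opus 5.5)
Your proposal is correct and is essentially the paper's proof. The paper runs two-set Mayer--Vietoris with $U$ the union of all block images and $V$ the union of all crossing images, which is the same as your collapsed spectral sequence because blocks are pairwise disjoint and crossings are pairwise disjoint. It takes the basis $u_b,v_b$ of $H_2(\mathcal B_1)$ given by the collar sphere (Alexander-duality class $(1,1)$) and the small sphere about the origin ($(0,1)$), which is your observation that the collar sphere equals the torus class plus the small sphere. It then uses that every gluing map is a degree-one axial rotation on the angular spheres, so $D(e)=u_{b(e)}-u_{c(e)}$. The involutions $\mathscr J_h$ never enter the gluing, so the sign issue you flag does not arise.

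The one point you leave implicit is freeness of $H_1(\mathscr K;\mathbb Z)$. The paper gets it from the abelianized free fundamental group. It also follows from the low-degree end of your own sequence, $0\to\bigoplus_{\mathscr B_1}\mathbb Z\to H_1(\mathscr K;\mathbb Z)\to H_1(\mathscr G;\mathbb Z)\to 0$, which splits because $H_1(\mathscr G;\mathbb Z)$ is free.
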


\begin{proof}
The homeomorphism used in
Corollary~\ref{cor:graph-fundamental-group} identifies a negative block
with $\mathbb R\times Y$, where
\begin{equation}
 Y\equiv\mathbb R^3\setminus(C\cup\{0\})
  =\mathbb S^3\setminus(C\cup\{0,\infty\})
\end{equation}
and $C$ is the unit equatorial circle.
Alexander duality \cite[Section~3.3]{Hatcher2002} gives $H_1(Y;\mathbb Z)\cong\mathbb Z$,
$H_2(Y;\mathbb Z)\cong\mathbb Z^2$, and $H_k(Y;\mathbb Z)=0$
for $k\ge3$.  To identify the collar map, represent
\begin{equation}
 \widetilde H^0(C\cup\{0,\infty\};\mathbb Z)
 \cong\mathbb Z^2
\end{equation}
by the values on $C$ and $\{0\}$ after subtracting the value at
$\infty$.  With outward orientations, the duality isomorphism sends
an enclosing sphere to its winding numbers about these two components.
A sphere $\{r=r_0\}\times\mathbb S^2$ in the positive gluing collar
has Euclidean radius $r_-/(r_--r_0)>1$, and its class therefore maps
to $(1,1)$.  A small sphere about the origin maps to $(0,1)$.
These form an integral basis.  Write the corresponding classes as
$u_b,v_b$, with $u_b$ the collar class.

Let $U$ be the union of all block images and $V$ the union of all
crossing images.  They form an open cover of $\mathscr K$.
The connected components of $U\cap V$ are precisely the full collars,
one for each edge of $\mathscr G$.
Every collar, every crossing model, and every block of type
$\mathcal B_2$ or $\mathcal B_3$ is homotopy equivalent to
$\mathbb S^2$.  The angular sphere defines its $H_2$ generator.
For a negative block use the basis $u_b,v_b$ just constructed.
On every sphere with $t$ and $r$ fixed, the angular part of each
gluing map is an axial rotation, which has degree $+1$.
Thus, with each incidence edge oriented from
its block endpoint to its crossing endpoint, the Mayer-Vietoris map
on $H_2$ is
\begin{equation}
 D:\bigoplus_{e\in E(\mathscr G)}\mathbb Z
   \longrightarrow
   \left(\bigoplus_{v\in V(\mathscr G)}\mathbb Z u_v\right)
   \oplus\left(\bigoplus_{b\in\mathscr B_1}\mathbb Z v_b\right),
 \qquad
 D(e)\equiv u_{b(e)}-u_{c(e)}.
\end{equation}
Its first component is the incidence differential of the graph,
up to the choice of edge orientation, and its second component is
zero.  Hence
\begin{equation}
 \ker D\cong H_1(\mathscr G;\mathbb Z),
 \qquad
 \operatorname{coker}D
 \cong H_0(\mathscr G;\mathbb Z)
       \oplus\bigoplus_{\mathscr B_1}\mathbb Z.
\end{equation}
The higher homology of the pieces vanishes and
$H_1(U\cap V;\mathbb Z)=0$.  The Mayer-Vietoris sequence \cite[Section~2.2]{Hatcher2002} therefore
identifies $H_3(\mathscr K;\mathbb Z)$ with $\ker D$ and
$H_2(\mathscr K;\mathbb Z)$ with $\operatorname{coker}D$, and gives
the asserted vanishing in degrees at least four.
Connectedness gives $H_0(\mathscr G;\mathbb Z)=\mathbb Z$;
$H_1(\mathscr K;\mathbb Z)$ is the free abelian group obtained by
abelianizing Corollary~\ref{cor:graph-fundamental-group}.
All groups in the argument are direct sums.  Since singular chains
have finite support, the same argument applies to a countable cover.
Finally, the collar maps above have degree $+1$ for every choice of
$\boldsymbol\lambda$, so neither torsion nor dependence on the
stationary-axial holonomy occurs.
\end{proof}

\section{Scalar asymptotics at a simple radial zero}\label{app:scalar-normal-form}

The exponent $2/5$ in the Kerr asymptotics comes from a simple
radial zero and a quadratically vanishing coefficient of $\dot r$.
We give the scalar calculation, including the two correction terms
used in the main text.

\begin{lemma}[Asymptotics at a simple radial zero]
\label{lem:simple-zero-terminal-law}
Let $H,R:(0,\varepsilon)\to(0,\infty)$ satisfy, as $r\downarrow0$,
\begin{align}
 H(r)
 &=h_0r^q\left(1+h_1r+h_2r^2+O(r^3)\right),
 \label{eq:simple-H}\\
 R(r)
 &=\kappa_0r\left(1+\kappa_1r+\kappa_2r^2+O(r^3)\right),
 \label{eq:simple-R}
\end{align}
where
\begin{equation}
 h_0>0,\qquad \kappa_0>0,\qquad p\equiv q+\frac12>0.
\end{equation}
Let $r:[s_0,\Lambda)\to(0,\varepsilon)$ be $C^1$, strictly
decreasing, and satisfy
\begin{equation}\label{eq:simple-radial-equation}
 H(r(s))^2\dot r(s)^2=R(r(s)),
 \qquad \lim_{s\uparrow\Lambda}r(s)=0,
\end{equation}
where initially $\Lambda\le\infty$.  Define
\begin{align}
 \eta_1&\equiv h_1-\frac{\kappa_1}{2},
 \label{eq:simple-eta1}\\
 \eta_2&\equiv h_2-\frac{h_1\kappa_1}{2}
 +\frac{3\kappa_1^2}{8}-\frac{\kappa_2}{2},
 \label{eq:simple-eta2}\\
 k_1&\equiv\frac{\eta_1}{p+1},
 \label{eq:simple-k1}\\
 k_2&\equiv\frac{p+3}{2(p+1)^2}\eta_1^2-\frac{\eta_2}{p+2}.
 \label{eq:simple-k2}
\end{align}
Then $\Lambda<\infty$.  If
\begin{equation}\label{eq:simple-rho}
 u\equiv\Lambda-s,
 \qquad
 \rho(u)\equiv\left(\frac{p\sqrt{\kappa_0}}{h_0}u\right)^{1/p},
\end{equation}
then
\begin{equation}\label{eq:simple-r-expansion}
 r(s)=\rho-k_1\rho^2+k_2\rho^3+O(\rho^4).
\end{equation}

Suppose that a real-valued $C^1$ function $z$ along the curve satisfies,
for some $\ell<p$,
\begin{equation}\label{eq:simple-zdot}
 \dot z=r^{-\ell}\left(z_0+z_1r+z_2r^2+O(r^3)\right).
\end{equation}
Set
\begin{equation}
 n\equiv p-\ell,
 \qquad b_1\equiv z_1+\eta_1z_0,
 \qquad b_2\equiv z_2+\eta_1z_1+\eta_2z_0.
\end{equation}
Then $z$ has a finite limit $z_\star$ at $s=\Lambda$, and
\begin{align}
 z_\star-z(s)
 =\frac{h_0}{\sqrt{\kappa_0}}
 \bigg[&\frac{z_0}{n}\rho^n
 +\left(\frac{b_1}{n+1}-z_0k_1\right)\rho^{n+1}
 \nonumber\\
 &+\left(
 \frac{b_2}{n+2}-b_1k_1
 +z_0\left(k_2+\frac{n-1}{2}k_1^2\right)
 \right)\rho^{n+2}
 +O(\rho^{n+3})\bigg].
 \label{eq:simple-z-expansion}
\end{align}

Finally, suppose that a scalar function $\mathcal J$ along the curve
has the expansion
\begin{equation}\label{eq:simple-invariant-r}
 \mathcal J=r^{-m}\left(j_0+j_1r+j_2r^2+O(r^3)\right).
\end{equation}
Then
\begin{align}
 \mathcal J
 =\rho^{-m}\bigg[&j_0+(j_1+mk_1j_0)\rho\nonumber\\
 &+\left\{
 j_2+(m-1)k_1j_1
 +m\left(\frac{m+1}{2}k_1^2-k_2\right)j_0
 \right\}\rho^2+O(\rho^3)\bigg].
 \label{eq:simple-invariant-rho}
\end{align}
In particular, if $m>0$ and $j_0\ne0$, then
\begin{equation}\label{eq:simple-invariant-u}
 \mathcal J(s)=j_0
 \left(\frac{p\sqrt{\kappa_0}}{h_0}\right)^{-m/p}
 u^{-m/p}\left(1+O(u^{1/p})\right).
\end{equation}
\end{lemma}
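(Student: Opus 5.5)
The plan is to reduce everything to one explicit one-dimensional integral in $r$ and then invert asymptotic series. Since $r$ is strictly decreasing and $C^1$, and $H,R>0$, equation \eqref{eq:simple-radial-equation} forces $\dot r=-\sqrt{R(r)}/H(r)$. Hence $s$ is a $C^1$ function of $r$, and
\begin{equation*}
 \Lambda-s=\int_0^{r(s)}\frac{H(y)}{\sqrt{R(y)}}\,dy .
\end{equation*}
The integrand has the expansion
\begin{equation*}
 \frac{h_0}{\sqrt{\kappa_0}}\,y^{p-1}\bigl(1+\eta_1y+\eta_2y^2+O(y^3)\bigr).
\end{equation*}
This comes from multiplying the series of $H$ by the series of $(1+\kappa_1y+\kappa_2y^2)^{-1/2}$, namely $1-\kappa_1y/2+(3\kappa_1^2/8-\kappa_2/2)y^2$, which is exactly how $\eta_1,\eta_2$ are defined. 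Because $p>0$, the integral converges, so $\Lambda<\infty$. Integrating termwise gives
\begin{equation*}
 u=\frac{h_0}{p\sqrt{\kappa_0}}\,r^p\Bigl(1+\frac{p\eta_1}{p+1}r+\frac{p\eta_2}{p+2}r^2+O(r^3)\Bigr).
\end{equation*}
By the definition of $\rho$ this says $\rho^p=r^p(1+\dots)$.

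The next step is to take the $1/p$-th power:
\begin{equation*}
 \rho=r\Bigl(1+k_1r+\bigl(\tfrac{\eta_2}{p+2}+\tfrac{1-p}{2p^2}\cdot\tfrac{p^2\eta_1^2}{(p+1)^2}\bigr)r^2+O(r^3)\Bigr),
\end{equation*}
where $k_1=\eta_1/(p+1)$. I would then invert this to second order, writing $r=\rho-k_1\rho^2+(2k_1^2-c_2)\rho^3$ with $c_2$ the $r^2$ coefficient above. The algebra has to be checked to give $2k_1^2-c_2=k_2$, with $k_2$ as in \eqref{eq:simple-k2}. The prefactor $(p+3)/(2(p+1)^2)$ there indicates that it does: $2-(1-p)/2=(p+3)/2$. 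The error terms stay $O(\rho^4)$ because the map $r\mapsto\rho$ is $r$ times a function with the stated expansion, and its inverse inherits the same form.

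For $z$, I would change variable to $r$ as well:
\begin{equation*}
 z_\star-z(s)=\int_0^{r(s)}\dot z\,\frac{H}{\sqrt R}\,dy
 =\frac{h_0}{\sqrt{\kappa_0}}\int_0^{r}y^{n-1}\bigl(z_0+b_1y+b_2y^2+O(y^3)\bigr)\,dy .
\end{equation*}
The hypothesis $n=p-\ell>0$ gives integrability, and hence the finite limit $z_\star$. The result is $\frac{h_0}{\sqrt{\kappa_0}}\bigl(\frac{z_0}{n}r^n+\frac{b_1}{n+1}r^{n+1}+\frac{b_2}{n+2}r^{n+2}+O(r^{n+3})\bigr)$. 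Substituting the expansion of $r$ in $\rho$ requires the powers
\begin{equation*}
 r^n=\rho^n\bigl(1-nk_1\rho+(nk_2+\tfrac{n(n-1)}2k_1^2)\rho^2\bigr)
\end{equation*}
and $r^{n+1}=\rho^{n+1}(1-(n+1)k_1\rho)$. Collecting terms reproduces \eqref{eq:simple-z-expansion}.

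The invariant expansion is the same substitution with exponent $-m$: use $r^{-m}=\rho^{-m}\bigl(1+mk_1\rho+(\tfrac{m(m+1)}2k_1^2-mk_2)\rho^2\bigr)$ and $r^{1-m}=\rho^{1-m}(1+(m-1)k_1\rho)$. Then \eqref{eq:simple-invariant-u} follows by inserting $\rho\propto u^{1/p}$.

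The only real obstacle is bookkeeping. The substantive points are checking the coefficient identity for $k_2$ and keeping the $O(\cdot)$ remainders uniform through the powers and the inversion. Everything else is routine Taylor expansion of functions of the form $r^\alpha(1+O(r))$.
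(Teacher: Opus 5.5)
Your proposal is correct and follows essentially the same route as the paper's proof. Both arguments solve for $\dot r=-\sqrt{R(r)}/H(r)$, separate variables to get $u=\int_0^{r}H/\sqrt{R}\,dy$, integrate the $\eta_1,\eta_2$ expansion termwise, take the $p$-th root, and invert using $r=\rho-\alpha\rho^2+(2\alpha^2-\beta)\rho^3$. They then obtain the $z$ and $\mathcal J$ expansions by substituting the power expansions of $r^\nu$. Your coefficient checks, including $2k_1^2-c_2=k_2$, agree with the paper's.
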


\begin{proof}
By positivity and monotonicity,
\begin{equation}
 \dot r=-\frac{\sqrt{R(r)}}{H(r)}.
\end{equation}
Separating variables, for $s<S<\Lambda$, gives
\begin{equation}\label{eq:simple-separated-time}
 S-s=\int_{r(S)}^{r(s)}\frac{H(y)}{\sqrt{R(y)}}\,dy.
\end{equation}
Equations~\eqref{eq:simple-H}-\eqref{eq:simple-R} imply
\begin{equation}\label{eq:simple-time-integrand}
 \frac{H(r)}{\sqrt{R(r)}}
 =\frac{h_0}{\sqrt{\kappa_0}}r^{p-1}
 \left(1+\eta_1r+\eta_2r^2+O(r^3)\right).
\end{equation}
Since $p>0$, the integrand is integrable at zero.
Equation~\eqref{eq:simple-separated-time} therefore implies
$\Lambda<\infty$ and gives
\begin{equation}\label{eq:simple-u-integral}
 u=\int_0^{r(s)}\frac{H(y)}{\sqrt{R(y)}}\,dy.
\end{equation}
Termwise integration yields
\begin{equation}\label{eq:simple-u-r}
 u=\frac{h_0}{p\sqrt{\kappa_0}}r^p
 \left(1+\frac{p\eta_1}{p+1}r
 +\frac{p\eta_2}{p+2}r^2+O(r^3)\right).
\end{equation}
Taking the $p$-th root gives
\begin{align}
 \rho=r&+\frac{\eta_1}{p+1}r^2\nonumber\\
 &+\left(\frac{\eta_2}{p+2}
 +\frac{1-p}{2(p+1)^2}\eta_1^2\right)r^3+O(r^4).
 \label{eq:simple-rho-r}
\end{align}
If $\rho=r+\alpha r^2+\beta r^3+O(r^4)$, direct substitution gives
\begin{equation}
 r=\rho-\alpha\rho^2+(2\alpha^2-\beta)\rho^3+O(\rho^4).
\end{equation}
Applying this identity to \eqref{eq:simple-rho-r}, with
$\alpha=k_1$ and $2\alpha^2-\beta=k_2$, proves
\eqref{eq:simple-r-expansion}.

Multiplying \eqref{eq:simple-zdot} by
\eqref{eq:simple-time-integrand} gives
\begin{equation}
 \dot z\,\frac{H(r)}{\sqrt{R(r)}}
 =\frac{h_0}{\sqrt{\kappa_0}}r^{n-1}
 \left(z_0+b_1r+b_2r^2+O(r^3)\right).
\end{equation}
Since $n>0$, the expression is absolutely integrable at zero.
Hence $z_\star$ exists, and
\begin{equation}\label{eq:simple-z-before-inversion}
 z_\star-z(s)=\frac{h_0}{\sqrt{\kappa_0}}
 \left[\frac{z_0}{n}r^n+\frac{b_1}{n+1}r^{n+1}
 +\frac{b_2}{n+2}r^{n+2}+O(r^{n+3})\right].
\end{equation}
For every fixed real $\nu$, \eqref{eq:simple-r-expansion} gives
\begin{equation}
 r^\nu=\rho^\nu\left[1-\nu k_1\rho+
 \left(\nu k_2+\frac{\nu(\nu-1)}2k_1^2\right)\rho^2
 +O(\rho^3)\right].
\end{equation}
Using $\nu=n,n+1,n+2$ in \eqref{eq:simple-z-before-inversion}
proves \eqref{eq:simple-z-expansion}.

For the invariant expansion, use the same identity with $\nu=-m$
together with
\begin{equation}
 j_0+j_1r+j_2r^2
 =j_0+j_1\rho+(j_2-k_1j_1)\rho^2+O(\rho^3).
\end{equation}
Multiplication gives \eqref{eq:simple-invariant-rho}, and
\eqref{eq:simple-rho} gives \eqref{eq:simple-invariant-u}.
\end{proof}

\begin{corollary}[A simple-zero curvature criterion]
\label{cor:simple-zero-curvature-criterion}
Let $(N,h)$ be a connected smooth time-oriented spacetime, and let
$\mathcal I_1,\ldots,\mathcal I_m$, with $m<\infty$, be curvature contractions of the
kind occurring in Theorem~\ref{thm:general-curvature-criterion}.
Suppose that, for each choice of time orientation, every
future-inextendible unit-speed timelike geodesic of finite proper time
admits, on a terminal interval, a decreasing function $r$ and data
$H,R,h_0,\kappa_0,q$ satisfying
\eqref{eq:simple-H}-\eqref{eq:simple-radial-equation}, with
$q+\tfrac12>0$.  Suppose also that, for at least one index
$j=j(\gamma)$,
\begin{equation}
 \mathcal I_j(\gamma(s))
 =r(s)^{-m_\gamma}\left(j_{0,\gamma}+O(r(s))\right),
 \qquad m_\gamma>0,\qquad j_{0,\gamma}\ne0.
\end{equation}
The data and the index may depend on the geodesic.  Then $(N,h)$ is
$C^{1,1}_{\mathrm{loc}}$-inextendible.  Along each such terminal branch,
\begin{equation}
 |\mathcal I_j(\gamma(s))|
 \asymp(\Lambda-s)^{-m_\gamma/(q_\gamma+1/2)},
\end{equation}
where $q_\gamma$ is the exponent in \eqref{eq:simple-H} for that
geodesic.
\end{corollary}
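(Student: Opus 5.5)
The plan is to reduce to Theorem~\ref{thm:general-curvature-criterion}, exactly as in Corollary~\ref{thm:main-c11}, using Lemma~\ref{lem:simple-zero-terminal-law} to convert the hypothesized expansion in $r$ into divergence along the geodesic. First fix a time orientation and a future-inextendible unit-speed timelike geodesic $\gamma:[0,\Lambda)\to N$ with $\Lambda<\infty$. On its terminal interval we have, by hypothesis, a $C^1$ strictly decreasing $r$ with $r\to0$ and data $H,R$ satisfying \eqref{eq:simple-H}--\eqref{eq:simple-radial-equation} with $p=q+\tfrac12>0$. Lemma~\ref{lem:simple-zero-terminal-law} then applies verbatim. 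In particular its conclusion $\Lambda<\infty$ is consistent with the setup, and \eqref{eq:simple-r-expansion} gives $r(s)=\rho(1+O(\rho))$ with $\rho\asymp u^{1/p}$, $u=\Lambda-s$.

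Next take the index $j=j(\gamma)$ from the hypothesis. Its expansion is the case $j_1,j_2$ absent of \eqref{eq:simple-invariant-r}. Only the leading term is needed, so I would use directly
\[
r^{-m_\gamma}\bigl(j_{0,\gamma}+O(r)\bigr)=\rho^{-m_\gamma}\bigl(j_{0,\gamma}+O(\rho)\bigr),
\]
which follows from $r^{\nu}=\rho^{\nu}(1+O(\rho))$ with $\nu=-m_\gamma$. This is \eqref{eq:simple-invariant-u} truncated. It yields $|\mathcal I_j(\gamma(s))|\asymp u^{-m_\gamma/p}$, which is the asserted rate. Since $m_\gamma>0$, $p>0$ and $j_{0,\gamma}\ne0$, this tends to $+\infty$. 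Hence $\max_{i}|\mathcal I_i(\gamma(s))|\to\infty$, which is stronger than the $\limsup$ required in \eqref{eq:general-curvature-divergence}.

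Since this holds for every such geodesic and for both time orientations, and the $\mathcal I_i$ are of the kind covered by Lemma~\ref{lem:c11-curvature}, Theorem~\ref{thm:general-curvature-criterion} excludes a proper $C^{1,1}_{\mathrm{loc}}$ extension. The finiteness $m<\infty$ of the list is what Theorem~\ref{thm:general-curvature-criterion} requires. The dependence of the data and of the index on $\gamma$ is harmless, because that theorem only asks for divergence along each geodesic separately; no uniformity is needed.

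The only step needing care is the bookkeeping of regularity hypotheses when invoking Lemma~\ref{lem:simple-zero-terminal-law}. One must check that $r$ is $C^1$ and strictly decreasing on a terminal interval, with values in $(0,\varepsilon)$ after shrinking. These properties are part of the stated hypothesis ("decreasing function $r$ ... satisfying \eqref{eq:simple-radial-equation}"), so I expect no genuine obstacle. The symbol clash between the number $m$ of invariants and the exponent $m_\gamma$ is only notational.
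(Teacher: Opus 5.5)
Your proposal is correct and is essentially the paper's own proof: Lemma~\ref{lem:simple-zero-terminal-law} gives $r=\rho(1+O(\rho))$ with $\rho$ a fixed multiple of $(\Lambda-s)^{1/p_\gamma}$, and the leading-order expansion of $\mathcal I_j$ then gives $|\mathcal I_j(\gamma(s))|\asymp(\Lambda-s)^{-m_\gamma/p_\gamma}\to\infty$, so Theorem~\ref{thm:general-curvature-criterion} applies. Your regularity bookkeeping is also fine, since $R>0$ in $H^2\dot r^2=R$ forces $\dot r\ne0$, so a $C^1$ decreasing $r$ is automatically strictly decreasing.
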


\begin{proof}
Put $p_\gamma\equiv q_\gamma+\tfrac12$.  Equations
\eqref{eq:simple-rho}-\eqref{eq:simple-r-expansion} give
$r=\rho(1+O(\rho))$, with
$\rho=(p_\gamma\sqrt{\kappa_{0,\gamma}}
(\Lambda-s)/h_{0,\gamma})^{1/p_\gamma}$.  Combining this with the
assumed $r^{-m_\gamma}(j_{0,\gamma}+O(r))$ expansion gives
\begin{equation}
 |\mathcal I_j(\gamma(s))|
 =|j_{0,\gamma}|
 \left(\frac{p_\gamma\sqrt{\kappa_{0,\gamma}}}
 {h_{0,\gamma}}\right)^{-m_\gamma/p_\gamma}
 (\Lambda-s)^{-m_\gamma/p_\gamma}
 \left(1+O((\Lambda-s)^{1/p_\gamma})\right).
\end{equation}
This proves the stated estimate and divergence, so
Theorem~\ref{thm:general-curvature-criterion} applies.
\end{proof}

\begin{remark}
Application of the criterion to a separable metric requires a
reduction of terminal geodesics to the stated radial equation, a
simple radial zero, and a nonzero leading curvature coefficient.
For Kerr, these properties follow respectively from
Theorem~\ref{thm:terminal-geodesics}, \eqref{eq:terminal-R-exact},
and Proposition~\ref{prop:weyl-invariants}.  They are additional
geometric properties, rather than consequences of separation alone.
\end{remark}

\section*{Acknowledgments}
The author thanks his family for their support and encouragement.

\section*{Funding}
This work is funded by the Indonesian Endowment Fund for Education (LPDP), on behalf of the Indonesian Ministry of Higher Education, Science and Technology, and managed under the EQUITY Program (Contract No. 4298/B3/DT.03.08/2025).

\section*{Data availability}
No datasets were generated or analyzed in this work.

\end{document}